\newif\ifbiblatex
\newif\ifvfull
    \newif\ifvfullred %
\newcommand\vfull[1]{{\ifvfullred\color{red!75!black}\fi\ifvfull#1\fi}}
    \newif\ifmarginprooflinks
		\DeclareMathSymbol{\shortminus}{\mathbin}{AMSa}{"39}
\let\nf\nicefrac
     \let\H\relax %
	\DeclareMathOperator{\H}{\mathrm{H}} %
	\DeclareMathOperator{\I}{\mathrm{I}} %
	\DeclareMathOperator*{\Ex}{\mathbb{E}} %
	\DeclareMathOperator{\supp}{\mathrm{supp}} %
	\DeclareMathOperator*{\argmin}{arg\,min} %
    \DeclarePairedDelimiterX{\infdivx}[2]{(}{)}{%
        #1\;\delimsize\|\;#2} %
	\newcommand{\thickD}{I\mkern-8muD}  %
	\newcommand{\kldiv}{\thickD\infdivx} %
	\newcommand{\tto}{\rightarrow\mathrel{\mspace{-15mu}}\rightarrow} %
    \newcommand{\Rext}{\mskip1mu\overline{\mskip-1mu\mathbb R\!}\,}
    \def\Rextge0{\Rext_{\scriptscriptstyle \ge 0}} %
    \def\Rge0{\mathbb{R}_{\scriptscriptstyle \ge 0}} %
    \newcommand{\mat}[1]{\mathbf{#1}} %
    \DeclarePairedDelimiter{\pqty}{\lparen}{\rparen}
	\newcommand{\subalign}[1]{%
	  \vcenter{%
	    \Let@ \restore@math@cr \default@tag
	    \baselineskip\fontdimen10 \scriptfont\tw@
	    \advance\baselineskip\fontdimen12 \scriptfont\tw@
	    \lineskip\thr@@\fontdimen8 \scriptfont\thr@@
	    \lineskiplimit\lineskip
	    \ialign{\hfil$\m@th\scriptstyle##$&$\m@th\scriptstyle{}##$\hfil\crcr
	      #1\crcr
	    }%
	  }%
    	}
	\newcommand\numberthis{\addtocounter{equation}{1}\tag{\theequation}}
	\tikzset{AmpRep/.style={ampersand replacement=\&}}
	\tikzset{center base/.style={baseline={([yshift=-.8ex]current bounding box.center)}}}
	\tikzset{paperfig/.style={center base,scale=0.9, every node/.style={transform shape}}}
	\tikzset{dpadded/.style={rounded corners=2, inner sep=0.7em, draw, outer sep=0.3em, fill={black!50}, fill opacity=0.08, text opacity=1}}
	\tikzset{dpad0/.style={outer sep=0.05em, inner sep=0.3em, draw=gray!75, rounded corners=4, fill=black!08, fill opacity=1, align=center}}
	\tikzset{dpadinline/.style={outer sep=0.05em, inner sep=2.5pt, rounded corners=2.5pt, draw=gray!75, fill=black!08, fill opacity=1, align=center, font=\small}}
 	\tikzset{dpad/.style args={#1}{every matrix/.append style={nodes={dpadded, #1}}}}
	\tikzset{light pad/.style={outer sep=0.2em, inner sep=0.5em, draw=gray!50}}
	\tikzset{arr/.style={draw, ->, thick, shorten <=3pt, shorten >=3pt}}
	\tikzset{arr0/.style={draw, ->, thick, shorten <=0pt, shorten >=0pt}}
	\tikzset{arr1/.style={draw, ->, thick, shorten <=1pt, shorten >=1pt}}
	\tikzset{arr2/.style={draw, ->, thick, shorten <=2pt, shorten >=2pt}}
	\newcommand\lab[1]{(#1)(lab-#1)}
	\tikzset{alternative/.style args={#1|#2|#3}{name=#1, circle, fill, inner sep=1pt,label={[name={lab-#1},gray!30!black, inner sep=1pt]#3:\scriptsize #2}} }
	\tikzset{tpt/.style args={#1|#2}{alternative={#1|#2|below}} }
	\tikzset{Dom/.style args={#1[#2] (#3) around #4}{dpadded, name=#3, label={[name={lab-#3},align=center,label distance=-1.9em, shading=axis, top color=white, bottom color=black!04,inner sep=0pt, #2]150:#1}, fit={ #4 }, inner sep=0.5em}}
    \newcommand{\nhphantom}[2]{\sbox0{\kern-2%
    \nulldelimiterspace$\left.\delimsize#1\vphantom{#2}\right.$}\hspace{-.97\wd0}}
    \newsavebox{\abcmycontentbox}
    \newcommand\DeclareDoubleDelim[5]{
    \DeclarePairedDelimiterXPP{#1}[1]%
        {%
            \sbox{\abcmycontentbox}{\ensuremath{##1}}%
        }{#2}{#5}{}%
        {%
            \nhphantom{#3}{\usebox\abcmycontentbox}%
            \hspace{1.2pt} \delimsize#3%
            \mathopen{}\usebox{\abcmycontentbox}\mathclose{}%
            \delimsize#4\hspace{1.2pt}%
            \nhphantom{#4}{\usebox\abcmycontentbox}%
        }%
    }
	\newcommand{\ssub}[1]{_{\!_{#1}\!}}
    \newcommand{\pdgunit}{\mathrlap{\mathit 1} \mspace{2.3mu}\mathit 1}
	\newcommand{\X}{\mathcal X}
	\newcommand{\V}{\mathcal V}
	\newcommand{\Ed}{\mathcal E}
	\newcommand{\Ar}{\mathcal A}
    \newcommand{\balpha}{\boldsymbol\alpha}
    \newcommand{\bbeta}{\boldsymbol\beta}
	\def\p_#1{\mathbb P_{\!#1\mskip-2mu}}	
	\newif\ifsuba \subatrue
	\newcommand\Src[1]{\ifsuba S\mskip-2mu\vphantom{|}_{{#1}} \else S \fi}
	\newcommand\Tgt[1]{\ifsuba T\mskip-3mu\vphantom{|}_{{#1}} \else T \fi}
	\DeclareMathAlphabet{\mathdcal}{U}{dutchcal}{m}{n}
	\DeclareMathAlphabet{\mathbdcal}{U}{dutchcal}{b}{n}
	\newcommand{\dg}[1]{\mathbdcal{#1}}
	\newcommand{\IDef}[1]{\mathit{IDef}_{\!#1}}
	\newcommand\OInc{\mathit{O\mskip-2.5muI\mskip-3.5mun\mskip-1.7muc}} %
	\newcommand{\SInc}{\mathit{S\mskip-2.5muI\mskip-3.5mun\mskip-1.7muc}} %
	\newcommand{\ed}[3]{#2%
	 {\overset{\smash{\mskip-5mu\raisebox{-1pt}{$\scriptscriptstyle
	        #1$}}}{\rightarrow}} #3}
	\newcommand{\bundle}{\mathbin{+}}
	\newsavebox{\aar@content}
	\newcommand\aar{\@ifstar\aar@one@star\aar@plain}
	\newcommand\aar@one@star{\@ifstar\aar@resize{\aar@plain*}}
	\newcommand\aar@resize[1]{\sbox{\aar@content}{#1}\scaleleftright[3.8ex]
		{\Biggl\langle\!\!\!\!\Biggl\langle}{\usebox{\aar@content}}
		{\Biggr\rangle\!\!\!\!\Biggr\rangle}}
    \newcommand\bmu{\boldsymbol\mu}
    \newcommand\C{\mathdcal C}
    \newcommand\minimize{\mathop{\scalebox{0.98}{$\mathsf{minimize}$}}\limits}
    \newcommand\maximize{\mathop{\scalebox{0.98}{$\mathsf{maximize}$}}\limits}
    \newcommand\subjto{{\scalebox{0.98}{$\mathsf{subject~to}$}}}
    \newcommand\CI{\mathbin{\bot\!\!\!\bot}}
    \newcommand\actree{tree marginal}
    \newcommand\AcTree{Tree Marginal}
    \newcommand\cactree{calibrated \actree}
    \theoremstyle{plain}
    \newtheorem{theorem}{Theorem}
	\newtheorem{coro}{Corollary}[theorem]
    \newtheorem{prop}[theorem]{Proposition}
	\declaretheorem[numberwithin=theorem,name=Claim]{iclaim}
    \newtheorem{lemma}[theorem]{Lemma}
    \theoremstyle{definition}
    \declaretheorem[name=Definition, qed=$\square$]{defn}
	\crefname{defn}{Definition}{Definitions}
	\crefname{prop}{Proposition}{Propositions}
    \crefname{issue}{Issue}{Issues}
    \crefname{claim}{Claim}{Claims}
\@nx\else[{#1}]\fi}%
\@nx\else[{#1}]\fi
    \newcommand\onlyfirsttime[1]{\ifthmt@thisistheone#1\else\fi}
	\newcommand{\recall}[1]{\medskip\par\noindent{\bf \Cref{thmt@@#1}.} \begingroup\em \noindent
	   \expandafter\csname#1\endcsname* \endgroup\par\smallskip%
       }
	\newenvironment{linked}[3][]{%
		\ifmarginprooflinks
		\marginpar{%
			\vspace{1.5em}
			\centering%
			\hyperref[proof:\linkedproof]{%
			\color{blue!30!white}%
			\scaleleftright{$\Big[$}{\,\mbox{\tiny\centering\tt\begin{tabular}{@{}c@{}}
				link to\\[-0.15em]
				proof
			\end{tabular}}\,}{$\Big]$}}~
			}%
		\fi
        \restatable[#1]{#2}{#2:#3}\label{#2:#3}%
        }%
		{\endrestatable%
		}
		\newcounter{proofcntr}
		\newenvironment{lproof}{\begin{proof}\refstepcounter{proofcntr}}{\end{proof}}
        \newcommand{\Cancel}[2][red]{{\color{#1}\cancel{\color{black}#2}}}
        \colorlet{proofleft}{blue!20!black!40!white}
        \colorlet{proofmatt}{blue!20!black!05!white}
        \tikzset{proofmatDom/.style args={#1[#2] (#3) around #4}{dpadded, name=#3, label={[name={lab-#3},align=center,label distance=-1.9em, shading=axis, top color=proofmatt, bottom color=black!04!proofmatt,inner sep=0pt, #2]150:#1}, fit={ #4 }, inner sep=0.5em}}
    \newcommand{\TODO}[1][INCOMPLETE]{{\color{red}\raggedright\hangindent=0.5cm\rightskip=0.4cm$\smash{\Big\langle}$~\texttt{#1}~\raisebox{-0.3ex}{${\Big\rangle}$}\hspace{-1.5cm}\par}}
    \newcommand*{\daggerfootnote}[1]{%
        \renewcommand*{\thefootnote}{\fnsymbol{footnote}}%
        \footnote[2]{#1}%
        \renewcommand*{\thefootnote}{\arabic{footnote}}%
        }
    \newcommand\footnoteref[1]{\protected@xdef\@thefnmark{\ref{#1}}\@footnotemark}
\global\boolfalse{cbx:parens}}
     \gdef\parencite{\citep}
     \gdef\textcite{\citet}
\newcommand\discard[1]{}
\newcommand\zogamma{{\mathrlap{\raisebox{-0.1ex}{$\hat{\phantom{x}}$}}\gamma}}
\colorlet{mayyybe}{blue!50!red!20!white}
\colorlet{rewrite}{purple!80!black}
\colorlet{olicolor}{blue!50!red!70!black}
\colorlet{joecolor}{green!50!blue!70!black}
\newcommand\vjoe[1]{{\color{joecolor}\textbf{$\boldsymbol\{$Joe: }#1 \textbf{$\boldsymbol\}$}}} 
\newcommand\voli[1]{{\color{olicolor}\textbf{$\boldsymbol\{$Oli: }#1 \textbf{$\boldsymbol\}$}}}
\title{Inference for Probabilistic Dependency Graphs}
\author[1]{Oliver~E.~Richardson}
\author[1]{Joseph~Y.~Halpern}
\author[1]{Christopher De Sa}
\affil[1]{%
    Deparment of Computer Science\\
    ~Cornell University\\
    ~Ithaca NY 14853
}
\begin{document}
\maketitle

\begin{abstract}
    Probabilistic dependency graphs (PDGs)
    are a flexible class of probabilistic graphical models,
    subsuming Bayesian Networks and Factor Graphs.
    They can also capture inconsistent beliefs, and provide a way of measuring the degree of this inconsistency.
    We present the first tractable inference algorithm for
    PDGs with discrete variables,
    making the asymptotic complexity of PDG inference similar
    that of the graphical models they generalize. 
    The key components are:
    (1) 
    the observation that, in many cases, the distribution a PDG specifies 
    can be formulated as a convex optimization problem 
        (with exponential cone constraints), 
    (2) a construction that allows us to express these problems compactly for PDGs of boundeed treewidth, 
    (3) contributions to the theory of PDGs that justify the construction, 
    and
    (4) an appeal to interior point methods that can solve such problems in polynomial time.
    We verify the correctness and complexity of our approach, 
    and provide an implementation of it.
    We then evaluate our implementation, and demonstrate that 
    it outperforms
    baseline approaches.
    Our code is available at {\small\url{github.com/orichardson/pdg-infer-uai}}.
\end{abstract}

\section{Introduction}

\emph{Probabilistic dependency graphs (PDGs)} \parencite{pdg-aaai},
form a very general class of probabilistic graphical models,
that includes not only
Bayesian Networks (BNs) and
Factor Graphs (FGs),
but also more recent statistical models built out of neural networks,
such as Variational Autoencoders (VAEs) \parencite{kingma2013autoencoding}.
\expandafter\discard\vjoe{
 They can also capture inconsistent beliefs; moreover, we can measure how
 inconsistent a probability distribution is with a PDG.  This allows us
 to define the dgree of inconsistent of a PDG to essentially be that of the
 probability measure that is least inconsistent with it.}%
\expandafter\discard\voli{They can also capture inconsistent beliefs, and moreover provide a natural measurement of the degree of this inconsistency. }%
\expandafter\discard\vjoe{They can also capture inconsistent beliefs; moreover, we can measure how inconsistent a probability distribution is with a PDG. This allows us to define the degree of inconsistenty of a PDG to essentially be that of the probability measure that is least inconsistent with it.}%
\expandafter\discard\voli{%
    They can also contain inconsistent beliefs, such as two different probabilities over the same variable. Moreover, there is a natural way to measure the degree of this inconsistency: starting with a measure of how incompatible a joint distribution is with a PDG, the inconsistency of the PDG is the smallest incompatibility with any joint distribution.  }%
\discard{%
    They can also capture inconsistent beliefs,
    Moreover, there is a useful way to quantify the degree of this inconsistency:
    the discrepancy between the PDG, and the 
        probability measure that 
        best represents it.
     }%
PDGs can also capture inconsistent 
beliefs, and provide a useful way to measure the degree of this inconsistency;
for a VAE, this is the loss function used in training
    \parencite{one-true-loss}.
    PDGs have some
significant advantages over other representations of probabilistic information. 
    Their flexibility allows them to model beliefs that BNs cannot, 
such as information from independent studies of the same variable
(perhaps with different controls, yielding probabilistic observations $p(Y|X)$ and $q(Y|Z)$).
PDGs
can deal gracefully with conflicting information from multiple sources.
Every subcomponent of a PDG has probabilistic
    meaning, independent of the other components;
compared to FGs, this makes PDGs more interpretable.
But up to now, there has been no practical way to do inference for
PDGs---that is,
to answer questions of the form
``what is the probability of $Y$ given $X$?''
This paper presents the first algorithm to do so.

Before discussing our algorithm, 
we must discuss what it even means to do inference for a PDG.  
A BN or FG represents 
a unique joint distribution. 
Thus, for example, when we ask ``what is the probability of $Y$ given that $X{=}x$?''
\def\BNPr{\mu}
in a BN, we mean ``what is $\BNPr(Y | X{=}x)$?'' for the probability 
measure $\BNPr$ that the BN represents.
But a PDG might, in general, represent more than just one distribution.

Like a BN, a PDG encodes
two types of information: ``structural'' 
information about the independence of causal mechanisms,
and ``observational'' information
about conditional probabilities.
Unlike in a BN, the two can conflict in a PDG.
Corresponding to these two types of information,
a PDG has two
loss functions,
which quantify how far a distribution $\mu$ is from
modeling the information of each type.
Given a number $\zogamma
\in [0,1]
$
indicating the importance of structure relative to observation,
we take the \emph{$\zogamma$-semantics} of a PDG to be the
set of distributions that minimize 
the appropriate convex combination of
losses.
We also consider the \emph{$0^+$\!-semantics}: the limiting case that
arises as $\zogamma$ goes to zero
(which focuses
 on observation, using structure only to break ties).
This set
can be shown to contain precisely one distribution
for PDGs satisfying a mild regularity condition 
(required by definition by \citeauthor{pdg-aaai});
we call such PDGs \emph{proper}.
Thus, we have
 a parameterized family of inference notions:
to do $\zogamma$-inference, for $\zogamma \in [0,1] \cup \{0^+\}$,
is to answer queries in a way that is true of all distributions in the $\zogamma$-semantics.

If there are distributions
fully
consistent with
both the observational and the structural information
in a PDG $\dg M$, 
then for $\zogamma \in (0,1) \cup \{0^+\}$, all
notions of $\zogamma$-inference 
coincide.
\discard{
    For PDGs satisfying a mild condition 
    (required by definition in \citeauthor{pdg-aaai})
    which for now we call \emph{proper},
    selecting a small enough positive value of $\zogamma$
    suffices to ensure that
    the $\zogamma$-semantics consists of only a single distribution.
    The $0^+$ semantics of a proper PDG is also a unique distribution,
    and one that does not depend on the choice of a small positive number. }%
\def\PrM{\mu_{\dg M}}%
If $\dg M$ is also proper,
    this means there is
    a single distribution $\PrM$
    that minimizes both loss functions, 
    in which case we want to answer queries with respect to $\PrM$
    no matter how we weight observational and structural information. 
Moreover, if $\dg M$ represents a BN,
then $\PrM$ is the distribution represented by the BN.  
However, if there is no distribution that is consistent with both types of information, then the choice of $\zogamma$ matters.  

Since PDGs subsume BNs, and inference for BNs is already NP-hard, the same must be true of PDGs.
At a high level, the best we could hope for would be tractability on the restricted
class of models on which inference has traditionally been tractable---that is, a polynomial algorithm for models whose
underlying structure has \emph{bounded treewidth} (see
\Cref{sec:tw} for formal definitions).
That is indeed what we have.  
More precisely, we show that
$0^+$\!-inference
and $\zogamma$-inference for small $\zogamma$ 
can be done 
for discrete PDGs of bounded treewidth containing $N$ variables in
$\tilde O(N^4)$ 
time.
 
Our algorithm
is based on a line of recent work in 
convex programming
that establishes
polynomial-time
for a class of optimization problems called \emph{exponential conic programs}
\parencite{badenbroek2021algorithm,skajaa2015homogeneous,nesterov1996infeasible}.
Our contribution is to show that the problem of inference in a PDG
of bounded treewidth
can be efficiently converted to a (sequence of) exponential conic program(s), at which point it can be solved with a commercial solver
(e.g., \textcite{mosek}) in polynomial time. 
The direct appeal to a solver allows us
to benefit from the speed and reliability of such highly optimized solvers, and also from future improvements in exponential conic optimization.
Thus, our result is not only a theoretical one, but practical as well.

Beyond its role as a probabilistic model,
a PDG is also of interest for its degree of inconsistency---%
that is, the minimum value of its loss function. 
As shown by 
\textcite{one-true-loss},
many loss functions and statistical divergences
can be viewed as measuring 
the inconsistency
of a PDG that models the context appropriately.
This makes calculating this minimum value of interest%
---but up to now, there has been no way to do so.
There is a deep connection between this problem and PDG inference;
for now, we remark that this number is a byproduct of our techniques.

\expandafter\discard\voli{
Beyond its role as a probabilsitic model,
a PDG is also of interest for its \emph{degree of inconsistency};
As shown by \textcite{one-true-loss},
    many loss functions and statistical divergences
    can be viewed as measuring the inconsistency
    of a PDG that models the appropriate context.
It follows that the training process in machine learning can
    be conceptualized as
    adjusting parameters of cpds so as to minimize the inconsistency of a PDG.
        But how {does} one \emph{calculate} this degree of inconsistency?
This problem turns out to be closely related to that of
    inference in PDGs, and our approach addresses both.
}

\textbf{Contributions.}
We provide the first algorithm for inference in a PDG;
in addition, it calculates a PDG's degree of inconsistency. 
We prove that 
our algorithm
is correct, and also
fixed-parameter tractable: for PDGs of bounded treewidth,
it runs in polynomial time.
We also prove that PDG inference and inconsistency 
    calculation are equivalent problems.
Our algorithm reduces inference in PDGs to exponential conic programming
in a way that can be offloaded to powerful existing solvers.
We provide an implementation of this reduction in a
standard convex optimization framework, giving users an
interface between such solvers and the standard PDG Python library.
Finally, we evaluate our implementation. The
    results suggest our method is faster and 
    significantly more reliable than simple baseline approaches.

\section{Preliminaries \& Related Work}

\textbf{Vector Notation.}
For us, a vector is a map from a finite set $S$, called its 
\emph{shape},
to the extended reals $\Rext := \mathbb R \cup \{\infty\}$.
We write $\mat u := [u_i]_{i \in S}$ to define a vector $\mat u$ by its components.
\discard{\color{gray!30!white}
    We will sometimes use superscripts as well, especially when indices depend on one another. For example, if $\dg S$ is a finite set of finite sets, then
    $[u^S_s]^{S \in \dg S}_{s \in S}$ denotes a vector which has an element
    for each pair $(S,s)$, satisfying $s \in S \in \dg S$.
    By supplying just the upper index of such a vector, as in $\mat u^{S_0}$,
    we mean $[u^{S_0}_s]_{s \in {S_0}}$, the projection of $\mat u$ onto the subspace whose upper index is $S_0$.
}
Vectors of the same shape
can be added (+), partially ordered ($\le$), or multiplied ($\odot$) pointwise as usual.
$\mat 1$ denotes an all-ones vector, of a shape implied by context.
\discard{\vfull{
$\mat u^{\sf T}$ denotes the transpose of $\mat u$, and is used to
express the inner product $\mat u^{\sf T} \mat v$ of vectors $\mat u$ and
$\mat v$ of the same shape.}}

\discard{
    \color{gray!30!white} If $\mat u = [u_a]_{a \in A}$ is a vector over $A$ and $\mat v = [v_b]_{b \in B}$ is a vector over $B$, then $\mat u \mathbin{\otimes} \mat v := [ u_a \cdot v_b ]_{a \in A, b \in B}$ is a vector over $A \times B$. }

    \textbf{Probabilities.}
We write $\Delta S$ to denote the set of probability distributions over a finite set $S$.
Every variable $X$ can take on values from a finite set
$\V\mskip-1.5mu X$
of possible values.
We can regard sets of variables $\mat X$ as variables themselves, with
$\V \mat X = \Pi_{X \in \mat X} \V X$.
A conditional probability distribution (cpd) $p(Y|X)$ is a map
\ifvfull
$p : \V\mskip-1.5mu  X \to \Delta \V Y$ assigning to each $x \in \V\mskip-1.5mu X$ a
probability distribution 
$p(Y|x) \in \Delta \V Y$, which is shorthand for $p(Y|X{=}x)$.
\else
$p$ that assigns each $x \in \V\mskip-1.5mu X$ a probability distribution
$p(Y|X{=}x) \in \Delta \V Y$.
\fi
\ifvfull
Given a distribution $\mu$ over (the values of) a set of variables including $X$ and $Y$,
\else
Given a joint distribution $\mu$,
\fi
we write $\mu(X)$ for its marginal 
on $X$,
and $\mu(Y|X)$ for the cpd obtained by 
\ifvfull
conditioning on $X$ and marginalizing to $Y$.
We also refer to $\mu$'s entropy
    $\H(\mu) := \Ex_{\mu} [\log \frac1\mu]$ and 
    conditional entropy $\H_\mu(Y|X) := \Ex_\mu[\log\nicefrac1{\mu(Y|X)}]$
    of $Y$ given $X$.
\else
conditioning on $X$ and marginalizing to $Y$.
We also refer to $\mu$'s entropy $\H(\mu) := \Ex_{\mu} [\log \frac1\mu]$ and conditional entropy $\H_\mu(Y|X) := \Ex_\mu[\log\nicefrac1{\mu(Y|X)}]$.
\fi

\textbf{Hypergraphs and Treewidth.} \label{sec:tw}
A hypergraph 
$
(V, \Ed)$ is a set $V$ of vertices and a
set
 $\Ed$ of \emph{hyperedges}, which correspond to subsets of $V$.
An ordinary graph may be viewed as the special case in which every hyperedge contains  two vertices.

\begin{defn}
    A \emph{directed hypergraph}
    $(N, \mathcal A)$ is a set of nodes $N$, and
    a set
    of \emph{(hyper)arcs} $\mathcal A$,
    each $a \in \mathcal A$ of which
    is associated with 
    a set of source nodes $\Src a \subseteq N$,
    and target nodes $\Tgt a \subseteq N$.
    We also write $\ed {a}{S}{T} \in \Ar$ to specify an
    arc $a$ together with its sources $S = \Src a$ and targets $T = \Tgt a$.
\end{defn}

A directed hypergraph 
can be viewed as
    a hypergraph
by joining
each source and target set,
thereby ``forgetting'' the direction of the arrow.
\ifvfull
Thus, notions defined for undirected hypergraphs (like that of
treewidth, which we now review), can be 
    applied to directed hypergraphs as well.
\fi

Many problems that are intractable for general graphs
are tractable for trees, and
some graphs are closer to being trees than others.
A tree decomposition of a (hyper)graph $G = (V, \Ed)$ is a tree $(\C, \mathcal T)$ whose vertices $C \in \C$, called
\emph{clusters}, are subsets of $V$ such that:

\begin{enumerate}[nosep]
    \item every vertex $v \in V$ and every hyperedge $E \in \Ed$ is contained in at least one cluster, and
        \item every cluster $D$ along the unique path from $C_1$ to $C_2$ in $\cal T$,
         contains $C_1 \cap C_2$.
\end{enumerate}

The \emph{width} of a tree decomposition is one less than the size of its largest cluster,
and the \emph{treewidth} of a (hyper)graph $G$ is the smallest possible width of any tree decomposition of $G$.
It is NP-hard to determine the tree-width of a graph, but
if the tree-width is known to be bounded above, a tree decomposition may be constructed in linear time \parencite{bodlaender1993linear}.
For graphs of bounded tree-width, many problems 
(indeed, any problem expressible in a certain second-order logic \parencite{courcelle1990})
can be solved in
linear time.
This is also true of inference in 
standard graphical models.

\textbf{Graphical Models and Inference.}
A \emph{graphical model structure}
is a (directed) (hyper)graph whose vertices $\X$ are variables, and whose (hyper)edges 
somehow
indicate local influences between variables.
A \emph{probabilistic graphical model},
or simply  ``graphical model'',
is a
graphical model structure
together with quantitative information about these local influences.
Semantically,
a graphical model $\cal M$
typically
represents a joint probability distribution $\Pr_{\!\cal M}
 \in \Delta \V\!\X$ over its variables.
\discard{
    Although there is often more to the story,
    it can typically be
    expressed as a product
    $\Pr_{\!\cal M}(\X) \propto \prod_{E \in \Ed} \phi_{E}(E)$
    of factors $\boldsymbol\phi = 
    \{ \phi_E : \V E \to \mathbb R_{\ge 0} \}_{E \in \Ed}$
    over a hypergraph $(\X, \Ed)$ closely related to the structure of $\cal M$.
    For this reason, some authors use the term ``graphical model'' to refer to a tuple $(\X ,\Ed, \boldsymbol\phi)$,
    i.e., a factor graph.
    PDGs, however, do not represent probabilities this way.}
Inference for $\cal M$ is then the ability to calculate cpds $\Pr_{\!\cal M}(Y|X{=}x)$,
where $X,Y \subset \X$ and $x \in \V\! X$. 

\discard{
    To do inference in probabilistic model $\cal M$ is to answer probabilistic queries, of the form
    \textit{``what is the distribution of variables $Y$, given that $X\!=\!x$?''}
    If $\cal M$ represents the joint distribution $\Pr_{\cal M}$, then the
    appropriate answer to this question is $\Pr_{\cal M}(Y \mid X\!=\!x)$.}

Many inference algorithms (such as belief propagation),
when applied to tree-like graphical models,
run in linear time and are provably correct.
If the same algorithms are na{\"i}vely applied to graphs with cycles (as in loopy belief propagation),
then they may not converge, and even if they do,
may give incorrect (or even inconsistent) answers
\parencite{wainwright2008graphical}.
Nearly all exact inference algorithms
(including variable elimination  \parencite{bertele1972nonserial},
 message-passing with \parencite{lauritzen1988local}
    and without division \parencite{shafer1990probability},
    among others \parencite{wainwright2003tree})
effectively construct a tree decomposition, and can be
viewed as running on a tree \parencite[\S9-11]{koller2009probabilistic}.
Indeed,
under widely believed assumptions,
every class of graphical models
for which (exact) inference is \emph{not} NP-hard
has bounded treewidth
\parencite{chandrasekaran2012complexity}.

Given a tree decomposition $(\C, \mathcal T)$ of the 
underlying model structure,
many of these algorithms
use a 
standard
data structure
that we will call a \emph{\actree},
which
is a collection
$\bmu = \{\mu_C(C)\}_{ C \in \C}$ 
of probabilities over the clusters \parencite[\S10]{koller2009probabilistic}.
A \actree\
$\bmu$ 
is said to be \emph{calibrated} if neighboring clusters' distributions agree on the variables 
they share.
In this case, 
$\bmu$ determines a joint distribution by
\begin{equation}
    \Pr\nolimits_{\bmu}
    (\X)
        = \quad 
        {\prod_{\mathclap{C \in \C}} \mu_C(C)~\,}\Big/
        {~~\prod_{\mathclap{(C{-}D) \mathrlap{\in \cal T}}} \mu_{C}(C \cap D)\,,}
    \label{eq:cliquedist}
\end{equation}
which has the property that $\Pr_{\bmu}(C) = \mu_C$ for 
all
$C \in \C$.
\discard{%
\vfull{A \cactree\ $\bmu$ summarizes the answers to 
many queries about $\Pr_{\bmu}$ at once. 
To see why
in a simple case, note that 
for an unconditional query about $Y$ contained within a single cluster $C$, we have
$\Pr_{\bmu}(Y) = \mu_C(Y)$.
\discard{%
    Note also that if $\C = \{ \X \}$ just contains one big cluster, 
    a \actree\ is an explicit joint distribution, reducing queries to summation.}%
With some care, the general idea can be extended to arbitrary queries 
    \parencite[see][\S 10.3.3]{koller2009probabilistic};
    those conditional on evidence $X{=}x$ can be handled
    by conditioning the clusters that contain $X$,
    and then recalibrating $\bmu$ with 
    a standard algorithm like belief propagation.
    }
}
A \cactree\ summarizes the answers to
queries about $\Pr_{\bmu}$
\parencite[see][\S 10.3.3]{koller2009probabilistic}.
Therefore, to answer probabilistic queries with respect to a distribution $\mu$, it suffices to find a \cactree\ $\bmu$ that represents $\mu$, and appeal to standard algorithms.

\textbf{Probabilistic Dependency Graphs.}
Our presentation of PDGs is slightly
different from (but equivalent to) that of
\textcite{pdg-aaai}, which
the reader is encouraged to consult for more details and intuition.
At a high level, a PDG
is
just
a collection of cpds and causal assertions,
    weighted by confidence. More precisely:

\begin{defn}
    A PDG $\dg M \!=\! (\X\mskip-2mu, \Ar,
        \mathbb P, 
        \balpha, \bbeta )
    $
    is     
    a directed hypergraph 
    $(\X\mskip-2mu, \Ar)$ 
    whose nodes 
    are
    variables,
    together with 
    probabilities $\mathbb P$
    and
    confidence vectors
    $\balpha \!=\! [\alpha_a]_{a \in \Ar}$ 
    and $\bbeta \!=\! [\beta_a]_{a \in \Ar}$,
    so that
    each $\ed aST \! \in\! \Ar$ is associated with:
    
    \begin{itemize}[nosep,itemsep=2pt]
    \item
    a conditional probability distribution
    {\subafalse $\p_a(\Tgt a | \Src a)$}
    on the target variables given 
    values of
    the source variables,
    \item a weight 
    $\beta_a \in \smash{\Rext}$ 
    indicating
    the modeler's confidence in 
    the cpd {\subafalse $\p_a(\Tgt a | \Src a)$},
    \discard{(as measured by the number of independent observations that support $\p_a$), }
    and
    \item 
    a weight $\smash{\alpha_a \in \mathbb R}$
    indicating
    the modeler's confidence in the functional dependence of 
    {\subafalse$\Tgt a\mskip-2mu$ on $\Src a\mskip-2mu$}
    expressed by
    $a$.
    \discard{
    (as measured by the expected number of independent causal mechanisms corresponding to $a$,
    that determine $\Tgt a$ given $\Src a$).%
    }
    \end{itemize}
\expandafter\discard\voli{In aggregate, $\balpha = [\alpha_a]_{a \in \Ar} \in \Rext^\Ar$ and $\bbeta = [\beta_a]_{a \in \Ar} \in \Rext^\Ar$ are the vector forms of the weights, and
    $\mathcal P$ is the set of cpds indexed by $\Ar$.}
If $\bbeta \ge \mat 0$ and $\alpha_a\! > 0$ implies $\beta_a\! > 0$, we
write $\bbeta \gg \balpha$ and
call $\dg M$ \emph{proper}.
Note that  $\bbeta \gg \balpha$ if $\bbeta > \mat 0$.
\end{defn}

One significant advantage of PDGs is their modularity:
we can combine the information in $\dg M_1$ and $\dg M_2$ 
by taking the union of their variables and the disjoint union of their arcs (and associated data) to get a new PDG, denoted $\dg M_1 + \dg M_2$.

\expandafter\discard\voli{%
    Like other graphical models,
    PDGs encode two types of information: ``structural'' information 
    through the graphical structure $\Ar$ and weights $\balpha$,
    as well as ``observational'' information, 
    through the conditional probability distributions
    $\mathcal P$ and weights $\bbeta$. 
    Corresponding to these two types of information, 
    PDG semantics are based on two scoring functions 
    which quantify the discrepancy between 
    a joint distribution $\mu(\X)$ over all variables,
    and each of the two types of information.
}%
\expandafter\discard\vjoe{%
    As we mentioned in the introduction,
    PDGs encode two types of information: ``structural'' information 
    through the graphical structure $\Ar$, and
    ``observational'' information, 
    through the conditional probability distributions.
    The weight vectors $\alpha$ and $\beta$ encode our confidence in these
    two types of information.
    The semantics of PDGs are based on two scoring functions
    that quantify the discrepancy between 
    a joint distribution $\mu$ over (the possible values of) the variables
    in $\X$, and each of the two types of information. }%
As mentioned in the introduction,
a PDG contains two types of information:
``structural'' information, in the hypergraph $\Ar$ and
weights $\balpha$, and ``observational'' data,
in the cpds  $\mathbb P$ and weights $\bbeta$.
PDG semantics are based on two scoring functions
that quantify discrepancy between 
each type of information and a distribution
$\mu \in \Delta \V \!\X$ over its variables.

The \emph{observational incompatibility} of $\mu$ with $\dg M$, which
can be thought of as a ``distance''  between $\mu$ and the cpds of $\dg M$,
is given by the weighted sum of relative entropies:
\begin{align*}
    \OInc_{\dg M}(\mu) :=
        \sum_{\ed aST \mathrlap{\,\in \Ar}} \subafalse
        \beta_a\, \kldiv[\Big]{\mu(\Tgt a,\Src a)}{\p_a(\Tgt a | \Src a) \mu(\Src a)}.
\end{align*}
Under a standard interpretation of the relative entropy $\kldiv{\mu}{p} \mskip-1.5mu=\mskip-1.5mu \Ex_{\mu}[\log \frac\mu p]$,
$\OInc_{\dg M}$ measures the excess cost of using codes optimized for the cpds of $\dg M$ 
(weighted by their confidences),
when reality is
distributed according to $\mu$.

The second scoring function measures
the extent to which
$\mu$ 
is incompatible with 
a causal picture consisting of independent mechanisms 
along each hyperarc. 
This is captured by the
\emph{structural incompatibility}
(of $\mu$ with $\dg M$), 
and given by
\begin{equation*}
    \SInc_{\dg M}(\mu) := \,
        \pqty[\Big]{\; \sum_{\ed aST \mathrlap{\,\in \Ar}}\subafalse \alpha_a\, \H_\mu(\Tgt a | \Src a) } - \H(\mu).
\end{equation*}
Note that
$\SInc_{\dg M}$
does not depend on the cpds
of $\dg M$, nor the possible values of the 
variables; it
is defined purely in terms of
the weighted hypergraph structure $(\Ar,\balpha)$.

If the observational and structural information conflict, 
then the distribution(s) that best represent a PDG 
will depend
on the importance of structure relative to observation,
\discard{ This is captured by a trade-off 
    parameter $\gamma \ge 0$, which 
    can be used to define the scoring function
    $\bbr{\dg M}_\gamma: \Delta \V\!\X \to \Rext$, as follows:}%
as captured by a trade-off parameter $\zogamma \in [0,1]$
that controls the convex combination
$(1-\zogamma)\OInc + \zogamma \SInc$. 
So as to simplify the math
and match the notation in
previous work (\citeyear{pdg-aaai,one-true-loss}),
we mostly use a rescaled variant with a different parameterization.
Using
$\gamma := \nicefrac{\zogamma}{(1-\zogamma)} \in [0,\infty]$,
define the overall scoring function:
\begin{align*}
    \bbr{\dg M}_\gamma&(\mu) 
        := \OInc_{\dg M}(\mu) + \gamma \, \SInc_{\dg M}(\mu)
            \numberthis\label{eqn:scoring-fn} \\[-0.2ex]
        &\!\!\!= \scalebox{0.85}{$\displaystyle\frac{1}{1-\zogamma}$} \Big(\, (1-\zogamma) \OInc_{\dg M}(\mu) + \zogamma \, \SInc_{\dg M}(\mu)\, \Big) \\[-0.1ex]
        &= \Ex\nolimits_{\mu}\bigg[
            \,
            \sum_{\ed aST \mathrlap{\, \in \Ar}} \subafalse
            \log \frac
            {\mu(\Tgt a| \Src a)^{\beta_a - \gamma \alpha_a}}
            {\p_a(\Tgt a | \Src a)^{\beta_a}}
        \bigg] - \gamma \H(\mu)
        .
\end{align*}
Let $\bbr{\dg M}^*_\gamma := \argmin_\mu \bbr{\dg M}_\gamma(\mu)$ denote
the set of optimal distributions at a particular value $\gamma$.
One natural conception of inference in PDGs is then parameterized by
$\zogamma$:
to do $\zogamma$-inference
in $\dg M$ is to respond to probabilistic queries in a way that is sound with respect to every $\mu \in \bbr{\dg M}^*_\gamma$.
It is not too difficult to see that when $\bbeta \ge \gamma\balpha$, 
 \eqref{eqn:scoring-fn} is strictly convex, which ensures that
 $\bbr{\dg M}^*_\gamma$ is a singleton.
This paper demonstrates that
$\zogamma$-inference
is tractable for such PDGs.
\discard{%
    The former is a notational convenience,
    because for $\gamma \in (0, \infty)$,
    $\gamma$-inference is just $1$-inference for a
    slightly different PDG:
    $\bbr{
        \balpha, \bbeta}^*_\gamma = \bbr{
        \gamma \balpha, \bbeta}^*_1$.
    This paper demonstrates the tractability of
    1-inference for the specific case of PDGs satisfying $\bbeta \ge \balpha$, which is sufficient to ensure strict convexity of \eqref{eqn:scoring-fn}, and hence a unique optimal distribution.}%

The limiting behavior of the $\zogamma$-semantics as $\zogamma \to 0$,
which we denote $\bbr{\dg M}^*_{0^+}$ and call the \emph{$0^+$\!-semantics},
has some special properties.
If $\dg M$ is proper, then        
$\bbr{\dg M}^*_{0^+}$ contains precisely one distribution.
This distribution intuitively
reflects an extreme empirical
view: observational data trumps causal structure.
Note that in the absence of a causal picture
($\balpha = \mat0$), this
corresponds to the well-established 
practice of selecting
the maximum entropy distribution consistent with some observational constraints \parencite{jaynes1957information}.
One should be careful to distinguish 
$\bbr{\dg M}^*_{0^+}$
from $\bbr{\dg M}^*_0$,
the set of distributions that minimize
$\OInc_{\dg M}$; the latter  set
includes
 $\bbr{\dg M}^*_{0^+}$
\parencite[Prop 3.4]{pdg-aaai},
but may also contain other distributions.
\expandafter\discard\voli{
One of our main goals is to answer probabilistic queries with respect to $\bbr{\dg M}^*$, which we call \emph{$\epsilon$-inference}, 
since it coincides with $\gamma$-inference for $\gamma$ equal to an infinitessimal number $\epsilon$.}
This paper also shows how to efficiently answer queries with respect 
to the unique distribution in $\bbr{\dg M}^*_{0^+}$, which we call
\emph{$0^+$\!-inference}.

Given a PDG $\dg M$, the smallest possible value of its scoring function,
$
    \aar{\dg M}_\gamma := \inf_{\mu}\, \bbr{\dg M}_\gamma(\mu),
$
is known as its $\gamma$-inconsistency
and is interesting in its own right:
$\aar{\,\cdot\,}_\gamma$ 
is arguably a ``universal'' loss
function \parencite{one-true-loss}.

\textbf{Interior-Point Methods and Convex Optimization.}
Interior-point methods provide an iterative way of approximately solving linear programs in polynomial time \parencite{karmarkar1984new}.
With the theory of ``symmetric cones'', these methods were extended in the 1990s to handle second-order cone programs (SOCPs) and semidefinite programs (SDPs), which allow more expressive constraints.
But the constraints that these methods can handle are insufficient for
our purposes. We need what have been called \emph{exponential cone constraints}.
The \emph{exponential cone} is the convex set
\begin{align*}
        \begin{aligned}
        K_{\mskip-1mu\exp} :=
        \big\{ (x_1, x_2, x_3) &:
                x_1 \ge x_2 e^{x_3 / x_2}\!,\, x_2 > 0 \big\}
            \\[-0.6ex]\quad \mathbin{\cup}\,
        \big\{ (x_1, 0, x_3&) : x_1 \ge 0,\, x_3 \le 0 \big\}
        \qquad \subset \mathbb \Rext^3.
    \end{aligned}
\end{align*}
\discard{\voli{It is also sometimes called the ``relative entropy'' cone, because if $\mat m, \mat p \in \Delta^{n-1} \subset \mathbb R^n$ are points on a probability simplex, then $(-\mat u, \mat m, \mat p) \in K_{\exp}^n$ if and only if $\mat u$ is an upper bound on $\mat m \log (\nicefrac{\mat m}{\mat p})$, the pointwise contribution to relative entropy at each outcome.}}
Let $K \mskip-2mu := \mskip-2mu K_{\exp}^p \mskip-2mu \times \mskip-1mu
[0, \infty]^q 
\subset \smash\Rext^n$ be a product of $p$ exponential cones and $q = n - 3k$ non-negative orthants.
An \emph{exponential conic program} is then an optimization problem of the form
\begin{equation}
    \minimize
        _{\mat x}
        ~~ \mat c^{\sf T} \mat x
    \quad\subjto
    ~~ A \mat x = \mat b,~~\mat x \in 
        K,
        \label{eq:exp-conic-program}
\end{equation}
where $\mat c \in \Rext^{n}$ is some cost vector,
the function $\mat x \mapsto \mat c^{\sf T} \mat x$ is called the \emph{objective},
and $\mat b \in \Rext^m$, $\mat A \in \Rext^{m \times n}$ encode linear constraints.
\textcite*{nesterov1996infeasible} first established that such problems can be solved in polynomial time, but incur double the memory and eight times the time, compared to the symmetric counterparts. These drawbacks were eliminated in \cite{skajaa2015homogeneous}.
The algorithm that seems to display the best empirical performance \parencite{dahl2022primal}, however, was only recently shown to
run in polynomial time \parencite{badenbroek2021algorithm}.

Disciplined Convex Programming \parencite{dcp-thesis} is a
compositional approach to convex optimization that 
imposes certain restrictions on how problems can be specified.
A problem conforming to those rules is said to be \emph{dcp},
and can be efficiently compiled to a standard form
\parencite{agrawal2018rewriting},
which in our case is an exponential conic program.
Only two rules are relevant to us: a constraint of the form
$(x,y,z) \in K_{\exp}$ is
    dcp iff $x$, $y$, and $z$ are affine transformations of the
    optimization variables, 
and a linear program
augmented
with dcp 
constraints is dcp.
Because all the optimization problems that we give are
of this form,
we can easily compile them
to exponential conic programs even if they do not exactly conform to \eqref{eq:exp-conic-program}.

\discard{\color{gray!80!white}
    \section{AN EXPRESSIVE CLASS OF OPTIMIZATION PROBLMS}

    \begin{itemize}
        \item
        Many optimization problems can be effectively solved with
    \end{itemize}

    \begin{itemize}
        \item
        For optimization people: a new class of optimization problems,
    \end{itemize}

}

\section{Inference as a Convex Program}
    \label{sec:inf-as-cvx-program}

Here is an obvious, if inefficient,
way
of calculating
$\Pr_{\!\cal M}(Y|X{=}x)$ in a
probabilistic model $\cal M$. 
First compute an explicit representation of the joint distribution 
$\Pr_{\!\cal M} \in \Delta\mskip-2mu\V\!\X$, 
then marginalize to 
$\Pr_{\!\cal M}(X,Y)$ and condition on $X{=}x$.
For a factor graph or BN,
each step is
straightforward;
the problem is the exponential time and space required to represent $\Pr_{\!\cal M}(\X)$ explicitly.
A key feature of inference algorithms for BNs and FGs is that they
do not represent joint distributions in this way.
For PDGs, though, it is not
obvious that
we can calculate the $\zogamma$-semantics,
even if
we know it is unique, and
we ignore the space required to represent it (as we do in this section).
Note that $\zogamma$-inference is already an optimization problem by definition:
\[
    \minimize_\mu\quad
        \bbr{\dg M}_\gamma(\mu)
    \quad \subjto\quad \mu \in \Delta\mskip-2mu\V\mskip-2mu\X.
\]
For small enough 
 $\gamma$,
it is even convex.
But can we solve it efficiently?
With exponential cone constraints,
the answer is yes, as we show in \Cref{sec:small-gamma}.
Moreover, we can compute the $0^+$\!-semantics with a sequence of two exponential conic programs (\Cref{sec:empirical-limit}).
To give a flavor of our constructions and ease into
the more complicated ones, we begin by 
minimizing
 $\OInc$, the simpler of the two scoring functions.

\subsection{%
    Minimizing Incompatibilty
    (\texorpdfstring{$\boldsymbol\gamma\boldsymbol=\mat0$}{gamma=0})%
} \label{sec:minimize-inc}

When $\gamma = 0$, we want to find minimizers of $\OInc$,
which is a
weighted sum of conditional relative entropies.
There is a straightforward connection between the exponential cone and
relative entropy:
if $\mat m, \mat p \in
\Delta \{1, \ldots, n\}
\subset \mathbb R^n$ are points on
a probability simplex,
then $(-\mat u, \mat m, \mat p) \in K_{\exp}^n$ if and only if
$\mat u$ is an upper bound on $\mat m \log \frac{\mat m}{\mat p}$,
the pointwise contribution to relative entropy at each outcome.
Thus, perhaps unsurprisingly, we can use an exponential conic program to
find minimizers of $\OInc$.
If all beliefs are unconditional and over the same space,
the construction is standard;
we review it here, so that we can build upon it.

\textbf{Warm-up.}
\begingroup
Consider a PDG with
only one variable $X$
with
$\V\mskip-2mu X = \{1, \ldots, n\}$.
\discard{\vfull{
    Suppose further that for every arc $j \in \Ar = \{1, \ldots, k\}$, the cpd $\p_j(X)$ is an unconditional distribution over $X$.
    That is, $\Tgt j = \{ X \}$, and $\Src j = \emptyset$.
    Such unconditional probabilities may be identified with vectors $\mat p_j \in [0,1]^n$, and all $k$ of them may conjoined to form a
    matrix $\mat P = [\,p_{ij}] \in [0,1]^{n \times k}$.
    A candidate
    (joint)
    distribution $\mu(X)$
    may be represented as a vector $\mat m \in [0,1]^n$.}}
Suppose further that every arc $j \in \Ar = \{1, \ldots, k\}$
has $\Tgt j = \{ X \}$ and $\Src j = \emptyset$.
Then each $\p_j(X)$ can be identified with a vector $\mat p_j \in [0,1]^n$, and all $k$ of them can conjoined to form a matrix $\mat P = [\,p_{ij}] \in [0,1]^{n \times k}$.
Similarly, a candidate distribution $\mu$ can be identified with $\mat m \in [0,1]^n$. 
Now consider a matrix $\mat U = [u_{i,j}] \in \Rext^{n \times k}$
that, intuitively, gives an upper bound on
the contribution to $\OInc$ due to each edge and value of $X$.
Observe that
\begin{align*}
    &&(- \mat U,~
        [\mat m,\,...\,, \mat m]
        ,~ \mat P) &\in K_{\exp}^{n \times k} \\
    &\iff& \forall  i,j.~~
            u_{ij} &\ge m_i \log (\nicefrac{m_i}{p_{ij}}) \\
    &\implies& \forall j.~~
        {\textstyle\sum_i} u_{ij}  &\ge \kldiv{\mu}{p_j} \\
    &\implies& {\textstyle\sum_{i,j}} \beta_j u_{ij}  &\ge {\textstyle\sum_j} \beta_j \kldiv{\mu}{p_j} \\
    &\iff& \mat 1^{\sf T} \mat U \bbeta &\ge \OInc(\mu) .
        \numberthis\label{eqn:warmup-logic}
\end{align*}

So now, if $(\mat U, \mat m)$ is a solution to the convex program
\begin{align*}
    \minimize_{\mat m, \mat U}~~
        \mat 1^{\sf T} \mat U \bbeta
    \quad\subjto\quad &
        \mat 1^{\sf T} \mat m  = 1, \\[-2ex]
        (-\mat U,\;&
            [\mat m,\,...\,, \mat m]
            ,\; \mat P
        )
            \in K_{\exp}^{n \times k},
\end{align*}
then (a) 
the
objective value $\mat 1^{\sf T} \mat U \bbeta$
equals
the inconsistency $\aar{\dg M}_0$, and (b) $\mu \in \bbr{\dg M}^*_0$,
meaning $\mu$ minimizes $\OInc_{\dg M}$.

\endgroup

\textbf{The General Case.}
We now show how the same construction can be used to find
 a distribution $\mu \in \bbr{\dg M}^*_0$
for an arbitrary PDG $\dg M = (\X, \Ar, \mathcal P, \balpha, \bbeta)$.
\discard{\color{gray}Now that we have a taste for how this works in terms of matrices,
    let's now move up a level,
    and identify distributions with their simplex representations.}
To further simplify the presentation,
for each arc $a \in \Ar$, let
$\V a := \V(\Src a, \Tgt a)$
denote all joint settings of $a$'s source and target variables, and
write
$
\V\!\Ar :%
    = \sqcup_{a \in A} \V a
    = \{ (a, s, t) : a \in \Ar,\, (s,t) \in \V(\Src a,\Tgt a) \}
$
for the set of all choices of an arc together with values of its source and target.
For each $a \in \Ar$, 
we can regard $\mu(\Tgt a, \Src a)$ and $\mu(\Src a)\p_a(\Tgt a | \Src a)$, both distributions over $\{\Src a,\Tgt a\}$, 
as vectors of shape $\V a$.
As before, we introduce an optimization variable $\mat u$ that packages together
    all of the relevant pointwise upper bounds.
To that end, consider a 
vector
$\mat u = [u_{a,s,t}] \in \Rext^{\V\!\Ar}$
in the optimization problem
\discard{\begin{align*}
    \minimize_{\mu, \mat u} \quad
        \sum_{\mathrlap{(a,s,t) \in \V\!\Ar}} \beta_a \, u_{a,s,t}
        \quad~~&
    \numberthis\label{prob:joint-inc}\\[-3ex]
    &\subjto\quad \mu \in \Delta\V\!\X, \\[-0.3ex]
    \forall a \in \Ar.~\big({-}{\mat u}_a,\, \mu( \Tgt a,\Src a),&\, \p_a(\Tgt a | \Src a)  \mu(\Src a) \big) \in K_{\exp}^{\V a}
\end{align*}}%
{\begin{align*}
    \minimize_{\mu, \mat u} &\quad
        \sum_{\mathrlap{(a,s,t) \in \V\!\Ar}} \beta_a \, u_{a,s,t}
    \numberthis\label{prob:joint-inc}\\
    \subjto&\quad \mu \in \Delta\V\!\X, \\[-0.4ex]
        \forall a \in \Ar.~&\big(-{\mat u}_a,\, \mu( \Tgt a,\Src a),\, \p_a(\Tgt a | \Src a)  \mu(\Src a) \big) \in K_{\exp}^{\V a}
        .
\end{align*}}%
where $\mat u_a = [u_{a,s,t}]_{(s,t) \in \V a}$ consists of those
components of $\mat u$ associated with arc $a$.
Note that
the marginals
 $\mu(\Src a, \Tgt a)$ and $\mu(\Src a)$ 
are affine transformations of $\mu$, so \eqref{prob:joint-inc} is dcp.
A straightforward generalization of the logic in \eqref{eqn:warmup-logic} gives us:

\begin{linked}{prop}{joint-inc-correct}
    If $(\mu, \mat u)$ is a solution to \eqref{prob:joint-inc}, then
    $\mu \in \bbr{\dg M}_0^*$,
    and
    $%
        \sum_{(a,s,t) \in \V\!\Ar} \beta_a u_{a,s,t} = \aar{\dg M}_0$.
\end{linked}

Thus, a solution to \eqref{prob:joint-inc}
encodes a distribution that minimizes $\OInc$, and
the (0-)inconsistency
of $\dg M$.
This is a start, but to do 
$0^+$\!-inference,
among the minimizers of $\OInc$
we must find the unique distribution in $\bbr{\dg M}^*_{0^+}$, 
while for $\zogamma$-inference ($\zogamma > 0$), we need to find the optimizers of
$\bbr{\dg M}^*_\gamma$.
Either way, we must consider 
$\SInc$
in addition to $\OInc$. 

\subsection{%
    \texorpdfstring{$\boldsymbol\gamma$}%
    {gamma}-Inference
    for small
    \texorpdfstring{$%
    \boldsymbol\gamma%
    \boldsymbol>\mat0$}{gamma}%
} \label{sec:small-gamma}

When $\gamma > 0$ is small enough,
the scoring function \eqref{eqn:scoring-fn} is not only convex,
but admits a straightforward representation as an exponential conic program.
To see this, note that \eqref{eqn:scoring-fn} can be rewritten \parencite[Prop 4.6]{pdg-aaai} as:
\begin{equation}
    \begin{aligned}
        \bbr{\dg M}_\gamma(\mu) = &-\gamma\H(\mu) -
            \sum_{a \in \Ar}
                \beta_a\, \Ex_\mu
                    \log {\p_a(\Tgt a | \Src a)}
                \\[-0.6ex]
            &~~+ \sum_{a \in \Ar}
            (\gamma \alpha_a - \beta_a)
                \H_\mu (\Tgt a | \Src a).
    \end{aligned}
    \label{eq:altscore}
\end{equation}
The first term,
$-\gamma\H(\mu)$,
is strictly convex and has a well-known
translation into an exponential cone constraint;
the second one linear in $\mu$.
If $0 < \gamma \le \min_{a} \frac{\beta_a}{\alpha_a}$, then
every summand of the last term is a negative conditional entropy, and 
can be captured by an exponential cone constraint.
The only wrinkle is that it is possible for a user to specify that some $\p_a(t\mid s) = 0$, in which case the linear term 
is undefined.
The result is a requirement that $\mu(s,t) = 0$ at such points,
which we can instead encode directly with linear constraints.
To do this formally,
divide $\V\!\Ar$ into two parts:
$\V\!\Ar^+ := \{ (a,s,t) \in\V\!\Ar : \p_a(t |s) > 0\}$ and
$\V\!\Ar^0 := \{ (a,s,t) \in\V\!\Ar : \p_a(t |s) = 0\}$.
Armed with this notation, consider upper bound vectors
$\mat u = [ u_{a,s,t}]_{(a,s,t) \in \V\!\Ar}$ and $\mat v = [v_w]_{w \in \V\!\X}$,
in the following optimization problem:
{%
\begin{align*}
\minimize_{\mu, \mat u, \mat v} & ~~
    \sum_{\mathrlap{\!\!\!(a,s,t) \in \V\!\Ar}}
        (\beta_a \!- \alpha_a \gamma) u_{a,s,t}
        \,+
        \gamma
        \sum_{\mathclap{w \in \V\!\X}} v_w
    \numberthis\label{prob:joint-small-gamma}
    \\[-0.2ex]
    &\qquad
    - \sum_{\mathrlap{\!\!\!(a,s,t) \in \smash{\V\!\Ar^+}}} 
        \alpha_a \gamma \, 
        \mu(\Src a{=}s,\Tgt a {=} t) \log \p_a (t|s)
\\[0.2ex]
\subjto&\quad \mu \in \Delta\V\!\X, 
        \quad ( -\mat v,  \mu,  \mat 1) \in K_{\exp}^{\V\!\X},
    \\[-0.4ex]
    \forall a \in \Ar.~
        &\big(-\mat u_a, \mu( \Tgt a,\Src a),\p_a(\Tgt a | \Src a)  \mu(\Src a) \big)
            \in K_{\exp}^{\V a}, \\[-0.1ex]
    \forall (a,s,t) &\in \V\!\Ar^0\!.~
    \mu(\Src a{=}\mskip2mus, \Tgt a{=}\mskip2mut) = 0.
\end{align*}}

This optimization problem may look complex, but it 
falls out of
\eqref{eq:altscore} 
fairly
directly,
and gives us what we wanted.

\begin{linked}{prop}{joint-small-gamma-correct}
    If $(\mu, \mat u, \mat v)$ is a solution to \eqref{prob:joint-small-gamma},
    and $\bbeta \ge \gamma \balpha$,
    then
    $\mu$ is the unique element of
    $\bbr{\dg M}^*_\gamma$, and $\aar{\dg M}_\gamma$
    equals the objective of \eqref{prob:joint-small-gamma} evaluated at $(\mu, \mat u, \mat v)$.
\end{linked}

\subsection{
    Calculating the \texorpdfstring{$\mat 0^{\boldsymbol+}$\!}{0+}-semantics
    (\texorpdfstring{$\boldsymbol\gamma\boldsymbol\to\mat 0$}{gamma->0})}
    \label{sec:empirical-limit}
\cref{sec:minimize-inc} shows how to find a distribution $\nu$ that minimizes
$\OInc$%
---but to do
$0^+$\!-inference,
we need to find the minimizer
that, uniquely among them, best minimizes
$\SInc$.
It turns out this can be done by
    using $\nu$ to construct a second optimization problem.
The justification requires two more results;
we start by characterizing the minimizers of $\OInc$.

\begin{linked}{prop}{marginonly}
    If $\dg M$ has arcs $\Ar$ and $\bbeta \ge 0$,
    the minimizers of $\OInc_{\dg M}$ all have the same conditional
        marginals along $\Ar$.
    That is, for all $\mu_1, \mu_2 \in \bbr{\dg M}_0^*$
    and all $\ed aST \in \Ar$ 
    with $\beta_a > 0$, we have
    {\subafalse
    $\mu_1(\Tgt a, \Src a)\mu_2(\Src a) = \mu_2(\Tgt a, \Src a) \mu_1(\Src a)$.%
    \onlyfirsttime{\footnotemark}
    }
\end{linked}
\footnotetext{Intuitively, this assserts 
$\mu_1(\Tgt a | \Src a) = \mu_2(\Tgt a | \Src a)$,
but also handles cases where some
$\mu_1(\Src a {=} s)$ or $\mu_2(\Src a {=} s)$ 
equals zero.}

As a result, once we find one minimizer $\nu$ of $\OInc_{\dg M}$
(e.g., via \eqref{prob:joint-inc}),
it suffices to optimize $\SInc$ among distributions that have the same
conditional marginals along $\Ar$ that $\nu$ does.
This presents another problem: $\SInc$
is typically not convex.
Fortunately, if we constrain to distributions that minimize $\OInc$, then it is.
Moreover, on this restricted domain, it can be represented 
with dcp exponential cone constraints.

\begin{linked}{prop}{idef-frozen}
If $\mu \in \bbr{\dg M}_0^*$\,,
then
\begin{equation}
    \underset{{\dg M}}{\SInc_{}}(\mskip-0.5mu\mu\mskip-1mu) \!=\!
        \sum_{\mathclap{ w \in \V\!\X } }
            \mskip-1mu
            \mu(\mskip-1.5mu w \mskip-1.5mu)
            \log \!  \bigg(\!
                \faktor{\mu(\mskip-1.5mu w\mskip-1.5mu )}{\,\prod_{\mathclap{a \in \Ar}} 
                \mskip-1mu
                \nu\big(\mskip-1mu\Tgt a \mskip-2.2mu(\mskip-2mu w \mskip-2mu) 
                    \mskip-0.5mu \big|  \Src a \mskip-2mu(\mskip-2mu w \mskip-2mu)\mskip-2mu\big)^
                {\!\alpha\ssub a}
                }\!\mskip-2mu
            \bigg)\mskip-2mu
        ,\!
        \label{eq:idef-alt-constr}
\end{equation}
where $\{ \nu(\Tgt a | \Src a ) \}_{a \in \Ar}$ are the
marginals along the arcs $\Ar$
shared by all distributions in $\bbr{\dg M}^*_0$\
(per \cref{prop:marginonly}),
and $\Src a \mskip-1mu(\mskip-1mu w \mskip-1mu), \Tgt a \mskip-1mu(\mskip-1mu w \mskip-1mu)$ are the values of variables $\Src a$ and $\Tgt a$ in $w$.
\end{linked}

If we already know a distribution $\nu \in \bbr{\dg M}_0^*$,
perhaps by solving \eqref{prob:joint-inc}, then
the denominator of \eqref{eq:idef-alt-constr} does not depend on $\mu$ 
and so is constant in our search for minimizers of
$\SInc_{}$.
For ease of exposition, aggregate these values into a vector
\begin{equation}
    \mat k :=
        \Big[
        ~\prod_{\smash{a \in \Ar}} \nu(\Tgt a (w) | \Src a (w))^{\alpha\ssub a}
        \Big]%
        _{w \in \V\!\X}~\raisebox{-1ex}.
        \label{eq:cm-product}
\end{equation}
We can now capture $\bbr{\dg M}^*_{0^+}$ with a convex program.

\begin{linked}{prop}{joint+idef-correct}
If $\nu \in \bbr{\dg M}_0^*$
and $(\mu, \mat u)$ 
solves the problem
\begin{align*}
    \minimize_{\mu, \mat u} & \quad
        \smash{\mat 1^{\sf T} \mat u}
        \numberthis\label{prob:joint+idef}\\[-0.5ex]
    \subjto &\quad
        (-\mat u,  \mu, \mat k ) \in K_{\exp}^{\V\!\X},~~\quad \mu \in \Delta\V\!\X, \\[-0.4ex]
            \forall& \ed aST \subafalse \in \Ar.~~\mu(\Src a, \Tgt a)\, \nu(\Src a) = \mu(\Src a)\, \nu(\Src a, \Tgt a),
\end{align*}
then $\bbr{\dg M}^*_{0^+} = \{ \mu \}$
and $\mat 1^{\sf T} \mat u = \SInc_{\dg M}(\mu)$.
\end{linked}
Running \eqref{prob:joint+idef} through a convex solver gives rise to the 
first algorithm
that can reliably find $\bbr{\dg M}^*_{0^+}$.

\section{Polynomial-Time Inference Under Bounded Treewidth}
    \label{sec:clique-tree-expcone}
We have now seen how $\zogamma$-inference
(for small $\zogamma$) can be reduced to convex optimization
over joint distributions $\mu$---%
but $\mu$ grows exponentially with the number of variables in the PDG,
so we do not yet have a tractable inference algorithm.
We now show how $\mu$ can be replaced with a \actree\ over the PDG's structure. 
What makes this possible is a key independence property of traditional graphical models,
which we now prove
holds for PDGs as well.

\begin{linked}[Markov Property for PDGs]{theorem}{markov-property}
  If\, $\dg M_1$ and $\dg M_2$ are PDGs
    over sets $\X_1$ and $\X_2$ of variables, respectively,
\discard{
    Then for all $\gamma > 0$, we have that
    \[  \bbr{\dg M_1 \bundle \dg M_2}^*_\gamma
			~\models~
		\X_1 \mathbin{\bot\!\!\!\bot} \X_2 \mid \X_1 \cap \X_2. \] 
    That is: for every distribution $\mu \in \bbr{\dg M_1 \bundle \dg M_2}^*_\gamma$,
    the variables of $\dg M_1$ and of $\dg M_2$ are conditionally independent given the variables they have in common.
}
then
$\X_1$
and $\X_2$
are conditionally independent given $\X_1 \cap \X_2$
in every
 $\mu \in \bbr{\dg M_1 \bundle \dg M_2}^*_\gamma$\,,
for all $\gamma > 0$ and
$\gamma=0^+$.
\end{linked}

For the remainder of this section, fix a PDG $\dg M$ and a tree decomposition $(\C, \mathcal T)$ of $\dg M$'s hypergraph.
One significant consequence of \cref{theorem:markov-property} is that, in the
search for optimizers of \eqref{eqn:scoring-fn}, we
need consider only distributions that satisfy those independencies,
all of which can be represented as a \actree\ 
$\bmu = \{\mu_C \in \Delta\V(C) \}_{C \in \C}$
over $(\C, \mathcal T)$.

\begin{linked}{coro}{can-use-cliquetree}
    If $\dg M$ is a PDG with arcs $\Ar$, 
    $(\C, \mathcal T)$ is a tree decomposition of $\Ar$,
    $\gamma > 0$, and
    $\mu \in \bbr{\dg M}^*_\gamma$, then there exists a \actree\ 
    $\bmu$ over $(\C, \mathcal T)$ such that $\Pr_{\bmu} = \mu$.
\end{linked}

For convenience, let
$\V\C := \{(C,c) : C \in \C, c \in \V(C)\}$ 
be
the set of all choices of a cluster together with a setting of its variables. 
Like before, we start by optimizing $\OInc$, this time
over \cactree s $\bmu$,
which we identify with vectors
$
 \bmu
    \cong [\mu_C(C{=}c)]_{(C,c)\in\V\C} 
$.
We need the conditional marginals $\Pr_{\bmu}(\Tgt a | \Src a)$ of $\bmu$ along every arc $a$ in order
to calculate $\OInc_{\dg M}(\Pr_{\bmu})$; fortunately, they are readily available.
Since $(\C, \mathcal T)$ is a tree decomposition,
we know $\Src a$ and $\Tgt a$ lie entirely within some cluster $C_{\!a} \in \C$,
and $\Pr_{\!\bmu}(\Tgt a | \Src a) = \mu_{C_{\!a}}\!(\Tgt a | \Src a)$ if $\bmu$ is calibrated.
For $\mat u \in \Rext^{\V\!\Ar}$, consider the problem
\begin{align*}
    \minimize_{\bmu, \mat u} &\quad
        \sum_{\mathrlap{(a,s,t) \in \V\!\Ar}}\beta_a \,  u_{a,s,t}
    \numberthis\label{prob:cluster-inc}\\
    \subjto&\quad
        \forall C \in \C.~\mu_C \in \Delta\V(C), \\[-0.3ex]
        \forall a \in \Ar.~
            \big(&\!- \! \mat u_a,\, \mu_{C\!_a}\!(\Src a,\mskip-2mu \Tgt a),\, \mu_{C\!_a}\!(\Src a) \p_a(\Tgt a | \Src a)\big) \in K_{\exp}^{\V a} \\[-0.2ex]
        \forall (C,D) &\in \mathcal T.~~ \mu_{C}(C \cap D) = \mu_{D}(C \cap D),
\end{align*}
where again $\mat u_a$ is the restriction of $\mat u$ to components associated with $a$.
Problem \eqref{prob:cluster-inc} is similar to \eqref{prob:joint-inc}, except
that it requires local marginal constraints to restrict our search to \cactree s.
It is analogous to problem 
\textsc{CTree-Optimize-KL}
of \textcite[pg. 384]{koller2009probabilistic}.

\begin{linked}{prop}{cluster-inc-correct}
    If $(\bmu, \mat u)$ is a solution to \eqref{prob:cluster-inc}, then
    \begin{enumerate}[label={(\alph*)},nosep]
    \item $\bmu$ is a calibrated, with $\Pr_{\bmu} \in \bbr{\dg M}^*_0$, and
    \item the objective of \eqref{prob:cluster-inc} evaluated at $\mat u$ equals $\aar{\dg M}_0$.
    \end{enumerate}
\end{linked}
We can now find a minimizer of $\OInc$ and
compute $\aar{\dg M}_0$ without storing a joint distribution.
\discard{
Note that
\eqref{prob:cluster-inc} is the result of modifying \eqref{prob:joint-inc} in the obvious way to deal with \actree s.
}
But to do anything else, we must deviate from the template laid out in \cref{sec:inf-as-cvx-program}.

\textbf{Dealing with Joint Entropy.}
In the construction of \eqref{prob:cluster-inc},
we rely heavily on the fact that each term of $\OInc_{\dg M}$
depends only on local marginal distributions $\mu_{C_{\mskip-2mu a}}\!(\Tgt a,  \Src a)$
and $\mu_{C_{\mskip-2mu a}}\!(\Src a)$.
The same is not true of $\SInc_{}$, which depends on the joint entropy $\H(\Pr_{\bmu})$ of the entire distribution.
At this point we should point out an important 
reason to restrict our focus to trees:
it allows the joint entropy to be expressed
in terms of the cluster marginals \parencite{wainwright2008graphical},
by
\begin{equation}\label{eq:bethe-entropy}
    -\H(\Pr\nolimits_{\bmu})
        = -\sum_{C \in \C} \H(\mu_C)
        ~+~ \sum_{\mathclap{(C,D) \in \mathcal T}} \H_{\bmu}(C \cap D).
\end{equation}
Even so,
it is not obvious that
\eqref{eq:bethe-entropy} can be
captured with dcp exponential cone constraints.
(Exponential conic programs can minimize negative entropy,
but not positive entropy, which is concave.)
We now describe how this can be done.

\def\Par#1{\mathrm{Par}(#1)}
\def\Pash{\mathit{V\mskip-5muC\mskip-3.5muP\!}}

Choose a root node $C_0$ of the tree decomposition, and orient each edge of $\mathcal T$ so that it points away from $C_0$.
Each cluster $C \in \cal C$, except for $C_0$, then has a parent cluster $\Par C$;
define $\Par{C_0} := \emptyset$ to be an empty cluster, since $C_0$ has no parent.
Finally, for each $C \in \C$, let $\Pash_C := C \mathbin{\cap} \Par C$ denote the
the set of $\mathbf v$ariables that cluster $C$ has in $\mathbf c$ommon with its $\mathbf p$arent cluster.
\unskip\footnotemark 
As $\cal T$ is now a directed tree, this definition allow us to express
\eqref{eq:bethe-entropy} in a more useful form:
\begin{align*}
    - \H(&\Pr\nolimits_{\bmu}) =
        - \H(\mu_{C_0}) - \!
        \sum_{(C \to D)\mathrlap{ \in \mathcal T}}
        \H_{\Pr_{\bmu}}(D \mid C)\\[-1ex]
    &= 
        \sum_{C \in \cal C} \sum_{c \in \V(C)}
        \mu_C(C{=}c)
        \log \frac
            { \mu_C(C{=}c)}
            { \mu_C(\Pash_C(c)) }
        ,
            \numberthis\label{eq:cluster-ent-decomp}
\end{align*}
where $\Pash_C(c)$ is the restriction of the 
joint value $c \in \V(C)$
to the variables $\Pash_C \subseteq C$
\unskip.
Crucially, the denominator of \eqref{eq:cluster-ent-decomp} is an affine transformation of $\mu_C$.
The upshot: we have rewritten the joint entropy
as a sum of functions of the clusters, each of which can be captured with a dcp exponential cone constraint.
This gives us analogues of the problems
in \cref{sec:small-gamma,%
sec:empirical-limit} that 
operate on \actree s.

\textbf{Finding \actree s for $\zogamma$-inference.}
The ability to decompose the joint entropy as in \eqref{eq:cluster-ent-decomp} allows us to adapt 
\eqref{prob:joint-small-gamma} 
to operate on \cactree s, rather than joint distributions. 
Beyond the changes already present in \eqref{prob:cluster-inc},
the key
is to replace 
the exponential cone constraint
$( -\mat v,  \mu,  \mat 1) \in K_{\exp}^{\V\!\X}$,
which captures the entropy
of $\mu$,
with
\[
\big(-\mat v,\;\bmu,\, [\,\mu_{C}(\Pash_C(c))\,]_{(C,c)\in\V\C}\big) \in K_{\exp}^{\V\C},
\]
which captures the entropy of $\bmu$, by
\eqref{eq:cluster-ent-decomp}.
\ifvfull %
Over vectors
$\mat v, \bmu \in \Rext^{\V\C}$ and
$\mat u \in \Rext^{\V\!\Ar}$,
the problem becomes: 
{\allowdisplaybreaks
\begin{align*}
    \minimize_{\bmu, \mat u, \mat v} & ~~
    \sum_{\mathrlap{\!\!\!(a,s,t) \in \V\!\Ar}} (\beta_a \!- \alpha_a \gamma) u_{a,s,t}
    + \gamma \sum_{\mathclap{(C,c) \in \V\C}}  v_{C,c}
    \numberthis\label{prob:cluster-small-gamma}
    \\[-0.2ex]
    - \sum_{\mathrlap{\!\!\!(a,s,t) \in \smash{\V\!\Ar^+}}}&
        \alpha_a\gamma\,
        \mu_{C\!_a}\!(\Src a{=}s,\Tgt a {=} t)
        \log \p_a (\Tgt a{=}t\mid s)
\\[0.2ex]
\subjto&\quad
    \forall C \in \C.~\mu_C \in \Delta\V(C), \\[-0.2ex]
    \forall a \in \Ar.~
        \big(&\!- \! \mat u_a,\, \mu_{C\!_a}\!(\Src a,\mskip-2mu \Tgt a),\, \mu_{C\!_a}\!(\Src a) \p_a(\Tgt a | \Src a)\big) \in K_{\exp}^{\V a}, \\
    \forall (a,s,t) &\in \V\!\Ar^0\!.~
    \mu_{C\!_a}\!(\Src a{=}\mskip2mus, \Tgt a{=}\mskip2mut) = 0, \\[-0.2ex]
    \forall (C,D) &\in \mathcal T.~~ \mu_{C}(C \cap D) = \mu_{D}(C \cap D),\\[-0.3ex]
    \big(-\mat v,\;&\bmu,\, [\,\mu_{C}(\Pash_C(c))\,]_{(C,c)\in\V\C}\big) \in K_{\exp}^{\V\C}
    .\\[-5ex]
\end{align*}%
}%
\else%
This gives rise to an optimization problem 
over
$\mat v, \bmu \in \Rext^{\V\C}$ and
$\mat u \in \Rext^{\V\!\Ar}$,
that we call \eqref{prob:cluster-small-gamma}.
The rest of the details are less instructive, so we defer
them to \cref{appendix:prob-details} for brevity. 
\fi%

\footnotetext{
    Different choices of $C_0$ 
    yield
    different definitions of $\Pash$, 
    and ultimately optimization problems of different sizes;
    the optimal choice
    can be found with Edmund's Algorithm \parencite{chu1965shortest},
    which computes a directed analogue of the minimum spanning tree.}

\begin{linked}{prop}{cluster-small-gamma-correct}
    If $(\bmu, \mat u, \mat v)$ is a solution to \eqref{prob:cluster-small-gamma}
    and $\bbeta \ge \gamma \balpha$, then
    $\Pr_{\bmu}$ is the unique element of $\bbr{\dg M}^*_\gamma$,
    and the objective of \eqref{prob:cluster-small-gamma} at $(\bmu, \mat u, \mat v)$ equals $\aar{\dg M}_\gamma$.
\end{linked}%
\discard{%
\textbf{A \actree\ for
    $\mat 0^{\boldsymbol+}$\!-inference.%
}}%
A related use of \eqref{eq:cluster-ent-decomp} is
to enable an analogue of
\ifvfull%
\eqref{prob:joint+idef} that searches over \actree s (rather than joint distributions),
to find
a compact representation of $\bbr{\dg M}^*_{0^+}$.
We begin with a straightforward adaptation of
    the relevant machinery in \Cref{sec:empirical-limit}.
Suppose that $\boldsymbol\nu {=} \{\nu_C : C \in \C\}$ is a \cactree\ over the tree decomposition $(\C, \mathcal T)$ representing a distribution $\Pr_{\boldsymbol\nu} \in \bbr{\dg M}^*_0$, say obtained by solving \eqref{prob:cluster-inc}.
For $C \in \C$, let $\Ar_C:= \{ a \in \Ar : C_a = C\}$ be the set of
arcs assigned to cluster $C$, and let
\[
    \mat k := \smash{\bigg[} \prod_{\smash{\mathrlap{a \in \Ar_C}}} \nu_C (\Tgt a (c) | \Src a (c))^{\alpha_a} \smash{\bigg]\mathclose{\vphantom{\Big|}}_{(C,c) \in \V \C}} ~~\in \Rext^{\V\C}
\]
be the analogue of \eqref{eq:cm-product} for a cluster tree.
Once again, consider
$\mat u := [ u_{(C,c)} ]_{(C,c) \in \V\C}$
in the optimization problem
{\allowdisplaybreaks%
\begin{align*}
\minimize_{\bmu, \mat u} & \quad
    \mat 1^{\sf T} \mat u
    \numberthis\label{prob:cluster+idef}\\
\subjto &\quad
    \forall C \in \C.~\mu_C \in \Delta\V(C), \\[-0.2ex]
     \big({-}\mat u,\,  \bmu,\,\, &
            \mat k \odot
            \big[\;\mu_C(\Pash_C(c))\;\big]_{(C,c) \in \V\C}
            \big) \in K_{\exp}^{\V\C}, \\[-0.2ex]
    \forall a \in \Ar.&~~\mu_{C_{\!a}}\!(\Src a, \Tgt a) \nu_{C_{\!a}}\!(\Src a) = \mu_{C_{\!a}}\!(\Src a) \nu_{C_{\!a}}\!(\Src a, \Tgt a)\\
    \forall (C,D) &\in \mathcal T.~~ \mu_{C}(C \cap D) = \mu_{D}(C \cap D).
\end{align*}}%
The biggest change is in the second constraint: 
the upper bounds $[u_{(C,c)}]_{c \in \V C}$ for cluster $C$ now account only
for the additional entropy not already modeled by 
$C$'s
ancestors.
\else %
\eqref{prob:joint+idef} for clique s,
resulting in an optimization problem that we call \eqref{prob:cluster+idef},
whose details we defer to \cref{appendix:prob-details} as well.
\fi

\begin{linked}{prop}{cluster-idef-correct}
    If $(\bmu, \mat u)$ is a solution to \eqref{prob:cluster+idef},
    then $\bmu$ is a \cactree\
    and $\bbr{\dg M}^*_{0^+} = \{ \Pr_{\!\bmu} \}$.
\end{linked}

At this point, standard algorithms can use $\bmu$
to answer probabilistic queries about $\Pr_{\!\bmu}$ in polynomial time \parencite[\S 10.3.3]{koller2009probabilistic}.
\discard{
    Concretely: marginal probabilities can essentially be read off of a cailbrated a \actree,
    and evidence $X{=}x$ may be incorporated by
    setting $\mu_C(c) := 0$ for every $C{=}c$ that conflicts with $X{=}x$
    and recalibrating the \actree\ (e.g., with belief propagation). }
From \cref{prop:cluster-idef-correct,prop:cluster-small-gamma-correct}, it follows that
$\zogamma$-inference
(for small $\zogamma$, and for $0^+$)
can be reduced to a (pair of) convex optimization problem(s) with
a polynomial number of variables and constraints.
All that remains 
is to show that such a problem can be solved in polynomial time.
For this, we turn to interior-point methods.
As
\eqref{prob:cluster-small-gamma} and \eqref{prob:cluster+idef} are dcp, they
can be transformed via established methods \parencite{agrawal2018rewriting} into
a standard form
that can be solved in polynomial time by commercial solvers \parencite{mosek,ECOS}.
Threading the details of our constructions through
the analyses of \textcite{dahl2022primal}
and \textcite{nesterov1996infeasible}
results in
our main theorem.

\discard{
\begin{linked}{theorem}{main}
We can do PDG inference to precision $\epsilon$ in 
    \[ 
    O\pqty[\Big]{  (N\!+\!A)^4 V^{4T}
        \pqty[\Big]{ T \log V + \log \frac{N\!+\! A}{\epsilon} 
    + \log
    \frac{\beta^{\max}}{\beta^{\min}}
        } }%
    \]
    time,
    \unskip\daggerfootnote{At the cost of substantial overhead and engineering effort, the exponent $4$ can be reduced to 2.872, by appeal to \textcite{skajaa2015homogeneous} and the 
    current best matrix multiplication algorithm \parencite[$O(n^{2.372})$]{duan2022faster}
    to invert 
    $n{\times} n$
    linear systems. }
    where:
    \begin{itemize}[nosep,%
            ]
        \item $N$ is the total number of variables,
        \item $V$ is the number of values per variable,
        \item $T$ is the tree-width of the PDG's structure, 
        \item $A$ is the number of hyperarcs,
        \item $\beta^{\max}$ is the maximum observational confidence,
        and
        \item $\beta^{\min}$ is either $\gamma$ (for $\zogamma$-inference)
         or the minimimum non-zero observational confidence (for $0^+$\!-inference).
        \item $\gamma^{\max} = \max_{a \in \Ar} \beta_a/\alpha_a$
    \end{itemize}
\end{linked}
}

\begin{linked}{theorem}{main}
Let $\dg M = (\X, \Ar, \mathbb P, \balpha, \bbeta)$
be a proper discrete PDG with $N = |\X|$ variables each taking at most $V$ values
and $A = |\Ar|$ arcs,
in which each component of 
$\bbeta \in 
    \mathbb R^{\Ar}$
and $\mathbb P \in
    \mathbb R^{\V\!\Ar}$
is specified in binary with
at most
$k$ bits.
Suppose that $\gamma \in \{0^+\}\cup (0,\,  \min_{a \in \Ar} \frac{ \beta_a}{\alpha_a}]$.
If $(\C, \mathcal T)$ is a tree decomposition of $(\X,\Ar)$ of width $T$
and $\bmu^* \in \mathbb R^{\V\C}$ 
is the unique \cactree\ over $(\C, \mathcal T)$ 
that represents the $\zogamma$-semantics of $\dg M$,
then
\begin{enumerate}[wide, label={\rm{(\alph*)}}]
\item 
Given $\dg M$, $\gamma$, and $\epsilon > 0$, 
we can find a \cactree\ $\epsilon$ close in $\ell_2$ norm to 
$\bmu^*$
in time
\onlyfirsttime{%
\unskip$^\text{\ref{note:<4possible}}$%
}
\begin{align*}
    O\pqty[\bigg]{&|\V\!\Ar + \V\C|^{4}
        \pqty[\Big]{ \log |\V\!\Ar + \V\C| + \log \frac1\epsilon} k^2 \log k 
    }
    \\
    &\subseteq
    \tilde O\pqty[\Big]{k^2 |\V\!\Ar + \V\C|^{4}
        \log \nf 1\epsilon 
    }
    \\
    &\subseteq
    \tilde O\pqty[\Big]{k^2 (N+A)^4\,V^{4(T+1)}
         \log \nf 1\epsilon }
.
\end{align*}
\item
The unique \actree\ closest to $\bmu^*$ 
in which every component is represented with a $k$-bit binary number,
can be calculated in time 
\unskip$^\text{\ref{note:<4possible}}$%
\[
    \tilde O\pqty[\Big]{k^2 |\V\!\Ar + \V\C|^{4}}
    \subseteq
    \tilde O\pqty[\Big]{k^2 (N\!+\!A)^{4}\, V^{4(T+1
    )}}.
\]
\end{enumerate}
\end{linked}

Observe that the dependence on the precision is $\log (\nf1\epsilon)$, which is optimal in the sense that, in general, it takes time $\Omega(\log \nf 1\epsilon)$ to write down the binary representation of any number within $\epsilon$ of a given value.
\unskip\footnote{
    More precisely: if a value $x$ is chosen uniformly from $[0,1]$, then
    with probability $1-\sqrt\epsilon$ the binary representation of every $y \in [x-\epsilon, x+\epsilon]$ 
    has at least $\lfloor \frac12 \log_2 \nf1\epsilon \rfloor -1$ bits.
    }
In practice, this procedure can be used as if it were an exact algorithm,
with no more overhead than that incurred by floating point arithmetic.

\section{APPROXIMATION, HARDNESS, AND A DEEP CONNECTION BETWEEN INCONSISTENCY AND INFERENCE}
\def\ApproxPDGInfer{\textsf{APPROX-PDG-INFER}}
\def\ApproxPDGInc{\textsf{APPROX-CALC-INC}}
\def\ApproxInferUniq{\textsf{APPROX-INFER-CVX}}

While \cref{theorem:main} gives us a way of doing inference to machine precision in polynomial time, which is the typical use case of an exact algorithm, it is not technically an exact inference algorithm.
Indeed, if we require binary representations of numbers, exact inference for PDGs is technically not possible in finite time: in a PDG, the exact answer to an inference query may be an irrational number (even if all components of $\mathbb P$,$\balpha$, and $\bbeta$ are rational).
This leads us to formulate approximate inference more precisely.

\begin{defn}[approximate PDG inference]
    An instance of problem \ApproxPDGInfer\ 
    is a tuple $(\dg M, \gamma, Q, \epsilon)$, where
    $\dg M$ is a PDG with variables $\X$, $\gamma \in \{0^+\} \cup [0, \infty]$
    is the relative importance of structural information,
    $Q$ is a conditional probability query of the form
    ``$\Pr(Y{=}y|X{=}x) = ?$'', where $X,Y \subseteq \X$ and $(x,y) \in \V(X,Y)$,
    and $\epsilon > 0$ is the precision desired for the answer.
    A solution to this problem instance is a pair of numbers
    $(r^-, r^+)$
    such that
    \begin{align*}
        r^- &\le \inf_{\mu \in \bbr{\scalebox{0.7}{$\dg M$}}^*_\gamma} \mu(Y{=}y | X{=}x) \le r^- + \epsilon \\
        \text{and} \qquad
        r^+ &\ge \sup_{\mu \in \bbr{\scalebox{0.7}{$\dg M$}}^*_\gamma} \mu(Y{=}y | X{=}x) \ge r^+ - \epsilon. 
        \qedhere
    \end{align*}
\end{defn}

The problem we solved in \cref{sec:inf-as-cvx-program,sec:clique-tree-expcone}
is the special case in which $\dg M$ is assumed to be proper and $\gamma \in \{0^+\} \cup (0, \min_a \frac{\beta_a}{\alpha_a})$.  This is enough to ensure there is a unique optimal distribution $\mu^* \in \bbr{\dg M}^*_\gamma$, with respect to which we must answer all queries.  In this case, the definition above 
essentially amounts to providing a single
number $p$ such that $p-\epsilon \le \mu^*(Y{=}y|X{=}x) \le p+\epsilon$.  
We call this easier subproblem \ApproxInferUniq.
We will also be interested in the unconditional variants of both inference
problems, in which  
    no additional evidence is supplied (i.e., $X = \emptyset$). 
We now define the analogous problem of approximately calculating a PDG's  degree of inconsistency. 

\begin{defn}[approximate inconsistency calculation]
    An instance of problem 
    \ApproxPDGInc\ 
    is a triple $(\dg M, \gamma, \epsilon)$, where
    $\dg M$ is a PDG, $\gamma \ge 0$, and $\epsilon > 0$ is the desired precision. 
    A solution to this problem instance is 
    a number $r$ such that $|\aar{\dg M}_\gamma - r | < \epsilon$.
\end{defn}

The interior point method behind \cref{theorem:main} solves \ApproxPDGInc\ 
    in the process of finding a \actree\ for inference. 
But, technically, it does not solve \ApproxPDGInfer. 
A solution to \ApproxPDGInfer\ is
a conditional probability, not a \cactree.
While a \cactree\ does allow us to compute conditional
probabilities, 
an $\epsilon$-close \actree\ does not give us $\epsilon$-close answers
    to probabilistic queries, especially those conditioned on improbabable events (i.e., finding $\Pr(Y{=}y|X{=}x)$ when $\Pr(X{=}x) \approx 0$).
Nevertheless, because precision is so cheap,
the interior point method behind \cref{theorem:main}
can still be used as a subroutine to solve \ApproxInferUniq. 

\begin{linked}{theorem}{approx-infer}
\ApproxInferUniq\ can be solved in
\def\mustar{\mu^{\mskip-2mu*\!}}
\begin{align*}
    \tilde O \pqty[\bigg]{ \! (N\!+\!A)^4 V^{4(T+1)}
    \log \frac1{\epsilon \mustar(x)\!}
    \Big[
          \log \frac{\beta^{\max}\!}{\beta^{\min}\!} 
           + \log \frac1{\epsilon  \mustar(x)} 
    \Big]\! }
\end{align*}
time,
\onlyfirsttime{\unskip\footnote{%
   \label{note:<4possible}%
   At the cost of substantial overhead and engineering effort, the exponent $4$ can be reduced to 2.872, by appeal to \textcite{skajaa2015homogeneous} and the current best matrix multiplication algorithm \parencite[$O(n^{2.372})$]{duan2022faster} to invert $n{\times} n$ linear systems. 
}}
where $\mustar(x)$ is the probability of the event $X{=}x$ in the optimal
distribution $\mu^*$, 
 $\beta^{\max} := \max_{{a \in \Ar}} \beta_a$ is the largest observational confidence, 
\[
\text{and}\qquad
\beta^{\min} := 
\begin{cases}
    ~\displaystyle\min_{a \in \Ar} \{ \beta_a : \beta_a > 0\}& \text{if}~ \gamma = 0^+\\[-0.5ex]
    ~\hfill\gamma~~ & \text{if}~ \gamma > 0~~\text{.}
\end{cases}
\]
\end{linked}

The factor of $\log(\nf1{\mu^*\!(x)})$ is unusual, 
but even exact inference algorithms typically must write down $\mu^*(X{=}x)$
on the way to calculating $\mu^*(Y{=}x|X{=}x)$, which implicitly incurs 
a cost of at least $\log(\nf1{\mu^*\!(x)})$. 
A Bayesian network with $N$ variables in which cpds are articulated to precision $k$
can have nonzero marginal probabilities as small as $2^{- N k}$,
    in which case the additional worst case overhead for small probabilities is linear. 
We conjecture that it is not possible to form smaller marginal probabilities with PDGs, although the question remains open. 
Algorithmically speaking,
\cref{theorem:approx-infer} extends \cref{theorem:main} 
in three key ways. 

\begin{enumerate}
\item We must request additional precision 
    to ensure that the marginal probabilities deviate at most $\epsilon$
    from the true ones.  This sense of of approximation effectively
    bounds the $\ell_1$ norm of $\bmu^* - \bmu$,
    while \cref{theorem:main} (a) bounds it $\ell_2$ norm and
    (b) bounds its $\ell_\infty$ norm.

\item We must introduce a loop to refine precision until we have a suitably precise estimate of $\Pr(X{=}x)$.

\item Rather than directly dividing our estimate of $\Pr(Y{=}y,X{=}x)$ by our estimate of $\Pr(X{=}x)$, we calculate something slightly more stable. 
\end{enumerate}

See the proof for details. 
One immediate corrolary of \cref{theorem:approx-infer} is that
\ApproxInferUniq\ $\subseteq \mathsf{EXP}$, without the assumption of bounded treewidth.

It may be worth noting that there is at least one instance in the literature where \emph{approximate} Bayesian Network (BN) inference is tractable for a subclass of models other than those of bounded treewidth: \textcite{Dagum-Luby-approximate} give a randomzied algorithm for the special class of Bayesian Networks that do not have extreme conditional probabilities. Specifically they show that, assuming a network with $N$ nodes, the inference problem is in $\mathsf{RP}(N, \nf1\epsilon)$. 
In addition to restricting to a differet class of models (bounded conditional probabilities, but not bounded treewidth), their approximation 
    algorithm in another significant respect: it is polynomial in $\nf1\epsilon$, rather than in $\log(\nf1\epsilon)$. 
Thus, the time it requires is exponential with respect to the number of requested digits,
    while our algorithm takes linear time.

Because PDGs generalize BNs, approximate inference
for PDGs is at least as hard as it is for BNs.

\begin{prop}[\parencite{roth-hardness-1996}]
        \label{bn-sharp-P-hard}
    \ApproxPDGInfer\ is \#P-hard. 
\end{prop}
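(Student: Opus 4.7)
The plan is a direct reduction from the known \#P-hardness of approximate inference in Bayesian networks. Since PDGs strictly generalize BNs (as stated in the introduction and in the discussion of $\PrM$), the idea is to embed an arbitrary BN as a PDG in polynomial time, argue that the $\zogamma$-semantics of this PDG coincides with the BN's joint distribution for every relevant $\zogamma$, and then observe that an \ApproxPDGInfer\ oracle would solve the corresponding BN query within the same precision.

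\textbf{Step 1: Encoding.} Given a BN $\mathcal B$ on variables $\X$ with cpds $\{p_i(X_i \mid \mathrm{Pa}(X_i))\}_{i}$, construct the PDG $\dg M_{\mathcal B} = (\X, \Ar, \mathbb P, \balpha, \bbeta)$ whose arcs $\Ar = \{a_i\}_i$ have $\Src{a_i} = \mathrm{Pa}(X_i)$, $\Tgt{a_i} = \{X_i\}$, cpd $\p_{a_i} = p_i$, and weights $\alpha_{a_i} = \beta_{a_i} = 1$. This is polynomial in the size of $\mathcal B$, and $\dg M_{\mathcal B}$ is proper since $\bbeta > \mat 0$.

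\textbf{Step 2: Semantic equivalence.} Let $\mu_{\mathcal B}$ be the BN's joint distribution. A direct calculation shows $\OInc_{\dg M_{\mathcal B}}(\mu_{\mathcal B}) = 0$ (each relative-entropy summand vanishes, since $\mu_{\mathcal B}(X_i \mid \mathrm{Pa}(X_i)) = p_i$), and $\SInc_{\dg M_{\mathcal B}}(\mu_{\mathcal B}) = 0$ as well, because the chain rule under the BN factorization gives $\H(\mu_{\mathcal B}) = \sum_i \H_{\mu_{\mathcal B}}(X_i \mid \mathrm{Pa}(X_i))$. Thus $\mu_{\mathcal B}$ simultaneously minimizes both scoring functions, so $\mu_{\mathcal B} \in \bbr{\dg M_{\mathcal B}}_\gamma^*$ for every $\gamma$. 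For $\gamma$ in the regime guaranteeing strict convexity of \eqref{eqn:scoring-fn} (e.g., $\gamma \in (0,1]$, since $\bbeta = \balpha$ here), this minimizer is unique; and for $\gamma = 0^+$, propriety gives a unique element of $\bbr{\dg M_{\mathcal B}}^*_{0^+}$, which must be $\mu_{\mathcal B}$ by the same argument.

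\textbf{Step 3: Reduction.} Given a BN inference instance $(\mathcal B, Q, \epsilon)$, form $(\dg M_{\mathcal B}, \gamma, Q, \epsilon)$ for any fixed admissible $\gamma$ (say $\gamma = 1$). Since $\bbr{\dg M_{\mathcal B}}_\gamma^* = \{\mu_{\mathcal B}\}$, a solution $(r^-, r^+)$ to this \ApproxPDGInfer\ instance satisfies $r^- \le \mu_{\mathcal B}(Y{=}y \mid X{=}x) \le r^- + \epsilon$, which is exactly an $\epsilon$-approximation to the BN query. Invoking Roth's result \parencite{roth-hardness-1996}, which asserts \#P-hardness of computing such an approximation for BNs, yields \#P-hardness of \ApproxPDGInfer.

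The only subtlety, and the step that deserves the most care, is confirming semantic equivalence in Step 2 for the relevant range of $\gamma$ --- in particular, that no other distribution can tie $\mu_{\mathcal B}$ at both incompatibility scores. Strict convexity of \eqref{eqn:scoring-fn} at $\gamma > 0$ handles the interior case, and for $\gamma = 0^+$ the characterization of $\bbr{\,\cdot\,}^*_{0^+}$ via propriety (plus $\mu_{\mathcal B}$ achieving $\OInc = \SInc = 0$) handles the limiting case. Everything else is bookkeeping.
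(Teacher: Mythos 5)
Your reduction is correct and matches the argument the paper relies on: the paper gives no explicit proof of this proposition, offering only the one-line remark that ``Because PDGs generalize BNs, approximate inference for PDGs is at least as hard as it is for BNs'' and attributing the hardness to \textcite{roth-hardness-1996}. You have simply filled in the details that the paper's one-liner leaves implicit---the faithful embedding with $\balpha = \bbeta = \mat 1$ on each BN family, the verification that the BN distribution attains $\OInc = \SInc = 0$ and is therefore the unique optimum for all admissible $\zogamma$ (strict convexity for $\gamma \in (0,1]$, propriety for $0^+$), and the resulting oracle reduction.
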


Thus, the exponential time of \cref{theorem:approx-infer}
    is the best we could have hoped for, in the general case.  
The argument is due to \textcite{roth-hardness-1996}, 
although we have altered it somewhat.

\begin{figure*}
    \centering
        \includegraphics[width=0.67\linewidth]{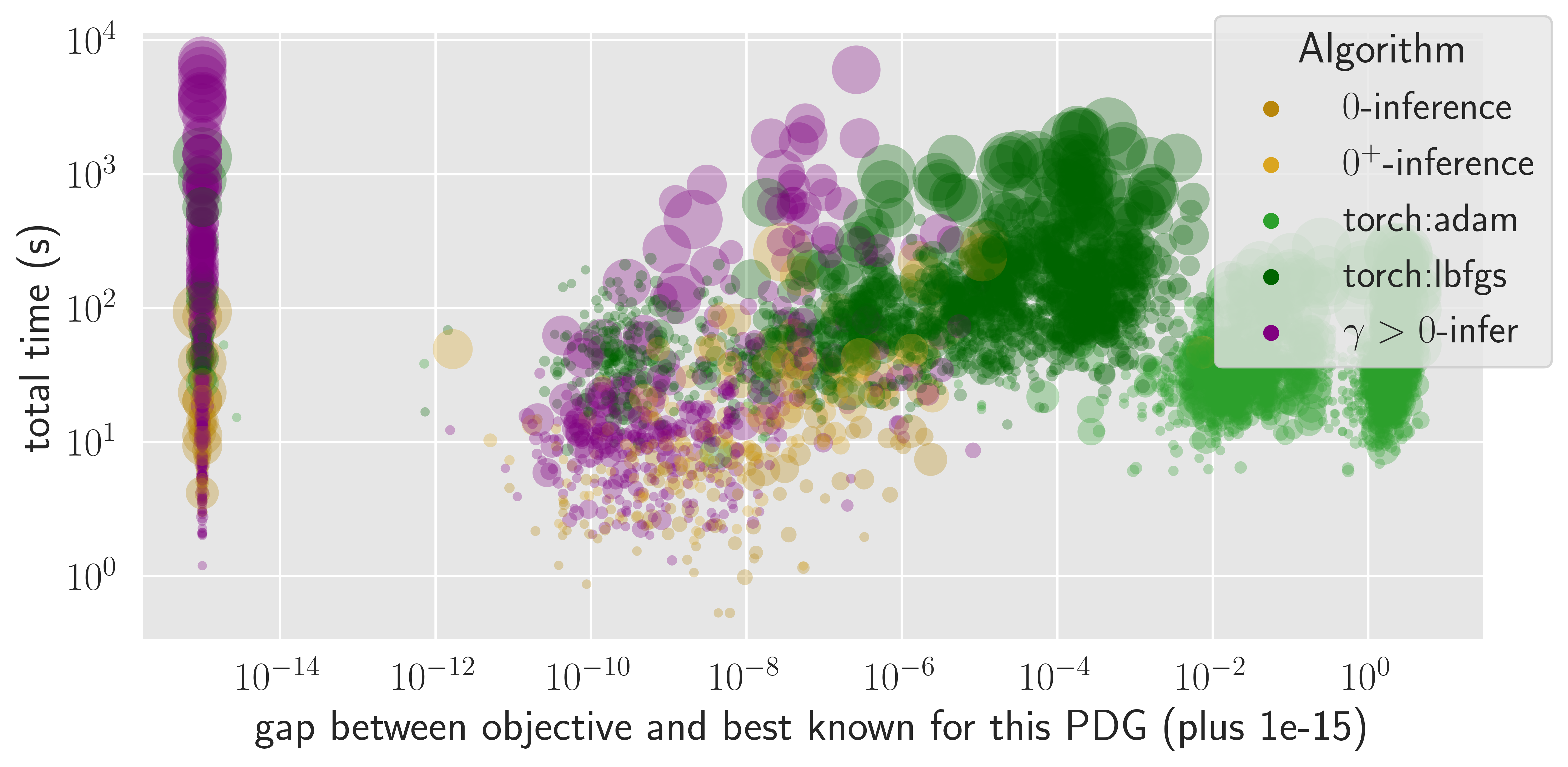}
        \includegraphics[width=0.32\linewidth]{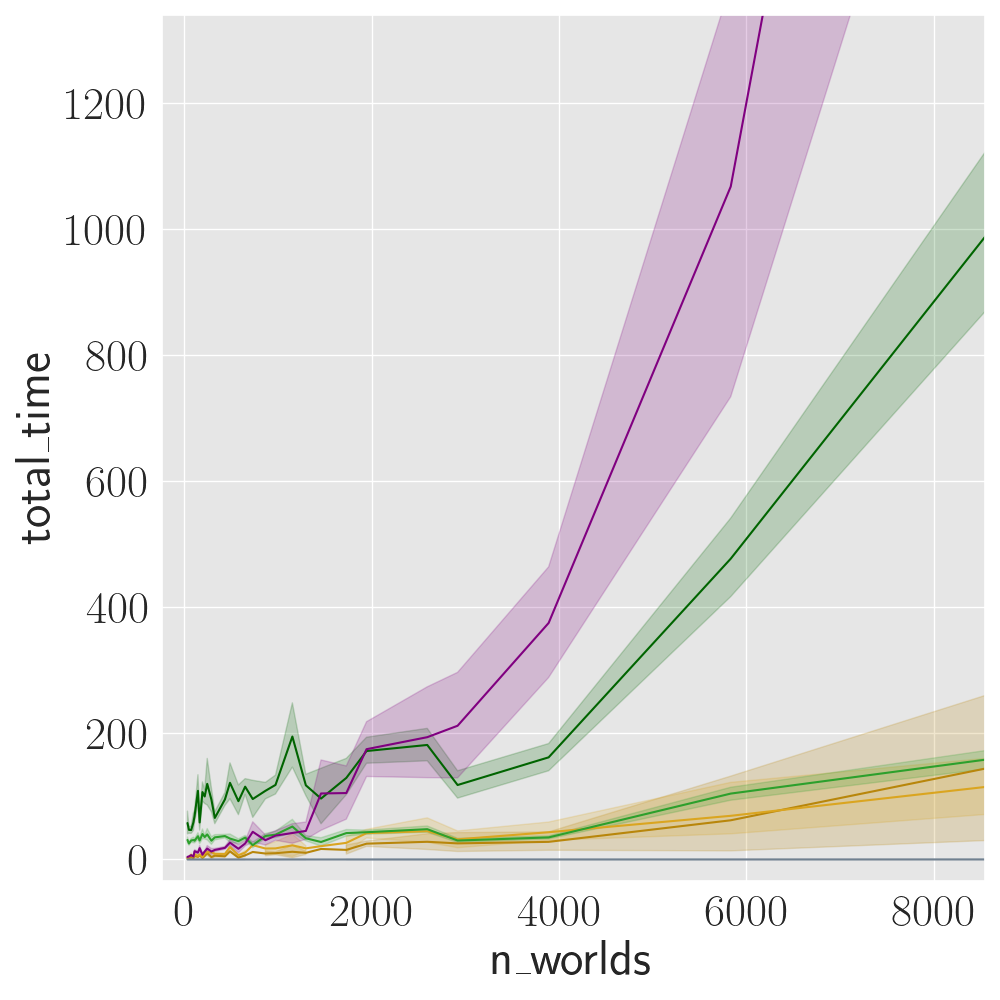}
    \caption{
        Accuracy and resource costs for the methods in \cref{sec:inf-as-cvx-program}.  
        Left: a scatter plot of several algorithms on random PDGs of $\approx 10$ variables. The x-axis is 
            the difference in scores
            $\bbr{\dg M}_\gamma(\mu) - \bbr{\dg M}_{\gamma}(\mu^*) + 10^{-15}$,
        where $\mu$ is the method's output,
        and $\mu^*$ achieves best (smallest) known value of $\bbr{\dg M}_{\gamma}$.
         (Thus, the best solutions lie on the far left.)
        The $y$ axis is the time required to compute $\mu$. 
        Our methods are in gold $(0^+$\!-inference) and violet ($\zogamma$-inference, for $\zogamma>0$); the baselines (black-box optimizers applied directly to \eqref{eqn:scoring-fn}) are in green.
        The area of each circle is proportional to the size of the optimization problem, as measured by
        {\small\texttt{n\_worlds}}$:=$
        $|\V\!\X|$.
        Right: how the same methods scale in run time, as $|\V\!\X|$ increases.
     }\label{fig:joint-gap-time}
\end{figure*}

Our approach to $\zogamma$-inference computes
$\aar{\dg M}_\gamma$ as a side effect.
But suppose that we were interested in calculating only this inconsistency.
Might there be a more direct, asymptotically easier way 
to do so? In general, the answer is no.

\begin{linked}{theorem}{consistent-NP-hard}
    \label{prop:sharp-p-hard}
    \begin{enumerate}[label={\rm{(\alph*)}}]
    \item Determining whether 
    there is a distribution
    that satisfies all cpds of a PDG
    is NP-hard.
    \item Calculating a PDG's degree of inconsistency (exactly) is \#P hard.
    \item \ApproxPDGInc\ is \#P hard,
        even for fixed $\gamma \ge 0$ and $\epsilon > 0$.
    \end{enumerate}
\end{linked}

\textcite{pdg-aaai}'s original approach to  
inferring the probability of 
$Y$ in a PDG $\dg M$ was to minimize their combined inconsistency. 
The idea is to add a hypothesis distribution $h(Y)$ to $\dg M$, and 
adjust $h$ to minimize the overall inconsistency $\aar{\dg M + h}_\gamma$.
Parts (b) and (c) of \cref{prop:sharp-p-hard} significantly undermine this approach, because even just calculating $\aar{\dg M + h}_\gamma$ is intractable. 
Typically minimizing a function is more difficult than evaluating it, so one might imagine the intractability of $\aar{\dg M + h}_\gamma$ to be merely the first of many difficulties---yet it turns out to be the only one. 
There is a strong sense in which being able to calculate inconsistency is enough to perform inference efficiently.
Specifically, 
with oracle access to the inconsistency $\aar{\dg M + h}_\gamma$\,, 
\citeauthor{pdg-aaai}'s approach gives right answer with the best possible asymptotic time complexity.
Thus, while it may not be a practical inference algorithm, it is a powerful reduction from inference to inconsistency calculation. 

\begin{linked}{theorem}{inf-via-inc-oracle}
\begin{enumerate}[label={\rm{(\alph*)}}]
    \item 
    There is an $O(\log \nf 1\epsilon)$-time reduction
    from unconditional
    \ApproxInferUniq\ to the problem of determining which of two PDGs is more inconsistent,
    using $O(\log \nf1\epsilon)$ subroutine calls.
\item 
    There is an 
    $\displaystyle
    O \left(
    \log \frac{\aar{\dg M}_\gamma}{\gamma\epsilon\, \mu^*\mskip-2mu(x)}
    \cdot
    \log \frac1{\epsilon\, \mu^*\mskip-2mu(x)}
    \right)
    \vphantom{\Bigg|}
    $
    time 
    reduction
    from \ApproxInferUniq\ 
    to \ApproxPDGInc\ 
    using $O(\log( \nf1\epsilon) \log \log \nicefrac1{\mu^*\mskip-2mu(x)})$ calls to the inconsistency subroutine. 
\item
    There is also an $O(|\V \C|)$ reduction
    from \ApproxPDGInc\ 
    to \ApproxInferUniq.
    With the additional assumption of bounded treewidth,
    this is linear in the number of variables in the PDG.
\end{enumerate}
\end{linked}

Recall that the runtime of $O(\log(\nf1\epsilon))$ achieved by part (a) is optimal, because it is the complexity of writing down an answer, which in general requires $\log(\nf 1\epsilon)$ bits. 
While it is a clean result, part (a) is unsatisfying as a complexity result because it relies heavily on being able to compare the two inconsistencies in constant time.
Part (b) fleshes out the algorithm of part (a) more precisely 
    by reducing to inconsistency approximation (which we now know is computable), 
    and also extends the procedure to handle to conditional probability queries. 
This leads to a significantly more complex analysis, and a more expensive reuction, although it is possible that much of the difference in the costs is due to loose bounds in our analysis. 
Part (c) is a straightforward observation in light of 
    the results in \cref{sec:clique-tree-expcone}.

To summarize: in the range of $\gamma$'s in which we have an (approximate) inference algorithm for PDGs, (approximately) calculating a PDG's degree of inconsistency is at least as difficult.
For PDGs of bounded treewidth, the two problems are equivalent, and can be solved in polynomial time.

\section{Experiments} \label{sec:expts}

We have given the first algorithm to provably do inference in polynomial
time, but that does not mean that it is the best way of answering queries in practice;
it also makes sense to use black-box optimization tools such as
    Adam \parencite{kingma2014adam} or L-BFGS \parencite{fletcher2013practical}
    to find minimizers of $\bbr{\dg M}_\gamma$.
Indeed, this scoring function has several properties
    that make it highly amenable to such methods: it is
    infinitely differentiable, $\gamma$-strongly convex, and its
    derivatives have simple closed-form expressions.
So it may seem surprising that $\bbr{\dg M}_\gamma$ poses
a challege to standard optimization tools---%
but it does, 
even when we optimize directly over joint distributions.

\textbf{Synthetic Experiment 1 ({\normalfont over joint distributions}).~~} 
Repeatedly do the following.
First, randomly generate a small PDG $\dg M$ containing 
at most 10 variables and 15 arcs. 
Then for various values of
$\gamma \in \{0, 0^+, 
    10^{-8}, 
     \ldots, \min_a \frac{\beta_a}{\alpha_a} \}$,
optimize $\bbr{\dg M}_\gamma(\mu)$ over joint distributions $\mu$, 
in one of two ways. 
\begin{enumerate}[wide,label=(\alph*),nosep,itemsep=0.2ex]
\item Use \verb|cvxpy| \parencite{diamond2016cvxpy}
to feed  
one of problems (\ref{prob:joint-inc},\ref{prob:joint-small-gamma},\ref{prob:joint+idef})
    to the MOSEK solver \parencite{mosek}, or
\item Choose a learning rate and a representation of $\mu$ in terms of optimization variables $\theta \in \mathbb R^n$.
    Then run a standard optimizer (Adam or L-BFGS) built into \verb|pytorch| \parencite{pytorch}
    to optimize $\theta$
    until $\mu_\theta$ converges to a minimizer of $\bbr{\dg M}_\gamma$ 
        (or a time limit is reached).
    Keep only the best result across all learning rates. 
\end{enumerate}

\begin{figure*}
    \centering
    \includegraphics[width=0.34\linewidth]{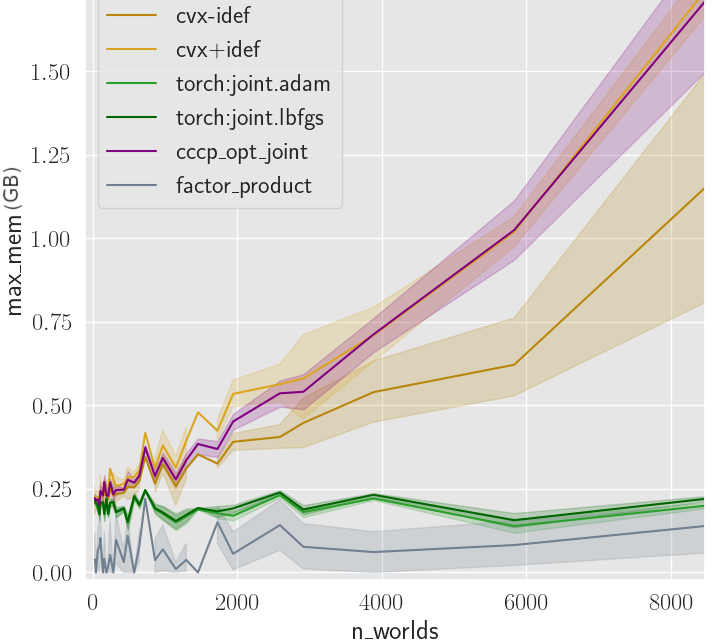}
    \includegraphics[width=0.65\linewidth]{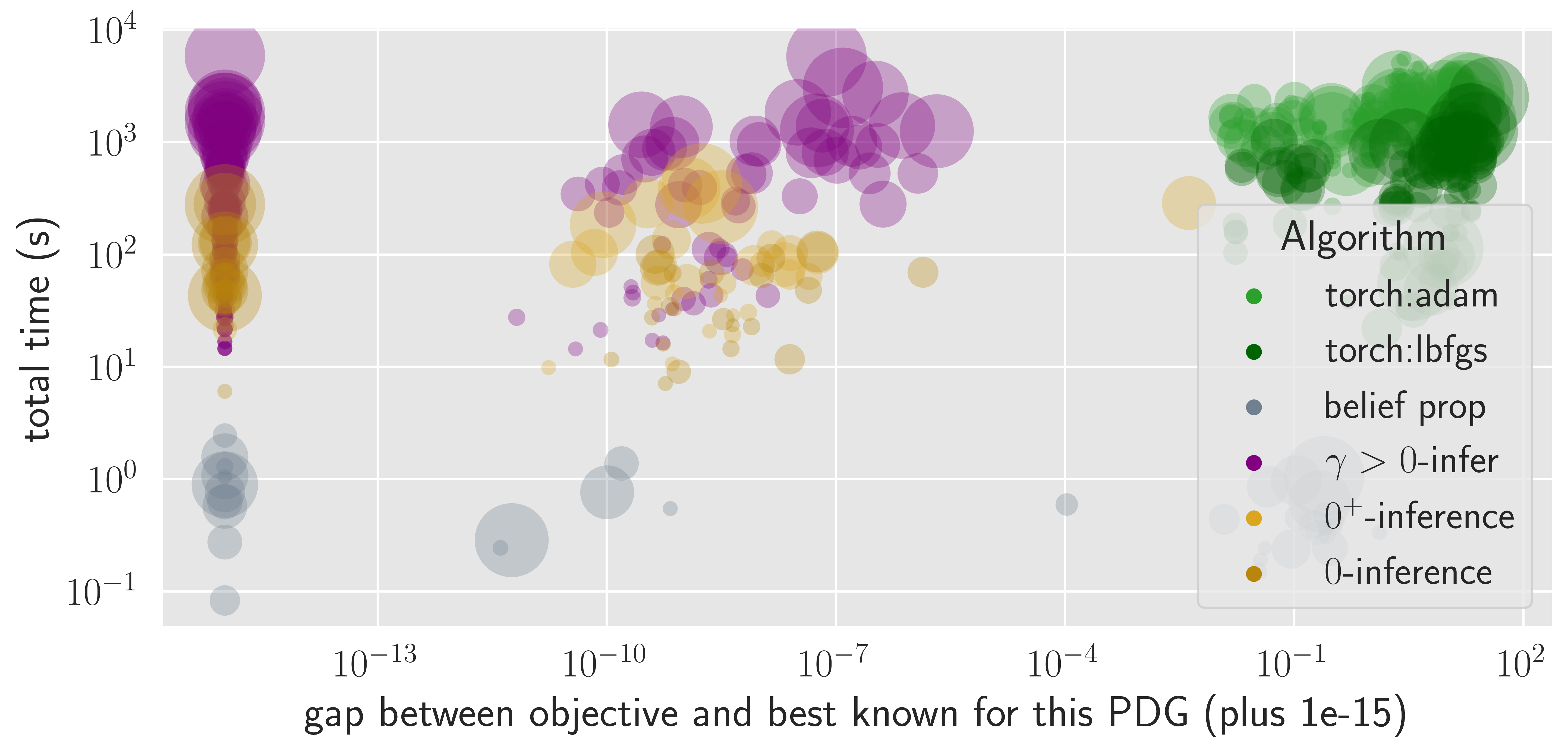}
    \caption{Left: Memory footprint.
    The convex solver (violet, gold)
     requires more memory than baselines (green).
    Right: Analogue of \cref{fig:joint-gap-time} for the cluster setting.
     Here there is even more separation between exponential conic optimization
         (gold, violet) and black-box optimization (greens).
     The grey points represent belief propagation, which is fastest and most accurate---%
         but only applies in the special case when $\bbeta=\gamma\balpha$.}
    \label{fig:joint-mem}
    \label{fig:clus-gap-vs-time}
\end{figure*}

The results are shown in \cref{fig:joint-gap-time}.
Observe that the convex solver (gold, violet) is significantly more accurate than the baselines,
and also much faster for small PDGs.
Our implementation of $0^+\!$-inference (gold) also appears to scale better than L-BFGS
    in this regime, although
    that of $\zogamma$-inference (purple) seems to scale much worse. 
We suspect that the difference comes from \verb|cvxpy|'s complilation process,
    because the two use similar amounts of memory (\cref{fig:joint-mem}),
    and so are problems of similar sizes.

\textbf{Synthetic Experiment 2 ({\normalfont over \actree s}).~~} 
For PDGs of bounded treewidth, \Cref{coro:can-use-cliquetree} allows us to express these optimzation problems compactly not just for the convex solver, but for the black-box baseline approaches as well.
We adapt the previous experiment for \actree s as follows.
First randomly sample a maximal graph $G$ of tree-width $k$, called a 
    $k$-tree \parencite{patil1986structure}; 
then generate a PDG $\dg M$ whose hyperarcs lie within cliques of $G$.
This ensures that the maximal cliques of $G$ form a tree-decomposition $(\C, \mathcal T)$ of $\dg M$'s underlying
    hypergraph.
We can now proceed as before: 
    either encode
    (\ref{prob:cluster-inc},\ref{prob:cluster-small-gamma},\ref{prob:cluster+idef})
    as disciplined convex programs in \verb|cvxpy|,
or use \verb|torch| to directly minimize $\bbr{\dg M}_\gamma(\Pr_{\!\bmu})$
    amongst \actree s $\bmu$ over $(\C, \mathcal T)$.

In the latter case, however, there is now an additional difficulty:
    it is not easy to strictly enforce the calibration constraints with the black-box methods.
Common practice is to instead add extra loss terms to ``encourage'' calibration---%
but
    it can still be worthwile for the optimizer to simply incur that loss 
    in order to violate the constraints.
Thus, for fairness, we must recalibrate the 
    the \actree s returned by all methods before evaluation.
The result is an even more significant advantage for the convex solver; 
see \cref{fig:clus-gap-vs-time}.

\textbf{Evaluation on BNs.~~}
We also applied the procedure of the Synthetic Experiment 2
to the smaller BNs in the
\href{https://www.bnlearn.com/bnrepository/}{\texttt{bnlearn}} repository,
and found similar results (but with fewer examples; see \cref{sec:bn-expt-details}). 
But for a PDG that happens to also be a BN, it is possible to use belief propagation, which is much faster and at least as accurate.

Explicit details about all of our experiments, 
and many more figures, can be found in \cref{sec:expt-setup}.

\section{Discussion and Conclusion}

In this paper, we have provided the first practical algorithm for
    inference in PDGs. 
In more detail, we have defined a parametric family of PDG inference notions, 
given a fixed-parameter tractable inference algorithm for a subset of these parameters,
    proven our algorithm correct, implemented it, and
    shown our code to empirically outperform baselines.
Yet many questions about PDG inference remain open.

Asymptotically, there may be a lot of room for improvement.
Our implementation runs in time $\tilde O(N^4)$, and our analysis suggests one of time $\tilde O(N^{2.872})$. 
But assuming bounded tree-width, most graph problems, including
inference inference for BNs and FGs, can be solved in time $O(N)$.

Furthermore, we have shown how to do inference for only a subset of
possible paramer values, specifically, 
when
either $\bbeta \ge \gamma \balpha$ or $\bbeta \gg \balpha$. 
The remaining cases are also of interest, and likely require different techniques. 
When $\bbeta = 0$ and $(\Ar, \balpha)$ encodes the structure of a BN,
    for instance,
    inference is about characterizing the BN's independencies.
While we do not know how to tackle 
the inference problem in the general setting, 
our methods can be augmented with the convex-concave procedure 
    \parencite{yuille2003concave} to obtain an inference
    algorithm that applies slightly more broadly; see \cref{sec:cccp}.
We imagine that this extension could also be useful for computing with PDGs 
    beyond the specific inference problem considered in this paper.

Given the long history of improvements to our
understanding of inference for 
    Bayesian networks,
we are optimistic that 
    faster and more general
    inference algorithms
    for PDGs
    are possible.
\discard{
Our analysis does not resove these problems, but it
    does shed light on some of them.  
The $0$-semantics, for instance, is 
characterized by \cref{prop:marginonly,prop:cluster-inc-correct}, 
Also, when $\bbr{\dg M}_\gamma$ is not convex, we can still find an optimal distribution with the concave-convex procedure \cite{yuille2003concave}, which we do in \cref{sec:cccp}---but this only suffices for inference if we already know there's a unique optimal distribution.}
\discard{
In some cases, this might actually allow us to do inference---say, if we happen to know for external reasons that $\bbr{\dg M}^*_\gamma$ is pseudo-convex (although we loose polynomial time guarantees and have no ability to automatically recognize such situations). In any case, we have implemented this, and describe it in \cref{sec:cccp}.}

\clearpage
\onecolumn
\appendix
\ifvfull\else
\section{Full Descriptions of \AcTree\ Convex Programs}
\label{appendix:prob-details}

\subsection{Finding a \AcTree\ for \texorpdfstring{$\zogamma$}{gamma}-Inference, when \texorpdfstring{$\gamma$}{gamma} is Small}

We now give the rest of the details for problem \eqref{prob:cluster-small-gamma},
which was described in \cref{sec:clique-tree-expcone}, and we claimed was correct in \cref{prop:cluster-small-gamma-correct}.
Over vectors
$\mat v, \bmu \in \Rext^{\V\C}$ and
$\mat u \in \Rext^{\V\!\Ar}$,
the problem is formally given by:

{\allowdisplaybreaks%
\begin{align*}
    \minimize_{\bmu, \mat u, \mat v}\quad&
    \sum_{\mathrlap{\!\!\!(a,s,t) \in \V\!\Ar}} (\beta_a \!- \alpha_a \gamma) u_{a,s,t}
    + \gamma \sum_{\mathclap{(C,c) \in \V\C}}  v_{C,c}
    \numberthis\label{prob:cluster-small-gamma}
    - \sum_{\mathrlap{\!\!\!(a,s,t) \in \smash{\V\!\Ar^+}}}
        \alpha_a\gamma\,
        \mu_{C\!_a}\!(\Src a{=}s,\Tgt a {=} t)
        \log \p_a (\Tgt a{=}t\mid s)
\\[0.2ex]
\subjto\quad&
    \forall C \in \C.~\mu_C \in \Delta\V(C), \\[-0.2ex]
    &\forall a \in \Ar.~
        \big(\!- \! \mat u_a,\, \mu_{C\!_a}\!(\Src a,\mskip-2mu \Tgt a),\, \mu_{C\!_a}\!(\Src a) \p_a(\Tgt a | \Src a)\big) \in K_{\exp}^{\V a}, \\
    &\forall (a,s,t) \in \V\!\Ar^0\!.~
    \mu_{C\!_a}\!(\Src a{=}\mskip2mus, \Tgt a{=}\mskip2mut) = 0, \\[-0.2ex]
    &\forall (C,D) \in \mathcal T.~~ \mu_{C}(C \cap D) = \mu_{D}(C \cap D),\\[-0.3ex]
    &\big(-\mat v,\,\bmu,\, [\,\mu_{C}(\Pash_C(c))\,]_{(C,c)\in\V\C}\big) \in K_{\exp}^{\V\C}
    .
\end{align*}}%
This is our most complex problem, but it is made up of building blocks
that we have discussed at length in the main paper: the objective comes from \eqref{eq:altscore}, the first and fourth lines of constraints restrict to \cactree s as in \eqref{prob:cluster-inc}, the second and third lines of constraints are essentially present in \eqref{prob:joint-small-gamma},
and the final constraint comes from \eqref{eq:cluster-ent-decomp}.

\subsection{Finding a \AcTree\ for $0^+$\!-Inference}
Next, we deal with the analogous construction of the \actree\ optimization for $0^+$\!-inference.
We begin with the straightforward adaption of the relevant prerequisites in \Cref{sec:empirical-limit}.
Suppose that $\boldsymbol\nu = \{\nu_C : C \in \C\}$ is a \cactree\ over the tree decomposition $(\C, \mathcal T)$ representing a distribution $\Pr_{\boldsymbol\nu} \in \bbr{\dg M}^*_0$, say obtained by solving \eqref{prob:cluster-inc}.
For $C \in \C$, let $\Ar_C:= \{ a \in \Ar : C_a = C\}$ be the set of
edges assigned to cluster $C$, and let
\[
    \mat k := \bigg[ \prod_{\mathrlap{a \in \Ar_C}} \nu_C (\Tgt a (c) | \Src a (c))^{\alpha_a} \bigg]_{(C,c) \in \V \C} \in \Rext^{\V\C}
\]
be the analogue of \eqref{eq:cm-product} for a cluster tree.
Once again, consider
$\mat u := [ u_{(C,c)} ]_{(C,c) \in \V\C}$,
in the optimization problem
\begin{align*}
\minimize_{\bmu, \mat u} \quad&
    \mat 1^{\sf T} \mat u
    \numberthis\label{prob:cluster+idef}\\
\subjto\quad&
    \forall C \in \C.~\mu_C \in \Delta\V(C), \\[-0.2ex]
     &\big({-}\mat u,\,  \bmu,\,\,
            \mat k \odot
            \big[\;\mu_C(\Pash_C(c))\;\big]_{(C,c) \in \V\C}
            \big) \in K_{\exp}^{\V\C}, \\[-0.2ex]
    &\forall a \in \Ar.~~\mu_{C_{\!a}}\!(\Src a, \Tgt a) \nu_{C_{\!a}}\!(\Src a) = \mu_{C_{\!a}}\!(\Src a) \nu_{C_{\!a}}\!(\Src a, \Tgt a)\\
    &\forall (C,D) \in \mathcal T.~~ \mu_{C}(C \cap D) = \mu_{D}(C \cap D).
\end{align*}
The biggest change is in the second constraint:
intuitively, the upper bounds at each cluster $C$ now only account
for the \emph{additional} entropy not already modeled by
$C$'s
ancestors.
By \cref{prop:cluster-idef-correct}, if given a proper PDG, the output $\bmu$ of this problem is the unique distribution $0^+$\!-semantics.

\clearpage
\fi

\section{Proofs}

Our results fall broadly into three categories:
\begin{enumerate}
    \item Foundational results about PDGs that we needed to prove to get an
        inference procedure, but which are likely to be generally useful
        for anyone working with PDGs
            (\cref{proofs:novel-pdg-results});
    \item Correctness and efficiency results, showing that the optimization
        problems we present in the main paper give the correct answers,
        and that they can be formulated and solved in polynomial time;
            (\cref{proofs:expcone-efficient-correct})
    \item Hardness results, i.e., \cref{theorem:inf-via-inc-oracle} and
    the constructions and lemmas needed to support it
        (\cref{proofs:hardness-results}).
\end{enumerate}
\subsection{Novel Results about PDGs}
    \label{proofs:novel-pdg-results}

\recall{prop:marginonly}
\begin{lproof}\label{proof:marginonly}
    For contradiction, suppose that $\mu_1, \mu_2 \in \bbr{\dg M}_0^*$, but
    there is some $(\hat a, \hat s, \hat t) \in \V\!\Ar$ such that $\beta_a > 0$ and
    \[
        \mu_1(\Tgt a{=}\hat t, \Src a{=}\hat s)\mu_2(\Src a{=}\hat s) \ne \mu_2(\Tgt a{=}\hat t, \Src a{=}\hat s) \mu_1(\Src a\hat s).
    \]
    For $t \in [0,1]$,
    let $\mu_t := (1-t) \mu_0 + t \, \mu_1$ as before.
    Then define
    \begin{align*}
        F(t) := \kldiv[\Big]{ \mu_t(\Src a, \Tgt a) }{  \mu_t(\Src a) \p_a(\Tgt a|\Src a) }.
    \end{align*}
    Since $\mu_0(\Src a, \Tgt a)$ and  $\mu_1(\Src a, \Tgt a)$ are joint distributions voer two variables, with different conditional marginals, as above, \cref{lem:seg-strictcvx} applies, and so $F(t)$ is strictly convex.

    Let
    \[ \OInc_{\dg M \setminus \hat a}
        := \sum_{a \ne \hat a} \beta_a \kldiv{\mu(\Tgt a, \Src a)}{\p_a(\Tgt a|\Src a) \mu(\Src a)}
    \]
    be the observational incompatibility loss, but without the term corresponding to edge $\hat a$.
    Since $\OInc_{\dg M \setminus \hat a}$ is convex in its argument, it is in particular convex along the segment from $\mu_0$ to $\mu_1$; that is, for $t \in [0,1]$, the function $t \mapsto \OInc_{\dg M \setminus \hat a}(\mu_t)$ is convex.
    Therefore, we know that the function
    \begin{align*}
        G(t) :=
        \OInc_{\dg M}(\mu_t)
        =
        \OInc_{\dg M \setminus a}( \mu_t ) + \beta_a\, F(t),
    \end{align*}
    is \emph{strictly} convex.
    But then this means $\mu_{\nf12}$ satisfies
    \[
        \OInc_{\dg M}( \mu_{\nicefrac12} ) < \OInc_{\dg M}( \mu_0 ),
    \]
    contradicting the premise that $\mu_0$ minimizes $\OInc_{\dg M}$ (i.e., $\mu_0 \in \bbr{\dg M}^*_0$).
    Therefore, it must be the case that all distributions in $\bbr{\dg M}_0^*$ have the same conditional marginals, as promised.
\end{lproof}

\clearpage
\recall{prop:idef-frozen}
\begin{lproof}\label{proof:idef-frozen}
    This is mostly a simple algebraic manipulation. By definition:
    \begin{align*}
        \SInc_{\dg M}(\mu) &= - \H(\mu) + \sum_{a \in \Ar} \alpha_a \H_\mu(\Tgt a | \Src a) \\
        &= \Ex_\mu \left[ - \log \frac{1}{\mu} + \sum_{a \in \Ar} \alpha_a \log \frac{1}{\mu(\Tgt a|\Src a)} \right] \\
        &= \sum_{w \in \V\!\X} \mu(w) \left[ \log \mu(w) + \sum_{a \in \Ar} \log \frac{1}{\mu(\Tgt a(w)|\Src a(w))^{\alpha_a}} \right] \\
        &= \sum_{w \in \V\!\X} \mu(w) \log \pqty[\bigg]{ \faktor{\mu(w)~}{~\prod_{a \in \Ar}\mu(\Tgt a(w)|\Src a(w))^{\alpha_a}}}
    \end{align*}
    But, by \cref{prop:marginonly}, if we restrict $\mu \in \bbr{\dg M}_0^*$, then the conditoinal marginals in the denominator do not depend on the particular choice of $\mu$; they're shared among all $\nu \in \bbr{\dg M}_0^*$.
\end{lproof}

\recall{theorem:markov-property}

\[
    \text{Or symbolically: }\qquad\quad
    \dg M_1 \bundle \dg M_2
        ~\models~
    \X_1 \mathbin{\bot\!\!\!\bot} \X_2 \mid \X_1 \cap \X_2. \]
\begin{lproof}\label{proof:markov-property}
    Note that,
    save for the joint entropy, every summand the scoring function $\bbr{\dg M_1 + \dg M_2}_\gamma : \Delta(\V\!\X_1 \times \V\!\X_2)$, is a function of the conditional marginal of $\mu$ along some edge.
    In particular, those terms that correspond to edges of $\dg M_1$ can be computed from the marginal $\mu(\X_1)$, while those that correspond to edges of $\dg M_2$ can be computed from the marginal $\mu(\X_2)$.
    Therefore, there are functions $f$ and $g$ such that:
    \[
        \bbr{\dg M_1 \bundle \dg M_2}_\gamma(\mu) = f(\mu(\X_1)) + g(\mu(\X_2)) - \gamma \H(\mu).
    \]

    To make this next step extra clear, let $\mat X := \X_1 \setminus \X_2$ and
    $\mat Z := \X_2 \setminus \X_1$, be the variables unique to each PDG, and $\mat S:= \X_1 \cap \X_2$ be the set of variables they have in common, so that $(\mat X, \mat S, \mat Z)$ is a partition of all variables $\mat X_1 \cup \mat X_2$.
    Now define a new distribution $\mu' \in \Delta(\V\!\X_1 \times \V\!\X_2)$ by
    \[
        \bf
        \mu'( X,  S,  Z)
            := \mu(S) \mu( Z \mid  S)\mu( X \mid  S)
            \qquad \Big(~
            = \mu( X,  S) \mu( Z \mid  S)
            = \mu( Z,  S) \mu( X \mid  S)~\Big).
    \]
    One can easily verify that $\mat X$ and $\mat Z$ are independent given $\mat S$ in $\mu'$ (by construction), and the alternate forms on the right make it easy to see that $\mu(\X_1) = \mu'(\X_1)$ and $\mu(\X_2) = \mu'(\X_2)$.
    Furthermore, for any $\nu'(\mat{X,S,Z})$, we can write
    \begin{align*}
        \H( \nu ) &=  \H_\nu(\mat{X,S,Z}) =
            \H_\nu(\mat{X,S}) + \H_\nu(\mat Z \mid \mat{X,S}) \\
            &= \H_\nu(\mat{X,S}) + \H_\nu(\mat Z \mid \mat{X,S}) - \H_\nu(\mat Z \mid \mat S) + \H_\nu(\mat Z \mid \mat S) \\
            &= \H_\nu(\mat X,\mat S) + \H_\nu(\mat Z \mid \mat S) - \I_\nu(\mat Z; \mat X| \mat S),
    \end{align*}
    where $\I_\nu(\bf X;Z|S)$, the conditional mutual information between $\mat X$ and $\mat Z$ given $\mat S$ (in $\nu$), is non-negative, and equal to zero if and only if $\mat X$ and $\mat Z$ are conditionally independent given $\mat S$ \parencite[see, for instance,][\S1]{mackay2003information}.
    So $\I_{\mu'}(\mat X; \mat Z| \mat S) = 0$, and
        $\H_{\mu'} = \H_{\mu'}(\mat X, \mat S) + \H_{\mu'}(\mat Z| \mat S)$.
    Because $\mu$ and $\mu'$ share marginals on $\X_1$ and $\X_2$, while the terms $\H(\mat X, \mat S)$ and $\H(\mat Z|\mat S)$ depend only on these marginals, respectively, we also know that $\H_{\mu}(\mat X, \mat S) = \H_{\mu'}(\mat X, \mat S)$ and $\H_{\mu}(\mat Z | \mat S) = \H_{\mu'}(\mat Z| \mat S)$; thus we have
    \begin{align*}
        \H(\mu) &= \H_\mu(\mat X,\mat S) + \H_\mu(\mat Z \mid \mat S) - \I_\mu(\mat Z; \mat X| \mat S) \\
            &= \H(\mu') - \I_\mu(\mat Z; \mat X| \mat S).
    \end{align*}
    Therefore,
    \begin{align*}
        \bbr{\dg M_1 \bundle \dg M_2}_\gamma(\mu)
         &= f(\mu(\X_1)) + g(\mu(\X_2)) - \gamma \H(\mu) \\
         &= f(\mu'(\X_1)) + g(\mu'(\X_2)) - \gamma \H(\mu') + \gamma \I_\mu(\mat Z; \mat X| \mat S) \\
         &= \bbr{\dg M_1 \bundle \dg M_2}_\gamma(\mu') + \gamma \I_\mu(\mat Z; \mat X| \mat S).
    \end{align*}
    But conditional mutual information is non-negative, and by assumption, $\bbr{\dg M \bundle \dg M_2}_\gamma(\mu)$ is minimal. Therefore, it must be the case that
    \[
        \I_\mu(\mat Z; \mat X| \mat S) = \I_\mu(\X_1; \X_2 \mid \X_1 \cap \X_2) = 0,
    \]
    showing that $\X_1$ and $\X_2$ are conditionally independent given the varaibles that they have in common. \\
    (The fact that $\I_\mu(\mat Z; \mat X| \mat S) = \I_\mu(\X_1; \X_2 \mid \X_1 \cap \X_2)$ is both easy to show and an instance of a well-known identity; see CIRV2 in Theorem 4.4.4 of \textcite{halpern2017reasoning}, for instance.)
\end{lproof}

\recall{coro:can-use-cliquetree}
\begin{lproof}\label{proof:can-use-cliquetree}

    The set of distributions that can be represented by a \cactree\ over $(\C,\cal T)$ is the same as the set of distributions that can represeted by a factor graph for which $(\C, \cal T)$ is a tree decomposition.
    One direction holds because any such product of factors ``calibrated'', via message passing algorithms such as belief propogation, to form a \actree.
    The other direction holds because $\Pr_{\bmu}$ itself is a product of factors that decomposes over $(\C, \cal T)$.

    Alternatively, this same set of distributions that satisfy the independencies of the Markov Network obtained by connecting every pair of variables that share a cluster.
    More formally, this network is the graph $G := (\X, E := \{ (X{-}Y) :  \exists C \in \C.~\{X,Y\} \subseteq C\})$.
    Also, $G$ happens to chordal as well, which we prove at the end.

    Using only the PDG Markov property (\cref{theorem:markov-property}), we now show that every independence described by $G$ also holds in every distribution $\mu \in \bbr{\dg M}^*_\gamma$.
    Suppose that,
    for sets of variables $\mat X, \mat Y, \mat Z \subseteq \X$,
    $\I(\mat X; \mat Y|\mat Z)$ is an independence
    described by $G$.
    This means \parencite[Defn 4.8]{koller2009probabilistic} that
    if $X \in \mat X$, $Y \in \mat Y$, and $\pi$ is a path in $G$ between them, then
    some node along $\pi$ lies in $\mat Z$.

    Let $\cal T'$ be the graph that results from removing each edge $(C{-}D) \in \cal T$ that satisfies $C \cap D \subseteq \mat Z$, which is a disjoint union  $\mathcal T' = \mathcal T_1 \sqcup \ldots \sqcup \mathcal T_n$ of subtrees that have no clusters in common.
    To parallel this notation, let $\C_1, \ldots, \C_n$ be their respective vertex sets.
    Note that for every edge $e=(C{-}D)\in \cal T'$, there must by definiton be some variable $U_e \in (C \cap D) \setminus \mat Z$.

    We claim that no subtree $\mathcal T_i$ can have both a cluster $D_X$ containing a variable $X \in \mat X \setminus \mat Z$ and also a cluster $D_Y$ containing a variable $Y \in \mat Y \setminus \mat Z$.
    Suppose that it did.
    Then the (unique) path in $\cal T$ between $D_X$ and $D_Y$, which we label
    \[
    \begin{tikzcd}[column sep=2em]
        D_X=
        &D_0 \ar[r,-,"e_1"]&
        D_1 \ar[r,-,"e_2"]&
          \cdots
        \ar[r,-,"e_{m-1}"]& D_{m-1}
        \ar[r,-,"e_m"] & D_m&
        =D_Y
    \end{tikzcd},
    \]
    would lie entirely within $\mathcal T_i \subseteq \mathcal T'$. This gives rise to
    a corresponding path in $G$:
    \[\begin{tikzcd}[column sep=1em,row sep=1.5ex]
        X \ar[r,-] \ar[d,sloped,phantom,"\in"]
        & U_{e_1}\ar[r,-] \ar[d,sloped,phantom,"\in"]
        & U_{e_2}\ar[r,-] \ar[d,sloped,phantom,"\in"]
           &\cdots\ar[r,-]
        & U_{e_{n{-}1}} \ar[r,-] \ar[d,sloped,phantom,"\in"]
        & U_{e_n}\ar[r,-] \ar[d,sloped,phantom,"\in"]
        & Y \ar[d,sloped,phantom,"\in"]
            \\
        D_0
        & D_0 \cap D_1
        & D_1 \cap D_2
        &
        & D_{n{-}2} \cap D_{n{-}1}
        & D_{n{-}1} \cap D_n
        & D_n
    \end{tikzcd}\quad,\]
    and moreover, this path is disjoint from $\mat Z$.
    This contradicts our assumption that every path in $G$ between a member of $\mat X$ and a member of $\mat Y$ must intersect with $\mat Z$, and so no subtree can have both a cluster containing a variable $X \in \mat X \setminus \mat Z$ and also one containing $Y \in \mat Y \setminus \mat Z$.

    \def\CX{\C_{\mat X}}
    \def\CNX{\C_{\mat{Y}}^+}
    We can now partition the clusters as $\C = \CX \sqcup \CNX$, where
    $\CX$ is the set of the clusters that belong to subtrees $\mathcal T_i$ with a cluster containing some $X \in \mat X \setminus \mat Z$, and
    its $\CNX$ is its complement, which in particular contains those subrees have some $Y \in \mat Y \setminus \mat Z$.
    Or, more formally, we define
    \[
        \CX :=~ \bigcup_{\mathclap{\substack{i \in \{1,\ldots,n\}\\ (\cup\C_i) \cap (\mat X\setminus\mat Z) \ne \emptyset }}}\, \C_i
        \quad\qquad \text{and}\qquad
        \CNX :=~ \bigcup_{\mathclap{\substack{i \in \{1,\ldots,n\}\\ (\cup\C_i) \cap (\mat X\setminus\mat Z) = \emptyset }}}\, \C_i
        \quad.
    \]
    \def\XX{\X_{\mat X}}
    \def\XNX{\X_{\mat Y}^+}
    Let $\XX := \cup \CX$ set of all variables appearing in the clusters $\CX$; symmetrically, define $\XNX := \cup \CNX$.

    We claim that $\XX \cap \XNX \subset \mat Z$.
    Choose any variable $U \in \XX \cap \XNX$.
    From the definitions of $\XX$ and $\XNX$, this means $U$ is a member of some cluster $C \in \CX$, and also a member of a cluster $D \in \CNX$.
    Recall that the clusters of each disjoint subtree $\mathcal T_i$ either fall entirely within $\CX$ or entirely within $\CNX$ by construction.
    This means that $C$ and $D$, which are on opposite sides of the partition, must have come from distinct subtrees.
    So, some edge $e = (C'{-}D') \in \mathcal T$ along the (unique) path from $C$ to $D$ must have been removed when forming $\mathcal T'$, which by the definition of $\mathcal T'$, means that $(C' \cap D') \subset Z$.
    But by the running intersection property (\actree\ property 2), every cluster along the path from $C$ to $D$ must contain $C \cap D$---in particular, this must be true of both $C'$ and $D'$.
    Therefore,
    \[
        U \in C \cap D \subset C' \cap D' \subset \mat Z.
    \]
    So $\XX \cap \XNX \subset \mat Z$, as promised.  We will rather use it in the equivalent form $(\XX \cap \XNX) \cup \mat Z = \mat Z$.

    Next, since $(\C, \cal T)$ is a tree decomposition of $\Ar$, each hyperarc $a \in \Ar$ can be assigned to some cluster $C_a$ that contains all of its variables; this allows us to lift the cluster partition $\C = \CX \sqcup \CNX$ to a partition $\Ar = \Ar_{\mat X} \sqcup \Ar_{\mat Y}^+$ of edges, and consequently, a partition of PDGs $\dg M = \dg M_{\mat X} \bundle \dg M_{\mat Y}^+$.
    Concretely: let $\dg M_{\mat X}$ be the sub-PDG of $\dg M$ induced by restricting to the variables $\XX \subseteq \X$ arcs $\Ar_{\mat X} = \{ a \in \Ar : C_a \in \CX \} \subseteq \Ar$; define $\dg M_{\mat Y}^+$ symmetrically. (To be explicit: the other data of $\dg M_{\mat X}$ and $\dg M_{\mat Y}^+$ are given by restricting each of $\{\mathbb P,\balpha,\bbeta\}$ to $\Ar_{\mat X}$ and $\Ar_{\mat Y}^+$, respectively.)

    This partition of $\dg M$ allows us to use the PDG Markov property.
    Suppose that for some $\gamma > 0$ that $\mu \in \bbr{\dg M}^*_\gamma = \bbr{\dg M_{\mat X} \bundle \dg M_2}^*_\gamma$.
    We can then apply \cref{theorem:markov-property}, to find that
    $\XX$ and $\XNX$ are independent given $\XX \cap \XNX$.
    We use standard standard properties of random variable independence
        \parencite[CIRV1-5 of][Theorem 4.4.4]{halpern2017reasoning} to find that $\mu$ must satisfy:
    \begin{align*}
        \XX  &\CI \XNX \mid \XX \cap \XNX\\
    \implies\qquad
        (\XX \setminus \mat Z) &\CI (\XNX\setminus \mat Z) \mid (\XX \cap \XNX) \cup \mat Z
            & \big[\,\text{CIRV3}\,\big] \\
    \implies\qquad
        (\mat X \setminus \mat Z) &\CI (\mat Y \setminus \mat Z) \mid (\XX \cap \XNX) \cup \mat Z
        & \big[\,\text{by CIRV2, as $\mat X \subseteq \XX$ and $\mat Y \subseteq \XNX$}\,\big] \\
    \implies\qquad
        (\mat X \setminus \mat Z) &\CI (\mat Y \setminus \mat Z) \mid \mat Z
        & \big[\,\text{since $(\XX \cap \XNX) \cup \mat Z = \mat Z$}\,\big] \\
    \iff\qquad
        \mat X &\CI \mat Y \mid \mat Z
        & \hspace{-4em}\big[\,\text{standard; e.g., Exercise 4.18 of \textcite{halpern2017reasoning}}\,\big] \\
    \end{align*}

    Using only the PDG Markov property, we have now shown that every independence
    modeled by the Markov Network $G$ also holds
    in every distribution $\mu \in \bbr{\dg M}^*_\gamma$. Moreover, $G$ is chordal (as we will prove momentarily),
    and is well-known that distributions that have the independencies of a chordal graph can be can be represented by \actree s \parencite[Theorem 4.12]{koller2009probabilistic}.
    Therefore, there is a \actree $\bmu$ representing every $\mu \in \bbr{\dg M}^*_\gamma$.

    \begin{iclaim}
        $G$ is chordal.
    \end{iclaim}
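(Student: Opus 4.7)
I would prove $G$ is chordal by exhibiting it as the intersection graph of a family of subtrees of $\mathcal T$, then invoking the classical characterization of chordal graphs due to Gavril (every intersection graph of subtrees of a tree is chordal). The whole argument rests squarely on the two defining axioms of a tree decomposition, so I expect no serious obstacle.

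First, for each variable $v \in \X$, I would define $\mathcal T_v := \{C \in \C : v \in C\}$. Property~(1) of a tree decomposition guarantees $\mathcal T_v$ is nonempty, and the running intersection property (Property~(2)) guarantees that whenever $C, D \in \mathcal T_v$, every cluster on the unique $C$-to-$D$ path in $\mathcal T$ also contains $v$; hence $\mathcal T_v$ is a connected subtree of $\mathcal T$.

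Next, I would observe that two variables $u$ and $v$ are adjacent in $G$ exactly when some cluster contains them both, i.e.\ exactly when $\mathcal T_u \cap \mathcal T_v \neq \emptyset$. Thus $G$ is precisely the intersection graph of the subtree family $\{\mathcal T_v\}_{v \in \X}$, so Gavril's theorem yields chordality.

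If one prefers an argument that does not cite Gavril, a direct proof proceeds along the same lines: given a chordless cycle $v_1, v_2, \ldots, v_k, v_1$ in $G$ with $k \ge 4$, pick witnessing clusters $C_i \ni \{v_i, v_{i+1}\}$; the only work is the Helly-type observation that a finite collection of pairwise-intersecting subtrees of a tree has a common node, which again follows from running intersection. This forces a non-adjacent pair of cycle vertices to share a cluster, contradicting chordlessness. I would include this alternate derivation only if space permits, as the appeal to Gavril is shorter and entirely standard.
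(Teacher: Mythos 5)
Your proposal is correct, and it takes a genuinely different route from the paper. The paper argues directly by contradiction on a four\mbox{-}cycle $X{-}Y{-}Z{-}W{-}X$: it assumes neither diagonal pair shares a cluster, observes that the induced subtrees $\mathcal T(Y)$ and $\mathcal T(W)$ of $\mathcal T$ are disjoint but indirectly connected through $\mathcal T(Z)$ and $\mathcal T(X)$, and then derives a contradiction with the running intersection property. You instead identify $G$ outright as the intersection graph of the family of subtrees $\{\mathcal T_v\}_{v \in \X}$ (connectedness of each $\mathcal T_v$ being exactly the running intersection property) and cite Gavril's characterization that such intersection graphs are precisely the chordal graphs. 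Both rest on the same two facts — subtrees $\mathcal T_v$ are connected, and adjacency in $G$ is nonempty intersection of those subtrees — but your appeal to Gavril is shorter, standard, and handles all cycle lengths uniformly. The paper's direct argument, as written, only explicitly treats cycles of length four, so it is technically less complete; your version closes that gap cleanly. Your sketched fallback via the Helly property for subtrees of a tree is also the right idea, though as you note it takes a bit more bookkeeping to turn it into a chord on a cycle of arbitrary length, so the Gavril citation is the better choice if brevity matters.
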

    \begin{proof}
        Suppose that $G$ contains a loop $X{-}Y{-}Z{-}W{-}X$.
        Suppose further, for contradiction, that neither $X$ and $Z$ nor $Y$ and $W$ share a cluster.
        Given a variable $V$, it is easy to see that property (2) of the tree decomposition ensures that the subtree $\mathcal T(V) \subseteq \mathcal T$ induced by the clusters $C \in \C$ that contain $V$, is connected.
        By assumption, ${\cal T}(Y)$ and ${\cal T}(W)$ must be disjoint.
        There is an edge between $Y$ and $Z$, so some cluster must contain both variables, meaning ${\cal T}(Y) \cap {\cal T}(Z)$ is non-empty.
        Similarly, ${\cal T}(Z) \cap {\cal T}(W)$ is non-empty because of the edge between $Z$ and $W$.
        This creates an (indirect) connection in $\cal T$ between ${\cal T}(Y)$ and ${\cal T}(W)$. Because $\cal T$ is a tree, and ${\cal T}(Y) \cap {\cal T}(W) = \emptyset$,
        every path from a cluster $C_1 \in {\cal T}(Y)$ to a cluster $C_2 \in {\cal T}(W)$ must pass through ${\cal T}(Z)$, which is not part of ${\cal T}(Y)$ or ${\cal T}(W)$.
        ${\cal T}(X)$ and ${\cal T}(Y)$ intersect as well, meaning that, for any $C \in {\cal T}(X)$, there is a (unique) path from $C$ to that point of intersection, then across edges of ${\cal T}(Y)$, then edges of ${\cal T}(Z)$, and finally connects to the clusters of ${\cal T}(W)$. And also, since $\cal T$ is a tree, that path must be unique.
        The problem is that there is also an edge between $X$ and $W$, so there's some cluster that contains $X$ and $W$; let's call it $C_0$.
        It's distinct from the cluster $D_0$ that contains $Z$ and $W$, since no cluster contains both $X$ and $Z$ by assumption.
        The unique path from $C_0$ to $D_0$
        intersects with ${\cal T}(Y)$.
        But now $W \in C_0 \cap D_0$, and by the running intersection property, every node along this unique path must contain $W$ as well.
        But this contradicts our assumption that $W$ is disjoint from $Y$! So $G$ is chordal.
    \end{proof}
\end{lproof}

\subsection{Correctness and Efficiency of Inference via Exponential Conic Programming}
    \label{proofs:expcone-efficient-correct}

\recall{prop:joint-inc-correct}

\begin{lproof}
    \label{proof:joint-inc-correct}
    Suppose that $(\mu, \mat u)$ is a solution to \eqref{prob:joint-inc}.
    The exponential cone constraints ensure that, for every $(a, s,t) \in \V\!\Ar$,
    $$
        u_{a,s,t} \ge \mu(s,t) \log \frac{\mu(s,t)}{\p_a(t|s)\mu(s)}
    $$
    where $\mu(s,t)$ and $\mu(s)$, as usual, are shorthand for $\mu(\Src a{=}s, \Tgt a{=}t)$ and $\mu(\Src a {=} s)$, respectively.

    Suppose, for contradiction, that one of these inequalities is strict at some an index $(a',s',t') \in \V\!\Ar$ for which $\beta_{a'} > 0$.
    Explicitly, this means
    $$
        u_{a',s',t'} > \mu(s_0,t_0) \log \frac{\mu(s',t')}{\p_{a'}(t'|s')\mu(s')}.
    $$
    In that case, we can define a vector $\mat u' = [u'_{a,s,t}]_{(a,s,t)\in\V\!\Ar}$ which is identical to $\mat u$, except that at $(a',s',t')$, it is halfway between the two quantities described as different above.  More precisely:
    $$
        u'_{a',s',t'} = \frac12 u_{a',s',t'} + \frac12 \log \mu(s',t') \log \frac{\mu(s',t')}{\p_a(t'|s')\mu(s')}.
    $$
    Note that $u'_{a',s',t'} < u_{a',s',t'}$,
    and also that, by construction, $(\mu, \mat u')$ also satisfies the constraints of \eqref{prob:joint-inc}.
    In more detail: for $(a', s', t')$ it doesn't violate the associated exponential cone constraint, as
    $$
        \left( \text{formally:} \quad
        u'_{a',s',t'} = \frac12 u_{a',s',t'} + \frac12 \log \mu(s',t')\log \frac{\mu(s',t')}{\p_{a'}(t'|s')\mu(s')}
        >
        \mu(s',t') \log \frac{\mu(s',t')}{\p_{a'}(t'|s')\mu(s')}
        \right),
    $$
    and $\mat u'$ remains unchanged at the other indices, and so satisfies the constraints at those indices, becasuse $\mat u$ does.
    But now, because $u'_{a', s', t'} < u_{a',s',t'}$, and $\beta_{a'} >0$, we also have
    \[
        \sum_{(a,s,t) \in \V\!\Ar} \beta_a u'_{a,s,t}
            > \sum_{(a,s,t) \in \V\!\Ar} \beta_a u'_{a,s,t}.
    \]
    Thus the objective value at $(\mu, \mat u')$ is strictly
    smaller than the one at $(\mu, \mat u)$, both of which are feasible points.
    This contradicts the assumption that $(\mu, \mat u)$ is optimal.
    We therefore conclude that none of these inequalities can be strict at points where $\beta_{a} > 0$.
    This can be compactly written as:
    \begin{align*}
        \forall (a,s,t) \in \V\!\Ar.\quad
        \beta_a u_{a,s,t} &= \beta_a \mu(s,t) \log \frac{\mu(s,t)}{\p_a(t|s)\mu(s)} \\
        \implies\qquad
        \sum_{(a,s,t) \in \V\!\Ar}\beta_a u_{a,s,t}
            &= \sum_{(a,s,t) \in \V\!\Ar} \beta_a \mu(s,t) \log \frac{\mu(s,t)}{\p_a(t|s)\mu(s)}
            = \OInc_{\dg M}(\mu).
    \end{align*}
    In other words, the objective of problem \eqref{prob:joint-inc} at
    $(\mu, \mat u)$ is equal to the observational incompatibility $\OInc_{\dg M}(\mu)$ of $\mu$ with $\dg M$.
    And, because $(\mu, \mat u)$ minimizes this value among all joint distributions, $\mu$ must be a minimum of $\OInc_{\dg M}$.

    More formally: assume for contradiciton that $\mu$ is not a minimizer of $\OInc_{\dg M}$. Then there would be some other distribution $\mu'$ for which $\OInc_{\dg M}(\mu') < \OInc_{\dg M}(\mu)$.
    Let $\mat u'' := [ \mu'(s,t) \log \frac{\mu'(s,t)}{\p_a(t|s) \mu'(s)} ]_{(a,s,t) \in \V\!\Ar}$. Clearly $(\mu', \mat u'')$ satisfies the constraints of the problem, and moreover,
    \[
        \sum_{(a,s,t)\in \V\!\Ar} \beta_a u_{a,s,t} =
        \OInc_{\dg M}(\mu) >
        \OInc_{\dg M}(\mu') =
        \sum_{(a,s,t)\in \V\!\Ar} \beta_a u'_{a,s,t},
    \]
    contradicting the assumption that the $(\mu, \mat u)$ is optimal for problem \eqref{prob:joint-inc}. Thus, $\mu$ is a minimizer of $\OInc_{\dg M}$, and the objective value is $\inf_{\mu} \OInc_{\dg M}(\mu) = \aar{\dg M}_0$, as desired.
\end{lproof}

\recall{prop:joint-small-gamma-correct}
For convenience, we repeat problem \eqref{prob:joint-small-gamma}
(left) and an equivalent variant of it that we implement (right) below.

\begin{minipage}{0.49\linewidth}
\begin{align*}
\minimize_{\mu, \mat u, \mat v} & ~~
    \sum_{\mathrlap{\!\!\!(a,s,t) \in \V\!\Ar}}
        (\beta_a \!- \alpha_a \gamma) u_{a,s,t}
        \,+
        \gamma
        \sum_{\mathclap{w \in \V\!\X}} v_w
    \tag{\ref{prob:joint-small-gamma}}
\\[-0.2ex]
    &\qquad
    - \sum_{\mathrlap{\!\!\!(a,s,t) \in \smash{\V\!\Ar^+}}}
        \alpha_a \gamma \,
        \mu(\Src a{=}s,\Tgt a {=} t) \log \p_a (t|s)
\\[0.2ex]
\subjto&\quad \mu \in \Delta\V\!\X,
        \quad ( -\mat v,  \mu,  \mat 1) \in K_{\exp}^{\V\!\X},
\\[-0.4ex]
    \forall a \in \Ar.~
        &\big(-\mat u_a, \mu( \Tgt a,\Src a),\p_a(\Tgt a | \Src a)  \mu(\Src a) \big)
            \in K_{\exp}^{\V a},
\\[-0.2ex]
    \forall (a,s,t) &\in \V\!\Ar^0\!.~
    \mu(\Src a{=}\mskip2mus, \Tgt a{=}\mskip2mut) = 0;
\end{align*}
\end{minipage}
~~\vrule~~
\begin{minipage}{0.49\linewidth}
\begin{align*}
\minimize_{\mu, \mat u, \mat v} & ~~
    \sum_{\mathrlap{\!\!\!(a,s,t) \in \V\!\Ar}}
        (\beta_a \!- \alpha_a \gamma) u_{a,s,t}
        \,+
        \gamma
        \sum_{\mathclap{w \in \V\!\X}} v_w
        \tag{\ref*{prob:joint-small-gamma}b}\label{prob:joint-small-gamma-b}
\\[-0.2ex]
    &\qquad
    - \sum_{\mathrlap{\!\!\!(a,s,t) \in \smash{\V\!\Ar^+}}}
        \beta_a \,
        \mu(\Src a{=}s,\Tgt a {=} t) \log \p_a (t|s)
\\[0.2ex]
\subjto&\quad \mu \in \Delta\V\!\X,
        \quad ( -\mat v,  \mu,  \mat 1) \in K_{\exp}^{\V\!\X},
\\[-0.4ex]
    \forall a \in \Ar.~
    &\big(-\mat u_a, \mu( \Tgt a,\Src a),
        \big[\,\mu(\Src a=s) \big]_{(s,t) \in \V a} \big)
        \in K_{\exp}^{\V a},
\\[-0.2ex]
    \forall (a,s,t) &\in \V\!\Ar^0\!.~
    \mu(\Src a{=}\mskip2mus, \Tgt a{=}\mskip2mut) = 0.
\end{align*}
\end{minipage}
\medskip

\begin{lproof}\label{proof:joint-small-gamma-correct}
    We start with the problem on the left, which is \eqref{prob:joint-small-gamma} from the main text.
    Suppose that $(\mu, \mat u, \mat v)$ is a solution to \eqref{prob:joint-small-gamma}.
    The exponential constraints ensure that
    \[
        \forall (a,s,t) \in \V\!\Ar.~
        u_{a,s,t} \ge \mu(s,t) \log \frac{\mu(t|s)}{\p_a(t|s)}
    \qquad\text{and}\qquad
        \forall w \in \V \X.~
        v_{w} \ge \mu(w) \log \mu(w).
    \]
    As in the previous proof, we claim that these must hold with equality (except possibly for $u_{a,s,t}$ at indices satisfying $\beta_a = \gamma \alpha_a$, when it doesn't matter).
    This is because otherwise one could reduce the value of a component of $u$ or $v$ while still satisfying all of the constraints, to obtain a strictly smaller objective, contradicing the assumption that $(\mu, \mat u, \mat v)$ minimizes it.

    Thus, $\mat v$ is a function of $\mu$, as is every value of $\mat u$ that affects the objective value of \eqref{prob:joint-small-gamma}, meaning that this objective value can be written as a function of $\mu$ alone:
    \begin{align*}
        &\sum_{\mathrlap{\!\!\!(a,s,t) \in \V\!\Ar}}
            (\beta_a \!- \alpha_a \gamma) u_{a,s,t}
        ~+ \gamma \sum_{\mathclap{w \in \V\!\X}} v_w
        ~- \sum_{\mathrlap{\!\!\!(a,s,t) \in \smash{\V\!\Ar^+}}}
            \alpha_a\gamma \, \mu(s,t) \log \p_a (t|s) \\
    &=
        \sum_{\mathrlap{\!\!\!(a,s,t) \in \V\!\Ar}}
            (\beta_a \!- \alpha_a \gamma) \pqty*{\mu(s,t) \log \frac{\mu(t|s)}{\p_a(t|s)}}
        ~+~ \gamma \sum_{\mathclap{w \in \V\!\X}} \mu(w) \log \mu(w)
        ~-~ \sum_{\mathrlap{\!\!\!(a,s,t) \in \smash{\V\!\Ar^+}}}
            \alpha_a\gamma \, \mu(s,t) \log \p_a (t|s) \\
    &=
        \sum_{a \in \Ar} (\beta_a \!- \alpha_a \gamma) \sum_{(s,t) \in \V a}
             \pqty*{\mu(s,t) \log \frac{\mu(t|s)}{\p_a(t|s)}}
        - \gamma \H(\mu)
        - \sum_{a \in \Ar} \alpha_a\gamma \, \sum_{(s,t) \in \V\!\Ar}
             \mu(s,t) \log \p_a (t|s) \\
     &=
         \sum_{a \in \Ar} (\beta_a \!- \alpha_a \gamma)
          \sum_{(s,t) \in \V a}
             \mu(s,t) \pqty*{\log \frac{1}{\p_a(t|s)} - \log \frac{1}{\mu(t|s)}}
         - \gamma \H(\mu)
         - \sum_{a \in \Ar} \alpha_a\gamma \, \Ex_{\mu} [ \log \p_a (\Tgt a|\Src a) ] \\
    &=
        \sum_{a \in \Ar} (\beta_a \!-\! \alpha_a \gamma)
           \Ex_{\mu}[ - \log \p_a(\Tgt a | \Src a)]
        - \sum_{a \in \Ar} (\beta_a \!-\! \alpha_a \gamma)
           \H_{\mu}(\Tgt a | \Src a)
        - \gamma \H(\mu)
        - \sum_{a \in \Ar} \alpha_a\gamma \, \Ex_{\mu} [ \log \p_a (\Tgt a|\Src a) ] \\
    &=
        \sum_{a \in \Ar} \Big( - \alpha_a\gamma - (\beta_a \!-\! \alpha_a \gamma) \Big)
           \Ex_{\mu}[ \log \p_a(\Tgt a | \Src a)]
        + \sum_{a \in \Ar} (\alpha_a \gamma \!-\! \beta_a)
           \H_{\mu}(\Tgt a | \Src a)
        - \gamma \H(\mu) \\
    &=
        -\sum_{a \in \Ar} \beta_a
           \Ex_{\mu}[ \log \p_a(\Tgt a | \Src a)]
        + \sum_{a \in \Ar} (\alpha_a \gamma \!-\! \beta_a)
           \H_{\mu}(\Tgt a | \Src a)
        - \gamma \H(\mu).
    \end{align*}
    ( In the third step, we were able to convert $\V\!\Ar^+$ to $\V\!\Ar$ because, as usual in when dealing with information-therotic quantities, we interpret $0 \log \frac{1}0$ as equal to zero, which is its limit. )

    The algebra, for the right side variant
    \eqref{prob:joint-small-gamma-b}
    is slightly simpler. In this case the middle conic constraint is almost the same, except for that $\p_a(t|s)$ has been replaced with $1$, and so it ensures that $u_{a,s,t} = \mu(s,t) \log \mu(t\mid s)$ (i.e., the same as before, but without the probability in the denomiator). So,
    \begin{align*}
        &\sum_{\mathrlap{\!\!\!(a,s,t) \in \V\!\Ar}}
            (\beta_a \!- \alpha_a \gamma) u_{a,s,t}
        ~+ \gamma \sum_{\mathclap{w \in \V\!\X}} v_w
        ~- \sum_{\mathrlap{\!\!\!(a,s,t) \in \smash{\V\!\Ar^+}}}
            \beta_a \, \mu(s,t) \log \p_a (t|s) \\
    &=
        \sum_{\mathrlap{\!\!\!(a,s,t) \in \V\!\Ar}}
            (\beta_a \!- \alpha_a \gamma) \mu(s,t) \log \mu(t|s)
        ~+~ \gamma \sum_{\mathclap{w \in \V\!\X}} \mu(w) \log \mu(w)
        ~-~ \sum_{\mathrlap{\!\!\!(a,s,t) \in \smash{\V\!\Ar^+}}}
            \beta_a \, \mu(s,t) \log \p_a (t|s) \\
    &=
        \sum_{a \in \Ar} (\beta_a \!- \alpha_a \gamma) \sum_{(s,t) \mathrlap{\in \V a}}
             \mu(s,t) \log \mu(t|s)
        - \gamma \H(\mu)
        - \sum_{a \in \Ar} \beta_a \, \sum_{(s,t) \mathrlap{\in \V\!\Ar}}
             \mu(s,t) \log \p_a (t|s) \\
        &=
        \sum_{a \in \Ar}
         (\alpha_a \gamma \!-\! \beta_a)
           \H_{\mu}(\Tgt a | \Src a)
        - \gamma \H(\mu)
        -\sum_{a \in \Ar} \beta_a
           \Ex_{\mu}[ \log \p_a(\Tgt a | \Src a)].
    \end{align*}

    In either case, the objective value is equal to $\bbr{\dg M}_\gamma(\mu)$, by \eqref{eq:altscore}.
    Because $(\mu, \mat u, \mat v)$ is optimal for this problem, we know that $\mu$ is a minimizer of $\bbr{\dg M}_{\gamma}(\mu)$, and that the objective value equals $\aar{\dg M}_\gamma$.
\end{lproof}

\begin{lemma}\label{lem:hess-relent}
    The gradient and Hessian of the conditional relative entropy
    are given by
    \begin{align*}
        \Big[ \nabla_{\mu} \kldiv{\mu(X,Y)}{\mu(X) p(Y|X) } &\Big]_u
            = \log \frac{\mu(Y\! u | X\! u)}{  p(Y\! u | X\! u)} \\
        \Big[ \nabla^2_\mu \kldiv{\mu(X,Y)}{\mu(X)p(Y|X)}&\Big]_{u,v}
            = \frac{\mathbbm1[X\!u {=} X\!v \land Y\!u {=} Y\!v]}{\mu(Y\! u, X\! u)}
            - \frac{\mathbbm1[X\!v = X\!u]}{\mu(X\!u)}
        ,
    \end{align*}
    where $X\! u = X(u)$ it the value of the variable $X$ in the joint setting $u \in \V\!\X$ of all variables.
\end{lemma}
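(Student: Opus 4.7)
The proof is a direct computation. The plan is to treat $\mu$ as a vector with components $\{\mu(u)\}_{u \in \V\!\X}$ indexed by joint settings of all variables, and then apply the chain rule, using the fact that both marginals appearing in the relative entropy are linear functions of $\mu$: namely $\mu(x_0, y_0) = \sum_{u\,:\,X\!u=x_0,\, Y\!u=y_0} \mu(u)$ and $\mu(x_0) = \sum_{u\,:\,X\!u=x_0}\mu(u)$, so that $\partial \mu(x_0,y_0)/\partial \mu(u) = \mathbbm{1}[X\!u=x_0 \land Y\!u=y_0]$ and $\partial \mu(x_0)/\partial \mu(u) = \mathbbm{1}[X\!u=x_0]$.

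First I will expand the relative entropy into three pieces,
\[
\kldiv{\mu(X,Y)}{\mu(X)p(Y|X)} = \sum_{x,y}\mu(x,y)\log\mu(x,y) \,-\, \sum_{x}\mu(x)\log\mu(x) \,-\, \sum_{x,y}\mu(x,y)\log p(y|x),
\]
where I use $\sum_y \mu(x,y) = \mu(x)$ to simplify the middle term. Differentiating term-by-term with respect to $\mu(u)$ via the chain rule, the first piece contributes $\log\mu(X\!u,Y\!u) + 1$ (from $f\log f$), the second contributes $-\bigl(\log\mu(X\!u) + 1\bigr)$, and the third contributes $-\log p(Y\!u\mid X\!u)$. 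The $\pm 1$ constants cancel, and combining the two remaining log-of-marginal terms into a single log of $\mu(Y\!u\mid X\!u)$ yields the claimed gradient formula.

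For the Hessian I will differentiate the gradient expression once more, now with respect to $\mu(v)$. The $\log p(Y\!u\mid X\!u)$ contribution depends only on $p$ and drops out. Using the chain rule and the linearity of the marginals in $\mu$, $\partial \log\mu(X\!u,Y\!u)/\partial\mu(v) = \mathbbm{1}[X\!u=X\!v \land Y\!u=Y\!v]/\mu(X\!u,Y\!u)$, and similarly $-\partial \log\mu(X\!u)/\partial\mu(v) = -\mathbbm{1}[X\!u=X\!v]/\mu(X\!u)$. Summing these two contributions reproduces the stated Hessian entry exactly.

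The only minor subtlety is the handling of boundary points of the simplex where some marginal vanishes, making the log expressions undefined; I would restrict attention to the relative interior (on which the function is smooth and its derivatives obviously exist) and extend by continuity using the standard convention $0\log 0 := 0$, in the same way that such points are handled elsewhere in the paper. This is pure bookkeeping; no serious obstacle is expected, as the lemma is essentially a careful application of the chain rule to a linear-in-$\mu$ objective.
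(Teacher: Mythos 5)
Your proof is correct and follows a genuinely different — and in fact cleaner — route than the paper's. The paper differentiates the expression $\sum_w \mu_w \log\!\bigl(\mu(Y\!w|X\!w)/p(Y\!w|X\!w)\bigr)$ directly, which requires a product rule over the outer sum, a quotient-rule computation of $\partial\mu(y|x)/\partial\mu_u$, and then a somewhat elaborate regrouping in which several terms cancel back to zero via the normalization identities $\sum_w \mu_w \mathbbm1[\cdots]/\mu(\cdots)=1$. You instead decompose the conditional relative entropy into three pieces
\[
\textstyle\sum_{x,y}\mu(x,y)\log\mu(x,y)\;-\;\sum_{x}\mu(x)\log\mu(x)\;-\;\sum_{x,y}\mu(x,y)\log p(y|x),
\]
two negentropies in linear functionals of $\mu$ plus a term linear in $\mu$, after which both derivatives follow from the elementary fact $\partial(f\log f)/\partial f = \log f + 1$ and the constancy of the $p$-term. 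The cancellation argument collapses to a trivial $+1-1$. Your approach is shorter, needs no intermediate lemma about $\partial\mu(y|x)/\partial\mu_u$, and makes it obvious by inspection that the Hessian is the difference of two rank-structured matrices — whereas the paper's direct route makes these facts emerge only after several lines of algebra. You also flag (correctly, and the paper does not) that the computation is valid only on the relative interior of the simplex and that the boundary is handled by the usual continuity conventions. Both proofs are valid; yours is the more transparent one.
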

\begin{lproof} \label{proof:hess-relent}
    \allowdisplaybreaks

    \def\pd/d#1[#2]{\,\frac{\partial}{\partial #1}\!\! \left[\vphantom{\Big|}#2\right]}

    Represent $\mu$ as a vector $[\mu_{w}]_{w \in \V\X}$.
    We will make repeated use of the following facts:
    \begin{align*}
        \pd/d\mu_u [\mu(X{=}x)]=
        \pd/d\mu_u [\mu(x)]
         &= \sum_w \pd/d\mu_u[\mu_w] \! \mathbbm1[X\!w{=}x]
            ~=~  \mathbbm1[ X\! u {=} x] ; \quad\text{and}\\
        \pd/d\mu_u [\mu(y|x)] &=
            \pd/d\mu_u [ \frac{\mu(x,y)}{\mu(x)}] \\
        &= \mu(x,y) \pd/d\mu_u[ \frac{1}{\mu(x)} ]
            + \frac1{\mu(x)} \pd/d\mu_u[ \mu(x,y) ] \\
        &= - \mu(x,y)\frac{\mathbbm1[ X\!u = x]}{\mu(x)^2}
             + \frac{1}{\mu(x)} \mathbbm1[X\!u {=} x \land Y\!u {=} y] \\
        &= \frac{\mathbbm1[X\!u = x]}{\mu(x)}\Big( \mathbbm1[Y\!u=y] - \mu(y|x) \Big).
    \end{align*}

    We now apply this to the (conditional) relative entropy:
    \begin{align*}
        &\pd/d\mu_u [ \kldiv{\mu(X,Y)}{\mu(X) p(Y|X) }] \\
            &= \pd/d\mu_u [ \sum_{w} \mu_w \log \frac{\mu(Y\! w | X\! w)}{ p(Y\! w | X\! w)} ] \\
            &= \sum_{w} \mathbbm1[u{=}w]  \log \frac{\mu(Y\! w | X\! w)}{  p(Y\! w | X\! w)}
                + \sum_{w} \mu_w  \pd/d\mu_u [  \log \frac{\mu(Y\! w | X\! w)}{  p(Y\! w | X\! w)} ] \\
            &=  \log \frac{\mu(Y\! u | X\! u)}{  p(Y\! u | X\! u)}
                + \sum_{w} \mu_w
                \frac{  p(Y\! w | X\! w)}{\mu(Y\! w | X\! w)}
                \pd/d\mu_u [ \frac{\mu(Y\! w | X\! w)}{  p(Y\! w | X\! w)} ] \\
            &=  \log \frac{\mu(Y\! u | X\! u)}{  p(Y\! u | X\! u)}
                + \sum_{w} \mu_w
                \frac{1}{\mu(Y\! w | X\! w)}
                \pd/d\mu_u [\mu(Y\! w | X\! w) ] \\
            &=  \log \frac{\mu(Y\! u | X\! u)}{  p(Y\! u | X\! u)}
                + \sum_{w} \mu_w
                \frac{1}{\mu(Y\! w | X\! w)}
                 \frac{\mathbbm1[X\!u = X\!w]}{\mu(X\!w)}\Big( \mathbbm1[Y\!u=Y\!w] - \mu(Y\!w|X\!w) \Big)\\
            &=  \log \frac{\mu(Y\! u | X\! u)}{  p(Y\! u | X\! u)}
                + \sum_{w} \mu_w  \frac{\mathbbm1[X\!u{=}X\!w \land Y\!u{=}Y\!w]}
                    {\mu(X\!w, Y\!w)}
                - \sum_{w} \mu_w \frac{\mathbbm1[X\!u = X\!w]}{\mu(X\!w)}
                \\
            &=  \log \frac{\mu(Y\! u | X\! u)}{  p(Y\! u | X\! u)}
                + \frac{1}{\mu(X\!u, Y\!u)} \sum_w \mu_w  \mathbbm1[X\!u{=}X\!w \land Y\!u{=}Y\!w]
                - \frac{1}{\mu(X\!u)} \sum_w \mu_w \mathbbm1[X\!u = X\!w]
                \\
            &=  \log \frac{\mu(Y\! u | X\! u)}{  p(Y\! u | X\! u)}
                + \frac{\mu(X\!u, Y\!u)}{\mu(X\!u, Y\!u)}
                - \frac{\mu(X\!u)}{\mu(X\!u)}   \\
            &= \log \frac{\mu(Y\! u | X\! u)}{  p(Y\! u | X\! u)} + 1 - 1 \\
            &= \log \frac{\mu(Y\! u | X\! u)}{  p(Y\! u | X\! u)}
            .
    \end{align*}

    This allows us to compute the Hessian of the conditional relative entropy, whose  components are
    \begin{align*}
        \frac{\partial^2}{\partial \mu_u \partial \mu_v} \Big[ \kldiv{\mu(XY)}{\mu(X)p(Y|X)}\Big]
        &=
        \pd/d\mu_v[ \log \frac{\mu(Y\! u | X\! u)}{  p(Y\! u | X\! u)} ] \\
        &=
        \frac{ p(Y\! u | X\! u)}{\mu(Y\! u | X\! u)} \frac1{ p(Y\! u | X\! u)}
        \pd/d\mu_v[ \mu(Y\! u | X\! u) ] \\
        &= \frac{1}{\mu(Y\! u | X\! u)}
            \frac{\mathbbm1[X\!v{=}X\!u]}{\mu(X\!u)}\Big( \mathbbm1[Y\!v{=}Y\!u] - \mu(Y\!u|X\!u) \Big)\\
        &= \frac{\mathbbm1[X\!u {=} X\!v \land Y\!u {=} Y\!v]}{\mu(Y\! u, X\! u)}
            - \frac{\mathbbm1[X\!v = X\!u]}{\mu(X\!u)}
        .
    \end{align*}

\end{lproof}

\begin{lemma}
    Let $p(Y|X)$ be a cpd,
    and suppose that $\mu_0, \mu_1 \in \Delta \V(X,Y)$ are joint distributions that have different conditional marginals on $Y$ given $X$; that is, that
    there exist $(x,y) \in \V(X,Y)$ such that
    $
        \mu_0(x,y) \mu_1(x)  \ne \mu_1(x,y) \mu_0(x).
    $
    Then the conditional relative entropy
    $
        \kldiv[\Big]{ \mu(X,Y) }{ \mu(X) p(Y|X) }
    $
    is strictly convex in $\mu$ along the line segment from $\mu_0$ to $\mu_1$.
    More precisely, for $t \in [0,1]$, if we define
    $\mu_t := (1-t) \mu_0 + t\, \mu_1$, then
    the function
    \[
    t ~\mapsto~ \kldiv[\Big]{ \mu_t(X,Y) }
        {\mu_t(X) p(Y|X)}
        \qquad\text{is strictly convex. }
    \]
    \label{lem:seg-strictcvx}
\end{lemma}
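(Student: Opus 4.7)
The plan is to reduce the claim to the log-sum inequality, which gives strict convexity in one shot without needing to differentiate twice. Write $F(t) := \kldiv{\mu_t(X,Y)}{\mu_t(X)\, p(Y|X)}$ and note that both $\mu_t(X,Y)$ and $\mu_t(X)\, p(Y|X)$ are joint distributions on $\V(X,Y)$ that are \emph{affine} in $t$. For any $s \ne t$ in $[0,1]$ and any $\lambda \in (0,1)$, I would apply the log-sum inequality at each index $(x,y) \in \V(X,Y)$ with $(a,c) = \lambda\big(\mu_s(x,y),\, \mu_s(x) p(y|x)\big)$ and $(b,d) = (1-\lambda)\big(\mu_t(x,y),\, \mu_t(x) p(y|x)\big)$. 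Affinity gives $a+b = \mu_{\lambda s + (1-\lambda)t}(x,y)$ and $c+d = \mu_{\lambda s + (1-\lambda)t}(x)\, p(y|x)$, so summing over $(x,y)$ yields
\[
F(\lambda s + (1-\lambda)t) \;\le\; \lambda F(s) + (1-\lambda)\, F(t),
\]
with equality iff the pointwise ratios $\mu_s(x,y)/(\mu_s(x)p(y|x))$ and $\mu_t(x,y)/(\mu_t(x)p(y|x))$ agree at every $(x,y)$.

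It then remains to exhibit an index where this ratio equality fails. A direct expansion using the affinity of $\mu_t$ in $t$ gives the identity
\[
\mu_s(x,y)\,\mu_t(x) - \mu_t(x,y)\,\mu_s(x) \;=\; (t-s)\big(\mu_0(x,y)\mu_1(x) - \mu_1(x,y)\mu_0(x)\big).
\]
Taking $(x,y) = (\hat x, \hat y)$ to be the witness supplied by the hypothesis, the right-hand side is nonzero whenever $s \ne t$, so the two cross-products at $(\hat x, \hat y)$ differ. Multiplying through by the common factor $p(\hat y \mid \hat x)$ (which must be strictly positive, since otherwise both products in the hypothesis would vanish and the hypothesis would be vacuous) shows that the ratios $a/c$ and $b/d$ differ at $(\hat x, \hat y)$, so the log-sum inequality is strict there, giving strict convexity of $F$ on $[0,1]$.

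The main subtlety is the possibility of infinite summands elsewhere: if $p(y \mid x) = 0$ for some other index $(x,y)$ with $\mu_s(x,y) > 0$, the corresponding term of $F$ contributes $+\infty$ to both sides, so no strict gain is obtained at that index. Under the standard conventions this only trivializes the bound at such indices, so strict convexity must be secured at the finite witness $(\hat x, \hat y)$ alone---which is what the above argument does. In the intended use (the proof of \cref{prop:marginonly}), $\mu_0, \mu_1 \in \bbr{\dg M}^*_0$ have finite observational incompatibility, so $p(\hat y \mid \hat x) > 0$ at the witness automatically, and all the computations above land squarely in the finite regime.
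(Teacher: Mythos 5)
Your proof is correct in substance and takes a genuinely different route from the paper's. The paper differentiates twice: it derives the Hessian of the conditional relative entropy (\cref{lem:hess-relent}), writes the quadratic form $\boldsymbol\delta^{\sf T}\mat H(\mu)\boldsymbol\delta$ as a nonnegatively-weighted sum of $\chi^2$ divergences, carefully characterizes the null space, and then shows the segment direction never lands in it. Your approach sidesteps all of that machinery by applying the two-term log-sum inequality pointwise and exploiting that both $\mu_t(X,Y)$ and $\mu_t(X)p(Y|X)$ are affine in $t$, so the mid-term sum telescopes into the conditional relative entropy at the mid-parameter. Your algebraic identity $\mu_s(x,y)\mu_t(x)-\mu_t(x,y)\mu_s(x)=(t-s)\cdot\mathit{gap}$ is correct (the paper derives essentially the same quantity, just at the endpoint $t\in\{0,1\}$). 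Your argument is considerably shorter and more elementary; the paper's route buys an explicit description of the Hessian and a sharp characterization of \emph{exactly} when the second derivative vanishes, which may have independent value but is overkill for this lemma.

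Two cautions about the edge cases you flag. First, the parenthetical justification ``otherwise both products in the hypothesis would vanish and the hypothesis would be vacuous'' is false: the hypothesis $\mu_0(\hat x,\hat y)\mu_1(\hat x)\ne\mu_1(\hat x,\hat y)\mu_0(\hat x)$ involves only $\mu_0$ and $\mu_1$, so $p(\hat y\mid\hat x)=0$ is perfectly compatible with it. Second, the remark that ``strict convexity must be secured at the finite witness alone'' does not quite hold up: if some \emph{other} index contributes $+\infty$ on both sides of the summed inequality, then both sides equal $+\infty$ and the strict gain at $(\hat x,\hat y)$ is absorbed, so strict convexity fails on the relevant subinterval. (You do correctly observe that $\mu_s(\hat x)>0$ for all $s$ given the hypothesis, so the witness term itself is always finite when $p(\hat y\mid\hat x)>0$.) That said, your final fallback is the right one: in the only place the lemma is used (\cref{prop:marginonly}), $\mu_0,\mu_1$ have finite observational incompatibility, which forces $\mu_i(s,t)=0$ wherever $p_a(t|s)=0$, so every term of $F$ is finite along the segment. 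The paper's proof has the same implicit finiteness assumption---it computes Hessians that are not even defined when $F=+\infty$---so this is a shared gap in the lemma's statement rather than a defect of your argument in particular.
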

\begin{lproof}
    The function of interest can fail to be strictly convex only if the direction $\delta$ along $\mu_1-\mu_0$ is in the null-space of the Hessian matrix $\mat H(\mu)$ of the (conditional) relative entropy.
    By \cref{lem:hess-relent},
    \[
        \mat H_{(xy),(x'y')}
         = \frac{\mathbbm1[x {=} x' \land y {=} y']}{\mu(x,y)}
             - \frac{\mathbbm1[x {=} x']}{\mu(x)}.
    \]

    \def\bdelta{{\boldsymbol\delta}}
    Consider a function $\delta : \V(X,Y) \to \mathbb R$ that is not identically zero, which can be viewed as a vector $\bdelta = [\delta(x,y)]_{(x,y) \in \V(X,Y)} \in \mathbb R^{\V(X,Y)}$.
    We can also view $\delta$ as a (signed) measure on $\V(X,Y)$, that has marginals in the usual sense. In particular, we use the analogous notation
    \[
        \delta(x) :=
            \sum_{y \in \V Y} \delta(x,y).
    \]
    We then compute
    \begin{align*}
        \big(\, \mat H(\mu)\, \bdelta\, \big)_{x,y}
        &= \sum_{x', y'} \delta(x',y') \left( \frac{\mathbbm1[x {=} x' \land y {=} y']}{\mu(x,y)} - \frac{\mathbbm1[x {=} x']}{\mu(x)} \right) \\
        &= \frac{\delta(x,y)}{\mu(x,y)} - \frac{\delta(x)}{\mu(x)}.
    \end{align*}

    and also
    \begin{align*}
        \bdelta^{\sf T} \mat H(\mu) \,\bdelta
            &= \sum_{x,y} \delta(x,y) (\, \mat H(\mu)\, \bdelta\, )_{x,y} \\
            &= \sum_{x,y} \delta(x,y) \left(
                \frac{\delta(x,y)}{\mu(x,y)} - \frac{\delta(x)}{\mu(x)} \right) \\
            &= \sum_{x,y}
                \frac{\delta(x,y)^2}{\mu(x,y)} - \sum_{x} \frac{\delta(x)}{\mu(x)} \sum_y \delta(x,y) \\
            &= \sum_{x,y} \frac{\delta(x,y)^2}{\mu(x,y)} - \sum_{x} \frac{\delta(x)^2}{\mu(x)}  \\
            &= \sum_{x} \frac{\delta(x)^2}{\mu(x)} \left( \sum_y \frac{\delta(x,y)^2}{\delta(x)^2 \mu(y|x)} - 1 \right). \numberthis\label{line:beforabs}
    \end{align*}

    Now consider another discrete measure $|\delta|$, whose value at each component is the absolute value of the value of $\delta$ at that component, i.e., $|\delta|(x,y) := |\delta(x,y)|$.
    By construction, $|\delta|$ is now an unnormalized probability measure: $|\delta| = k q(X,Y)$, where $k = \sum_{x,y}|\delta(x,y)| > 0$ and $q \in \Delta\V(X,Y)$.

    Note also that $|\delta|(x)^2 = (\sum_{y} |\delta(x,y)|)^2 \ge (\sum_{y} \delta(x,y))^2$, and strictly so if there are $y,y'$ such that $\delta(x,y) < 0 < \delta(x,y')$.
    In other words, the vector $\bdelta_x = [\delta(x,y)]_{y \in \V Y}$ is either non-negative or non-positive: $\bdelta_x \ge 0$ or $\bdelta_x \le 0$ for each $x$.
     Meanwhile, $|\delta|(x,y)^2 = \delta(x,y)^2$ is unchanged.
    Thus, for every $x \in \V X$, we have:
    \begin{align*}
        \sum_y \frac{\delta(x,y)^2}{\delta(x)^2 \mu(y|x)} - 1
        &\ge \sum_y \frac{|\delta|(x,y)^2}{|\delta|(x)^2 \mu(y|x)} - 1 \\
        &= \sum_y \frac{k^2 q(x,y)^2}{k^2 q(x)^2 \mu(y|x)} - 1 \\
        &= \sum_y \frac{ q(y|x)^2}{\mu(y|x)} - 1   \\
        &= \chi^2 \Big( q(Y|x) \Big\Vert  \mu(Y|x) \Big) \ge 0.
    \end{align*}
    The final line depicts the $\chi^2$ divergence between the distributions $q(Y|x)$ and $\mu(Y|x)$, both distributions over $Y$.  Since it is a divergence, this quantity is non-negative and equals zero if and only if $q(Y|x)=\mu(Y|x)$.

    Picking up where we left off, we have:
    \begin{align*}
        \bdelta^{\sf T} \mat H(\mu) \bdelta
            &= \sum_{x} \frac{\delta(x)^2}{\mu(x)} \left( \sum_y \frac{\delta(x,y)^2}{\delta(x)^2 \mu(y|x)} - 1 \right) \\
            &\ge \sum_{x} \frac{\delta(x)^2}{\mu(x)} \left( \sum_y \frac{|\delta|(x,y)^2}{|\delta|(x)^2 \mu(y|x)} - 1 \right) \\
            &=
            \sum_{x} \frac{\delta(x)^2}{\mu(x)}
            \chi^2 \Big( q(Y|x) \Big\Vert  \mu(Y|x) \Big) \ge 0.
    \end{align*}
    As a non-negatively weighted sum of non-negative numbers, this final quantity is non-negative, and equals zero if and only if, for each $x \in \V X$, we have either $q(Y|x) = \mu(Y|x)$, or $\delta(x) = 0$.
    Furthermore, if $\bdelta^{\sf T} \mat H(\mu) \bdelta = 0$, then \emph{both} inequalities hold with equality. Therefore, we know that if $\delta(x) \ne 0$, then $\bdelta_x \ge \mat 0$ or $\bdelta_x \le \mat 0$.
    These two conditions are also sufficient to show that $\bdelta^{\sf T} \mat H(\mu) \bdelta = 0$.
    To summarize what we know so far:
    \begin{align*}
        \bdelta^{\sf T} \mat H(\mu) \bdelta = 0
            \quad\iff\quad\forall x \in \V\!X.\quad
                \text{either}&~~  (\bdelta_x \ge \mat 0 \text{ or } \bdelta_y \le \mat 0) \text{~~and~~ $|\delta|(Y|x) = \mu(Y|x) $}\\
                \text{or }&~~ \delta(x) = 0.
    \end{align*}

    The second possibility, however, is a mirage: it cannot occur.
    Let's now return to the expression we had in \eqref{line:beforabs} before considering $|\delta|$.
    We've already shown that the contribution to the sum at each value of $x$ is non-negative, so if $\bdelta^{\sf T} \mat H(\mu) \bdelta$ is equal to zero, each summand which depends on $x$ must be zero as well.
    So if $x$ is a value of $X$ for which $\delta(x) = 0$, then
    \begin{align*}
        0 = \frac{1}{\mu(x)} \left( \sum_y \frac{\delta(x,y)^2}{\mu(y|x)} - \delta(x)^2 \right)
         = \frac{1}{\mu(x)} \sum_y \frac{\delta(x,y)^2}{\mu(y|x)}
         = \sum_y \frac{\delta(x,y)^2}{\mu(x,y)},
    \end{align*}
    which is only possible if $\delta(x,y) = 0$ for all $y$.
    This allows us to compute, more simply, that
    \begin{align*}
        \bdelta^{\sf T} \mat H(\mu) \bdelta = 0
        \quad&\iff\quad
            (\forall x .~  \bdelta_x \ge \mat 0 \text{ or } \bdelta_x \le \mat 0)
            \qquad\text{and}\qquad
            \forall (x,y) \in \V(X,Y).~~
                \delta(x,y) \mu(x) = \delta(x) \mu(x,y)
        \end{align*}

    \medskip
    \hrule
    \medskip

    Finally, we are in a position to prove the lemma.
    Suppose that $\mu_0,\mu_1 \in \Delta\V(X,Y)$ and $(x^*,y^*) \in \V(X,Y)$
    are such that $\mu_0(x^*,y^*) \mu_1(x^*)  \ne \mu_1(x^*,y^*) \mu_0(x^*)$.
    So, the quantity
    \[
        \mathit{gap} := \mu_1(x^*,y^*) \mu_0(x^*) - \mu_0(x^*,y^*) \mu_1(x^*)
        \quad\text{is nonzero}.
    \]
    Then for all $t \in (0,1)$ the intermediate point $\mu_t = (1-t)\, \mu_0 + t\, \mu_1$ must have different conditional marginals from both $\mu_0$ and $\mu_1$, as
    \begin{align*}
        &\mu_t(x^*\!,y^*)\mu_0(x^*) - \mu_0(x^*\!,y^*) \mu_t(x^*) \\
            &= \Cancel{(1-t)\mu_0(x^*\!,y^*)\mu_0(x^*)} + t \mu_1(x^*\!,y^*)\mu_0(x^*)
                -  \Cancel{(1-t)\mu_0(x^*\!,y^*) \mu_0(x^*)} - t\mu_0(x^*\!,y^*) \mu_1(x^*) \\
            &= t \pqty[\big] { \mu_1(x^*\!,y^*)\mu_0(x^*) -\mu_0(x^*\!,y^*) \mu_1(x^*)}
            \\
            &=~ t \cdot \mathit{gap}
            ~\ne~ 0,
    \end{align*}
    and analogously for $\mu_1$,
    \begin{align*}
        &\mu_t(x^*\!,y^*)\mu_1(x^*) - \mu_1(x^*\!,y^*) \mu_t(x^*) \\
            &= (1-t)\mu_0(x^*\!,y^*)\mu_1(x^*) + \Cancel{ t \mu_1(x^*\!,y^*)\mu_1(x^*) }
                -  (1-t)\mu_1(x^*\!,y^*) \mu_0(x^*) - \Cancel{ t\mu_1(x^*\!,y^*) \mu_1(x^*) } \\
            &= (1-t) (\mu_0(x^*,y^*)\mu_1(x^*) -\mu_1(x^*\!,y^*) \mu_0(x^*))
            \\
            &=~ -(1-t) \cdot \mathit{gap}
            ~\ne~ 0.
    \end{align*}

    Then for any direction $\delta := k(\mu_0 - \mu_1)$ parallel to the segment between $\mu_0$ and $\mu_1$ (intuitively a tangent vector at $\mu_t$, although this fact doesn't affect the computation), of nonzero length ($k\ne 0$), we have:
    \begin{align*}
        &
        \mu_t(x^*\!,y^*) \delta(x^*)  - \delta(x^*\!,y^*) \mu_t(x^*)
        \\
        &= k\; \mu_t(x^*\!,y^*) \pqty[\big]{\mu_0(x^*) - \mu_1(x^*)}  - k\; \pqty[\big]{\mu_0(x^*\!,y^*)-\mu_1(x^*\!,y^*)} \mu_t(x^*) \\
        &= k \pqty[\Big]{\mu_t(x^*\!,y^*)\mu_0(x^*) - \mu_t(x^*\!,y^*)\mu_1(x^*)
            -\mu_0(x^*\!,y^*)\mu_t(x^*) + \mu_1(x^*\!,y^*)\mu_t(x^*)} \\
        &= k \pqty[\Big]{ \pqty[\big]{ \mu_t(x^*\!,y^*)\mu_0(x^*) -\mu_0(x^*\!,y^*)\mu_t(x^*)} + \pqty[\big]{\mu_1(x^*\!,y^*)\mu_t(x^*) - \mu_t(x^*\!,y^*)\mu_1(x^*)}} \\
        &= k \pqty[\big]{ + t \,\mathit{gap} + (1-t) \,\mathit{gap}} \\
        &= k \,\mathit{gap} \qquad \ne~ 0.
    \end{align*}

    So at every $t$, directions parallel to the segment are not in the null space of $\mat H(\mu_t)$, meaning that
    $\bdelta^{\sf T} \mat H(\mu_t) \bdelta > 0$ and so our function is strictly convex along this segment.
\end{lproof}

\recall{prop:joint+idef-correct}
\begin{lproof}\label{proof:joint+idef-correct}
    Suppose that $(-\mat u, \mu, \mat k)$ is a solution to problem \eqref{prob:joint+idef}.
    The second constraint, by \cref{prop:marginonly}, ensures that $\mu \in \bbr{\dg M}_0^*$.
    Then
    \begin{align*}
        (-\mat u, \mu, \mat k) \in K^{\V\!\X}
        \quad\implies\quad
            \forall w \in \V\!\X.~ u_w &\ge \mu(w) \log \frac{ \mu(w) } { k_w } \\
            &=  \mu(w) \log \pqty[\bigg]{ \faktor{\mu(w)~}{~\prod_{a \in \Ar}\mu(\Tgt a(w)|\Src a(w))^{\alpha_a}}}.
    \end{align*}
    The same logic as in the
    \hyperref[proof:joint-inc-correct]{proofs}
    \hyperref[proof:joint-small-gamma-correct]{of}
    \cref*{prop:joint-inc-correct,prop:joint-small-gamma-correct}
    shows that this inequality must be tight, or else
    $(-\mat u, \mu, \mat k)$ would not be optimal for \eqref{prob:joint+idef}.
    So, $\mat u$ is a function of $\mu$.  Also, by \cref{prop:idef-frozen}, the problem objective satisfies
    \[
        \mat 1^{\sf T}\mat u = \sum_{w \in \V\!X} u_w = \SInc_{\dg M}(\mu).
    \]

    Finally, because $\mu$ is optimal, it must be the unique distribution
    $\bbr{\dg M}^*$, which among those distributions that minimize $\OInc_{\dg M}$, also minimizes $\SInc_{\dg M}$, meaning $\mu = \bbr{\dg M}^*$.
\end{lproof}

\recall{prop:cluster-inc-correct}
\begin{lproof}\label{proof:cluster-inc-correct}
    The final constraints alone are enough to ensure that $\bmu$ is calibrated.
    Much like before, the exponential conic constraints tell us that
    \[
        \forall (a, s,t) \in \V\!\Ar.\quad
            u_{a,s,t} \ge \mu_{C_{\!a}}\!(s,t) \log \frac{\mu_{C_{\!a}}\!(s,t)}{\mu_{C_{\!a}}\!(s)\p_a(t|s)}
    \]
    and they hold with equality (at least at those indices where $\beta_a > 0$) because $\mat u$ is optimal.
    So
    \begin{align*}
        \sum_{(a,s,t) \in \V\!\Ar} \beta_a u_{a,s,t}
        &= \sum_{(a,s,t) \in \V\!\Ar} \beta_a \mu_{C_{\!a}}\!(s,t) \log \frac{\mu_{C_{\!a}}\!(s,t)}{\mu_{C_{\!a}}\!(s)\p_a(t|s)} \\
        &= \sum_a \beta_a \sum_{(s,t) \in \V a}\mu_{C_{\!a}}\!(s,t) \log \frac{\mu_{C_{\!a}}\!(s,t)}{\mu_{C_{\!a}}\!(s)\p_a(t|s)} \\
        &= \OInc_{\dg M}(\Pr\nolimits_{\bmu}).
    \end{align*}
    Because $\bmu$ is optimal, it is the choice of \cactree\ that minimizes this quantity.
    By \cref{coro:can-use-cliquetree}, the distribution $\bbr{\dg M}^*$ can be represented by such a \actree, and by \textcite[Prop. 3.4]{pdg-aaai},
    this distribution minimizes $\OInc_{\dg M}$.
    All this is to say that there exist \actree s of this form whose corresonding distributions attain the minimum value $\OInc_{\dg M}(\Pr_{\bmu}) = \aar{\dg M}_0$.
    So $\bmu$ must be one of them, as it minimizes $\OInc(\Pr_{\bmu})$ among such \actree s by assumption. Thus $\Pr_{\bmu} \in \bbr{\dg M}^*_0$ and the objective value of \eqref{prob:cluster-inc} equals $\aar{\dg M}_0$.
\end{lproof}

\recall{prop:cluster-small-gamma-correct}

\begin{lproof}\label{proof:cluster-small-gamma-correct}
    Suppose that $(\bmu, \mat u, \mat v)$ is a solution to \eqref{prob:cluster-small-gamma}.
    The first and fourth lines of constraints ensures that $\bmu$ is indeed a \cactree.  The second line of constraints, plays exactly the same role that it did in the previous problems, most directly in the variant \eqref{prob:cluster-inc} for $\gamma=0$. In particular, it tells says
    \[
        \forall (a, s,t) \in \V\!\Ar.\quad
            u_{a,s,t} \ge \mu_{C_{\!a}}\!(s,t) \log \frac{\mu_{C_{\!a}}\!(s,t)}{\mu_{C_{\!a}}\!(s)\p_a(t|s)}
    \]
    as before, this holds with equality (at least at those indices where $\beta_a > \alpha_a\gamma$) because $\mat u$ is optimal.
    Because $\bbeta \ge \gamma \alpha$ by assumption, either $\beta_a > \gamma \alpha_a$ or the two are equal, for every $a \in \Ar$. Either way,
    the argument used at this point in \hyperref[proof:cluster-inc-correct]{the proof of} \cref{prop:cluster-inc-correct} goes through, giving us:
    \begin{align*}
        \sum_{(a,s,t) \in \V\!\Ar} (\beta_a - \alpha_a\gamma) u_{a,s,t}
        &= \sum_{(a,s,t) \in \V\!\Ar} ((\beta_a - \alpha_a\gamma) \mu_{C_{\!a}}\!(s,t) \log \frac{\mu_{C_{\!a}}\!(s,t)}{\mu_{C_{\!a}}\!(s)\p_a(t|s)} \\
        &= \sum_a (\beta_a - \alpha_a\gamma) \sum_{(s,t) \in \V a}\mu_{C_{\!a}}\!(s,t) \log \frac{\mu_{C_{\!a}}\!(s,t)}{\mu_{C_{\!a}}\!(s)\p_a(t|s)} \\
        &= \sum_a (\beta_a - \alpha_a\gamma)~
            \kldiv[\Big]{\mu_{C_{\!a}}\!(\Src a, \Tgt a)}
                  {\mu_{C_{\!a}}\!(\Src a)\, \p_a(\Tgt a|\Src a)}
    \end{align*}
    This time, though, that's not the problem objective. In this regard, our problem \eqref{prob:cluster-small-gamma} is more closely related to \eqref{prob:cluster-small-gamma}.

    Before we get to that, we have to first bring in the final collection of exponential constraints, which show that
    \begin{align*}
        \forall C \in \C.~ \forall c \in \V(C).\quad
            v_{C,c} \ge \mu_{C}(c) \log \frac{\mu_C(c)}{ \Pash_C(c) },
    \end{align*}
    and yet again these constraints hold with equality,
    for otherwise $\mat v$ would not be optimal (since we assumed $\gamma > 0$). Therefore,
    \[
        \sum_{\mathclap{(C,c) \in \V\C}} v_{C,c}
        ~=~
        \sum_{\mathclap{(C,c) \in \V\C}}
        \mu_{C}(c) \log \frac{\mu_C(c)}{ \Pash_C(c) }
        ~=~ - \H(\Pr\nolimits_{\bmu})\quad\text{by \cref{eq:cluster-ent-decomp}}.
    \]

    The objective of our problem \eqref{prob:cluster-small-gamma} is essentially the same as that of \eqref{prob:joint-small-gamma}, so the analysis in \hyperref[proof:joint-small-gamma-correct]{the proof of} \cref{prop:joint-small-gamma-correct} applies with only a handful of superficial modifications.
    Using that proof to take a shortcut, the objective of \eqref{prob:cluster-small-gamma} must equal
    \begin{align*}
        &\sum_{\mathclap{(a,s,t) \in \V\!\Ar}}
            (\beta_a \!- \alpha_a \gamma) u_{a,s,t}
        ~+ \gamma \sum_{\mathclap{(C,c) \in \V\C}} v_{C,c}
        ~- \sum_{\mathrlap{\!\!\!(a,s,t) \in \smash{\V\!\Ar^+}}}
            \alpha_a\gamma \, \mu_{C_{\!a}}\!(s,t) \log \p_a (t|s) \\
    &=
        \sum_{\mathclap{(a,s,t) \in \V\!\Ar}}
             (\beta_a - \alpha_a\gamma) \mu_{C_{\!a}}\!(s,t) \log \frac{\mu_{C_{\!a}}\!(s,t)}{\mu_{C_{\!a}}\!(s)\p_a(t|s)}
        ~-~ \gamma \H(\Pr\nolimits_{\bmu})
        ~-~ \sum_{\mathrlap{\!\!\!(a,s,t) \in \smash{\V\!\Ar^+}}}
            \alpha_a\gamma \, \mu_{C_{\!a}}\!(s,t) \log \p_a (t|s) \\
    &=
        \sum_{a \in \Ar} \beta_a
           \Ex_{\mu_{C_{\!a}}}[ \log \p_a(\Tgt a | \Src a)]
        + \sum_{a \in \Ar} (\alpha_a \gamma \!-\! \beta_a)
           \H_{\Pr_{\bmu}}(\Tgt a | \Src a)
        - \gamma \H(\Pr_{\bmu})  \\
    &= \bbr{\dg M}_\gamma(\Pr\nolimits_{\bmu}),
    \quad.
    \end{align*}
    Finally, since $\bmu$ is such that this quantity is minimized, and because
    its unique minimizer can be represented as a cluster tree (by \cref{coro:can-use-cliquetree}), we conclude that $\bmu$ must be the cluster tree representation of it.
    Therefore, $\Pr_{\bmu}$ is the unique element of $\bbr{\dg M}^*_\gamma$, and the objective at $(\bmu, \mat u, \mat v)$ equals $\aar{\dg M}_\gamma$, as desired.
\end{lproof}

\recall{prop:cluster-idef-correct}
\begin{lproof}\label{proof:cluster-idef-correct}
    Suppose that $(\bmu, \mat u)$ is a solution to \eqref{prob:cluster+idef}.
    The exponential cone constraints state that
    \begin{align*}
        \forall C \in \C.~ \forall c \in \V(C).\quad
        u_{C,c} &\ge \mu_{C}(c) \log \frac{\mu_C(c)}{k_{C,c} \Pash_C(c) } \\
        &= \mu_{C}(c) \log \frac{\mu_C(c)}{\Pash_C(c)}
            - \mu_{C}(c) \log \prod_{a \in \Ar_C} \nu_C (\Tgt a (c) | \Src a (c))^{\alpha_a} \\
        &= \mu_{C}(c) \log \frac{\mu_C(c)}{\Pash_C(c)}
         - \mu_{C}(c) \sum_{a \in \Ar_C} \alpha_a \log \nu_C (\Tgt a (c) | \Src a (c)),
    \end{align*}
    and once again this holds with equality, as each $u_{C,c}$ is minimal with this property.
    The third line of constraints
    \[
        \forall a \in \Ar.~~\mu_{C_{\!a}}\!(\Src a, \Tgt a) \nu_{C_{\!a}}\!(\Src a) = \mu_{C_{\!a}}\!(\Src a) \nu_{C_{\!a}}\!(\Src a, \Tgt a)
    \]
    and the assumption that $\Pr_{\boldsymbol\nu} \in \bbr{\dg M}^*_0$, suffice to ensure that $\Pr_{\bmu} \in \bbr{\dg M}^*_0$ by \cref{prop:marginonly}.
    They also allow us to replace each $\nu_{C_{a}}(\Tgt a(c) | \Src a (c))$ with
    $\nu_{C_{a}}(\Tgt a(c) | \Src a(c))$, in cases where $\Src a(c) \ne 0$.
    Therefore, we calculate the objective to be:
    \begin{align*}
        \mat 1^{\sf T} \mat u &=
        \sum_{C \in \C} \sum_{c \in \V(C)} \left( \mu_{C}(c) \log \frac{\mu_C(c)}{\Pash_C(c)} -
            \mu_{C}(c) \sum_{a \in \Ar_C} \alpha_a \log \nu_C (\Tgt a (c) | \Src a (c))
            \right)\\
        &= \sum_{C \in \C} \sum_{c \in \V(C)}
                \mu_{C}(c) \log \frac{\mu_C(c)}{\Pash_C(c)}
            - \sum_{C \in \C} \sum_{c \in \V(C)}
            \mu_{C}(c) \sum_{a \in \Ar} \mathbbm1[C = C_a] \alpha_a \log \nu_C (\Tgt a (c) | \Src a (c)) \\
        &= -\H(\Pr\nolimits_{\bmu}) - \sum_{a \in \Ar} \alpha_a \sum_{C \in \C} \mathbbm1[C = C_a] \sum_{c \in \V(C)}
            \mu_{C}(c)\log \nu_C (\Tgt a (c) | \Src a (c))
            \qquad\qquad\big[\,\text{by \eqref{eq:cluster-ent-decomp}}\,\big]\\
        &= -\H(\Pr\nolimits_{\bmu}) - \sum_{a \in \Ar} \alpha_a \sum_{c \in \V(C)_a}
                \mu_{C_{\!a}}\!(c) \log  \nu_{C_{\!a}}\! (\Tgt a (c) | \Src a (c)) \\
        &= -\H(\Pr\nolimits_{\bmu}) - \sum_{a \in \Ar} \alpha_a \sum_{c \in \V(C)_a}
                \mu_{C_{\!a}}\!(c) \log { \mu_{C_{\!a}}\! (\Tgt a (c) | \Src a (c))}
            \qquad\Big[~\text{since $\mu_{C_{\!a}}\!(\Src a(c)) > 0$ whenever
                $\mu_{C_{\!a}}\!(c) > 0$}~\Big]\\
        &= -\H(\Pr\nolimits_{\bmu}) + \sum_{a \in \Ar} \alpha_a \H_{\Pr_{\bmu}}(\Tgt a | \Src a ) \\
        &= \SInc_{\dg M}(\Pr\nolimits_{\bmu}).
    \end{align*}

    To summarize: $\Pr_{\bmu}$ minimizes $\SInc_{\dg M}(\Pr\nolimits_{\bmu})$ among \cactree s with condtional marginals matching those of $\boldsymbol\nu$.
    Since we know that there is a unique distribution that minimizes $\SInc_{\dg M}$ among the elements $\bbr{\dg M}_0^*$, and also that this distribution can be represented by a \actree\ (by \cref{coro:can-use-cliquetree}), we conclude that $\bmu$ must represent this distribution. Thus, $\Pr_{\bmu} = \bbr{\dg M}^*$ as desired.
\end{lproof}

The next lemma packages the results of \textcite{dahl2022primal,nesterov1996infeasible} in a precise form that we will be able to make use of.

\begin{lemma} \label{lem:mainlemma}
    Fix integers $n_{\sf o}, n_{\sf e} \in \mathbb N$, and let $n:= 3n_{\sf e} + n_{\sf o}$.
     Suppose that
     $K = \mathbb R_{\ge 0}^{n_{\sf o}} \times K^{n_{\sf e}}_{\exp} \subset \mathbb R^n$ is a product cone, consisting of $n_{\sf o}$ copies of the non-negative orthant and $n_{\sf e}$ copies of the exponential cone.
    Let
    $\mat c \in [-1,1]^{n}$ and $ \mat b \in [-1,1]^m$
    be vectors, and $A \in [-1,1]^{m \times n}$
    be a matrix, defining
    an exponential conic program
    \begin{align*}
        &
        \minimize_{
            \mat x \in K}~~ \mat c^{\sf T} \mat x
        \quad\subjto\quad A \mat x = \mat b,
        \tag{\ref{eq:exp-conic-program}}
        .
    \end{align*}
    If this program
    is strictly feasible (i.e., if there exists $\mat x \in \mathrm{int}\, K$  such that $A \mat x = \mat b$),
    as is its dual problem
    \[
        \maximize
            \limits_{
            \mat s \in K^*,\, \mat y \in \Rext^m} ~~ \mat b^{\sf T} \mat y
        \quad\subjto\quad  A^{\sf T} \mat y  +  \mat s = \mat c,
    \]
    (i.e, if there exists $\mat s \in \mathrm{int}\,K_*$ such that $A^{\sf T} \mat y + \mat s = \mat c$),
    then both
    can be simultaneously
    solved to precision $\epsilon$
    in $O(n (m+n)^{\omega} \log\frac{n+m}{\epsilon}
    )$ time,
    where $\omega$ is the smallest exponent such that a linear system of $k$ variables and equations can be solved in $O(k^\omega)$ time.
    Furthermore, MOSEK solves this problem in $O(n (m+n)^3 \log \frac{n+m}{\epsilon})$ time.
\end{lemma}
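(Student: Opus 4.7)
The plan is to package the existing polynomial-time convergence guarantees for non-symmetric conic interior-point methods and specialize them to the cone $K = \mathbb R_{\ge 0}^{n_{\sf o}} \times K_{\exp}^{n_{\sf e}}$. The first step is to compute the (self-concordant) barrier parameter of $K$. The nonnegative orthant admits the standard logarithmic barrier with parameter equal to its dimension $n_{\sf o}$, and the exponential cone is equipped with a $3$-self-concordant barrier (as established in the line of work beginning with \textcite{nesterov1996infeasible}). Barrier parameters are additive across direct products, so the total barrier parameter for $K$ is $\nu = n_{\sf o} + 3 n_{\sf e} = n$.

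Next, I would appeal to the convergence analysis of the homogeneous self-dual embedding used by \textcite{badenbroek2021algorithm} (which establishes polynomial time for the practical algorithm of \textcite{dahl2022primal}). Under strict primal-dual feasibility of \eqref{eq:exp-conic-program}, the algorithm is well-defined and, starting from the standard centered initial point of the embedding, converges to an $\epsilon$-accurate primal-dual pair in $O(\nu \log\tfrac{\nu + m}{\epsilon}) = O(n \log\tfrac{n+m}{\epsilon})$ Newton iterations; the bounds $\|\mat c\|_\infty, \|\mat b\|_\infty, \|A\|_\infty \le 1$ absorb the usual problem-data dependencies into the constant. The main obstacle in this step is making sure the precise statement from the Badenbroek--Dahl analysis matches what we need here: their result is stated for generic self-concordant barriers and requires a short-step (or predictor-corrector) schedule, and one must verify that their initialization and $\epsilon$-criterion are compatible with the standard notion of $\epsilon$-accuracy for \eqref{eq:exp-conic-program}.

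The third step bounds the per-iteration cost. Each Newton step requires assembling and solving one linear system on the combined primal-dual variables, of size $O(m+n) \times O(m+n)$. Using an algorithm that inverts $k \times k$ linear systems in $O(k^\omega)$ time, each iteration costs $O((m+n)^\omega)$, giving the total bound $O(n(m+n)^\omega \log\tfrac{n+m}{\epsilon})$. Multiplying by the iteration count yields the first claim.

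Finally, for the MOSEK-specific bound, I would note that MOSEK implements the same class of primal-dual interior-point methods but uses sparse Cholesky factorization to invert the Newton systems; in the worst case this costs $O((m+n)^3)$ per iteration, which gives the stated $O(n(m+n)^3 \log \tfrac{n+m}{\epsilon})$ bound. The only subtlety here is that MOSEK's published convergence guarantee is not stated in exactly this form in the solver's documentation, so I would cite the underlying algorithmic papers (\textcite{dahl2022primal,badenbroek2021algorithm}) to justify the iteration count, and observe that with dense linear algebra the per-iteration cost is $O((m+n)^3)$.
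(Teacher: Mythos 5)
Your outline correctly sets up the barrier parameter $\nu=n$ and the iteration-count-times-per-iteration-cost structure, and you even flag ``the main obstacle'' yourself: ``one must verify that their initialization and $\epsilon$-criterion are compatible with the standard notion of $\epsilon$-accuracy.'' That obstacle is precisely where the proof has substantial content that you leave unaddressed, so there is a genuine gap.

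Specifically, \textcite{badenbroek2021algorithm}'s Theorem 3 does not directly guarantee an $\epsilon$-accurate solution to the original problem. It guarantees that, after $O(n\log\frac1\epsilon)$ iterations on the homogeneous self-dual embedding, the iterate $z=(y,x,\tau,s,\kappa)$ has complementarity gap $\mu^{\sf e}(z)\le\epsilon$ and residual $\Vert G(z)\Vert\le\epsilon\Vert G(z_0)\Vert$. The actual primal candidate is $x/\tau$, and two things can still go wrong: (i) $\tau$ could shrink to zero along with the residual, in which case $x/\tau$ need not converge to a solution; (ii) the residual bound is \emph{relative} to $\Vert G(z_0)\Vert$, so you do not get absolute accuracy without controlling that norm. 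The paper closes (i) by rescaling the output to land in the shifted feasible set of \textcite{nesterov1996infeasible} and invoking their Theorem 1 to show $\tau\ge\beta/\psi$ under strict primal-dual feasibility --- this is exactly where the strict-feasibility hypothesis of the lemma is used, and your proposal never invokes it. It closes (ii) by directly estimating $\Vert G(z_0)\Vert$ from the entrywise bounds on $A,b,c$ and the fixed initial point, getting $\Vert G(z_0)\Vert\in O((n+m)^{3/2})$, which is what turns $\log\frac1\epsilon$ into $\log\frac{n+m}{\epsilon}$. Your claim that the bounds $\Vert c\Vert_\infty,\Vert b\Vert_\infty,\Vert A\Vert_\infty\le1$ ``absorb the usual problem-data dependencies into the constant'' is not right: those bounds are what let one derive the $(n+m)$ inside the logarithm, not constants. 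Without both of these steps, the stated complexity is asserted rather than proven.

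One smaller point: the per-iteration cost bound in the paper is not a generic ``assemble and solve one Newton system'' observation; the algorithm has a predictor step, a centering step, and a corrector step, each of which solves a $(2n+m+2)$-dimensional linear system, and the scaling matrix $W$ itself costs $O(n^2)$ to form. The paper walks through each of these to certify the $O((n+m)^\omega)$ per-iteration bound; your proposal asserts it. This is less of a gap than the $\tau$ and $\Vert G(z_0)\Vert$ issues, but it should be verified rather than assumed, since the correction step is the distinctive feature of the Badenbroek--Dahl method.
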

\begin{lproof}
    For this, we begin by appealing to the algorithm and analysis of
    \textcite{badenbroek2021algorithm}, threading details through for this specific choice of cone $K$.
    To finish the proof, however, we will also need to supplement that analysis with some other well-established results of \textcite{nesterov1996infeasible} that the authors were no doubt familiar with, but did not bother referencing.

    First, we'll need some background material from convex optimization.
    A \emph{logarithmically homogeneous self-concordant barrier} with parameter $\nu$ ($\nu$-LHSCB) for a cone $K$ is a thrice differentiable strictly convex function $F: \mathrm{int}\, K \to \mathbb R$ satisfying
    $F(tx) = F(x) - \nu \log t$
    for all $t > 0$ and $x \in \mathrm{int}\, K$.
    In some sense, the point of such a barrier function is to augment the optimization objective so that we remain within the cone during the optimization process.

    For the positive orthant cone $\mathbb R_{\ge 0}$, the function
        $x \mapsto - \log x$ is a 1-LHSCB.
    We now fill in some background facts about exponential cones.
    The \emph{dual} of the exponential cone is
    \begin{align*}
        K_{\exp}^* &:= \big\{ (s_1, s_2, s_3) \in \mathbb R^3 \;:\;
            \forall (x_1, x_2, x_3) \in K_{\exp}.~~
            x_1 s_1 + x_2 s_2 + x_3 s_3 \ge 0   \big\}\\
            &= \big\{
                (s_1, s_2, s_3) \;:\; - s_1 \log (- s_1 / s_3) + s_1 - s_2 \le 0,
                    \, s_1 \le 0,\,  s_3 \ge 0
            \big\}.
    \end{align*}
    Consider points $x = (x_1, x_2, x_3) \in K_{\exp}$.
    The function
    \begin{equation}
        F_{\exp}(x) := - \log \Big(x_2 \log\frac{x_1}{x_2} - x_3\Big) - \log x_1 x_2
    \end{equation}
    is a $3$-LHSCB for $K_{\exp}$, since
    \begin{align*}
        F_{\exp}(t x) &=
            -\log \Big( t x_2 \log \frac{t x_1}{t x_2} - t x_3\Big) - \log(t^2 x_1 x_2) \\
        &= - \log \Big(t \big(\log \frac{x_1}{x_2} - x_3\big)\Big) - \log(x_1 x_2) - 2 \log t \\
        &= F_{\exp}(x) - 3 \log t
    \end{align*}
    Such barrier functions can be combined to act on product cones by summation.
    Concretely, suppose that for each $i \in \{1, \ldots, k\}$,
    we have a $\nu_i$-LHSCB $F_i: \mathrm{int}\, K_i \to \Rext$.
    For $x = (x_i)_{i=1}^k \in \prod_i K_i$, the function
    $F(x) := \sum_{i=1}^k F_i(x_i)$ is a $(\sum_i \nu_i)$-LHSCB for $\prod_i K_i$,
    since
    \[
        F(tx) = \sum_{i=1}^k F_i(t x_i)
            = \sum_{i=1}^k ( F(x_i) - \nu_i \log t)
            = F(x) - \sum_{i=1}^k \nu_i.
    \]
    In this way, our product cone $K = \mathbb R_{\ge 0}^{n_{\sf o}} \times K^{n_{\sf e}}_{\exp}$ admits a LHSCB $F$ with parameter $\nu = n_{\sf o} + 3 n_{\sf e} = n$.
    Furthermore it can be evaluated in $O(n)$ time, as can each component of
    its gradient $F'(x)$ and Hessian $F''(x) \in \mathbb R^{n \times n}$ at $x$, all of which can be expressed analytically.
    In addition, the convex conjugate of $F$
    also has a known analytic form.

    Generally speaking,
    the idea behind primal-dual interior point methods \parencite{nesterov1994book} such as the one behind MOSEK, is
    to maintain both a point $x \in K$ and a dual point $s \in K_*$ (as well as $y \in \mathbb R^m$)
    and iteratively update them, as we slowly relax the barrier and approach a point on the boundary of the cone.
    The quantity $\mu(z) := \nf{\langle s, x \rangle}{\nu} \ge 0$, called the complementarity gap, is a measure of how close the process is to converging.

    Because the initial points may not satisfy the constraints, instead
    the standard algorithms work with ``extended points'' $\bar x = (x, \tau)$ and $\bar s = (s, \kappa)$, for which the analogous complementarity gap is
    $\mu^{\sf e}(\bar x, \bar s) := (\langle x, s\rangle + \kappa\tau)/(\nu+1)$.
    Altogether, the data at each iteration may be summarized as a point $z = (y, x, \tau, s, \kappa) \in \mathbb R \times (K \times \mathbb R_{\ge 0}) \times (K_* \times \mathbb R_{\ge 0})$.
    The primary object of interest is then something called the \emph{homogenous self-dual} model.
    Originally due to \textcite{nesterov1996infeasible} and also used by others \parencite{skajaa2015homogeneous},
    it can be defined as a linear operator:
    \begin{align*}
        G
            &: \Rext^{m + 2n + 2} \to \Rext^{n + m + 1} \\
        G(y,x,\tau,s,\kappa)
            &:= \begin{bmatrix}
                0          &  A  &  -b \\
                -A^{\sf T} &  0  &  c \\
                b^{\sf T}  & -c^{\sf T} & 0
        \end{bmatrix}
        \begin{bmatrix}
            y \\ x \\ \tau
        \end{bmatrix}
        -
        \begin{bmatrix}
            0 \\ s \\k
        \end{bmatrix}.
    \end{align*}
    The reason for our interest is that if $z$ is such that $G(z) = 0$ and $\tau > 0$, then $(\nf x\tau)$ is a solution to the primal problem, and $\nf{(y,s)}{\tau}$ is a solution to the dual problem \parencite[Lemma 1]{skajaa2015homogeneous}, while if $G(z) = 0$ and $\kappa > 0$, then at least one of the two problems is infeasible.

    We now are in a better position to describe the algorithm.
    According to the MOSEK documentation \parencite{dahl2022primal},
        for the exponential cone,
            begins with an initial point
        \[
            \mat v := (1.291, 0.805, -0.828) \in (K_{\exp} \,\cap\, K_{\exp}^*)
        \]

        for this particular cone $K$,
    the algorithm begins at the initial point
    \begin{align*}
        z_0 := (y_0, x_0, \tau_0, s_0, \kappa_0)
        \qquad
                    \text{where}\quad
                x_0 &= s_0 = (\overbrace{\vphantom| 1, \ldots, 1}^{n_{\sf o} \text{ copies}},~
                    \overbrace{\vphantom|
                        \mat v, \ldots, \mat v}^{n_{\sf e}\text{ copies}})
                \in (\mathbb R_{\ge 0})^{n_{\sf o}} \times (K_{\exp} \,\cap\, K_{\exp}^*)^{n_{\sf e}}, \\
            \qquad
                y_0 &= \mat 0  \in \mathbb R^m,
            \quad
                \tau_0 = \kappa_0 = 1.
    \end{align*}

    \def\daff#1{\Delta {#1}^{\text{aff}}}
    \def\dcen#1{\Delta {#1}^{\text{cen}}}

    At each iteration, the first step is to predict a direction for which
    \textcite{badenbroek2021algorithm}
    compute a scaling matrix $W$.
    To describe it, we first need to define \emph{shadow iterates}
    \[
        \tilde x := -F'_*(s)
        \qquad \text{and} \qquad
        \tilde s := - F'(x).
    \]
    which are in a sense reflections of $s$ and $x$ across their barrier functions, and can be computed in in $O(n)$ time.
    The analogous notion of complementarity can then be defined as $\tilde \mu(z) := \nicefrac{\langle \tilde x, \tilde s \rangle}{\nu}$.
    The scaling matrix, which we do not interpret here, can then be calculated as:
    \begin{equation}
        W :=
            \mu F''(x) + \frac{s s^{\sf T}}{\nu \mu}
            - \frac{\mu \tilde s \tilde s^{\sf T}}{\nu}
            + \frac{(s- \mu\tilde s)(s-\mu\tilde s)^{\sf T}}
                    {(s-\mu\tilde s)^{\sf T} (x - \mu \tilde x) }
            - \frac{\mu [ F''(x) \tilde x - \tilde \mu \tilde s]
                [ F''(x) \tilde x - \tilde \mu \tilde s]^{\sf T}}
                { \tilde x^{\sf T} F''(x) \tilde x - \nu \tilde \mu^2}
        \label{eq:scalemat}
    \end{equation}
    Doing so requires $O(n^2)$ steps (although it may be parallelized).
    The first four terms clearly require $O(n^2)$ steps, since each one is an outer product resulting in a $n \times n$ matrix.
    The last term computes a matrix-vector product (which requires $O(n^2)$ steps), and computes an outer product with the resulting vector, which takes $O(n^2)$ steps as well.

    The next step involves finding a solution $\daff z = (\cdots)$
    to the system of equations
    \begin{subequations} \label{eqns:sys1}
    \begin{align}
        G(\daff z) &= -G(z) \\
        \tau \daff\kappa + \daff\tau &= - \tau \kappa \\
        W \daff x + \daff s &= -s.
    \end{align}
    \end{subequations}

    (\ref{eqns:sys1}a-c) describe a system of $(n + m + 1) + 1 + (n) = 2n + m + 2$ equations and equally many unknowns,
    and solved in $O((n+m)^\omega)$ steps.
    It may be possible to exploit the sparsity of $G$ to do better.

    The next step is to center that search direction so that it lies on the central path. This is done by finding a solution $\dcen z$ to
    \begin{subequations}\label{eqns:sys2}
    \begin{align}
        G(\dcen z) &= G(z) \\
        \tau \dcen \kappa + \kappa \dcen \tau &= \mu^e \\
        W \dcen x + \dcen s &= \mu^e \tilde s,
    \end{align}
    \end{subequations}
    which again can be done in $O((n+m)^3)$ steps with Gaussian elimination, or
    with a fancier solver in $O((n+m)^2.332)$ steps.
    The two updates are then applied to the current point $z$ to obtain
    \[
        z_+ = (y_+, x_+, \tau_+, s_+, \kappa_+) := z + \alpha (\daff z + \gamma \dcen z).
    \]

    Finally, a ``correction step'', which is the primary innovation of \textcite{badenbroek2021algorithm} and used in MOSEK's algorithm,
    is a third direction $\Delta z_+^{\text{cor}}$, which is found by solving the system of equations
    \begin{subequations}\label{eqns:sys3}
    \begin{align}
        G(\Delta z^{\text{cor}}) &= 0 \\
        \tau_+ \Delta \kappa^{\text{cor}}  + \kappa_+ \Delta \tau^{\text{cor}} &= 0 \\
        W_+ \Delta {x_+}^{\text{cor}} + \dcen s &= \mu^e \tilde s,
    \end{align}
    \end{subequations}
    where
    $W_+$ is defined the same way that $W$ is, except that it uses the components of $z_+$ instead of $z$.
    After adding the correction step $\Delta z_+^{\text{cor}}$ to $z$, we repeat the entire process. The full algorithm, then, is summarized as follows:

    \begin{algorithmic}
        \STATE $z \gets (y_0, x_0, \tau_0, s_0, \kappa_0)$;
        \WHILE{}
            \STATE Compute scaling matrix $W$ as in \eqref{eq:scalemat};
            \STATE Find the solution $\daff z$ to (\ref{eqns:sys1}a-c),
                and the solution $\dcen z$ to (\ref{eqns:sys2}a-c);
            \STATE $z_+ \gets z + \alpha (\daff z + \gamma \dcen z)$;
            \STATE Compute the saling matrix $W_+$;
            \STATE Find the solution $\Delta z^{\text{cor}}_+$ to (\ref{eqns:sys3}a-c);
            \STATE $z \gets z_+ + \Delta z_+^{\text{cor}}$;
        \ENDWHILE
    \end{algorithmic}

    We have verified that each iteration of this process can be done in $O((n+m)^\omega))$ time.
    Their main result \parencite[Theorem 3]{badenbroek2021algorithm}, states that for every $\epsilon \in (0,1)$,
    the algorithm results in a solution $z$ satisfying
    \[
        \mu^{\sf e}(z)
        \le \epsilon
        \qquad \text{and}\qquad
        \Vert G(z) \Vert \le \epsilon
            \Vert G(z_0) \Vert
    \]
    in $O(n \log (1/\epsilon))$ iterations,
    for a total cost of
    $O(n (m+n)^3 \log (1/\epsilon) )$ time with Gaussian elimination, or
    $O(n (m+n)^{2.332} \log (1/\epsilon) )$ time using the linear solver with best
         known asymptotatic complexity as of 2022 \cite{duan2022faster}.

    \medskip
    \hrule

    \textbf{Verifying that the solution is approximately optimal.}
    What we have at this point is not quite enough: simply because the residual quantity $G(z)$ is approximately zero (so that we have approximately solved the homogenous model), does not mean that we've approximately solved the original problem.
    Specifically, it's entirely possible a priori that the parameter $\tau$ goes to zero at the same rate as everything else, and the quantity $(x/\tau)$ does not converge to a solution to the primal problem.
    To address this issue, we must also trace the analysis of the seminal work of \textcite{nesterov1996infeasible}, who use slightly different quantities, conflicting with the notation we have been using thus far.

    Following \textcite[pg. 231]{nesterov1996infeasible}, fix an initial point $z_0$, and let \emph{shifted feasible set}
    $\mathcal F := \{ z
        \in \mathbb R \times K \times \mathbb R_{\ge 0} \times K^* \times \mathbb R_{\ge 0}
        : G(z) = G(z_0)\}$
    be the collection of all points that have the same residual as $z_0$.
    \citeauthor*{nesterov1996infeasible} also refer to a complementary gap by $\mu(z)$ and define it identically, but the meaning of this parameter is different, because the set $\mathcal F$ on which it's defined is quite distinct from (if closely related to) the iterates of \citeauthor*{badenbroek2021algorithm}'s algorithm.
    In the service of clarity,
    will call this quantity $\mu^{\sf N}(z^{\sf N})$, for
    $z^{\sf N}
    = (y^{\sf N}, x^{\sf N}, \tau^{\sf N}, s^{\sf N}, \kappa^{\sf N})
    \in \mathcal F$.

    Although we made a point of emphasizing that the two are distinct,
    the actual relationship between them is straightforward.
    Let $z = (y, x,\tau, s, \kappa)$ be the final output of \textcite{badenbroek2021algorithm}.
    In proving their main theorem, they also prove that
    $G(z) = \epsilon G(z_0)$,  and $\mu^{\sf e}=\epsilon$;
    because $G$ is linear, we know that $G(\nf{z}{\epsilon}) = G(z_0)$.
    This means that $z^{\sf N}
    := \nf z\epsilon \in \mathcal F$.
    Therefore,
    \[
        \mu^{\sf N}(z^{\sf N} )
        = \frac{1}{\nu+1} \left( \left\langle\frac{s}{\epsilon}, \frac{x}{\epsilon}\right\rangle + \frac{\tau}{\epsilon} \frac{\kappa}{\epsilon} \right)
    = \frac{1}{\epsilon^2} \mu^{\sf e}(z) = \frac{1}{\epsilon}.
    \]
    So, roughly speaking, $\mu^{\sf N}$ and $\mu^{\sf e}$ are reciprocals.
    \citeauthor*{badenbroek2021algorithm} also
    prove that, every iterate $z$ satisfies their assumption (A2): for a fixed constant $\beta$ (equal to 0.9 in their analysis), $\beta \mu^{\sf e}(z) \le \tau \kappa$.
    Consequently,
    it happens that the same inequality holds with Nesterov's notation:
    \[
        \tau^{\sf N} \kappa^{\sf N} =
        \frac{\tau}{\epsilon} \frac{\kappa}{\epsilon}
            = \frac{\tau\kappa}{\epsilon^2} \ge \frac{\beta \epsilon}{\epsilon^2}
            = \frac\beta\epsilon = \beta \mu^{\sf N}
                (z^{\sf N}).
    \]
    This witnesses that $z^{\sf N} = \frac{z}{\epsilon}$ satisfies equation (81) of \citeauthor{nesterov1996infeasible}, which allows us to apply one of their main theorems, which addresses these issues.
    Supposing that the orignal problem is solvable,
    let $(x^*, s^*)$ be any solution to the primal and dual problems,
    and define the value $\psi := 1 + \langle s_0, x^*\rangle + \langle s^*, x_0 \rangle \ge 1$,
    which depends only on the problem and the choice of initialization.
    Then Theroem 1, part 1 of \citeauthor*{nesterov1996infeasible}, allows us to conclude that
    \[
        \frac{\kappa}{\epsilon}  \le \psi
        \quad\text{and}\quad
        \frac\tau\epsilon \ge \frac{\beta}{\epsilon\psi}
    \qquad\qquad \iff\qquad\qquad
        \kappa \le \epsilon\psi
        \quad\text{and}\quad \tau \ge \frac\beta\psi.
    \]
    Finally, the original theorem guarantees that
    $\Vert G(x) \Vert \le \epsilon \Vert G(z_0)\Vert$, meaning that
    \[
        \Big\Vert A \Big(\frac x\tau\Big) - b \Big\Vert \tau
        ~+~\Big\Vert A^{\sf T} \Big(\frac y\tau\Big) - \frac{s}\tau - c \Big\Vert \tau
        ~+~\Big\Vert b^{\sf T} \left(\frac{y}{\tau}\right) - c^{\sf T} \left(\frac{x}{\tau}\right) - \frac\kappa\tau \Big\Vert \tau \le \epsilon \Vert G(z_0) \Vert.
    \]
    Since the euclidean norm is an upper bound on the deviation in any component
    ( $\Vert v \Vert := \sqrt{\sum_i v_i^2} \ge \sqrt{\max_i v_i^2} = \max_i v_i =: \Vert v\Vert_\infty$ ), this means that in light of our bound on $\tau$ above, we have
    \[
        \Big\Vert A \Big(\frac x\tau\Big) - b \Big\Vert_\infty
        ~+~\Big\Vert A^{\sf T} \Big(\frac y\tau\Big) + \frac{s}\tau - c \Big\Vert_\infty
        ~+~\Big\Vert b^{\sf T} \left(\frac{y}{\tau}\right) - c^{\sf T} \left(\frac{x}{\tau}\right) - \frac\kappa\tau \Big\Vert_\infty
        \le \epsilon  \frac{\beta \Vert G(z_0) \Vert}{\psi }.
    \]
    The first two components show that the total constraint violation (in the primal and dual problems, respectively) is at most $\nicefrac{\epsilon\beta}{\psi} \Vert G(z_0) \Vert$.
    Meanwhile, the final component shows that the duality gap
    $\mathit{gap} = b^{\sf T} (\frac{y}{\tau}) - c^{\sf T} (\frac{x}{\tau})$,
    which is positive and an upper bound on the difference between the objective at $x/\tau$ and the optimal objective value, satisfies
    \[
        \mathit{gap}  \le \mathit{gap} + \frac\kappa\tau \le \frac{\epsilon \beta \Vert G(z_0) \Vert}{\psi}.
    \]
    Thus $x/\tau$ is an $(\epsilon \Vert G(z_0)\Vert)$-approximate solution to the original exponential conic problem.
    Since also $\psi \ge 1$, we may freely drop it to get a looser bound.
    All that remains is to investigate $\Vert G(z_0) \Vert$,
     the residual norm of the initial point
        chosen by the MOSEK solver, which equals:
    \[
        \Vert G(z_0) \Vert
        = \Vert A x_0 - b \Vert
            + \Vert A^{\sf T} y_0 + s_0 - c  \Vert
            + | c^{\sf T} x - b^{\sf T} y + 1 |.
    \]
    Making use of our assumption that every component of $A$, $b$, and $c$ is at most one, we find that
    \begin{align*}
        \Vert A x_0 - b \Vert^2
            &= \sum_j (\sum_i A_{j,i} (1.3)  - b_j)^2
            \le m (1.3n+1)^2 \in O(m n^2)
                &\subset O((m+n)^3)\\
        \Vert A^{\sf T}y_0 + s_0 - c \Vert^2
            &= \sum_i (\sum_j (A_{j,i})
            \le n (m + 2)^2 \in O(n m^2)
                &\subset O((m+n)^3) \\
        | c^{\sf T} x - b^{\sf T} y + 1 | ^2
            &\le (1.3n + m + 1)^2 \in O( (n+m)^2 )
                &\subset O((n+m)^3).
    \end{align*}
    Therefore, the residual of the initial point is
        $G(z_0) \in O((n+m)^{\nf32})$.

    To obtain a solution at most $\epsilon_0$ away from the true
    solution in any coordinate, we need to select $\epsilon$ small enough
    that the final output of the algorithm $z$ satisfies
    \[
        \epsilon \Vert G(z_0) \Vert  \le \epsilon_0
        \qquad\iff\qquad
        \frac1\epsilon \ge \frac1{\epsilon_0} \Vert G(z_0) \Vert
    \]
    It therefore suffices to choose
    $\frac{1}{\epsilon} \in O(\frac1{\epsilon_0} (n+m)^{\nf32})$,
    leading to $\log \frac{1}{\epsilon} = O( \log \frac{n+m}{\epsilon_0} ) )$
    iterations.

    Thus, we arrive at our total advertised asymptotic complexity of time
    \[
        O \Big( n (n+m)^\omega \log \frac{n+m}{\epsilon_0} \Big).
    \]

    In particular, to attain machine precision, we can fix
    $\epsilon_0$ to be the smallest gap between numbers representable
    (say with 64-bit floats, leading to $\epsilon_0 = 10^{-78}$ in the worst case), and omit the dependance on $\epsilon_0$ for the price of relatively
    small constant (78, for 64-bit floats).
\end{lproof}

\discard{
\begin{prop}
    It can also be solved with the methods of \textcite{skajaa2015homogeneous} in
    time
    \[ O(\sqrt{n} (m+n)^\omega \log n )
        \quad\subset\quad O( (m+n)^{2.872} \log n )
        \]
\end{prop}
\begin{lproof}

\end{lproof}
}

Having combed through all of the details of the analysis of  \textcite{badenbroek2021algorithm} and \textcite{nesterov1996infeasible} for exponential
conic programs as we have defined them, we are ready to show that this algorithm solves the problems presented in \cref{sec:clique-tree-expcone} within polynomial time.

In the results that follow, we use the symbol $O_{\text{BP}}( \, \cdot \,)$ to describe the complexity under the \emph{bounded precision} assumption:  the numerical values of $(\balpha,\bbeta, \mathbb P)$ that describe the PDG, as well as $\gamma$, lie within a fixed range, e.g., are 64-bit floating point numbers.
Correspondingly, we use $\tilde O_{\text{BP}}(\,\cdot\,)$ to describe the complexity
under the same assumption, but hiding logarithmic factors for parameters on which the complexity also depends polynomially.

\begin{lemma}\label{lem:cluster-inc-polytime}
Problem \eqref{prob:cluster-inc} can be solved to $\epsilon$ precision in time
\[
    O\Big( (\V\!\Ar + \V\C)^{1 + \omega} ( \log \frac{|\V\!\Ar| + |\V\C|}{\epsilon}
        + \log \frac{\beta^{\max}}{\beta^{\min}} \Big)
    \quad \subset \quad
        \tilde O_{\text{BP}}
        \Big( |\V\!\Ar + \V\C|^4 \log \frac1\epsilon \Big),
\]
where $\beta^{\max} := \max_{a \in \Ar} \beta_a$ is the largest value of $\beta$,
 and $\beta^{\min}:= \min_{a \in \Ar} \{ \beta_a : \beta_a > 0\}$ is the smallest positive one.
\end{lemma}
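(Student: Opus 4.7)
The plan is to cast problem \eqref{prob:cluster-inc} into the standard form \eqref{eq:exp-conic-program}, verify that its dimensions are $O(|\V\!\Ar|+|\V\C|)$, and then invoke \cref{lem:mainlemma}.

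First, I would collect the optimization variables into a single vector of dimension $n \in O(|\V\!\Ar| + |\V\C|)$: the cluster marginals $\bmu$ contribute $|\V\C|$ coordinates, and the slack variables $\mat u$ contribute $|\V\!\Ar|$. Each exponential cone constraint in \eqref{prob:cluster-inc} already has the form $(\text{affine},\text{affine},\text{affine}) \in K_{\exp}$, so the standard translation introduces three fresh slacks per coordinate of $\V a$, equated to the corresponding affine expressions in $\bmu$ via linear equalities. Combining these with the simplex constraints on each $\mu_C$ and the calibration constraints along $\mathcal T$ yields an exponential conic program with $n, m \in O(|\V\!\Ar| + |\V\C|)$.

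Second, I would normalize $A$, $\mat b$, and $\mat c$ to lie in $[-1,1]$ as required by \cref{lem:mainlemma}. The matrix $A$ has entries that are either $0$, $\pm 1$ (from marginalization and calibration), or probability values $\p_a(t\mid s) \in [0,1]$, and $\mat b$ is made of $0$'s and $1$'s, so both are already in range. The objective coefficients are the $\beta_a$'s, so dividing $\mat c$ by $\beta^{\max}$ puts every entry in $[0,1]$. The smallest positive coefficient of the rescaled objective is then $\beta^{\min}/\beta^{\max}$, so to recover precision $\epsilon$ on the unnormalized objective one must solve the rescaled program to precision $\epsilon' = \epsilon\,\beta^{\min}/\beta^{\max}$; since $\log(1/\epsilon') = \log(1/\epsilon) + \log(\beta^{\max}/\beta^{\min})$, this is exactly what produces the additive $\log(\beta^{\max}/\beta^{\min})$ term.

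Third, I would verify strict feasibility of both the primal and the dual. On the primal side, a uniform \cactree\ $\bmu$ places strictly positive mass on every joint cluster value and satisfies calibration automatically; choosing $\mat u$ componentwise strictly larger than the corresponding pointwise relative entropies makes the cone constraints strict. Dual strict feasibility follows from an analogous explicit construction using a known interior point of $K_{\exp}^*$. The only subtlety is when some $\p_a(t\mid s) = 0$, which forces $\mu_{C_a}(s,t)=0$; these coordinates can simply be eliminated from the program, restoring strict feasibility on the reduced face of each simplex.

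Putting it all together, \cref{lem:mainlemma} gives a running time of $O(n(n+m)^\omega \log((n+m)/\epsilon'))$ with $n,m \in O(|\V\!\Ar|+|\V\C|)$ and $\epsilon' = \epsilon\,\beta^{\min}/\beta^{\max}$, which simplifies to the stated bound; specializing to $\omega=3$ (Gaussian elimination, as used inside MOSEK) and absorbing logarithmic factors yields the $\tilde O_{\mathrm{BP}}(|\V\!\Ar + \V\C|^4 \log(1/\epsilon))$ form. I expect the main obstacle to be careful bookkeeping: confirming that the standard-form translation keeps both $n$ and $m$ within a constant factor of $|\V\!\Ar|+|\V\C|$, tracking the interaction between the rescaling and the precision requirement, and handling the zero-probability coordinates without destroying strict feasibility.
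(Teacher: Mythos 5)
Your proof follows the same route as the paper's: cast \eqref{prob:cluster-inc} into the standard form of \cref{lem:mainlemma}, count variables and constraints, normalize $(A,\mat b, \mat c)$, and invoke the interior-point complexity bound. On two points you are actually \emph{more} careful than the published proof: you explicitly check strict primal and dual feasibility (which \cref{lem:mainlemma} requires and which the paper's proof of this lemma does not mention), and you flag the zero-probability coordinates as a subtlety for feasibility.

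There is, however, a genuine gap in your justification of the $\log(\beta^{\max}/\beta^{\min})$ term. You attribute the $\beta^{\min}$ factor to ``recover[ing] precision $\epsilon$ on the unnormalized objective,'' writing $\epsilon' = \epsilon\,\beta^{\min}/\beta^{\max}$. But if you want an $\epsilon$-small suboptimality gap in the \emph{original} (unnormalized) objective after dividing $\mat c$ by $\beta^{\max}$, you only need $\epsilon' = \epsilon/\beta^{\max}$; nothing about the smallest coefficient enters. The $\beta^{\min}$ factor has a different origin in the paper: ``$\epsilon$ precision'' in the lemma means $\ell_2$-distance of $\bmu$ to the true optimizer $\bmu^*$, not objective gap. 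Interior-point methods control the complementarity gap; to convert this to a distance bound, the paper appeals to (local) strong convexity of the objective along the step direction, whose modulus is at least $\min_a\{\beta_a : \beta_a > 0\} = \beta^{\min}$ (each conditional relative entropy term is $1$-strongly convex on its marginals, scaled by $\beta_a$). Requesting a gap of order $\beta^{\min}\cdot\epsilon$ then guarantees $\ell_2$-proximity $\epsilon$, which, combined with the $\beta^{\max}$ rescaling, yields the additive $\log(\beta^{\max}/\beta^{\min})$. Your arithmetic lands on the right expression, but the explanation offered for it would not survive scrutiny; you should replace it with the strong-convexity argument.
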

\begin{lproof}
    Problem \eqref{prob:cluster-inc}
    can translated via the DCP framework to
    the following exponential conic program, which has:
    \begin{itemize}[label=$\blacktriangleright$]
    \item variables
        $x = (\mat u, \mat v, \mat w, \bmu) \in K_{\exp}^{\V\!\Ar} \times \mathbb R_{\ge 0}^{\V \C}$,
        where
        \begin{itemize}[label=\textbullet]
        \item $\mathbf{u, v,w} \in \Rext^{\V\!\Ar}$
            are all vectors over $\V\!\Ar$,
            that at index $\iota = (a,s,t) \in \V\!\Ar$, have
            components $u_\iota$, $v_\iota$, and $w_\iota$, respectively;
        \item
            $\bmu = [\mu_{C}(C{=}c)]_%
            {C \in \C,\,c \in\V(C)}
             \in \Rext^{\V \C}$ is a vector representation of a \actree\ over clusters $\C$;
    \end{itemize}

    \item constraints as follows:
        \begin{itemize}[label=\textbullet]
            \item
            two linear constraints for every $(a,s,t) \in \V\!\Ar$ to ensure that
            \begin{align*}
                v_{a,s,t} &= \mu_{C_{\!a}}\!(s,t)
                    \qquad\Big(~= \sum_{\bar c \in \V(C_a \setminus\{\Src a, \Tgt a\})}
                        \mu_{C_{\!a}}\!(\bar c, s, t) \Big) \\
                \qquad\text{and}\qquad
                w_{a,s,t} &= \mu_{C_{\!a}}\!(\Src a{=}s)\, \p_a(\Tgt a{=}t\mid\Src a{=}s)
                    \qquad\Big(~= \p_a(\Tgt a{=}t\mid\Src a{=}s) \sum_{\bar c \in \V(C_a \setminus\{\Src a\})}
                        \mu_{C_{\!a}}\!(\bar c, s) \Big);
            \end{align*}
            \item for every edge $(C\!{-}\!D) \in \cal T$, and every value $\omega \in \V(C \cap D)$ of the variables that clusters $C$ and $D$ have in common, a linear constraint
            \[
                \sum_{\bar c \in \V(C\setminus D)} \mu_C(\bar c, \omega)
                    =
                \sum_{\bar d \in \V(D \setminus C)} \mu_D(\bar d, \omega);
            \]
            \item and one constraint for each cluster $C \in \C$ to ensure that $\mu_{C}$ lies on the probability simplex, i.e.,
            \[
                \sum_{c \in \V(C)} \mu_C(c) = 1.
            \]
        \end{itemize}
    \end{itemize}

    Altogether this means that we have an exponential conic program in the form
    of \cref{lem:mainlemma}, with
        $n = 3|\V\!\Ar| + |\V\C|$ variables,
        and
        $m = 2 |\V\!\Ar| + |\V \mathcal T| +  |\C|$ constraints,
    where
    $\V\mathcal T = \{ (C\!{-}\!D, \omega) :  C\!{-}\!D \in \mathcal T, \omega \in \V(C\cap D)\}$.
    Since we can simply disregard variables whose value sets are singletons, we can assume $\V(C) > 1$; summing over all clusters yields $\V\C > |\C|$.
    At the same time, since $\V\mathcal T \le \V\C$,
    we have
    \[ m,n,(m + n) \in O(\V\!\Ar, + \V\C).  \]

    We now give the explicit construction of the data $(A, b,c)$ of the exponential conic program that \eqref{prob:cluster-inc} compiles to.
    The variables are indexed by tuples
    of the form $i = (\ell,a,s,t)$ for $(a,s,t) \in \V\!\Ar$ and $\ell \in \{u,v,w\}$,
    or by tuples of the form $(C,c)$, for $c \in \V(C)$ and $C \in \C$,
    while the
    constraints are indexed by tuples of the form
    $j = (\ell,a,s,t)$ for $(a,s,t) \in \V\!\Ar$ and $\ell \in \{v,w\}$,
    of the form $(C\!{-}\!D, \omega)$, for an edge $(C\!{-}\!D) \in \mathcal T$ and $\omega \in \V(C \cap D)$,
    or simply by $(C)$, the name of a cluster $C \in \C$.
    The problem data $A = [A_{j,i}],b = [b_j],c = [c_i]$ of this program are zero, except (possibly) for the
        components:
    \begin{align*}
        c_{(u,a,s,t)} &= \beta_a \\
        A_{(v,a,s,t), (C,c)} &=
        \mathbbm1[C {=} C_{a} ~\land~ \Src a(c) {=} s ~\land~ \Tgt a(c) {=} t] \\
        A_{(w,a,s,t), (C,c)} &=
           \p_a(\Tgt a{=}t\mid \Src a{=}s)
            \mathbbm1[C {=} C_a ~\land~ \Src a(c) {=} s] \\
        A_{(w,a,s,t), (w,a,s,t)} &= -1 \\
        A_{(v,a,s,t), (v,a,s,t)} &= -1 \\
        A_{(C\!{-}\!D, \omega),(C',c)} &= \mathbbm1[C{=}C'] - \mathbbm1[C'{=}D]\\
        A_{(C),(C,c)} &= 1 \\
        b_{(C)} &= 1,
    \end{align*}
    where $\mathbbm1[\varphi]$ is equal to 1 if $\varphi$ is true, and zero if $\varphi$ is false.
    We note that we can equivalently divide each $\beta_a$ by $\max_a \beta_a$ without affecting the problem,
    although this could affect the approximation accuracy by the same factor.
    Thus, we get another factor of
    \[
        \log ( \max \{1 \} \cup \{ \beta_a : a \in \Ar\} ) ~\subseteq~ O( \log (1 + \max_a \beta_a ) ).
    \]

    Finally, to find a point that is $\epsilon$-close (say, in 2-norm) to the limiting point $\mu^*$ on the central path, as opposed to simply one that for which the suboptimality gap is at most $\epsilon$, we can appeal to strong concavity of the objective function.
    (Conditional) relative entropy is 1-strongly convex, and each relative entropy term is scaled by $\beta_a$.
    Furthermore, we're only considering marginal conditional entropy, so this convexity may not hold in all directions.
    Still, if the next step direction $\delta$ is not far from the gradient (as is the case if the interior point method has nearly converged), then, in that direction, the objective will be at least ($\min_a \{ \beta_a : \beta_a > 0\}$)-strongly convex.
    Therefore, by multiplying the requested precision by an additional factor of $\min_a \{ \beta_a : \beta_a > 0\}$, we can guarantee that our point is $\epsilon$-close to $\mu^*$, and not just in complementarity gap.

    To summarize, applying \Cref{lem:mainlemma}, we find that we can solve problem \eqref{prob:cluster-inc} in time
    \begin{align*}
        O\Big( \big(|\V\!\Ar| + |\V\C|\big)^{1 + \omega} \Big( \log \frac{ |\V\!\Ar| + |\V\C| }{\epsilon} + \log \frac{\beta_{\max}}{\beta_{\min}} \Big) \Big)
        \quad
        \subset \tilde O_{BP}\Big( \big(|\V\!\Ar| + |\V\C| \big)^4 \log \frac1\epsilon\Big).
    \end{align*}
    The factor of $\log\frac{\beta_{\max}}{\beta_{\min}}$ can be treated as a constant under the bounded precision assumption.
\end{lproof}

We now quickly step through the analogous construction for problems \eqref{prob:cluster-small-gamma} and \eqref{prob:cluster+idef}, which
solve the $\zogamma$-inference problem, and $0^+$-inference, respectively.

\begin{lemma}\label{lem:smallgamma-polytime}
    Problem \eqref{prob:cluster-small-gamma} is solved to precision $\epsilon$ in time
    \[
        O\bigg( |\V\!\Ar + \V\C|^{1 + \omega} \Big( \log \frac{ |\V\!\Ar| + |\V\C|}{\epsilon} + \log(1+ \Vert\bbeta\Vert_\infty) + \log \log \frac{1}{p^{\min}} \Big) \bigg)
        \quad
        \subset
        \quad
        \tilde O_{\text{BP}}\Big( |\V\!\Ar + \V\C|^4 \log\frac1\epsilon \Big)
    \]
    where $p^{\min}$ is the smallest nonzero probability in the PDG.
\end{lemma}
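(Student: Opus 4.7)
The plan is to mirror the strategy used in the proof of \Cref{lem:cluster-inc-polytime}: (1) compile \eqref{prob:cluster-small-gamma} to an exponential conic program in the canonical form required by \Cref{lem:mainlemma}, (2) verify that the resulting program has $m, n \in O(|\V\!\Ar| + |\V\C|)$ variables and constraints, and (3) invoke \Cref{lem:mainlemma} with an appropriately normalized instance to obtain the claimed runtime.

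Compared to \eqref{prob:cluster-inc}, problem \eqref{prob:cluster-small-gamma} introduces only a bounded amount of additional structure: an extra vector of auxiliary variables $\mat v \in \Rext^{\V\C}$ representing the (negative) cluster entropy contributions, a single additional bank of exponential cone constraints $(-\mat v, \bmu, [\mu_C(\Pash_C(c))]_{(C,c)\in\V\C}) \in K_{\exp}^{\V\C}$ that encodes the Bethe-like decomposition \eqref{eq:cluster-ent-decomp}, a family of linear equalities zeroing out $\mu_{C_a}(\Src a{=}s,\Tgt a{=}t)$ for each $(a,s,t) \in \V\!\Ar^0$, and linear terms in the objective with coefficients $(\beta_a - \alpha_a\gamma)$, $\gamma$, and $-\alpha_a\gamma\log \p_a(t|s)$. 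I would reuse the marginalization-style affine bindings $v_{a,s,t} = \mu_{C_a}(s,t)$, $w_{a,s,t} = \p_a(t|s)\mu_{C_a}(s)$ and the calibration constraints from the proof of \Cref{lem:cluster-inc-polytime} verbatim, and introduce analogous linear bindings to express each $\mu_C(\Pash_C(c))$ as an affine function of $\bmu$. Each group of additions has size $O(|\V\!\Ar| + |\V\C|)$, so after this translation $m, n, m+n \in O(|\V\!\Ar| + |\V\C|)$, exactly as in the previous lemma.

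The main subtlety is the normalization required to place the instance in the form assumed by \Cref{lem:mainlemma}, where every entry of $\mat c$, $\mat b$, and $A$ must lie in $[-1,1]$. The constraint matrix $A$ inherits its entries from probabilities and $\pm 1$ indicators as before, so its normalization cost is absorbed into the constant. The nonzero entries of $\mat c$, however, are of three types: $\beta_a - \alpha_a\gamma$, the constant $\gamma$, and $-\alpha_a\gamma\log \p_a(t|s)$ for $(a,s,t) \in \V\!\Ar^+$. Because $\bbeta \ge \gamma\balpha$ and $\gamma \le \min_a \beta_a/\alpha_a$, the first two types are bounded in magnitude by $\Vert\bbeta\Vert_\infty$, while the third is bounded by $\Vert\bbeta\Vert_\infty \cdot |\log p^{\min}|$. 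Dividing $\mat c$ through by this common bound requires us to request precision $\epsilon' = \epsilon / (\Vert\bbeta\Vert_\infty |\log p^{\min}|)$ from the interior-point solver; taking the logarithm contributes exactly the additive terms $\log(1+\Vert\bbeta\Vert_\infty)$ and $\log\log(1/p^{\min})$ appearing in the bound.

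The final step is to translate an $\epsilon'$ duality gap into componentwise $\epsilon$-closeness of $(\bmu, \mat u, \mat v)$ to the unique optimizer. Here I would argue exactly as at the end of the proof of \Cref{lem:cluster-inc-polytime}: by \Cref{prop:cluster-small-gamma-correct} and strict convexity of \eqref{eq:altscore}, the objective of \eqref{prob:cluster-small-gamma} is $\gamma$-strongly convex in $\mu$ along admissible directions (because $-\gamma\H(\mu)$ alone is $\gamma$-strongly convex), so an $O(\gamma\epsilon^2)$ optimality gap certifies an $\epsilon$-close iterate; the resulting extra $\log(1/\gamma)$ overhead is dominated by $\log(1+\Vert\bbeta\Vert_\infty)$ since $\gamma \le \Vert\bbeta\Vert_\infty$, and the analogous strong convexity in $\mat u$ and $\mat v$ follows from the tightness of their defining exponential cone constraints at optimum. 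Plugging $n, m+n \in O(|\V\!\Ar| + |\V\C|)$ and the normalized precision into \Cref{lem:mainlemma} yields the stated bound, and collapsing the logarithmic factors in $\Vert\bbeta\Vert_\infty$ and $p^{\min}$ under bounded precision gives the $\tilde O_{\text{BP}}$ form. The only place I expect real care to be needed is in verifying strict feasibility of the primal and dual, which is required to invoke \Cref{lem:mainlemma}; a uniform interior distribution $\mu_C \propto \mat 1$ together with slack values of $\mat u, \mat v$ chosen strictly inside each exponential cone should suffice, provided $p^{\min} > 0$ on $\V\!\Ar^+$.
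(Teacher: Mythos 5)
Your overall strategy matches the paper's proof exactly: compile \eqref{prob:cluster-small-gamma} to an exponential conic program in the form required by \cref{lem:mainlemma} by introducing affine auxiliary variables, verify $m,n\in O(|\V\!\Ar|+|\V\C|)$, normalize the cost vector $\mat c$ by the largest entry, and translate complementarity-gap precision into proximity to the optimizer via strong convexity. The variable and constraint accounting is essentially what the paper does, and your observation that the cost vector rescaling contributes $\log(1+\Vert\bbeta\Vert_\infty)+\log\log\nf1{p^{\min}}$ to the iteration count is the same as the paper's division by $\max\{-\beta_a\log p_a(t|s)\}\cup\{1\}$.

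However, there is a genuine flaw in your handling of the $\gamma$-dependence. You claim that ``the resulting extra $\log(1/\gamma)$ overhead is dominated by $\log(1+\Vert\bbeta\Vert_\infty)$ since $\gamma\le\Vert\bbeta\Vert_\infty$.'' This implication is false: $\gamma\le\Vert\bbeta\Vert_\infty$ does not bound $\log\nf1\gamma$ by $\log(1+\Vert\bbeta\Vert_\infty)$. For example, with $\gamma=10^{-8}$ and $\Vert\bbeta\Vert_\infty=1$, the left side is about $18$ and the right side is about $0.7$; the gap is unbounded as $\gamma\to0$, which is precisely the regime of interest for small-$\gamma$ inference. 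The paper's own proof does not attempt this domination — it carries the factor through, arriving at $\log\frac{\beta^{\max}}{\gamma}+\log\log\frac1{p^{\min}}$ as the logarithmic term, and the $\log\nf1\gamma$ dependence is absorbed only at the final step, under the bounded-precision assumption $\tilde O_{\text{BP}}$, where $\gamma$ is taken to have a fixed bit-length. So your reduction-to-\cref{lem:mainlemma} and your strong-convexity appeal are both correct, but the claim that the resulting bound simplifies to the form with $\log(1+\Vert\bbeta\Vert_\infty)$ alone is not supported by the argument you give; the $\log\nf1\gamma$ factor must be carried explicitly and is only eliminated by the bounded-precision convention.

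A smaller note: you correctly flag strict feasibility as a potential concern, but your proposed witness (a uniform $\bmu$ with slacked $\mat u,\mat v$) would fail to be strictly feasible whenever $\V\!\Ar^0\ne\emptyset$, since the hard zeroing constraints $\mu_{C_a}(s,t)=0$ push the iterate onto the boundary of the simplex; the paper's proof sweeps this under the rug as well, so you are at least no worse off, but it deserves more care than ``should suffice.''
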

\begin{lproof}
    Problem \eqref{prob:cluster-small-gamma} has
    \begin{itemize}[label=$\blacktriangleright$]
    \item variables
        $x = (\mat u, \mat y, \mat w,\,\, \mat v, \bmu, \mat z)
        \in K_{\exp}^{\V\!\Ar} \times K_{\exp}^{\V \C}$
        where
        \begin{itemize}[label=\textbullet]
        \item
        $\mathbf{u,y,w} \in \Rext^{\V\!\Ar}$
            are all vectors over $\V\!\Ar$
            that at index $\iota = (a,s,t) \in \V\!\Ar$, have
            components $u_\iota$, $v_\iota$, and $w_\iota$, respectively;
        \item
        Meanwhile,
        $\mathbf{v,\bmu,z} \in \Rext^{\V\C}$
            are all vectors over $\V\C$
            which at index $(C,c)$, have
            components $v_{C,c}$, $\mu_C(c)$, and $z_{C,c}$, respectively.
            Once again, $\bmu = [\mu_{C}(C{=}c)]_%
            {C \in \C,\,c \in\V(C)}
             \in \Rext^{\V \C}$ is intended to be a vector representation of a \actree.
    \end{itemize}

    \item constraints as follows:
        \begin{itemize}[label=\textbullet]
            \item
            two linear constraints for each $(a,s,t) \in \V\!\Ar$, to ensure that
            \[
                y_{a,s,t} = \mu_{C_{\!a}}(s,t)
                \qquad\text{and}\qquad
                w_{a,s,t} = \mu_{C_{\!a}}\!(\Src a{=}s)\, \p_a(\Tgt a{=}t\mid\Src a{=}s),
            \]
            \item for every edge $(C\!{-}\!D) \in \cal T$, and every value $\omega \in \V(C \cap D)$ of the variables that clusters $C$ and $D$ have in common, a linear constraint
            \[
                \sum_{\bar c \in \V(C\setminus D)} \mu_C(\bar c, \omega)
                    =
                \sum_{\bar d \in \V(D \setminus C)} \mu_D(\bar d, \omega)
            \]
            \item for every $(a,s,t) \in \V\!\Ar^0$, a linear constraint
            that ensures
            \[
                0 = \mu_{C_a}\!(\Src a{=}s,\Tgt a{=}t)
                \qquad\Big(~~
                   = \sum_{\bar c \in \V(C \setminus \{\Src a, \Tgt a\})} \mu_{C_a}(\bar c, s, t) \Big)
            \]

            \discard{\TODO[POSSIBLE ISSUE: doesn't this mean the problem isn't strictly feasible?]}

            \item a linear constraint for every value $c \in \V(C)$ of every cluster $C \in \C$, to ensure that
            \[
                z_{C,c} = \mu_C( \Pash_C(c) )
                    \qquad \Big(~~= \sum_{\bar c \in \V(C \setminus \Pash_C)}
                        \mu_{C}(\bar c, \Pash_C(c)) ~~\Big)
            \]
            \item and one constraint for each cluster $C \in \C$ to ensure that $\mu_{C}$ lies on the probability simplex, i.e.,
            \[
                \sum_{c \in \V(C)} \mu_C(c) = 1.
            \]
        \end{itemize}
    \end{itemize}
    So in total, there are
    $n = |3 \V \!\Ar + 3 \V \C|$ variables,
    and
    $m = 2 |\V\!\Ar| + |\V \mathcal T| + |\V\!\Ar^0| + |\V \C| + |\C|$ constraints.
    The same arguments made in \cref{lem:cluster-inc-polytime} show that both $n,m \in O(|\V \!\Ar + \V \C|)$.

    Also like before, it is easy to see that the components of $A$ and $b$ are all at most 1.  However, we will need to rescale the objective $c$ in order for each of its components to be most 1. We can do this by dividing it by
    $\max \{ - \beta_a \log {p_a(t|s)} \}_{(a,s,t) \in \V\!\Ar} \cup \{ 1 \}$.

    Finally, to ensure that we have a solution that is $\epsilon$-close to the end of the central path, as opposed to one that is merely $\epsilon$-close in compelementarity gap, we must appeal to convexity.
    As in the proof of \cref{lem:cluster-inc-polytime}, this amounts to reducing the target accuracy by a factor of the smallest possible coefficient of strong convexity, along the next step direction.
    In this case, the bound is simpler: because negative entropy is (unconditionally) 1-strongly convex, and since $\bbeta \ge \balpha \gamma$, the remaining terms are convex, this could be, at worst, $\frac1\gamma$.

    This gives rise to our result: problem \eqref{prob:cluster-small-gamma} can be solved in
    \begin{align*}
        O\left( |\V \!\Ar + \V \C|^{1+\omega}
            \left(\log \frac{|\V \!\Ar + \V \C|}{\epsilon} + \log \frac1\gamma \left(1 + \max_{(a,s,t) \in \V\!\Ar} \beta_a \log \frac{1}{\p_a(t|s)} \right) \right)  \right) \\
        \subset
        O\left( |\V \!\Ar + \V \C|^{1+ \omega}
        \left\{
            \log \frac{|\V \!\Ar + \V \C|}{\epsilon}+
            \log  \frac{\beta^{\max}}{\gamma} +
            \log\log \frac{1}{p^{\min}}
        \right\}
        \right) \\
        \subset
        \tilde O_{\text{BP}}\left( |\V \!\Ar + \V \C|^{1+ \omega}
        \Big(
            \log \frac{1}{\epsilon}
        \Big)
        \right)
    \end{align*}
    operations, where $p$ is the smallest nonzero probability in the PDG, and $\beta^{\max}$ is the largest confidence in the PDG larger than 1.
\end{lproof}

\begin{lemma}\label{lem:cluster+idef-polytime}
    Problem \eqref{prob:cluster+idef} is solved to precision $\epsilon$ in
    \[
        O\Big( |\V \C| |\V \!\Ar + \V \C|^{\omega}
            \log \frac{|\V \!\Ar + \V \C|}{\epsilon} \Big)
        \quad\subset\quad
        \tilde O_{\text{BP}}\Big( |\V \C + \V \! \Ar|^{4} \log\frac1\epsilon \Big) \text{~~time}.
    \]
\end{lemma}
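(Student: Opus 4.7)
The plan is to mimic the compilation strategy of \cref{lem:cluster-inc-polytime,lem:smallgamma-polytime}: translate problem \eqref{prob:cluster+idef} into the standard form of \cref{lem:mainlemma}, check its preconditions, and read off the runtime. The critical observation is that the only exponential cone in this problem is $K_{\exp}^{\V\C}$, so the cone dimension is $n = 3|\V\C|$ even though the number of linear constraints is $m \in O(|\V\!\Ar + \V\C|)$; this is what accounts for the leading factor in the promised runtime being $|\V\C|$ rather than $|\V\!\Ar + \V\C|$. Concretely, I would introduce decision vectors $\mat u, \bmu, \mat w \in \mathbb{R}^{\V\C}$ with the single cone constraint $(-\mat u, \bmu, \mat w) \in K_{\exp}^{\V\C}$, and add a linear equation binding each $w_{C,c}$ to $k_{C,c}\mu_C(\Pash_C(c))$, which is an affine function of $\bmu$. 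The remaining constraints---the marginal-matching equalities against $\boldsymbol\nu$ for each $(a,s,t)\in\V\!\Ar$, the calibration equalities across $\cal T$, and one simplex normalization per cluster---are plain linear equalities, giving $n + m \in O(|\V\!\Ar + \V\C|)$ in total.

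Next I would verify the preconditions of \cref{lem:mainlemma}. The cost vector $\mat c = \mat 1$ and right-hand side $\mat b$ have entries in $\{0,1\}$. The matrix $A$ has entries in $\{-1, 0, 1\}$ except on the rows defining $\mat w$, where the coefficients $k_{C,c} = \prod_{a \in \Ar_C} \nu_C(\Tgt a(c)\mid\Src a(c))^{\alpha_a}$ appear. Since $\boldsymbol\nu$ represents a probability distribution and the exponents $\alpha_a$ are non-negative in the regime of interest, $k_{C,c} \in [0,1]$, so all entries are bounded by $1$; any per-row rescaling needed when some $\alpha_a$ is unusually large costs only a $\log\log$ factor absorbed into $\tilde O_{\text{BP}}$. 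Strict primal feasibility is witnessed by (a mild perturbation of) the uniform calibrated cluster tree, which places $\bmu$ in the interior of the nonnegative orthant and makes all exponential-cone triples strictly interior; strict dual feasibility follows from a standard perturbation argument.

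Finally, to pass from a small complementarity gap to an $\ell_2$-close primal solution, I would appeal to the $1$-strong convexity of negative entropy: by the reasoning in \cref{proof:cluster-idef-correct}, the objective equals, along the feasible set, $\SInc_{\dg M}(\Pr_{\bmu})$ plus a term depending only on $\boldsymbol\nu$, and negative Shannon entropy is $1$-strongly convex (via Pinsker), so the required precision shrinks only by a problem-independent constant. Substituting $n = 3|\V\C|$ and $n + m \in O(|\V\!\Ar + \V\C|)$ into the bound of \cref{lem:mainlemma} then yields the advertised $O(|\V\C|\,|\V\!\Ar + \V\C|^{\omega}\log\tfrac{|\V\!\Ar + \V\C|}{\epsilon})$ runtime. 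The only step that I expect to require real care is the bookkeeping around the $k_{C,c}$ coefficients in edge cases where some $\alpha_a$ is large in magnitude, but this affects only logarithmic lower-order terms and leaves the leading asymptotics intact.
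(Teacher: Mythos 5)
Your proposal takes essentially the same route as the paper's: identify that the only cone is $K_{\exp}^{\V\C}$ (so $n = 3|\V\C|$), count $m \in O(|\V\!\Ar + \V\C|)$ linear constraints (defining $\mat w$, marginal-matching, calibration, and simplex), note that $\mat c = \mat 1$, $\mat b$, and $A$ have bounded entries, invoke $1$-strong convexity of the objective to pass from a small complementarity gap to a nearby iterate without extra rescaling, and substitute into \cref{lem:mainlemma}. This matches the paper's argument.

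One slip in the feasibility discussion: a perturbation of the \emph{uniform} calibrated \actree\ will not in general satisfy the marginal-matching equalities $\mu_{C_a}(\Src a,\Tgt a)\,\nu_{C_a}(\Src a) = \nu_{C_a}(\Src a,\Tgt a)\,\mu_{C_a}(\Src a)$, since these force $\bmu$ to reproduce $\boldsymbol\nu$'s conditional marginals along every arc. The natural strictly feasible candidate would be $\boldsymbol\nu$ itself, suitably pushed into the interior of the cone — but that in turn requires $\boldsymbol\nu$ to have full support, so the claim isn't free. The paper's proof of this lemma sidesteps the issue entirely (and the adjacent \cref{lem:smallgamma-polytime} leaves a discarded note acknowledging it), so your extra care here is in principle a strength, but the specific witness you offer doesn't work.
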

\begin{lproof}
    Problem \eqref{prob:cluster+idef} is slightly more straightforward; having done
    \cref{lem:cluster-inc-polytime,lem:smallgamma-polytime} in depth, we do this one more quickly.
    In the standard form, problem \eqref{prob:cluster+idef}, has variables
    $x = (\mat u, \bmu, \mat w)
        \in K_{\exp}^{\V\C}$.
    The constraints  are:

    \begin{itemize}[label=\textbullet]
        \item
        one linear constraint for each $(C,c) \in \V\C$, to ensure that
        \[
            w_{C,c} = k_{(C,c)} \mu_C( \Pash_C(c) )
            \qquad\Big(~~= \sum_{\bar c \in \V(C\setminus \Pash_C )} \mu_C(\bar c, \Pash_C(c))
                \Big)
        \]
        \item for every edge $(C\!{-}\!D) \in \cal T$, and every value $\omega \in \V(C \cap D)$ of the variables that clusters $C$ and $D$ have in common, a linear constraint
        \[
            \sum_{\bar c \in \V(C\setminus D)} \mu_C(\bar c, \omega)
                =
            \sum_{\bar d \in \V(D \setminus C)} \mu_D(\bar d, \omega)
        \]
        \item for every $(a,s,t) \in \V\!\Ar$, a linear constraint
        that ensures
        \[
            \mu_{C_a}(\Src a{=}s, \Tgt a{=}t) \, \nu_{C_a}(\Src a{=}s)
                =
            \nu_{C_a}(\Src a{=}s, \Tgt a{=}t) \, \mu_{C_a}(\Src a{=}s).
        \]
        This is linear, because recall that $\nu$ is a constant in this optimization
        problem, found by having previously solved \eqref{prob:cluster-inc}.

        \item and one constraint for each cluster $C \in \C$ to ensure that $\mu_{C}$ lies on the probability simplex.
    \end{itemize}
    So in total, there are
    $n = 3 |\V \C|$ variables,
    and
    $m =  |\V \C| + |\V \mathcal T| + |\V\!\Ar| + |\C|$ constraints.
    Once again the components of $A$ and $b$ are all at most one, and now the components of the cost function $c = \mat 1$ are identically one.
    Furthermore, our objective is 1-strongly convex, so no additional multaplicative terms are required to convert an $\epsilon$-close solution in the sense of suboptimality, to an $\epsilon$-close solution in the sense of proximity to the true solution.

    Therefore \eqref{prob:cluster+idef} can be solved in
    \begin{align*}
        O\Big( |\V \C| |\V \!\Ar + \V \C|^{\omega}
            \log \frac{|\V \!\Ar + \V \C|}{\epsilon} \Big)
        \quad\subset\quad
        \tilde O_{\text{BP}}( |\V \C + \V \! \Ar|^{4} \log\frac1\epsilon )
    \end{align*}
    operations.
\end{lproof}

\recall{theorem:main}
\begin{lproof}\label{proof:main}
    Suppose that the PDG has $N$ variables
    (each of which can take at most $V$ distinct values),
    and $A$ hyperarcs, which together form a structure has tree-width $T$.
    Then each cluster (of which there are at most $N$)
    can have at most $T+1$ variables, and so can take at most $V^T$ values.
    Therefore, $|\V \C| \le N V^{T+1}$.
    Since each arc must be entirely contained within some cluster,
    $|\V\!\Ar| \le A V^T$.
    So, $|\V\!\Ar + \V \C| \le (N+A) V^{T+1}$.

    By \cref{lem:smallgamma-polytime},
    we know that, for $\gamma \in (0, \min_a \frac{\beta_a}{\alpha_a}]$,
    a \actree\ $\epsilon$-close (in $\ell_2$ norm) to
    the one that represents the unique distribution in the
    $\zogamma$-semantics can be found in time
    in time
    \[
        O\Big(  (N+A)^4 V^{4T+4} \log \Big(V^{T+1}(N+A)\frac{1}{\epsilon}  \frac{\beta^{\max}}{\gamma}
         + \log \frac{1}{p^{\min}} \Big)  \Big).
    \]
    Similarly, by \cref{lem:cluster-inc-polytime,lem:cluster+idef-polytime}
    a \actree\ $\epsilon$-close to the one representing the $0^+$
    semantics can be found in time
    \begin{align*}
        O\Big( |\V\C + \V\!\Ar|^{4} \log \frac{\V\!\Ar+ \V\C}{\epsilon}\Big) +
        O\Big( |\V\C + \V\!\Ar|^{4} \log \frac{\V\!\Ar+ \V\C}{\epsilon}\frac{\beta^{\max}}{\beta^{\min}}\Big)
        \\
        \subseteq
        O\Big(  (N+A)^4 V^{4(T+1)} \log \Big( V^{T+1} (N+A)
            \frac{1}{\epsilon}  \frac{\beta^{\max}}{\beta^{\min}} \Big)
            \Big).
    \end{align*}
    Either way, a \actree\ $\epsilon$-close to the one that represents the $\zogamma$-semantics, for $\gamma \in \{0^+ \} \cup (0, \min_a \frac{\beta_a}{\alpha_a})$,
    can be found in
    \[
        O \bigg(|\V\!\Ar + \V \C|^{4}  \log \Big( \frac{|\V\!\Ar + \V \C|}{\epsilon} \frac{\beta^{\max}}{\beta^{\min}} + \log \frac1{p^{\min}} \Big) \bigg)
        ~\subseteq~
        \tilde O_{\text{BP}}\Big(|\V\!\Ar + \V \C|^{4}\log \frac{1}{\epsilon} \Big)
    \]
    arithmetic operations, each of which can be done in $O(k \log k)$ time. 
    
    If $\bbeta, \mathbb P$, and $\gamma$ are all binary numbers specified in $k$
    bits, then $\log_2 \frac{\beta^{\max}}{\beta^{\min}} \le 2k$ and $\log \log \frac{1}{p^{\min}} \le \log k + \log(2)$, 
    Thus, under these assumptions, such a \actree\ 
    can be found in 
    \begin{align*}
        O \bigg(|\V\!\Ar + \V \C|^{4}  \Big( \log \frac{|\V\!\Ar + \V \C|}{\epsilon} + k + \log k \Big) k \log k \bigg) 
    ~\subseteq~
        \tilde O \bigg(|\V\!\Ar + \V \C|^{4}  \log \Big( \frac{1}{\epsilon} \Big) k  \bigg) 
    \end{align*}
    time.    
    Finally, we prove part (b).  The $\infty$-norm is smaller than the $\ell_2$ norm,
    so if $\Vert \bmu - \bmu^* \Vert_2 < 2^{-(k+1)}$, then any change to $\bmu$ 
    of size $2^{-k}$ or larger will cause it to be further from $\bmu^*$.
    Thus, selecting $\epsilon = 2^{-(k+1)}$ produces the \actree\ of $k$-bit
    numbers that is closest to $\bmu^*$.  Plugging in this value of $\epsilon$, 
    we find that finding it takes 
    $
        \tilde O (|\V\!\Ar + \V \C|^{4} k^2 )
    $ time. 
\end{lproof}

\begin{lemma} \label{lem:logeps-conditioner}
    Let $k \ge 1$ be a fixed integer, and $\Phi$, $K_0$, $K_1, \ldots, K_k$ be parameters. 
     Given a procedure that
    produces $\epsilon$-approximate unconditional probabilities in 
    $O(\Phi \cdot (K_0 +  \sum_{i=1}^k K_i \log^i \frac1\epsilon) )$ time, 
    we can approximate conditional probabilities $\Pr(B|A)$ to within $\epsilon$ in 
    $
    O(\Phi  \cdot (K_0 \log \log \frac1{\Pr(A)} + \sum_{i=1}^k K_i \log^i \frac1{\epsilon \Pr(A)} ))
    $ 
    time.
\end{lemma}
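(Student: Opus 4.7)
\textbf{Proof plan for \cref{lem:logeps-conditioner}.} The plan is to reduce the conditional problem to two unconditional ones via the identity $\Pr(B\mid A) = \Pr(A,B)/\Pr(A)$, estimate each numerator and denominator, and divide. The crux is precision management: to get the ratio within $\epsilon$, the numerator and denominator must each be known to absolute precision $\delta = \Theta(\epsilon\Pr(A))$, but $\Pr(A)$ is unknown and could be tiny. I will split the computation into two stages engineered so that the $\epsilon$-independent cost $K_0$ is paid only $O(\log\log(1/\Pr(A)))$ times (not $O(\log\log(1/(\epsilon\Pr(A))))$), which is what the claimed bound demands.

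\textbf{Step 1 (error propagation).} Suppose we obtain $\tilde p, \tilde q$ with $|\tilde p - \Pr(A,B)|\le \delta$ and $|\tilde q - \Pr(A)|\le \delta$, and output $\tilde p/\tilde q$. A direct calculation gives
\[
\left|\frac{\tilde p}{\tilde q} - \frac{\Pr(A,B)}{\Pr(A)}\right|
\;\le\; \frac{2\delta}{\tilde q},
\]
so whenever $\tilde q \ge \Pr(A)/2$, the output is within $4\delta/\Pr(A)$ of the truth. It therefore suffices to arrange $\delta \le \epsilon\Pr(A)/4$.

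\textbf{Step 2 (two-phase algorithm).} In Phase 1, call the unconditional procedure at precision $\delta_j := 2^{-2^j}$ for $j=0,1,2,\ldots$ (squaring $1/\delta$ each round), stopping at the first $j=J$ for which $\tilde q_J \ge 8\,\delta_J$. The stopping criterion certifies $\Pr(A) \ge \tilde q_J - \delta_J \ge 7\delta_J$, and conversely $\delta_J \le \Pr(A)$ implies the criterion triggers; hence $J = O(\log\log\tfrac1{\Pr(A)})$. In Phase 2, make a single call at precision $\delta := \epsilon\tilde q_J/8$, which by Phase 1 satisfies $\delta \le \epsilon\Pr(A)/4$; return $\tilde p/\tilde q$ from this call. (If $\Pr(A)=0$, Phase 1 never terminates; in any application, the query is assumed well-defined, so $\Pr(A)>0$.)

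\textbf{Step 3 (complexity and main obstacle).} Phase 1 costs
\[
\sum_{j=0}^{J}\Phi\Big(K_0 + \sum_{i=1}^{k} K_i\, 2^{ij}\Big)
\;=\; O\!\left(\Phi\Big(K_0 J + \sum_{i=1}^{k} K_i\, 2^{iJ}\Big)\right)
\;=\; O\!\left(\Phi\Big(K_0 \log\log\tfrac1{\Pr(A)} + \sum_{i=1}^{k} K_i \log^i\tfrac1{\Pr(A)}\Big)\right),
\]
using geometric-series absorption of $\sum_j 2^{ij}$. Phase 2 costs $O(\Phi(K_0 + \sum_i K_i\log^i\tfrac1{\epsilon\Pr(A)}))$. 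Since $\log^i\tfrac1{\Pr(A)} \le \log^i\tfrac1{\epsilon\Pr(A)}$, the two sums combine to the stated bound. The main obstacle is precisely this asymmetric bookkeeping: a naive single-phase scheme that simply doubles $\log(1/\delta_j)$ all the way to $\delta = \epsilon\Pr(A)$ would incur $J' = O(\log\log\tfrac1{\epsilon\Pr(A)})$ iterations and hence the looser factor $K_0 \log\log\tfrac1{\epsilon\Pr(A)}$; separating the ``discovery of $\Pr(A)$'' (Phase 1, $\epsilon$-independent stopping) from the ``final refinement'' (Phase 2, a single call) is what enables charging $K_0$ only once at precision $\epsilon$, absorbed into its $K_0$ summand. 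Calibrating the constants in the stopping criterion so that Phase 1's termination implies Phase 2's precision suffices is routine but must be done carefully.
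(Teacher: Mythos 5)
Your proof is correct and follows essentially the same strategy as the paper's: repeatedly square the precision to discover the rough magnitude of $\Pr(A)$, then issue a single refined call whose precision is scaled by $\epsilon$ times the discovered estimate, dividing one unconditional estimate by the other. The only cosmetic differences are that the paper starts its loop at $\delta = \epsilon$ (which forces a separate treatment of the one-iteration case), and that the paper returns $q/(p + \delta^*)$ rather than $\tilde p/\tilde q$ to simplify the error bound, whereas you instead verify directly that $\tilde p/\tilde q$ is stable by exploiting $\Pr(A\cap B)\le\Pr(A)$.
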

\begin{lproof}
    Let $f$ be our algorithm for approximating unconditional probabilities.
    If $A$ is an event and $\epsilon > 0$, we write $f(A ; \epsilon)$
    for the corresponding approximation to $\Pr(A)$, which by definition satisfies
    \[
        \Pr(A) - \epsilon ~\le~ f(A; \delta) ~\le~ \Pr(A) + \epsilon.
    \]
    
    Now suppose that $A$ and $B$ are both events, and
    we want to find the conditional probability 
    $\Pr(B|A)$.
    To do so, we can run the following algorithm.     
    
    \rule{4in}{0.2ex}
    \begin{algorithmic}[1]
        \STATE $\delta \gets \epsilon$;
        \LOOP
            \STATE let $a \gets f(A; \delta)$;
            \smallskip
            \IF{$a > 2 \delta$}
                \STATE let $\delta^* \gets \epsilon (a - \delta) /3$;
                \STATE let $p \gets f(A ; \delta^*)$~~and~~ $q \gets f(A \cap B ; \delta^*)$;
                \STATE \textbf{return}~~$q ~/~ (p + \delta^*)$.
            \ELSE
                \STATE $\delta \gets \delta^2$;
            \ENDIF
        \ENDLOOP

    \end{algorithmic}
    \rule{4in}{0.2ex}
    
    \textbf{Proof of correctness.}
    We claim that the final output of the algorithm is within $\epsilon$ of the true conditional probability $\Pr(B|A)$.
    In the first iteration in which $a > 2\delta$ (line 4),
    we know that $\delta \le a - \delta \le \Pr(A)$.

    By assumption,
    \begin{align*}
        \Pr(A) - \delta^* ~\le~ p ~\le~ \Pr(A) + \delta^*
        \qquad\text{and}\qquad
        \Pr(A \cap B) - \delta^* ~\le~ q ~\le~ \Pr(A \cap B) + \delta^*,
    \end{align*}
    from which it follows that
    \begin{align*}
        \frac{\Pr(A \cap B) - \delta^*}{\Pr(A) + 2\delta^*}
            ~\le~ \frac{q}{p + \delta^*}
            ~\le~ \frac{\Pr(A \cap B) + \delta^*}{\Pr(A)}
        . \numberthis\label{eq:bound1}
    \end{align*}
    We now extend the bounds on $q/(p+\delta^*)$ in both directions, 
    starting with the upper bound. 
    Because $a - \delta \le \Pr(A)$, the RHS of \eqref{eq:bound1} is at most
    \[
        \frac{\Pr(A \cap B) + \delta^*}{\Pr(A)}
        ~=~ \Pr(B|A) + \frac{\delta^*}{\Pr(A)}
        ~=~ \Pr(B|A) + \frac{\epsilon}{3}\frac{(a-\delta)}{\Pr(A)}
        ~\le~ \Pr(B|A) + \frac{\epsilon}{3} \frac{\Pr(A)}{\Pr(A)}
        ~<~ \Pr(B|A) + \epsilon.
    \]
    The analysis of the lower bound (the LHS of \eqref{eq:bound1}) is slightly more complicated, but we still find that
    {\allowdisplaybreaks
    \begin{align*}
        \frac{\Pr(A \cap B) - \delta^*}{\Pr(A) + 2 \delta^*}
        &= \Pr(B|A) - \frac{\mu^*(x,y)}{\Pr(A)} + \frac{\Pr(A \cap B) - \delta^*}{\Pr(A) + 2 \delta^*}
        \\
        &= \Pr(B|A) + \frac{\Cancel{-\Pr(A)\, \Pr(A \cap B)} - 2 \delta^* \Pr(A \cap B) + \Cancel{\Pr(A) \Pr(A \cap B)} - \delta^* \Pr(A)}{\Pr(A) (\Pr(A) + 2 \delta^*)} \\
        &= \Pr(B|A) + \frac{ - 2 \delta^* \Pr(B|A) - \delta^* }{\Pr(A) + 2 \delta^*} \\
        &= \Pr(B|A) - \delta^* \Big( \frac{2 \Pr(B|A) + 1}{ \Pr(A) + 2 \delta^*} \Big) \\
        &\ge \Pr(B|A) - \delta^* \frac{3}{\Pr(A)+ \delta^*}
            \qquad  \Big[\text{ since $\Pr(B|A) \le 1$, and thus $- 2\Pr(B|A) \ge -2$ }\Big]\\
        & \ge \Pr(B|A) - \delta^* \frac{3}{\Pr(A)}
            \qquad \Big[ \text{ as eliminating $\delta^*$ makes this more negative }\Big] \\
        &= \Pr(B|A) - \frac{\epsilon(a-\delta)}{3} \frac{3}{\Pr(A)}
            \qquad\Big[ \text{ by definition of $\delta^*$ }\Big]\\
        &\ge \Pr(B|A) - \frac{\epsilon \Pr(A)}{\Pr(A)}
            \qquad\Big[ \text{ since $-(a-\delta) \ge - \Pr(A)$ }\Big]\\
        &= \Pr(B|A) - \epsilon.
    \end{align*}}
    These two arguments extend the bounds of \eqref{eq:bound1} in both directions.  Chaining all of these inequalities together, we have shown that our procedure returns a number $\mathtt{output}$ satisfying
    \[
    \Pr(B|A) - \epsilon ~\le~ \mathtt{output} ~\le~ \Pr(B|A) + \epsilon,
    \]
    and hence calculates the desired conditional probability to within $\epsilon$.
    
    \bigskip

    \textbf{Analysis of Runtime.}
    Let $m$ denote the total number of iterations of the algorithm.
    We deal with the simple case of $m=1$ separately. 
    If $m = 1$, then already in the first iteration
    $a > 2 \delta = 2 \epsilon$, so by definition $\delta^* > \frac13 \epsilon^3$.
    Line 6 is just two calls to the procedure, and takes
    \begin{equation}
    O\left( \Phi \Big(K_0 + \sum_{i=1}^k K_i \log^i \frac{1}{\delta^*}  \Big) \! \right)
    =
    O\left( \Phi \Big(K_0 +  \sum_{i=1}^k K_i \log^i \frac{3}{\epsilon^3} \Big) \!\right)
    \subseteq
    O\left( \Phi \Big(K_0 + \sum_{i=1}^k K_i \log^i \frac{1}{\epsilon} \Big) \!\right)
    ~\text{time}.
        \label{eq:cost-case1}
    \end{equation}

    Now consider the case where $m > 1$.
    Observe that, in the final iteration, $\delta = \epsilon^{2^{m-1}}$.
    The procedure halts when $a > 2 \delta$, and the smallest possible value of $a$ that our approximation can return is $\Pr(A) -\delta$.  Thus, the procedure must halt by the time $\Pr(A) > 3 \delta = 3 \epsilon^{2^{m-1}}$.
    On the other hand, since $m-1$ iterations are not enough to ensure termination, it must be that  $\Pr(A) - \delta' \le 2\delta'$,
    where $\delta' := \epsilon^{2^{m-2}}$ is the value of $\delta$ in the penultimate iteration.
    Together, these two facts give us a relationship between $m$ and $\Pr(A)$:
    \begin{align*}
        &&3 \epsilon^{2^{m-2}} ~&\ge&~ & \Pr(A) ~&&>~ 3 \epsilon^{2^{m-1}} \\
        &\iff\qquad&
        - \log_2 3 - 2^{m-2} \log_2 \epsilon ~&\le&~ - \log_2 & \Pr(A) ~&&<~ - \log_2 3 - 2^{m-1} \log_2 \epsilon \\
        &\iff\qquad&
        2^{m-2} ~&\le&~ \Big(\log_2 \frac3{\Pr(A)} \Big) & / \log_2(\nf1\epsilon) ~&&<~ 2^{m-1}
        .
            \numberthis\label{eq:bound2}
    \end{align*}
    In particular, the first inequality tells us that the number of required iterations is at most
    \[
        m \le 2 + \log_2 \log_2 \frac3{\Pr(A)} - \log_2 \log_2 \frac1\epsilon
            \quad = 2 + \log_2 \log_\epsilon \frac{\Pr(A)}{3}
            .
    \]

    Across all iterations, the total cost of line 3 is on the order of
    {\allowdisplaybreaks\begin{align*}
        &m \Phi K -  \Phi \sum_{i=1}^k K_i \sum_{j=1}^{m}  \log^i (\epsilon^{2^{j-1}}) \\
        &= m \Phi K -  \Phi \sum_{i=1}^k K_i \log^i (\epsilon) \sum_{j=0}^{m-1} 2^{kj} \\
        &= m \Phi K +  \Phi \sum_{i=1}^k K_i \log^i \frac1\epsilon\, \, 
            \frac{2^{im} - 1}{2^i -1}
             \\
        &< \Big(\log \log \frac3{\Pr(A)} - \log \log \frac1\epsilon \Big) \Phi K_0  + \Phi \,  
            \sum_{i=1}^k K_i \Cancel{\log^i (\nf1\epsilon)} \cdot \left[ 
            4^i \Big(\log^i \frac3{\Pr(A)} \Big) ~/~ \Cancel{\log^i(\nf1\epsilon)}
            \right] / (2^i-1) \\
        &\le \Phi K_0 \log \log \frac3{\Pr(A)}   + \Phi \sum_{i=1}^k K_i \frac{4^i}{2^i-1} \log^i \frac3{\Pr(A)} 
            \\
        &\subseteq O \Big( \Phi \cdot  \Big(K_0 \log \log \frac1{\Pr(A)} + \sum_{i=1}^k K_i \log^i \frac1{\Pr(A)}\Big)\Big)
        .            
            \numberthis
            \label{eq:cost-line4}
    \end{align*}}

    Line 6 is the last part of the procedure that
    incurs a nontrivial cost.  The procedure executes it one time, in the final iteration.
    Because $a > 2 \delta$ at this point, we know that
    \[
        \delta^*
        ~=~ \frac{\epsilon}3(a-\delta)
        ~>~  \frac\epsilon3\delta
        ~=~  \frac \epsilon3 \epsilon^{2^{m-1}}
        ~=~  \frac\epsilon3 \frac99 \epsilon^{2(2^{m-2})}
        ~=~ \frac \epsilon{27} \Big( 3 \epsilon^{2^{m-2}}\Big)^2
        ~\ge~ \frac\epsilon{27} \Pr(A)^2.
    \]
    Thus line 6 requires time
    \begin{align*}
        &O\Big(\Phi \cdot \Big( K_0 + \sum_{i=1}^k K_i \log^i \frac{27}{\Pr(A)^2\epsilon} \Big)\Big) 
        \subseteq
            O\Big(\Phi K_0  + \Phi \sum_{i=1}^k K_i \Big(\log \frac{1}{\Pr(A)} + \log \frac1\epsilon \Big)^{\!i\,}\Big) 
        \numberthis\label{eq:cost-line8}.
    \end{align*}
    Summarizing, the total running time is (at most) the sum of \eqref{eq:cost-case1}, \eqref{eq:cost-line4}, and \eqref{eq:cost-line8}, 
    or explicitly,
    \begin{align*}
        O \Big( \Phi \cdot \Big(
            K \log \log \frac1 {\Pr(A)} + \sum_{i=1}^k K_i \log^i \frac1{\epsilon \Pr(A)} %
        \Big)\Big)
        .
    \end{align*}
\end{lproof}

\recall{theorem:approx-infer}
\begin{lproof}  \label{proof:approx-infer}
    \Cref{theorem:main} gives us an approximation to a calibrated
    \actree\ that represents the distribution of interest,
    and \cref{lem:logeps-conditioner} allows us to approximate conditional probabilities
    once we can approximate unconditional ones.     
    The final ingredient is tto approximate unconditional
    probabilities using an approximate \actree. 
    
    Concretely, suppose that we are
    looking to find $\mu^*(X{=}x)$, where $\mu^* \in \bbr{\dg M}_\gamma$. 
    Once we have a \cactree\ $\bmu$ that represents $\mu^*$, 
    calcluating a marginal $\mu^*(X{=}x)$ (exactly) from $\bmu$ 
    can be done with standard methods \parencite[][\S 10.3.3]{koller2009probabilistic}.
    In the worst case, it requires taking a marginal of every cluster,
    which can be done in $O(|\V\C|) \subseteq O(N V^{T+1})$ arithmetic operations. 
    
    The wrinkle is that $\bmu$ only \emph{approximately} represents $\mu^*$, 
        in the sense that there is some $\bmu^*$ that does represent $\mu^*$ such that the L2 norm of $\bmu^* - \bmu$ is small. 
    As usual, we write $\bmu_C$ for the components of $\bmu$ that
        are associated with cluster $C$.
    For each $C \in \C$, let $E_{C}$ denote the event that 
    $(X \cap C) = x|_{C}$. 
    That is, the variables of $X$ that lie in cluster $C$ take the values
    prescribed by $x$. Then    
    \begin{align*}
        \big| \Pr\nolimits_{\!\bmu}(X{=}x) - \Pr\nolimits_{\!\bmu^*}(X{=}x) \big|
        ~\le~ \sum_{C \in \C} \big|
            \Pr\nolimits_{\!\bmu^*_C}(E_C) - 
            \Pr\nolimits_{\!\bmu_C}(E_C)
        \big|
        ~\le~
        \sum_{C \in \C} \big\Vert \bmu^*_C - \bmu_C \big\Vert_1
        ~=~
        \big\Vert \bmu^* - \bmu \big\Vert_1
        ~.
    \end{align*}

    Applying the L2-L1 norm inequality to the vector $\bmu - \bmu^*$, 
    we find
    \begin{align*}
        \big\Vert \bmu - \bmu^* \big\Vert_1
        \le 
        \big\Vert \bmu - \bmu^* \big\Vert_2 \sqrt{|\V\C|}
        \le \sqrt{N V^{T+1}} \big\Vert \bmu - \bmu^* \big\Vert_2.
    \end{align*}
    Thus, to answer unconditional queries about $X$ within (absolute) precision
    $\epsilon$, it suffices to find a \actree\ within $\epsilon / \sqrt{N V^{T+1}}$
    of $\bmu^*$ by L2 norm. 
    
    From the proof of \cref{theorem:main}, we know that we can find
    such a $\bmu$ 
    in
    \begin{align*}
        O\bigg( 
            (N{+}A)^4 V^{4(T+1)}
            \log \Big( \frac{(N{+}A)^4 V^{4(T+1)}\cdot N^{\frac12} V^{\frac{T+1}{2}}}{\epsilon} \frac{\beta^{\max}}{\beta^{\min}} + \log \frac1{p^{\min}} \Big) 
        \bigg) \\
        ~\subseteq~
        \tilde O\bigg( 
            (N{+}A)^4 V^{4(T+1)}
            \Big(
            \log \frac1\epsilon + \log \frac{\beta^{\max}}{\beta^{\min}}\Big) 
        \bigg) \\
        ~\subseteq~
        \tilde O_{\text{BP}}\Big(|\V\!\Ar + \V \C|^{4}\log \frac{1}{\epsilon} \Big)
    \end{align*}
    arithmetic operations,
    which dominates the number of operations required to then find the marginal probability $\Pr_{\!\bmu}(X{=}x)$ given the \actree\ $\bmu$.
    Thus, the complexity of finding unconditional probabilities is the same. 
    The arithmetic operations need to be done to precision at most $k \in O(\log\nf1\epsilon)$, and can be done in time $O(k\log k)$. 
    Thus, unconditional inference can be done in 
    \begin{align*}
        O\bigg( 
            (N{+}A)^{4.5} V^{4.5(T+1)}
            \log \Big( \frac{(N{+}A)^4 V^{4(T+1)}\cdot N^{\frac12} V^{\frac{T+1}{2}}}{\epsilon} \frac{\beta^{\max}}{\beta^{\min}} + \log \frac1{p^{\min}} \Big) 
            \log\frac{1}{\epsilon} \log\log\frac{1}{\epsilon}
        \bigg) \\
        ~\subseteq~
        \tilde O\bigg( 
            (N{+}A)^4 V^{4(T+1)}
            \Big(
            \log \frac1\epsilon + \log \frac{\beta^{\max}}{\beta^{\min}}\Big) 
            \log \frac1\epsilon
        \bigg)
        \quad\text{time}. 
    \end{align*}
    
    \def\mustar{\mu^{\mskip-2mu*\!}}
    Now that we have characterized the cost of unconditional inference, we can apply \cref{lem:logeps-conditioner} with $\Phi := (N+A)^4 V^{4(T+1)}$,
    $k = 2$,
    $K_0 = 0$, 
    $K_1 := \log \Phi + \log \frac{\beta^{\max}}{\beta^{\min}} + \log \log \frac1{p^{\min}}$, 
    and
    $K_2 = 1$
    to find that conditional probabilities can be found in
    \begin{align*}
        \tilde O \bigg(\! (N{+}A)^4\,V^{4(T+1)} 
        \log \frac1{\epsilon\,\mustar(x)}
        \Big(
              \log \frac{\beta^{\max}}{\beta^{\min}} 
               + \log \frac1{\epsilon\, \mustar(x)} 
         \Big)\bigg) 
        \quad \text{time},
    \end{align*}
    where $\mustar(x)$ is shorthand for
    $\mu^*(X{=}x)$. 
\end{lproof}

\subsection{Hardness Results and Reductions}
    \label{proofs:hardness-results}

We now turn to \cref{theorem:consistent-NP-hard}. We begin by proving
parts (a) and (b) directly
by reduction to SAT and \#SAT, respectively.

\recall{theorem:consistent-NP-hard}
\begin{lproof} \label{proof:consistent-NP-hard}
    \textbf{(a).}
	We can directly encode SAT problems in PDGs.
	Choose any CNF formula
	$$\varphi = \bigwedge_{j \in \mathcal J} \bigvee_{i \in \mathcal I(j)} (X_{j,i})$$
	over binary variables $\mat X := \bigcup_{j,i} X_{j,i}$,
    and let $n := |\mat X|$ denote the total number of variables in $\varphi$.
    Let
	$\dg M_\varphi$ be the PDG containing every variable $X \in \mat X$ and a binary
	variable $C_j$ (taking the value 0 or 1) for each clause $j \in \mathcal J$, as well as the following edges, for each $j \in \mathcal J$:
	\begin{itemize}
		\item a hyperedge $\{X_{j,i} : i \in \mathcal I(j)\} \tto C_j$, together with a degenerate cpd
			encoding the boolean OR function (i.e., the truth of $C_j$ given $\{X_{j,i}\}$);
		\item an edge $\pdgunit \tto C_j$, together with a cpd asserting $C_j$ be equal to 1.
	\end{itemize}
	First, note that the number of nodes, edges, and non-zero entries in the cpds are polynomial in the $|\mathcal J|, |\mat X|$, and the total number of parameters in a simple matrix representation of the cpds is also polynomial if $\mathcal I$ is bounded (e.g., if $\varphi$ is a 3-CNF formula).
	A satisfying assignment $\mat x \models \varphi$ of the variables $\mat X$ can be regarded as a degenerate joint distribution $\delta_{\mat X = \mat x}$ on $\mat X$, and extends uniquely to a full joint distribution $\mu_{\mat x} \in \Delta \V(\dg M_\varphi)$ consistent with all of the edges, by
	\[ \mu_{\mat x} = \delta_{\mat x} \otimes \delta_{\{C_j = \vee_i  x_{j,i}\}} \]

 	Conversely, if $\mu$ is a joint distribution consistent with the edges above, then any point $\mat x$ in the support of $\mu(\mat X)$ must be a satisfying assignment, since the two classes of edges respectively ensure that $1 =\mu(C_j\!=\! 1 \mid \mat X \!=\! \mat x) = \bigvee_{i \in \mathcal I(j)} \mat x_{j,i}$ for all $j \in \mathcal J$, and so $\mat x \models \varphi$.

	Thus, $\SD{\dg M_\varphi} \ne \emptyset$ if and only if $\varphi$ is satisfiable, so
	an algorithm for determining if a PDG is consistent can also be adapted (in polynomial space and time) for use as a SAT solver, and so the problem of determining if a PDG consistent is NP-hard.

    \medskip\hrule\smallskip

	\textbf{(b) Hardness of exact computation.}
    We prove this by reduction to \#SAT. Again, let $\varphi$ be some CNF formula over $\mat X$, and construct
	$\dg M_\varphi$ as in \hyperref[proof:consistent-NP-hard]{the proof} of
	\Cref{theorem:consistent-NP-hard}.
	Furthemore, let $\bbr{\varphi} := \{ \mat x : \mat x \models \varphi \}$ be the set of  assingments to $\mat X$ satisfying $\varphi$, and $\#_\varphi := |\bbr{\varphi}|$ denote the number such assignments. We now claim that
	\begin{equation}\label{eqn:number-of-solns}
		\#_\varphi = \exp \left[- \frac1\gamma \aar{ \dg M_\varphi }_\gamma \right].
	\end{equation}
 	Once we do so, we will have a reduced the \#P-hard problem of
    computing
    $\#_\varphi$ to the problem of
    computing
    $\aar{\dg M}_\gamma$ (exactly).

    We now prove \eqref{eqn:number-of-solns}.
	By definition, we have
	\[ \aar{\dg M_\varphi}_\gamma = \inf_\mu \Big[ \OInc_{\dg M_\varphi}(\mu) + \gamma \SInc_{\dg M_\varphi}(\mu) \Big]. \]
	We start with a claim about first term.

	\begin{iclaim} \label{claim:separate-inc-varphi}
		$\OInc_{\dg M_\varphi}\!(\mu) =
		\begin{cases}
			0 & \text{if}~  \supp \mu \subseteq \bbr{\varphi} \times \{ \mat 1\} \\
			\infty & \text{otherwise.}
		\end{cases}$
	\end{iclaim}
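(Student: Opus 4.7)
My plan is to prove the claim by analyzing each of the two families of edges appearing in $\dg M_\varphi$ separately, using the fact that $\kldiv{\mu}{p}$ is $0$ when $\mu=p$ on the support of $\mu$, and $+\infty$ whenever $\mu$ puts positive mass on a point where $p$ vanishes. Since $\OInc_{\dg M_\varphi}(\mu)$ is a non-negative weighted sum of such terms, it is zero if and only if every term is zero, and it is $+\infty$ as soon as any single term is infinite. Without loss of generality I will take all $\beta_a = 1$; the argument is unaffected by rescaling.

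First I would handle the deterministic ``unit'' edges $\pdgunit \to C_j$ whose cpd is the point mass $\delta_{C_j=1}$. The corresponding KL term is $\kldiv{\mu(C_j)}{\delta_{C_j=1}}$, which is $0$ when $\mu(C_j{=}1)=1$ and $+\infty$ otherwise. So finiteness of $\OInc$ forces $\supp\mu \subseteq \V\!\mat X \times \{\mat 1\}$ in the $C_j$ coordinates. Next, for each clause-OR edge $\{X_{j,i}\}_i \to C_j$, the cpd $\p_j(C_j \mid \mat X_{j,\cdot})$ is the deterministic indicator of $C_j = \bigvee_i X_{j,i}$. Thus the KL term $\kldiv{\mu(C_j,\mat X_{j,\cdot})}{\mu(\mat X_{j,\cdot})\,\p_j(C_j\mid \mat X_{j,\cdot})}$ is $+\infty$ unless, on every point $(\mat x, c)$ in the support of $\mu$, the value $c$ of $C_j$ matches $\bigvee_i x_{j,i}$; and when it does match, the conditional distribution of $C_j$ given $\mat X_{j,\cdot}$ under $\mu$ coincides with the deterministic cpd on the support, so the KL term vanishes.

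Putting the two together: $\OInc_{\dg M_\varphi}(\mu) < \infty$ iff for every $j$, both $\mu(C_j{=}1)=1$ and $C_j = \bigvee_i X_{j,i}$ $\mu$-almost surely. Combining these forces $\bigvee_i X_{j,i} = 1$ $\mu$-almost surely for every clause $j$, which is precisely the condition $\supp\mu(\mat X)\subseteq\bbr{\varphi}$ together with $\supp\mu(C_j)\subseteq\{1\}$ for every $j$; equivalently, $\supp\mu \subseteq \bbr{\varphi}\times\{\mat 1\}$. In this case every KL summand is identically $0$, so $\OInc_{\dg M_\varphi}(\mu)=0$.

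The only subtlety worth spelling out is the treatment of the conditional KL when $\mu(\mat X_{j,\cdot}{=}\mat x_{j,\cdot})$ vanishes at some point: by the usual convention $0\log\tfrac00=0$, such points contribute nothing, so genuine infinities arise only from points that $\mu$ actually sees but that the cpd forbids. With that convention made explicit, the case split above is exhaustive and the claim follows.
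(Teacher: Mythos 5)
Your proof is correct and follows essentially the same route as the paper's: both expand $\OInc_{\dg M_\varphi}$ as a sum over clauses of two relative-entropy terms (one for the $\pdgunit\to C_j$ edge, one for the OR edge), use that KL against a point mass is $0$ or $\infty$, and conclude that finiteness forces $C_j=1$ and $C_j=\bigvee_i X_{j,i}$ $\mu$-a.s., i.e.\ $\supp\mu\subseteq\bbr{\varphi}\times\{\mat 1\}$. The only (minor) presentational difference is that you treat the two edge families symmetrically rather than nesting the second case inside the first, and you are more explicit about the $0\log\frac00=0$ convention — neither of which changes the substance.
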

	\vspace{-1em}
	\begin{lproof}
		Writing out the definition explicitly, the first can be written as
		\begin{equation}
			\OInc_{\dg M_\varphi}\!(\mu) = \sum_{j} \left[ \kldiv[\Big]{\mu(C_j)}{\delta_1} +
				\Ex_{\mat x \sim \mu(\mat X_j)} \kldiv[\Big]{\mu(C_j \mid \mat X_j = \mat x)}{\delta_{\lor_i \mat x_{j,i}}} \right], \label{eqn:explicit-INC-Mvarphi}
		\end{equation}
		where $\mat X_j = \{X_{ij} : j \in \mathcal I(j)\}$ is the set of variables that
		appear in clause $j$, and $\delta_{(-)}$ is the probability distribution placing all mass on the point indicated by its subscript.
		As a reminder, the relative entropy is given by
		\[ \kldiv[\Big]{\mu(\Omega)}{\nu(\Omega)} := \Ex_{\omega \sim \mu} \log \frac{\mu(\omega)}{\nu(\omega)},
		\quad\parbox{1.4in}{\centering and in particular, \\ if $\Omega$ is binary,}\quad
			\kldiv[\big]{\mu(\Omega)}{\delta_\omega} = \begin{cases}
				0 &  \text{if}~\mu(\omega) = 1 ; \\
				\infty & \text{otherwise}.
		\end{cases} \]
		Applying this to \eqref{eqn:explicit-INC-Mvarphi}, we find that either:
		\begin{enumerate}[itemsep=0pt]
			\item Every term of \eqref{eqn:explicit-INC-Mvarphi} is finite (and zero) so $\OInc_{\dg M_\varphi}(\mu) = 0$, which happens when $\mu(C_j = 1) = 1$ and $\mu(C_j = \vee_i~ x_{j,i}) = 1$ for all $j$.  In this case, $\mat c = \mat 1 = \{ \vee_i~x_{j,i} \}_j$ so $\mat x \models \varphi$ for every $(\mat{c,x}) \in \supp \mu$;
			\item Some term of \eqref{eqn:explicit-INC-Mvarphi} is infinite, so that $\OInc_{\dg M_\varphi}(\mu) = \infty$, which happens if some $j$, either

			\begin{enumerate}
				\item $\mu(C_j \ne 1) > 0$ --- in which case there is some $(\mat{x,c}) \in \supp \mu$ with $\mat c \ne 1$, or
				\item $\supp \mu(\mat C) = \{\mat 1\}$, but $\mu(C_j \ne \vee_i~ x_{j,i}) > 0$ --- in which case there is some $(\mat{x,1}) \in \supp \mu$ for which $1 = c_j \ne \vee_i~x_{j,i}\;$, and so $\mat x \not\models \varphi$.
			\end{enumerate}
		\end{enumerate}
		Condensing and rearranging slightly, we have shown that
		\[
			\OInc_{\dg M_\varphi}(\mu) =
			\begin{cases}
				0 & \text{if}~  \mat x \models \varphi~\text{and}~\mat c = \mat 1
				 	~\text{for all}~(\mat x, \mat c) \in \supp \mu\\
				\infty & \text{otherwise}
			\end{cases}~.
		\]
	\end{lproof}

	Because $\SInc_{}$ is bounded, it follows immediately that
 	$\aar{\dg M_\varphi}_\gamma$, is finite if and only if
	there is some distribution $\mu \in \Delta\V(\mat X,\mat C)$ for which $\OInc_{\dg M_\varphi}(\mu)$ is finite, or equivalently, by \Cref{claim:separate-inc-varphi}, iff there exists some $\mu(\mat X) \in \Delta \V(\mat X)$ for which $\supp \mu(\mat X) \subseteq \bbr{\varphi}$, which in turn is true if and only if $\varphi$ is satisfiable.

	In particular, if $\varphi$ is not satisfiable (i.e., $\#_\varphi = 0$), then $\aar{\dg M_\varphi}_\gamma = +\infty$, and
	\[
		\exp \left[ -\frac1\gamma \aar{\dg M_\varphi}_\gamma \right] =
	 		\exp [ - \infty ] = 0 = \#_\varphi,
	\]
	so in this case \eqref{eqn:number-of-solns} holds as promised. On the other hand, if $\varphi$ \emph{is} satisfiable, then, again by \Cref{claim:separate-inc-varphi}, every $\mu$ minimizing $\bbr{\dg M_\varphi}_\gamma$, (i.e., every $\mu \in \bbr{\dg M_\varphi}_\gamma^*$) must be supported entirely on $\bbr{\varphi}$ and have $\OInc_{\dg M_\varphi}\!(\mu) = 0$.  As a result, we have
	\[
		\aar{\dg M_\varphi}_\gamma =
			\inf\nolimits_{\mu \in \Delta \big[\bbr{\varphi} \times \{\mat 1\}\big]} \gamma\; \SInc_{\dg M_\varphi}(\mu) .
	\]
	A priori, by the definition of $\SInc_{\dg M_\varphi}$, we have
	\[
		\SInc_{\dg M_\varphi}(\mu) =
		 	- \H(\mu) + \sum_{j} \Big[ \alpha_{j,1} \H_\mu(C_j \mid \mat X_j)
						+ \alpha_{j,0} \H_\mu(C_j) \Big],
	\]
	where $\alpha_{j,0}$ and $\alpha_{j,1}$ are values of $\alpha$ for the edges of $\dg M_\varphi$, which we have not specified because they are rendered irrelevant by the fact that their corresponding cpds are deterministic. We now show how this plays out in the present case.
	Any $\mu \in \Delta\big[\bbr{\varphi} \times \{\mat 1\}\big]$ we consider has a degenerate marginal on $\mat C$. Specifcally, for every $j$, we have $\mu(C_j) = \delta_1$, and since entropy is non-negative and never increased by conditioning,
	$$
		0 \le \H_\mu(C_j \mid \mat X_j) \le \H_\mu(C_j) = 0.
	$$
	Therefore, $\SInc_{\dg M_\varphi}(\mu)$ reduces to the negative entropy of $\mu$.
	Finally, making use of the fact that the maximum entropy distribution $\mu^*$ supported on a finite set $S$ is the uniform distribution on $S$, and has $\H(\mu^*) = \log | S |$, we have
	\begin{align*}
		\aar{\dg M_\varphi}_\gamma &= \inf\nolimits_{\mu \in \Delta \big(\bbr{\varphi} \times \{\mat 1\}\big)} \gamma\; \SInc_{\dg M_\varphi}(\mu) \\
			&= \inf\nolimits_{\mu \in \Delta \big(\bbr{\varphi} \times \{\mat 1\}\big)} -\, \gamma\, \H(\mu) \\
			&= - \gamma\, \sup\nolimits_{\mu \in \Delta \big(\bbr{\varphi} \times \{\mat 1\}\big)}  \H(\mu) \\
			&= - \gamma\, \log (\#_\varphi),
	\end{align*}
	\hspace{1in}giving us
	$$
		\#_\varphi = \exp \left[- \frac1\gamma \aar{ \dg M_\varphi }_\gamma \right],
	$$
	as desired. We have now reduced \#SAT to computing $\aar{\dg M}_\gamma$, for $\gamma > 0$ and an arbitrary PDG $\dg M$, which is therefore \#P-hard.

    To show the same for $\gamma = 0$, it suffices to add an additional hyperedge pointing to all variables, and associate it with a joint uniform distribution, and confidence 1, resulting in a new PDG $\dg M_\varphi'$.
    Because this new edge's contribution to $\OInc_{\dg M}$
    equals $\kldiv{\mu}{\mathsf{Unif}(\X)} = \log |\V\!\X| - \H(\mu)$,
    we have
    \[
        \bbr{\dg M_\varphi'}_0(\mu)
            = \OInc_{\dg M_\varphi'}(\mu)
            = \bbr{\dg M_\varphi}(\mu) + \log | \V\!\X | - \H(\mu)
            = \bbr{\dg M_\varphi}_{1}(\mu)
             + \log |\V\!\X |.
    \]
    Since this is true for all $\mu$,
    we can take the of this equation over $\mu$,
    and so conclude that
    \begin{align*}
        \aar{\dg M_\varphi'}_0 = \aar{\dg M_\varphi}_1 + \log |\V\!\X|
            = \log \big( |\V\!\X| / \#_\varphi \big) \\
            \implies \qquad \#_\varphi = |\V\!\X| \exp( -\aar{\dg M'_\varphi}_0 )
    \end{align*}
    Thus, the number of satisfying assignments can be found through
    via an oracle for $\aar{-}_0$, as well.  This shows that calculating
    this purely observational inconsistency is \#P-hard as well.

    \medskip\hrule\smallskip

    \textbf{(c) Hardness of approximation.}
    To calculate $\#_\varphi$ exactly, it turns out that 
    we do not need to know $\aar{\dg M_\varphi}_\gamma$ exactly.
    Instead, we claim it suffices to approximate it to within
    $\epsilon <  \gamma \log(1 + 2^{-(n+1)})$.

    Suppose that $|r - \aar{\dg M_\varphi}_\gamma| < \epsilon$.
    Then
    \begin{align*}
    \exp \Big( -\frac r \gamma \Big) & \in
    \exp \Big[ - \frac1\gamma \Big(\aar[\big]{ \dg M_\varphi }_\gamma \pm \epsilon\Big) \Big]
    \\
    &= \exp \Big[ - \frac1\gamma \aar[\big]{ \dg M_\varphi }_\gamma \Big] \cdot \exp(\pm \nf\epsilon\gamma)
    \\
    &= \#_\varphi \cdot \exp ( \pm \nf\epsilon\gamma )
    \\
    &= [\#_\varphi \exp ( - \nf\epsilon\gamma ),~
        \#_\varphi \exp ( + \nf\epsilon\gamma )].
    \end{align*}

    Since $\#_\varphi$ is a natural number and at most $2^n$,
    If we can get a relative approximation of it to
    within a factor of $2^{-(n+1)}$, then rounding that approximate value
    to the nearest whole number gives the exact value of $\#_\varphi$.
    Thus, it suffices to choose $\epsilon$ small enough that
    \[
        \exp(-\nf \epsilon\gamma) > 1 - 2^{-(n+1)}
        \qquad\text{and}\qquad
        \exp(+\nf \epsilon\gamma) < 1 + 2^{-(n+1)};
    \]
    this is satisfied any choice of $\epsilon < \gamma \log(1 + 2^{-(n+1)})$.
    Thus, being able to approximate $\aar{\dg M_\varphi}_\gamma$ sufficiently closely
    will tell us whether or not $\varphi \in $\textsf{SAT}.
    Note that for large $n$, the maximum value of $\epsilon$ for which this is true
    is on the order of $\epsilon_{\max} \in \Theta( \gamma 2^{-n})$.
    It follows that $\log(\nf1{\epsilon_{\max}}) \in O(n)$, and so
    values of $\epsilon$ small enough to determine the satisfiability of
    a formula $\varphi$ with $n$ variables can be specified in time $O(n)$.
    Thus, the problem \ApproxPDGInc\ is \#P hard (in the size of its input).
\end{lproof}

\paragraph{Inference via Inconsistency Minimization.}
We now address
\cref{theorem:inf-via-inc-oracle},
which is closely related to \textcite{pdg-aaai}'s original idea for an inference
algorithm.  While that idea does not yield an efficient inference algorithm,
it does yield an efficient reduction from inconsistency minimization to inference.
In order to prove this, we first need another construction with PDGs.
A probability over a (set of) variables can be viewed as a vector whose
    elements sum to one.
It turns out that it is possible to use the machinery of PDGs
    to, effectively, give only one value of such a probability vector.
That is, for any $p \in [0,1]$, we can construct a PDG
    that represents the belief that $\Pr(Y{=}y) = p$, but say nothing about
    how the probability splits between other values of $y$.
We now describe that construction.

\def\Yy{Y{=}y}
\def\Yyshort{{Y_y}}

We first introduce an auxiliary binary variable
$\Yyshort$, with $\V(\Yyshort) = \{y, \lnot y\}$,
    and takes the value $y$ if $Y=y$, and $\lnot y$ if $Y \ne y$.
Note that this variable is a function of the value of variable $Y$
(although we will need to enforce this with an additional arc), and
therefore there is a unique way to extend a distribution over
variables including $Y$ to also include the variable $\Yyshort$.

With this definition, there is now an obvious way to add a hyperarc with no source and
    target $\Yyshort$, together with a asserting that $\Pr(Y{=}y)=p$.
This cpd is written as a vector $\hat p$ on the right of the figure below.
The PDG we have just constructed is illustrated on the left of the figure below.
In addition to $\hat p$ and the new variable, this PDG
    includes the structural constraint $s$ needed to define the variables
    $\Yyshort$ in terms of $Y$; it is a deterministic function,
    drawn with a double-headed gray arrow.

\begin{center}
    \begin{tikzpicture}[center base]
        \node[tpt={y1|$y$}] at (0, 1.5){};
        \node[tpt={y2|$\lnot y$},right=0.5 of y1]{};
        \node[%
            Dom={$\Yyshort$[label distance=-1.4em, xshift=1.1em] (Yy)
            around {\lab{y1}\lab{y2}(0,1.7)}} ] {};

        \node[dpadded,below=1.0 of Yy](Y){$Y$};
        \draw[black!35!proofmatt, arr2, ->>] (Y) to node[below right]{$s$} (Yy);

        \draw[arr2, <-] (Yy) to node[above]{$\hat p$} +(-2, 0);
    \end{tikzpicture}
    \hspace{1cm}
    \begin{minipage}{0.3\textwidth}
        \begin{align*}
            s(\Yyshort|Y) &:= \begin{cases}
                y & \text{if $Y=y$} \\
                \lnot y & \text{if $Y\ne y$} \\
            \end{cases}\\[1ex]
            \hat p(\Yyshort) &:= ~~\begin{matrix}
                  \begin{matrix} y & \lnot y \end{matrix} \\
                    \begin{bmatrix}
                        \;p & 1-p \;
                    \end{bmatrix}
            \end{matrix}
        \end{align*}
    \end{minipage}
\end{center}
    \medskip

So, when we add $\Pr(Y=y) = p$ to a PDG $\dg M$, what we really mean is:
first convert construct a widget as above, and add that structure (i.e., the new variable $\Yyshort$, its definition $s$, and the cpd $\hat p$) to $\dg M$.
In what sense does this ``work''?
The first order of business
is to prove that it behaves as we should expect,
semantically, in the case we're interested in.

\begin{lemma}\label{lem:inc-inc-eq}
    Suppose that $\dg M$ is a PDG with variables $\X$ and $\bbeta \ge \mat 0$.
    Then, for all $Y \subseteq \X$, $y \in \V Y$, $p \in [0,1]$ and $\gamma \ge 0$,
    we have that:
    \[
        \aar[\Big]{\dg M + ~\Pr(Y{=}y) = p}_\gamma \ge \aar{\dg M}_\gamma,
    \]
    with equality if and only if there exists $\mu \in \bbr{\dg M}^*_\gamma$
    such that $\mu(Y{=}y) = p$.
\end{lemma}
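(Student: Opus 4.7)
The plan is to reduce the inconsistency of the augmented PDG to an optimization over $\mu \in \Delta\V\!\X$ alone, plus a penalty term measuring how far $\mu(Y{=}y)$ is from $p$. I will assume the natural widget weights: $\alpha=0$ on both new arcs (the assertion $\Pr(Y{=}y)=p$ is observational, and $s$ encodes a deterministic dependency), $\beta_{\hat p} > 0$ an arbitrary positive observational confidence, and $\beta_s$ large enough to make $s$ a hard constraint.

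The key computation is the following. Any $\mu \in \Delta\V\!\X$ lifts uniquely to a joint $\tilde\mu$ over $\X \cup \{Y_y\}$ in which $Y_y$ is the deterministic indicator of $Y{=}y$. Under this lift, $\H(\tilde\mu)=\H(\mu)$ since $Y_y$ carries no new information given $Y$; the arc $s$ contributes zero to both $\OInc$ (its cpd matches $\tilde\mu(Y_y|Y)$ exactly) and to $\SInc$ (since $\H_{\tilde\mu}(Y_y|Y)=0$); and the arc $\hat p$ contributes $\beta_{\hat p}$ times the binary relative entropy between $\mu(Y{=}y)$ and $p$---call it $d(\mu(Y{=}y)\|p)$---which is nonnegative and vanishes iff $\mu(Y{=}y)=p$. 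All other terms of the scoring function depend only on the marginal $\mu$ on $\X$, so summing yields
\[
    \bbr{\dg M + \Pr(Y{=}y){=}p}_\gamma(\tilde\mu)
    = \bbr{\dg M}_\gamma(\mu) + \beta_{\hat p}\, d\!\big(\mu(Y{=}y)\,\big\|\, p\big).
\]
Next, I would show that restricting the infimum on the left to such lifts $\tilde\mu$ is without loss of generality. For $\gamma > 0$ this follows from the PDG Markov property (\cref{theorem:markov-property}) applied to the decomposition of the augmented PDG as two sub-PDGs sharing only the variable $Y$: $Y_y$ is conditionally independent of $\X \setminus \{Y\}$ given $Y$ in any optimizer, which combined with the hard constraint from $s$ forces the optimizer to have the form $\tilde\mu$. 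For $\gamma = 0$ only $\OInc$ is nonzero; the hard constraint from $s$ restricts feasible $\mu'$ to lifts, and the $\dg M$-terms depend only on the $\X$-marginal, so the same reduction holds directly.

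Combining gives $\aar{\dg M + \Pr(Y{=}y){=}p}_\gamma = \inf_\mu [\bbr{\dg M}_\gamma(\mu) + \beta_{\hat p}\, d(\mu(Y{=}y)\|p)]$, which is at least $\aar{\dg M}_\gamma$ by nonnegativity of $d$, proving the inequality. For equality: the joint infimum equals $\aar{\dg M}_\gamma$ iff both $\bbr{\dg M}_\gamma(\mu) = \aar{\dg M}_\gamma$ and $d(\mu(Y{=}y)\|p) = 0$ hold simultaneously at some $\mu$---that is, iff some $\mu \in \bbr{\dg M}^*_\gamma$ satisfies $\mu(Y{=}y)=p$; compactness of $\Delta\V\!\X$ and continuity of the objective handle the existence direction. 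The main technical obstacle will be justifying the ``no loss of generality'' step cleanly: with finite $\beta_s$, one must balance the $\OInc$ penalty from $s$ against the potential gain from letting $\mu'(Y_y{=}y)$ differ from $\mu'|_\X(Y{=}y)$---a complication that disappears once $s$ is treated as a hard constraint, as we do here.
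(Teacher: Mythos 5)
Your proof is correct and follows the paper's approach: the paper likewise lifts $\mu$ to $\hat\mu$, observes that the augmented score decomposes as $\bbr{\dg M}_\gamma(\mu)$ plus a nonnegative binary relative entropy term, and uses that decomposition in both directions of the equivalence (citing a prior monotonicity lemma for the inequality rather than re-deriving it). One small simplification: the appeal to the Markov property for the WLOG step is unnecessary, since $s$ is deterministic with positive confidence, making $\OInc_{\dg M + \Pr(Y{=}y)=p}$ infinite at any $\hat\mu$ not supported on lifts---so the reduction to lifts follows from the hard constraint alone, uniformly for all $\gamma \ge 0$.
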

\begin{lproof}
    The inequality is immediate; it is an instance of monotonicity of inconsistency
    \cite[Lemma 1]{one-true-loss}. Intuitively: believing more cannot make you any less
    inconsistent.  We now prove that equality holds iff there is a minimizer with the appropriate conditional probability.

    $(\impliedby)$. Suppose that there is some $\mu \in \bbr{\dg M}^*_\gamma$ with $\mu(Y{=}y) = p$.
    Because $\mu \in \bbr{\dg M}^*_\gamma$, we know that
    $\bbr{\dg M}_\gamma(\mu) = \aar{\dg M}$.
    Let $\hat \mu$ be the extension of $\mu$ to the new variable $``\Yyshort$'',
        whose value is a function of $Y$ according to $s$. Then
    \begin{align*}
        \aar[\Big]{\dg M + ~\Pr(Y{=}y) = p}_{\!\gamma}
            &\le \bbr[\Big]{\dg M + ~\Pr(Y{=}y) = p}_\gamma(\hat \mu) \\
            &= \bbr{\dg M}_\gamma(\mu) + \Ex_{\mu}\left[
                \log \frac{\hat\mu(\Yyshort)}{ \hat p(\Yyshort)} \right]\\
            &= \bbr{\dg M}_\gamma(\mu) +
                \mu(Y{=}y) \log \frac{\mu(Y{=}y)}{p}
                + \mu(Y{\ne} y) \log \frac{\mu(Y{\ne} y)}{1-p} \\
            &= \bbr{\dg M}_\gamma(\mu) +
                \mu(Y{=}y) \log(1)
                + \mu(Y{\ne} y) \log(1) \\[1ex]
            &= \bbr{\dg M}_\gamma(\mu)
            = \aar{\dg M}_\gamma.
    \end{align*}

    To complete this direction of the proof, it suffices to observe
    that we already knew the inequality held in the opposite direction
    (by monotonicity), so the two terms are equal.

    $(\implies)$.  Suppose that the two inconsistencies are equal, i.e.,
    $
    \aar[\Big]{\dg M + ~\Pr(Y{=}y) = p}_\gamma = \aar{\dg M}_\gamma.
    $

    This time, choose $\hat\mu \in \bbr{\dg M+ ~\Pr(Y{=}y) = p}^*_\gamma$,
        and define $\mu$ to be its marginal on the variables of $\dg M$
        (which contains the same information as $\hat \mu$ itself).
    Let $q := \mu(Y{=}y)$. Then
    \begin{align*}
        \aar{\dg M}_\gamma &= \aar[\Big]{\dg M + ~\Pr(Y{=}y) = p}_{\!\gamma} \\
         &= \bbr[\Big]{\dg M + ~\Pr(Y{=}y) = p}_\gamma(\hat \mu) \\
         &= \bbr{\dg M}_\gamma(\mu) +
             \mu(Y{=}y) \log \frac{\mu(Y{=}y)}{p}
             + \mu(Y{\ne} y) \log \frac{\mu(Y{\ne} y)}{1-p} \\
        &= \bbr{\dg M}_\gamma(\mu) +
            \left[ q \log \frac qp + (1-q) \log \frac{1-q}{1-p} \right] \\
        &= \bbr{\dg M}_\gamma(\mu) +  \kldiv qp \\
        &\ge \aar{\dg M}_\gamma + \kldiv qp
    \end{align*}
    Therefore $0 \ge  \kldiv qp$. But relative entropy is non-negative 
    (Gibbs inequality; see any introductory text on information theory, such as
     \textcite{mackay2003information}), 
    so we actually know that $\kldiv qp=0$, and thus
    $p = \mu(Y{=}y)$.
    In addition, the algebra above shows that $\mu \in \bbr{\dg M}^*_\gamma$, as its
        score is $\aar{\dg M}_\gamma$.
    Thus, we have found $\mu \in \bbr{\dg M}^*_\gamma$ such that $\mu(Y{=}y) = p$, completing the proof.
\end{lproof}

We next show that the overall inconsistency is strictly convex in the parameter $p \in [0,1]$.
It is (notationally) simpler to state (and equally easy to prove) this result in the general case.

\begin{linked}{lemma}{inc-strictly-cvx}
    Fix $Y \subseteq \X$,
    and  $\gamma \in (0, \min_a \frac{\beta_a}{\alpha_a})$.
    As $h = h(Y)$ ranges over $\Delta \V Y$,
    the function $h \mapsto \aar{\dg M + h}_\gamma$
    is strictly convex.
\end{linked}
\begin{lproof}\label{proof:inc-strictly-cvx}
	We start by expanding the definitions. If $h$ is a cpd on $Y$ given $X$, then
	\begin{align*}
		\aar{\dg M + h}_\gamma &= \inf_\mu ~\bbr{\dg M + h}_\gamma(\mu) \\
			&= \inf_\mu \left[ \bbr{\dg M }_\gamma(\mu)
				+  \kldiv[\Big]{\mu(Y)}{h(Y} \right].
	\end{align*}
	Fix $\gamma \le \min_a \frac{\beta_a}{\alpha_a}$. Then we know that $\bbr{\dg M}_\gamma(\mu)$ is a $\gamma$-strongly convex
    (so, in particular, strictly convex)
    function of $\mu$,
    and hence there is a unique joint distribution which minimizes it. We now  show that the overall inconsistency is strictly convex in $h$.

	Suppose that $h_1(Y)$ and $h_2(Y)$ are two distributions over $Y$.
    Let $\mu_1, \mu_2$ and $\mu_\lambda$ be the joint distributions that minimze $\bbr{\dg M + h_1}_\gamma$ and $\bbr{\dg M + h_2}_\gamma$, respectively.
	For every $\lambda \in [0,1]$, define $h_\lambda := (1-\lambda) h_1 + \lambda h_2$,
    $\mu_\lambda := (1-\lambda) \mu_1 + \lambda \mu_2$, and
    $\mu_\lambda^*$ to be a minimizer of $\bbr{\dg M + h_\lambda}_\gamma$.
    The following is a simple consequence of these definitions:
	\begin{align*}
		\aar{\dg M + h_\lambda}_\gamma
            &= \bbr{\dg M + h_\lambda}_\gamma( \mu_\lambda^* ) \\
            &\le \bbr{\dg M + h_\lambda}_\gamma( \mu_\lambda ) \\
            &= \bbr{\dg M}_\gamma(\mu_\lambda) + \kldiv[\Big]{\mu_\lambda(Y)}{h_\lambda(Y)}
            .
	\end{align*}
	By the convexity of $\bbr{\dg M}_\gamma$ and $\thickD$, we have
	\begin{align}
		\bbr{\dg M}_\gamma(\mu_\lambda)
		 	&\le
            (1-\lambda)
            \bbr{\dg M}_\gamma(\mu_1) + \lambda \bbr{\dg M}_\gamma(\mu_2)
			 	\label{eqn:score-cvx}\\
		\text{and}\qquad \kldiv[\Big]{\mu_\lambda(Y)}{h_\lambda(Y) }
			&\le (1-\lambda)
            \kldiv[\Big]{\mu_1(Y)}{h_1(Y)} \nonumber \\
			&\qquad+ \lambda\;\;\kldiv[\Big]{\mu_2(Y)}{h_2(Y) }.
				\label{eqn:D-cvx}
	\end{align}
	If $\mu_1 \ne \mu_2$ then since $\bbr{\dg M}$ is strictly convex, \eqref{eqn:score-cvx} must
	be a strict inequality. On the other hand, if $\mu_1 = \mu_2$, then since $\mu_\lambda = \mu_1 = \mu_2$ and $\thickD$ is stricly convex in its second argument when its first argument is fixed, \eqref{eqn:D-cvx} must be a strict inequality.
	In either case, the sum of the two inequalities must be strict.
    Combining this with the first inequality, we get
	\begin{align*}
		\aar{\dg M \bundle h_\lambda}_\gamma &\le
		\bbr{\dg M}_\gamma(\mu_\lambda) + \kldiv[\Big]{\mu_\lambda(Y)}{h_\lambda(Y) } \\
		&<
		 (\lambda-1) \left[\bbr{\dg M}_\gamma(\mu_1)
			 	+ \kldiv[\Big]{\mu_1(Y)}{h_1(Y)} \right]
			 \\[-0.3em]&\qquad\qquad
			 + \lambda \left[ \bbr{\dg M}_\gamma(\mu_2)
			 	+ \kldiv[\Big]{\mu_2(Y)}{h_2(Y)}
			 	\right] \\
		 &= (\lambda-1) \aar{\dg M \bundle h_1} + \lambda\,\aar{\dg M \bundle h_2},
	\end{align*}
	which shows that $\aar{\dg M \bundle h}$ is \emph{strictly} convex in $h$, as desired.
\end{lproof}

Let $\dg M$ be a PDG with $\bbeta \ge \mat 0$ and
variables $\X$, and fix $Y \subseteq \X$, $y \in \V Y$, and
$\gamma \in (0, \min_a \frac{\beta_a}{\alpha_a})$.
For $p \in [0,1]$, define
\begin{equation}
    f(p) := \aar[\Big]{\dg M + ~\Pr(Y{=}y)}_\gamma.
        \label{eqn:f-defn}
\end{equation}

The next several results (\cref{coro:special-inc-strictly-cvx,lem:fprime,lem:strongly-cvx-ish,lem:score-gradient-bound,lem:D-lowerbound,lem:togetherbound}) are properties of this function $f(p)$.

\begin{coro} \label{coro:special-inc-strictly-cvx}
    The function $f$ defined in \eqref{eqn:f-defn}
    is strictly convex.
\end{coro}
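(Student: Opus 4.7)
The plan is to reduce this corollary to \cref{lemma:inc-strictly-cvx} by viewing the construction ``$\Pr(Y{=}y) = p$'' as a special instance of adding a distribution $h$ over an auxiliary variable. Recall from the construction preceding \cref{lem:inc-inc-eq} that the PDG $\dg M + ~\Pr(Y{=}y)=p$ is obtained by first introducing the binary auxiliary variable $Y_y$ (defined as a deterministic function $s$ of $Y$), and then adding a cpd $\hat p(Y_y) = [\,p,\ 1-p\,] \in \Delta\V(Y_y)$. Let $\dg M'$ denote the PDG consisting of $\dg M$ together with the variable $Y_y$ and the deterministic arc $s$, so that
\[
    \dg M + ~\Pr(Y{=}y)=p ~=~ \dg M' + \hat p_p,
    \qquad\text{where}\qquad
    \hat p_p := [\,p,\; 1-p\,].
\]

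Next, define the affine map $\varphi : [0,1] \to \Delta\V(Y_y)$ by $\varphi(p) := \hat p_p$. This map is clearly injective: distinct values of $p$ produce distinct distributions on the binary variable $Y_y$. Then by construction,
\[
    f(p) = \aar[\big]{\dg M' + \varphi(p)}_\gamma.
\]
By \cref{lemma:inc-strictly-cvx} applied to the PDG $\dg M'$ (which still satisfies $\bbeta \ge \mat 0$, as adding a deterministic structural arc does not introduce any new observational confidences), the function $h \mapsto \aar{\dg M' + h}_\gamma$ is strictly convex on $\Delta\V(Y_y)$.

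Finally, I would conclude by observing that the composition of a strictly convex function with an injective affine map is strictly convex. Explicitly, for $p_1 \ne p_2 \in [0,1]$ and $\lambda \in (0,1)$, we have $\varphi(p_1) \ne \varphi(p_2)$ by injectivity, and $\varphi(\lambda p_1 + (1-\lambda) p_2) = \lambda \varphi(p_1) + (1-\lambda) \varphi(p_2)$ by affinity; strict convexity of $\aar{\dg M' + \,\cdot\,}_\gamma$ then yields
\[
    f(\lambda p_1 + (1-\lambda) p_2) < \lambda f(p_1) + (1-\lambda) f(p_2).
\]
This is a very short proof once the reduction to \cref{lemma:inc-strictly-cvx} is spotted; there is no real obstacle, since all of the heavy lifting has already been done in that lemma. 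The only thing to be careful about is to verify that the hypothesis $\bbeta \ge \mat 0$ required by \cref{lemma:inc-strictly-cvx} is preserved by the construction of $\dg M'$.
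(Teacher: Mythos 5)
Your proof is correct and follows essentially the same route as the paper: the paper's proof is the one-line remark ``Simply take $h$ to be the cpd $\hat p$, absorb the other components of (the PDG representation of) $\Pr(Y{=}y)=p$ into $\dg M$, and then apply Lemma~\ref{lemma:inc-strictly-cvx}.'' Your write-up merely makes two small points explicit that the paper leaves implicit: (i) that $p \mapsto \hat p_p$ is an injective affine map, so strict convexity of $h \mapsto \aar{\dg M'+h}_\gamma$ transfers to $f$, and (ii) that the augmented PDG $\dg M'$ still satisfies the standing hypotheses. On the latter point, note that the hypothesis you check ($\bbeta \ge \mat 0$) is the one from Lemma~\ref{lem:inc-inc-eq}; the hypothesis Lemma~\ref{lemma:inc-strictly-cvx} actually uses is $\gamma \in (0, \min_a \beta_a/\alpha_a)$, which requires a (quick) sanity check on the confidence weights assigned to the new structural arc $s$ — but the paper's own proof glosses over this as well, so this is not a gap specific to your argument.
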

\begin{lproof}
    Simply take $h$ to be the cpd $\hat p$,
    absorb the the other components of (the PDG representation of) $\Pr(Y{=}y)=p$
    into $\dg M$, and then apply \cref{lemma:inc-strictly-cvx}.
\end{lproof}

The results from this point until
\hyperref[proof:inf-via-inc-oracle]{the proof of Theorem 14}
are all technical results that support the more precise analysis of part (b).
We recommend returning to these these results as needed, after first reading 
the proof of part (a).

\begin{lemma} \label{lem:fprime}
    For $p \in (0,1]$, let $\mu_p$ be the unique element of $\bbr{\dg M + \Pr(Y{=}y)=p}^*_\gamma$.
    Then
    $ \displaystyle
        f'(p) = \frac{p - \mu^*_p(Y{=}y)}{p(1-p)}
        .
    $
\end{lemma}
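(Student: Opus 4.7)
The plan is to express $f(p)$ as a parameterized minimum and then apply the envelope (Danskin) theorem, which is justified because the minimizer is unique by \cref{lemma:inc-strictly-cvx}.

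First, I would write out the score explicitly. Expanding the contribution of the added widget $\Pr(Y{=}y)=p$, whose cpd is the Bernoulli $\hat p = (p, 1-p)$ on the auxiliary variable $\Yyshort$, and letting $q := \mu(Y{=}y)$, the scoring function becomes
\[
g(\mu, p) := \bbr{\dg M}_\gamma(\mu) + q \log \frac{q}{p} + (1-q) \log \frac{1-q}{1-p},
\]
so that $f(p) = \min_\mu g(\mu, p) = g(\mu_p^*, p)$. Only the last two terms of $g$ depend on $p$, and both are smooth functions of $p$ on $(0,1)$ (and extend appropriately to the endpoint $p=1$ when $q=1$, which we can handle as a limit).

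Next, I would invoke the envelope theorem. Under the assumption $\gamma < \min_a \beta_a/\alpha_a$, the objective $\mu \mapsto g(\mu, p)$ is strictly convex in $\mu$ and thus has a unique minimizer $\mu_p^*$, which moreover varies continuously in $p$. Standard envelope/Danskin arguments then give
\[
f'(p) = \frac{\partial g}{\partial p}(\mu_p^*, p),
\]
since the indirect effect through the optimizer vanishes by first-order optimality. Computing this partial derivative directly:
\[
\frac{\partial}{\partial p}\Bigl[\, q \log \tfrac{q}{p} + (1-q) \log \tfrac{1-q}{1-p}\,\Bigr]
= -\frac{q}{p} + \frac{1-q}{1-p}
= \frac{-q(1-p) + p(1-q)}{p(1-p)}
= \frac{p - q}{p(1-p)},
\]
evaluated at $q = \mu_p^*(Y{=}y)$, yields the claimed formula.

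The main subtlety is making the envelope step rigorous: strictly speaking one needs to check that $\mu_p^*$ depends on $p$ in a sufficiently regular way (continuity suffices for a one-sided derivative argument; differentiability of $p \mapsto \mu_p^*$ would give it more directly). Since $g(\mu, p)$ is jointly continuous and $\gamma$-strongly convex in $\mu$ uniformly in $p$ on any compact subinterval of $(0,1)$, the minimizer map is continuous, and a short direct squeeze argument---comparing $g(\mu_{p+h}^*, p+h) - g(\mu_p^*, p)$ with both $g(\mu_p^*, p+h) - g(\mu_p^*, p)$ and $g(\mu_{p+h}^*, p+h) - g(\mu_{p+h}^*, p)$, then letting $h \to 0$---suffices to replace an appeal to a general envelope theorem. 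This is the only non-routine step; everything else is a bookkeeping computation of partial derivatives.
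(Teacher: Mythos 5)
Your proposal is correct and takes essentially the same route as the paper: the paper carries out the same calculation as an explicit chain-rule/law-of-total-derivative computation, observing that the Jacobian terms involving $p \mapsto \mu_p^*$ are annihilated by the first-order optimality condition $\nabla_\mu g(\mu_p^*, p) = 0$, whereas you package that cancellation as an invocation of the envelope/Danskin theorem. The final step---computing $\frac{\partial}{\partial p}\kldiv{q}{p} = \frac{p-q}{p(1-p)}$---is identical. If anything, your version is slightly more careful about regularity: the paper implicitly treats $p \mapsto \mu_p^*$ as differentiable in order to form the Jacobian $\mat J_{\mu_p^*}(p)$, while your suggested squeeze argument only requires continuity of the minimizer map, which is cheaper to establish from the $\gamma$-strong convexity of $g(\cdot,p)$ in $\mu$.
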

\begin{lproof}

    First, suppose $\mu_p^*$ is in the interior of the simplex.
    Since it minimizes the differentiable function
    \[
        \bbr[\big]{\dg M + \Pr(Y{=}y)=p}_\gamma
            = \mu \mapsto \bbr{\dg M}_\gamma +  \kldiv[\big]{\mu(Y{=}y)}{p}
        ,
    \]
    the gradient of that function at $\mu^*_p$ must be zero:
    \begin{align*}
        \nabla \bbr[\big]{\dg M + \Pr(Y{=}y)=p}_\gamma(\mu^*_p)
            &= \nabla_\mu \Big[ \kldiv[\big]{\mu(Y{=}y)}{p} \Big]_{\mu=\mu^*_p} + \nabla \bbr{\dg M}(\mu^*_p)
            = 0
        .
        \numberthis\label{eqn:matching-gradients-at-optimum}
    \end{align*}

    What is the derivative of $f$?
    Observe that $f$ is the sum of two compositions of differentiable maps:
    \begin{align*}
        f_\dg M &:= \qquad
            p \mapsto ~~~\mu^*_p~ \mapsto  \bbr{\dg M}_\gamma(\mu^*_p) \\
        \text{and}\qquad
        f_{\Pr} &:= \qquad
        p \mapsto (p, \mu^*_p) \mapsto\kldiv[\big]{\mu^*_p(Y{=}y)}{p}.
    \end{align*}
    Thus, we can use the multivariate chain rule.
    for any differentiable functions $h : \mathbb R^m \to \mathbb R^k$,
    and $g : \mathbb R^k \to \mathbb R^n$, their composition $g \circ h : \mathbb R^m \to \mathbb R^n$ is also a differentiable map whose Jacobian is
     $\mat J_{g \circ h}(x) = \mat J_{g}(h(x)) \mat J_{h}(x)$.
    In our case, $g$ will be a scalar map ($n=1$), so $\mat J_g(h(x)) = [ \ldots, \frac{\partial g}{\partial x_j}, \ldots] (h(x)) = \nabla g(h(x))^{\sf T}$.
    Let $\mat J_{\mu^*_p}(p)$ be the Jacobian of the map $p \mapsto \mu^*_p$.
    Then
    \begin{align*}
        f'(p) &= f'_{\dg M}(p) + f'_{\Pr}(p) \\
        &= \big(\nabla \bbr{\dg M}(\mu^*_p)\big)^{\sf T} \mat J_{\mu^*_p}
            + \left(\nabla_\mu
                \Big[ \mu(x) \kldiv{\mu(y|x)}{p}\Big]_{\mu = \mu^*_p} \right)^{\sf T} \mat J_{\mu^*_p} +
                \frac{\partial}{\partial p} \Big[ \kldiv{\mu^*_p(Y{=}y)}{p}\Big] \\
\shortintertext{\centering
    (Alternatively, the line above can be derived from the law of total
    derivative.\footnotemark)
}
        &= \Big( \Cancel{\nabla_\mu
            \Big[ \kldiv{\mu(Y{=}y)}{p}\Big]_{\mu = \mu^*_p}
            } + \Cancel{\nabla \bbr{\dg M}(\mu^*_p)} \Big)^{\sf T}
            \mat J_{\mu^*_p}(p)
            + \frac{\partial}{\partial p} \kldiv{\mu^*_p(Y{=}y)}{p}
            \\
        &= \frac{\partial}{\partial p} \kldiv{\mu^*_p(Y{=}y)}{p}
        & \text{by \eqref{eqn:matching-gradients-at-optimum}}.
    \end{align*}

    Finally,
    \begin{align*}
        \frac{\mathrm d}{\mathrm d p} \Big[ \kldiv{q}{p} \Big]
        &= \frac{\mathrm d}{\mathrm d p}
            \Big[ q \log \frac{q}{p} + (1-q) \log \frac{1-q}{1-p} \Big] \\
        &= q \Big(\frac pq \Big) \frac{\mathrm d}{\mathrm d p} \Big[ \frac{q}{p}\Big]
            + (1-q)\Big(\frac{1-p}{1-q}\Big)  \frac{\mathrm d}{\mathrm d p} \Big[\frac{1-q}{1-p} \Big] \\
        &= pq \Big( \frac{-1}{p^2} \Big)
            + (1-p)(1-q) \Big( \frac{-1}{(1-p)^2} \Big)(-1) \\
        &= - \frac{q}{p}
            + \frac{1-q}{1-p} \\
        &= \frac{-(1-p)q + p(1-q)}{p(1-p)} \\
        &= \frac{pq - q + p - pq}{p(1-p)} 
        = \frac{p - q}{p(1-p)}.
    \end{align*}
    Thus, we find
    \[
        f'(p) = \frac{p - \mu^*_p(Y{=}y)}{p(1-p)},
        \qquad\qquad\text{as promised}. \qedhere
    \]
\end{lproof}
\footnotetext{{
Law of total derivative:
$\displaystyle
    \frac{\mathrm d f}{\mathrm d p}
    = \sum_{w \in \V\!\X} \frac{\partial \kldiv{\mu(y|x)}{p}}{\partial \mu^*_p(w)}
        ( \mu_p^*, p )\frac{\partial \mu^*_p(w)}{\partial p}
    ~+~ \frac{\partial}{\partial p} \thickD( \mu^*_p, p)
    + \sum_{w \in \V\!\X}
        \frac{\partial \bbr{\dg M}_\gamma}{\partial \mu_p^*(w)}
        ( \mu_p^* )\frac{\partial \mu^*_p(w)}{\partial p}.
$}}
\begin{lemma} \label{lem:strongly-cvx-ish}
    If $0 < p_1 < p_2 < 1$
    and $\mu_1^*, \mu_2^*$ are respective minimizing distributions,
    then
    \[
        f(p_2) \ge f(p_1) + f'(p_1) (p_2-p_1) +
            \frac 12\gamma
            \Vert \mu_1^* - \mu_2^* \Vert_1^2
            .
    \]
\end{lemma}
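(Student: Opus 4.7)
The plan is to lift the $\gamma$-strong convexity of $\bbr{\dg M}_\gamma$ to $f$ via a first-order optimality argument at $\mu^*_{p_1}$. Because $\bbeta \ge \gamma\balpha$, every summand of $\bbr{\dg M}_\gamma(\mu)$ other than $-\gamma\H(\mu)$ is convex in $\mu$; Pinsker's inequality says exactly that $-\H$ is $1$-strongly convex in total variation, so $\bbr{\dg M}_\gamma$ is $\gamma$-strongly convex in $\Vert\cdot\Vert_1$.  Write
\[
    f(p_i) = \bbr{\dg M}_\gamma(\mu^*_{p_i}) + g(\mu^*_{p_i}, p_i),
    \qquad\text{where}\qquad g(\mu, p) := \kldiv[\big]{\mu(Y{=}y)}{p},
\]
noting that $g$ is jointly convex in $(\mu, p)$.

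First I would apply the strong-convexity lower bound to the first summand and the ordinary first-order lower bound (from convexity) to $g$, both anchored at $(\mu^*_{p_1}, p_1)$:
\begin{align*}
    \bbr{\dg M}_\gamma(\mu^*_{p_2})
    &\ge \bbr{\dg M}_\gamma(\mu^*_{p_1}) + \langle \nabla \bbr{\dg M}_\gamma(\mu^*_{p_1}),\, \mu^*_{p_2} - \mu^*_{p_1}\rangle + \tfrac{\gamma}{2}\Vert \mu^*_{p_2} - \mu^*_{p_1}\Vert_1^2, \\
    g(\mu^*_{p_2}, p_2)
    &\ge g(\mu^*_{p_1}, p_1) + \langle \nabla_\mu g(\mu^*_{p_1}, p_1),\, \mu^*_{p_2} - \mu^*_{p_1}\rangle + \partial_p g(\mu^*_{p_1}, p_1)(p_2 - p_1).
\end{align*}
Summing these two inequalities, the $\mu$-gradient terms combine into $\langle \nabla F(\mu^*_{p_1}),\, \mu^*_{p_2} - \mu^*_{p_1}\rangle$, where $F := \bbr{\dg M + \Pr(Y{=}y){=}p_1}_\gamma$.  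Because $\mu^*_{p_1}$ minimizes $F$ over the simplex $\Delta\V\!\X$, and $\mu^*_{p_2} - \mu^*_{p_1}$ is a feasible direction at $\mu^*_{p_1}$ (both endpoints lie in the simplex), the KKT first-order conditions force this inner product to be $\ge 0$; so it can be dropped from the lower bound.

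Finally, I would identify $\partial_p g(\mu^*_{p_1}, p_1)$ with $f'(p_1)$ using the explicit calculation $\partial_p \kldiv{q}{p} = (p - q)/(p(1-p))$ already carried out inside the proof of \cref{lem:fprime}; substituting gives the claim.  The main obstacle will be the boundary case in which $\mu^*_{p_1}$ has zero components, where $\nabla \bbr{\dg M}_\gamma(\mu^*_{p_1})$ may have divergent entries; this will be handled by replacing the gradient by a subgradient in the directions where the simplex nonnegativity constraint is binding (so the ``$\ge 0$'' from KKT still applies), or, equivalently, by a continuity/limiting argument perturbing $p_1$ slightly into the interior of the feasible face.
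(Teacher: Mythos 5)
Your proof is correct and takes a genuinely different route from the paper's. The paper works with the three-point (secant) form of the $\gamma$-strong convexity of $\bbr{\dg M}_\gamma$ (obtained from \cref{lem:negent-strongly-cvx-1norm} and Pinsker), establishes the secant inequality $f(p_\lambda)\le(1-\lambda)f(p_1)+\lambda f(p_2)-\frac\gamma2\lambda(1-\lambda)\Vert\mu_1-\mu_2\Vert_1^2$ for the convex combination $p_\lambda=(1-\lambda)p_1+\lambda p_2$, and then takes $\lambda\to 0$ to recover the tangent-line form. You instead write the first-order (gradient) strong-convexity inequality directly anchored at $\mu^*_{p_1}$, combine it with the first-order convexity bound for the jointly convex $g(\mu,p)=\kldiv[\big]{\mu(Y{=}y)}{p}$, discard the resulting $\mu$-gradient inner product using first-order optimality of $\mu^*_{p_1}$ for $F:=\bbr{\dg M + \Pr(Y{=}y){=}p_1}_\gamma$, and identify $\partial_p g(\mu^*_{p_1},p_1)$ with $f'(p_1)$ via \cref{lem:fprime}. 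Both routes are valid; yours avoids a limit and makes the role of stationarity at $\mu^*_{p_1}$ more transparent, but it leans explicitly on \cref{lem:fprime}, whereas the paper's secant-and-limit argument uses only differentiability of $f$ at $p_1$. One caveat on the boundary issue you flag: the subgradient repair does not work quite as stated, since the negative-entropy term has an \emph{empty} subdifferential at any $\mu$ with a zero coordinate (not merely an unbounded one), so you really do need the perturbation/limiting version of the fix. The paper's secant form sidesteps this entirely because it never instantiates a gradient at $\mu^*_{p_1}$ or $\mu^*_{p_2}$ themselves; the gradient is taken at $\mu_\lambda$, whose zero coordinates are exactly those where both $\mu_1$ and $\mu_2$ vanish, so the direction vectors are zero there and the problematic terms drop out by convention. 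This is a useful thing to know when choosing how to phrase strong-convexity arguments near the simplex boundary.
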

\begin{lproof}
    The general approach is to adapt and strengthen
    the \hyperref[proof:inc-strictly-cvx]{the proof of} \cref{lemma:inc-strictly-cvx},
    to show something like strong convexity, in this special case.
    Define
    \[
        \dg M_1 :=  \dg M ~+~ \Pr(Y{=}y ) = p_1
            \qquad\text{and}\qquad
        \dg M_2 :=  \dg M ~+~ \Pr(Y{=}y ) = p_2.
    \]
    Choose $\mu_1 \in \bbr{\dg M_1}^*_\gamma$ and $\mu_2 \in \bbr{\dg M_2}^*_\gamma$.
    As before, let $m_1 := \mu_1(Y{=}y)$
        and
        $m_2 := \mu_2(Y{=}y)$.
    Then
    \begin{align*}
        f(p_1) &= \aar{\dg M_1}_\gamma = \bbr{\dg M_1}_\gamma(\mu_1)
            = \bbr{\dg M}_\gamma(\mu_1) +  \kldiv {m_1} {p_1}
            \\
            \qquad\text{and}\qquad
        f(p_2) &= \aar{\dg M_2}_\gamma = \bbr{\dg M_2}_\gamma(\mu_2)
            = \bbr{\dg M}_\gamma(\mu_1) + \kldiv{m_2}{p_2}
        ,
    \end{align*}
    where, as before, $\kldiv m p = m \log \frac{m}{p} + (1-m) \log \frac {1-m}{1-p}$ is
    the relative entropy between Bernouli distributions with respective parameters $m$ and $p$.

    For each $\lambda \in [0,1]$, define
    \begin{align*}
        p_\lambda := (1-\lambda)p_1  + \lambda p_2,
        \qquad\quad
            \dg M_\lambda := \dg M + \Pr(Y{=}y)=p_\lambda,
            \qquad\text{and}\qquad
        \mu_\lambda := (1-\lambda) \mu_1  + \lambda \mu_2
        ~.
    \end{align*}

    We now provide stronger analogues of \eqref{eqn:score-cvx} and \eqref{eqn:D-cvx}.
    Since $\bbr{\dg M}_\gamma$ is not just convex but also
    $\gamma$-strongly convex, with respect to the 1-norm (\cref{lem:negent-strongly-cvx-1norm}, below), 
    we can strengthen \eqref{eqn:score-cvx} to
    \begin{align*}
        \bbr{\dg M}_\gamma(\mu_\lambda)
            &\le (1-\lambda) \bbr{\dg M}_\gamma(\mu_1) + \lambda \bbr{\dg M}_\gamma(\mu_2)
            - \frac\gamma2 (1-\lambda)\lambda \Vert \mu_1 - \mu_2 \Vert^2_1.
	\end{align*}
    Adding this inequality to the analogous one describing joint convexity of $\thickD$ in its two arguments \eqref{eqn:D-cvx}, we find that
    \begin{align*}
        &\bbr{\dg M}_\gamma(\mu_\lambda) + \kldiv{m_{\lambda}}{p_\lambda} \\
        & \le (1-\lambda) \Big( \bbr{\dg M}_\gamma(\mu_1) + \kldiv{m_1}{p_1} \Big)
            + \lambda \Big( \bbr{\dg M}_\gamma(\mu_2) + \kldiv{m_2}{p_2} \Big)
             - \frac\gamma2 (1-\lambda)\lambda \Vert \mu_1 - \mu_2 \Vert^2_1
            \\
        &= (1-\lambda) f(p_1)
            + \lambda f(p_2) - \frac\gamma2 (1-\lambda)\lambda \Vert \mu_1 - \mu_2 \Vert^2_1
            . \numberthis\label{eq:mu-lambda-intermediate}
    \end{align*}
    Putting it all together, we find that
    \begin{align*}
    f(p_\lambda)
        &=
        \aar{\dg M_\lambda}_\gamma
        \\&\le \bbr{\dg M_\lambda}_\gamma(\mu_\lambda)
        \\&= \bbr{\dg M}_\gamma(\mu_\lambda) + \kldiv{m_{\lambda}}{p_\lambda} \\
        &\le(1-\lambda) f(p_1)
            + \lambda f(p_2) - \frac\gamma2 (1-\lambda)\lambda \Vert \mu_1 - \mu_2 \Vert^2_1
            & \text{\eqref{eq:mu-lambda-intermediate}}.
            \\
    \end{align*}
    Since this is true for all $\lambda \in [0,1]$, we can divide by $\lambda$, rearrange, and take the limit as $\lambda \to 0$, to find:
    \begin{align*}
        \frac{\gamma}{2} (1-\lambda) \Vert \mu_1 - \mu_2 \Vert^2_1
         &\le\frac{ (1-\lambda) f(p_1) - f(p_1 + \lambda(p_2-p_1))}{\lambda} + f(p_2)
     \\
        &=
         \frac{ f(p_1) - f(p_1 + \lambda(p_2-p_1))}{\lambda} + f(p_2) - f(p_1)
     \\ \implies\qquad
          f(p_2) - f(p_1)
         &\ge \lim_{\lambda \to 0}
         f(p_1) + \frac{f(p_1 + \lambda(p_2-p_1))-f(p_1)}{\lambda} \frac{\gamma}{2} (1-\lambda) \Vert \mu_1 - \mu_2 \Vert^2_1
     \\
         &= f'(p_1) (p_2-p_1) + \frac \gamma2\Vert \mu_1 - \mu_2 \Vert^2_1,
    \end{align*}
    as desired.
\end{lproof}

\begin{lemma} \label{lem:negent-strongly-cvx-1norm}
    Negative entropy is 1-strongly convex with 
    respect with respect to the L1 norm, i.e.,
    $f(\mat p) = \sum_i p_i \log p_i$ satisfies
    \[
        f(\mat q) \ge f(\mat p) + \nabla f(\mat p)^{\sf T} (\mat q - \mat p)
            + \frac12 \Vert \mat p - \mat q \Vert_1^2.
    \]
\end{lemma}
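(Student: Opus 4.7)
The plan is to recognize this inequality as a standard bound on the Bregman divergence induced by $f$, which (on the probability simplex) coincides with the Kullback--Leibler divergence, and then invoke Pinsker's inequality. The only case that arises in our application (via \cref{lem:strongly-cvx-ish}) is when $\mathbf{p}$ and $\mathbf{q}$ are probability distributions, so I will restrict attention to that case.

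First I would compute the gradient componentwise: $(\nabla f(\mathbf{p}))_i = \log p_i + 1$. Plugging into the definition of the Bregman divergence $B_f(\mathbf{q},\mathbf{p}) := f(\mathbf{q}) - f(\mathbf{p}) - \nabla f(\mathbf{p})^{\sf T}(\mathbf{q}-\mathbf{p})$ and simplifying gives
\[
B_f(\mathbf{q},\mathbf{p}) \;=\; \sum_i \Big(q_i \log \tfrac{q_i}{p_i} + p_i - q_i\Big).
\]
When $\mathbf{p}$ and $\mathbf{q}$ both sum to $1$, the terms $p_i - q_i$ telescope to zero, so $B_f(\mathbf{q},\mathbf{p}) = \kldiv{\mathbf{q}}{\mathbf{p}}$. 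Thus the claimed strong convexity
\[
f(\mathbf{q}) \ge f(\mathbf{p}) + \nabla f(\mathbf{p})^{\sf T}(\mathbf{q}-\mathbf{p}) + \tfrac12 \Vert \mathbf{p}-\mathbf{q}\Vert_1^2
\]
is equivalent to $\kldiv{\mathbf{q}}{\mathbf{p}} \ge \tfrac12 \Vert \mathbf{p}-\mathbf{q}\Vert_1^2$, which is exactly Pinsker's inequality (see, e.g., \textcite{mackay2003information}).

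The only real step that requires anything beyond routine algebra is Pinsker's inequality itself, so it is the main ``obstacle.'' Since it is a standard result, I would simply cite it. If a self-contained proof were preferred, the standard argument reduces to the binary case by grouping indices according to the sign of $q_i - p_i$ and applying the two-outcome Pinsker inequality $q \log \tfrac{q}{p} + (1-q)\log \tfrac{1-q}{1-p} \ge 2(p-q)^2$, which itself follows from integrating the elementary bound $\tfrac{d^2}{dq^2}\kldiv{q}{p} = \tfrac{1}{q(1-q)} \ge 4$ twice. Either way, no delicate analysis is involved.
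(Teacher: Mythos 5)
Your proof is correct and follows essentially the same route as the paper's: compute the gradient $\nabla f(\mat p) = \log \mat p + \mat 1$, identify the Bregman divergence of negative entropy on the simplex with $\kldiv{\mat q}{\mat p}$, and finish with Pinsker's inequality. If anything, your version is slightly more careful than the paper's, which silently drops the $\sum_i (p_i - q_i)$ term in the algebra; you explicitly note it telescopes to zero because both $\mat p$ and $\mat q$ lie on the simplex, making the implicit restriction to probability vectors visible.
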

\begin{lproof}
    First, $\nabla f(\mat p) = \log(\mat p) + 1$. 
    Thus, 
    \begin{align*}
        &f(\mat q) - f(\mat p) - \nabla f(\mat p)^{\sf T} (\mat q - \mat p)\\
        &= \sum_i \left[ q_i \log q_i - p_i \log p_i - (\log p_i + 1)(q_i-p_i) \right]\\
        &= \sum_i \left[ q_i \log q_i - \Cancel{p_i \log p_i} - q_i \log p_i + \Cancel{p_i \log p_i} \right]\\
        &= \sum_i q_i \log \frac{q_i}{p_i} \\
        &= \kldiv{q}{p} \\
        &\ge 2 \delta(\mat p, \mat q)^2
            &[\text{by Pinsker's inequality
                \parencite{pinsker-inequality}
            }] \\
        &= 2 \Big(\frac12 \Vert \mat p - \mat q \Vert_1 \Big)^2 \\
        &= \frac12 \Vert \mat p - \mat q \Vert_1^2
    \end{align*}
    where $\delta(\mat p, \mat q)$ is the total variation distance between $\mat p$ and $\mat q$ as measures, and $\Vert p - q \Vert_1 = \sum_i |p_i - q_i|$ is the L1 norm of their difference, as
    points on a simplex. 
\end{lproof}

\cref{lem:strongly-cvx-ish} guarantees
that if the optimal distributions corresponding to adding $p_1$ and $p_2$ to the PDG
($\mu_1$ and $\mu_2$, respectively) are far apart, then so are
$f(p_1)$ and $f(p_2)$.  But what if these optimal distributions are close together?
It turns out that if $\mu_1 \approx \mu_2$ then it's still the case that $f(p_1)$ and $f(p_2)$ are far apart, provied that $p_1$ and $p_2$ are.
However, showing this requires an entirely appraoch, which we pursue in \cref{lem:D-lowerbound}.
But first, we need two intermediate technical results.

\begin{lemma} \label{lem:score-gradient-bound}
    For all $p_1, p_2 \in [0,1]$,
    \[
        \bbr{\dg M}_\gamma(\mu_1^*) - \bbr{\dg M}_\gamma(\mu_2^*)
            \ge (m_2 - m_1) \log \frac{m_2}{p_2} \frac{1-p_2}{1-m_2},
    \]
    where
    $m_1 = \mu_1^*(Y{=}y)$ is the marginal of $\mu_1^* \in \bbr{\dg M + \Pr(Y{=}y) = p_1}^*_\gamma$,
    and $m_2 = \mu_2^*(Y{=}y)$ is defined symmetrically.
\end{lemma}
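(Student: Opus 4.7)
\begin{lproof}\label{proof:score-gradient-bound}
The plan is to pass through the ``marginal projection'' of $\bbr{\dg M}_\gamma$ and exploit its convexity. Concretely, I would introduce the value function
\[
\phi(m) \;:=\; \inf\{\,\bbr{\dg M}_\gamma(\mu) \;:\; \mu \in \Delta\V\!\X,\; \mu(Y{=}y) = m\,\}.
\]
Because $\bbr{\dg M}_\gamma$ is convex in $\mu$ and the marginal constraint is linear in $\mu$, the usual ``average of feasible minimizers'' argument makes $\phi$ convex in $m$. The crucial observation is that $\bbr{\dg M}_\gamma(\mu_i^*) = \phi(m_i)$ for each $i\in\{1,2\}$: restricted to the slice $\{\mu : \mu(Y{=}y) = m_i\}$, the term $\kldiv{\mu(Y{=}y)}{p_i}$ in the objective that $\mu_i^*$ was chosen to minimize collapses to the constant $\kldiv{m_i}{p_i}$, so $\mu_i^*$ must in fact also minimize $\bbr{\dg M}_\gamma$ on that slice.

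The rest is then a standard subgradient argument. Since $m_2$ minimizes the one-dimensional convex function $m \mapsto \phi(m) + \kldiv{m}{p_2}$ and $\frac{d}{dm}\kldiv{m}{p_2}\big|_{m_2} = \log\frac{m_2(1-p_2)}{p_2(1-m_2)}$, first-order optimality forces $\log\frac{p_2(1-m_2)}{m_2(1-p_2)} \in \partial\phi(m_2)$. Applying the subgradient inequality at $m_2$ with this subgradient yields
\[
\phi(m_1) - \phi(m_2) \;\ge\; (m_1 - m_2)\log\tfrac{p_2(1-m_2)}{m_2(1-p_2)} \;=\; (m_2 - m_1)\log\tfrac{m_2(1-p_2)}{p_2(1-m_2)},
\]
and substituting $\phi(m_i) = \bbr{\dg M}_\gamma(\mu_i^*)$ gives exactly the claim.

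The main conceptual obstacle to appreciate up front is why a more direct attack falls short. Testing $\mu_1^*$ against the objective minimized by $\mu_2^*$ only gives $\bbr{\dg M}_\gamma(\mu_1^*) - \bbr{\dg M}_\gamma(\mu_2^*) \ge \kldiv{m_2}{p_2} - \kldiv{m_1}{p_2}$, and a short algebraic expansion shows the right-hand side equals $(m_2 - m_1)\log\frac{m_2(1-p_2)}{p_2(1-m_2)} - \kldiv{m_1}{m_2}$, which is weaker than the target by the nonnegative quantity $\kldiv{m_1}{m_2}$. Closing this gap requires exploiting that $\mu_1^*$ is not just \emph{some} distribution with $Y{=}y$-marginal $m_1$ but in fact the $\bbr{\dg M}_\gamma$-minimizer among such distributions---which is precisely what passage to the value function $\phi$ encodes. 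The only mild wrinkle is the boundary case $m_2 \in \{0,1\}$, where $\frac{d}{dm}\kldiv{m}{p_2}$ blows up; this is handled either by a limiting argument in $m_2$ from the interior, or by adopting the standard $0 \log 0 = 0$ convention under which the inequality is trivial at the boundary.
\end{lproof}
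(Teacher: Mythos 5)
Your proof is correct, and it takes a route that is genuinely different in presentation from the paper's, though it rests on the same two pillars (stationarity at $\mu_2^*$ plus convexity). The paper works directly in the high-dimensional space of distributions: it invokes the stationarity identity $\nabla_\mu\big[\kldiv{\mu(Y{=}y)}{p_2}\big]_{\mu_2^*} = -\nabla_\mu\bbr{\dg M}_\gamma(\mu_2^*)$, computes the directional derivative $\nabla_\mu\bbr{\dg M}_\gamma(\mu_2^*)^{\sf T}(\mu_1^*-\mu_2^*)$ explicitly, and then applies the first-order convexity inequality for $\bbr{\dg M}_\gamma$. You instead collapse the problem to one dimension by defining the marginal value function $\phi(m)$, and your two intermediate observations---that $\phi$ is convex, and that $\bbr{\dg M}_\gamma(\mu_i^*) = \phi(m_i)$ because $\mu_i^*$ minimizes $\bbr{\dg M}_\gamma$ on its own slice---are correct and carry real content that is implicit but not isolated in the paper's version. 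This buys you a cleaner treatment of constraints: the simplex constraint is swallowed into $\phi$, so the first-order condition is just a scalar subgradient inclusion, whereas the paper's unconstrained-looking stationarity relies on the (unstated but harmless) fact that the Lagrange-multiplier term $\lambda\mathbf 1$ is orthogonal to $\mu_1^*-\mu_2^*$. The boundary caveat $m_2\in(0,1)$ that you flag is present in both arguments; the paper's derivative also blows up there. Your ``what the naive comparison misses'' remark is also right: testing $\mu_1^*$ in $\mu_2^*$'s objective leaves you short by exactly $\kldiv{m_1}{m_2}$, and it is precisely the optimality of $\mu_1^*$ \emph{within its own slice}---the content of $\phi(m_1)=\bbr{\dg M}_\gamma(\mu_1^*)$---that recovers the lost term.
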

\begin{lproof}
    For $p, q \in [0,1]$, $\kldiv pq := p \log \frac{p}{q} + (1-p) \log \frac{1-p}{1-q}$ is the relative entropy between the Bernouli distributions described by their parameters.
    First, we calculate
    \begin{align*}
        \frac{\mathrm d}{\mathrm d p} \kldiv{p}{q}
        &= \log \frac{p}{q} + \frac{\Cancel{p}}{\Cancel{p}} \Cancel{q} \frac{\mathrm d}{\mathrm dp} \left[ \frac{p}{\Cancel{q}} \right]
        + (-1) \log \frac{1-p}{1-q} + \Cancel{(1-q)} \frac{\mathrm d}{\mathrm d p}\left[\frac{1-p}{\Cancel{1-q}}\right] \\
        &= \log \frac pq + 1 - \log \frac{1-p}{1-q} -1  \\
        &= \log \left( \frac{p}{q} \frac{1-q}{1-p} \right).
    \end{align*}
    Thus,
    \begin{align*}
        \nabla_\mu \big[ \kldiv {\mu(Y{=}y)}{q} ]
        &= \nabla_\mu[ \mu(Y{=}y) ] \log \left( \frac{\mu(Y{=}y)}{q} \frac{1-q}{1-\mu(Y{=}y)} \right) \\
        &= \mathbbm1[Y{=}y] \log \left( \frac{\mu(Y{=}y)}{q} \frac{1-q}{1-\mu(Y{=}y)} \right).
    \end{align*}

    Recall the stationary conditions, which state that, 
    since $\mu_2^* \in \bbr{\dg M + \Pr(Y{=}y) = p_2}$, we have
    \[
        \nabla_\mu \Big[\kldiv{\mu(Y{=}y)}{p_2} \Big]_{\mu=\mu_2^*} = -
            \nabla_\mu \bbr{\dg M}_\gamma(\mu_2^*).
    \]
    Now use the above to compute the directional derivative of interest:
    \begin{align*}
        &\nabla_\mu \bbr{\dg M}_\gamma(\mu_2^*)^{\sf T} (\mu_1^* - \mu_2^*)
        \\
        &= -(\mu_1^* - \mu_2^*)^{\sf T} \nabla_\mu \big[ \kldiv{\mu(Y{=}y)}{p_2} \big]_{\mu=\mu_2^*} \\
        &= \big( \mu_2^*(Y{=}y) - \mu_1^*(Y{=}y) \big) \log \left( \frac{\mu_2^*(Y{=}y)}{p_2} \frac{1-p_2}{1-\mu_2^*(Y{=}y)} \right) \\
        &= (m_2-m_1) \log \frac{m_2}{p_2}\frac{1-p_2}{1-m_2}.
    \end{align*}
    Finally, since $\bbr{\dg M}_\gamma$ is convex, we have
    \begin{align*}
        \bbr{\dg M}_\gamma(\mu_1^*) - \bbr{\dg M}_\gamma(\mu_2^*)
            &\ge \nabla_\mu \bbr{\dg M}_\gamma(\mu_2^*)^{\sf T} (\mu_1^* - \mu_2^*)
                \\
            &= s_1
                (m_2 - m_1 )
                \log
                \frac{m_2}{p_2}
                \frac{1-p_2}{1-m_2}
    \end{align*}
    as promised.
\end{lproof}

\begin{lemma}\label{lem:D-lowerbound}
    Suppose that $0 < b < z < p^* < 1$,
    and let
    \[
        \mu_z^* \in \bbr[\Big]{\dg M ~+~ \Pr(Y{=}y )=z}^*_\gamma
        \qquad\text{and}\qquad
        \mu_b^* \in \bbr[\Big]{\dg M ~+~ \Pr(Y{=}y )=b}^*_\gamma
    \]
    be the respective optimal distributions for their corresponding PDGs.
    If $\Vert \mu_b - \mu_z \Vert_1 \le \delta$, then
    \[
        f(b) - f(z) \ge (z-b)^2 - \left( \frac2b   + \log \frac{1-b}{1-z} + \log \frac1b \right) \delta.
    \]
\end{lemma}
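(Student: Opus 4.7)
The approach is to decompose $f(b) - f(z)$ using the definition of $f$ and exploit the optimality of $\mu_z^*$ to cancel the intractable score difference $\bbr{\dg M}_\gamma(\mu_b^*) - \bbr{\dg M}_\gamma(\mu_z^*)$, reducing the problem to bounding an affine function in the single variable $m_b := \mu_b^*(Y{=}y)$. Expanding definitions gives $f(b) - f(z) = [\bbr{\dg M}_\gamma(\mu_b^*) - \bbr{\dg M}_\gamma(\mu_z^*)] + [\kldiv{m_b}{b} - \kldiv{m_z}{z}]$, where $m_z := \mu_z^*(Y{=}y)$. Since $\mu_z^*$ minimizes $\bbr{\dg M + \Pr(Y{=}y) = z}_\gamma$, evaluating at the feasible point $\mu_b^*$ gives $\bbr{\dg M}_\gamma(\mu_z^*) + \kldiv{m_z}{z} \le \bbr{\dg M}_\gamma(\mu_b^*) + \kldiv{m_b}{z}$; rearranging and substituting into the decomposition cancels the $\kldiv{m_z}{z}$ terms and yields $f(b) - f(z) \ge L(m_b)$, where $L(m) := \kldiv{m}{b} - \kldiv{m}{z} = m \log(z/b) + (1-m)\log((1-z)/(1-b))$ is affine in $m$ with positive slope $s := \log \frac{z(1-b)}{b(1-z)}$.

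Next, at $m = z$ the function $L$ collapses to the Bernoulli relative entropy $\kldiv{z}{b}$, which Pinsker's inequality bounds below by $2(z-b)^2$. Writing $L(m_b) = L(z) + (m_b - z)\, s$, the case $m_b \ge z$ is handled immediately, since the correction is nonnegative and $L(m_b) \ge L(z) \ge 2(z-b)^2$. In the case $m_b < z$, I need to show $z - m_b \le \delta$; this is where the assumption $z < p^*$ enters the argument. Strict convexity of $f$ (\cref{coro:special-inc-strictly-cvx}) together with $z < p^*$ forces $f'(z) \le 0$, which by \cref{lem:fprime} implies $m_z \ge z$; combining with $|m_b - m_z| \le \Vert \mu_b^* - \mu_z^* \Vert_1 \le \delta$ then yields $m_b \ge m_z - \delta \ge z - \delta$, as required.

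Finally, bounding the slope by $s = \log(z/b) + \log((1-b)/(1-z)) \le \log(1/b) + \log((1-b)/(1-z))$ (using $z \le 1$), I obtain $L(m_b) \ge 2(z-b)^2 - \delta\bigl(\log(1/b) + \log((1-b)/(1-z))\bigr)$, which implies the stated inequality---the extra $2\delta/b$ term in the conclusion is slack, obtainable by further loosening $\log(1/b) \le 2/b$. The main obstacle in the plan is the case analysis on the sign of $m_b - z$, and in particular pinning down $m_z \ge z$ via convexity; this is the single place where the hypothesis $z < p^*$ is actually used. The rest of the proof is straightforward bookkeeping: the decomposition is definitional, the optimality trick is a standard convex-duality move, and Pinsker handles the quadratic term without further work.
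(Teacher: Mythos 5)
Your proof is correct, and it takes a genuinely different and cleaner route than the paper's. The paper proves $\bbr{\dg M}_\gamma(\mu_b^*) - \bbr{\dg M}_\gamma(\mu_z^*) \ge (m_z - m_b)\log\frac{m_z(1-z)}{z(1-m_z)}$ via a gradient bound (\cref{lem:score-gradient-bound}), then does an elaborate four-way decomposition $\squa0 + \squa1 + \squa2 + \squa4$ and recombines terms. You instead bound the same score difference by plugging $\mu_b^*$ into the functional whose minimizer is $\mu_z^*$; this is a more elementary move, avoiding differentiability altogether, and the payoff is that after substitution the entire expression collapses to $L(m_b) := \kldiv{m_b}{b} - \kldiv{m_b}{z}$, whose $m\log m$ and $(1-m)\log(1-m)$ terms cancel, leaving an affine function with slope $\log\frac{z(1-b)}{b(1-z)}$. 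That affine observation is exactly what the paper's $\squa$-bookkeeping is working so hard to recover. Both proofs use Pinsker in the same place (to lower-bound $\kldiv{z}{b}$) and both lean on $m_z \ge z$, which you derive from $f'(z) \le 0$ via \cref{lem:fprime} and \cref{coro:special-inc-strictly-cvx}, matching the paper's invocation. One minor observation: the paper sharpens the $\delta$-coefficient to $\frac12 \log\frac{z(1-b)}{b(1-z)}$ by using $|m_b - m_z| \le \frac12\Vert\mu_b^* - \mu_z^*\Vert_1$ (total variation is half the $\ell_1$ norm); your argument uses the weaker $|m_b - m_z| \le \delta$, which loses a factor of two but is still easily sufficient for the lemma as stated. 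If you substitute the TV bound, your approach recovers the paper's tighter internal bound with no additional work.
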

\begin{lproof}

    Because we have assumed $b < z < p^*$ and $f$ is strictly convex (\cref{lemma:inc-strictly-cvx}),  it follows from \cref{lem:fprime} that
        $m_z := \mu^*_z(Y{=}y) \ge z$ and
        $m_b := \mu^*_b(Y{=}y) \ge b$.
    Now
    \begin{align*}
        f(b) - f(z) &=
        \Big( \bbr{\dg M}_\gamma(\mu_b^*) +  \kldiv{ m_b }{b} \Big)
        -\Big( \bbr{\dg M}_\gamma(\mu_z^*) + \kldiv{ m_z }{z} \Big) \\
        &= \Big( \bbr{\dg M}_\gamma(\mu_b^*) - \bbr{\dg M}_\gamma(\mu_z^*) \Big)
         + \kldiv{m_b}{b} - \kldiv{m_z}{z}.
    \end{align*}
    \cref{lem:score-gradient-bound} will give us a lower bound for the first half;
    we now investigate the second half.
    The first step is some algebraic manipulation.
\colorlet{err4}{blue!50!black}
\colorlet{err3}{violet}
\colorlet{err2}{red}
\colorlet{err1}{orange}
\colorlet{err0}{.}
\colorlet{zero}{gray}
\def\squa#1{{\color{err#1}\blacksquare_{#1}}}
\begin{align*}
&\!\! \kldiv{m_b}{b} - \kldiv{m_z}z \\
&=
    -  m_z \log \frac{m_z}{z} - (1-m_z) \log \frac{1-m_z}{1-z}
    +  m_b \log \frac{m_b}{b} + (1-m_b) \log \frac{1-m_b}{1-b}
\\ &=
    m_z \log \frac{z}{m_z} + (1-m_z) \log \frac{1-z}{1-m_z}
    \quad+( m_b + {\color{zero}m_z - m_z} ) \log \frac{m_b}b
    \\&\qquad\qquad
     + \Big( (1-m_b) +{\color{zero}(1-m_z) - (1-m_z)} \Big)\log\frac{1-m_b}{1-b}
     &\text{(add {\color{zero}zero})}
 \\ &=
     m_z \Big( \log \frac{z}{m_z} + \log \frac{m_b}{b} \Big) +
     (1-m_z) \Big( \log \frac{1-z}{1-m_z} + \log \frac{1-m_b}{1-b} \Big)
        \\&\qquad\qquad
     + (m_b - m_z) \log \frac{m_b}{b}
     + \Big( (1-m_b) - (1-m_z) \Big) \log \frac{1-m_b}{1-b}
    &\text{(collect $z$-marginal terms)}
\\&=
    m_z \log \frac z b {\color{err1}\frac{m_b}{m_z}}
    + (1-m_z) \log \frac{1-z}{1-b} {\color{err1}\frac{1-m_b}{1-m_z}}
    \\&\qquad\qquad
    + {\color{err2}( m_b - m_z) \log \frac{m_b}{b} \frac{1-b}{1-m_b}}
\\&=:~~ \squa0 ~+~ \squa1 ~+~ \squa2,
\end{align*}
where, to be explicit, we have defined
\begin{align*}
  \squa0~ &=  m_z \log \frac zb + (1-m_z) \log \frac{1-z}{1-b}
\\\squa1~ &= \color{err1}
        m_z \log \frac{m_b}{m_z} + (1-m_z) \log \frac{1-m_b}{1-m_z}
    \qquad  = - \kldiv{m_z}{m_b}
\\\squa2~ &= \color{err2}
        ( m_b - m_z) \log \frac{m_b}{b}
                \frac{1-b}{1-m_b}
        .
\intertext{There is one final quantity that will play a similar role. Let}
\squa4~&:=  (m_z - m_b ) \log \frac{m_z}{z} \frac{1-z}{1-m_z}
\end{align*}
be the
lower bound on $\bbr{\dg M}_\gamma(\mu_b^*) - \bbr{\dg M}_\gamma(\mu_z^*)$
obtained by applying \cref{lem:score-gradient-bound} with $p_1 = b$ and $p_2 = z$.
With these definitions, we have $f(b) - f(z) \ge
\squa0 + \squa1 + \squa2 + \squa4$.

Observe that if $m_z$ were equal to $z$, then $\blacksquare_0$ would equal $\kldiv{z}{b}$. But in fact we know that $m_z > z$.
It is easy to see that that $\blacksquare_0$ is linear in $m_z$ with positive slope, since $z > b$ and $1-b  > 1-z$. It follows that $\blacksquare_0 > \kldiv zb$.

Let's step back for a momemnt.
\cref{lem:strongly-cvx-ish} shows that $f(b)$ and $f(z)$ cannot be too close, provided that $\mu_z^*$ and $\mu_b^*$ are far apart.
In the equations above, we can see the beginnings of a complementary argument: if $\mu_z^*$ and $\mu_b^*$ are close together, then $m_b \approx m_b$, and so all terms apart from $\blacksquare_0$ (i.e., $\squa1+\squa2+\squa4$) go to zero. And yet, because of $\squa0$, $f(b)$ and $f(z)$ remain far apart.
To make this argument precise, we now merge $\squa1$, $\squa2$, and $\squa4$ back together, calculating
\begin{align*}
    \squa2 + \squa4  + \squa1
        &= (m_b - m_z) \log \frac{m_b}{b} \frac{1-b}{1-m_b} \frac{z}{m_z} \frac{1-m_z}{1-z}
            - \kldiv{m_z}{m_b} \\
        &= (m_b - m_z) \log \frac{z}{b} \frac{1-b}{1-z}
            + (m_b - m_z) \log \frac{m_b}{m_z} \frac{1-m_z}{1-m_b} + m_z \log \frac {m_b}{m_z} + (1-m_z) \log \frac{1-m_b}{1-m_z} \\
        &= (m_b - m_z) \log \frac{z}{b} \frac{1-b}{1-z}
            + (m_b - \Cancel{m_z} + \Cancel{m_z}) \log \frac{m_b}{m_z}
            + (1- \Cancel{m_z} + \Cancel{m_z} - m_b) \frac{1-m_b}{1-m_z} \\
        &=  (m_b - m_z) \log \frac{z}{b} \frac{1-b}{1-z} + \kldiv{m_b}{m_z} \\
        &\ge (m_b - m_z) \log \frac{z}{b} \frac{1-b}{1-z}.
\end{align*}

Suppose that $\Vert \mu_z^* - \mu_b^*\Vert_1 \le \delta$.
Because the total variation distance $\mathrm{TV}(p,q)$ is half the L1-norm $\Vert p-q\Vert_1$ for discrete distributions,
\[
\delta 
\ge \Vert \mu_z^* - \mu_b^*\Vert_1 
= 2\, \mathrm{TV}(\mu_z^*, \mu_b^*) 
\ge 
2 \big|\mu_z^*(Y{=}y) - \mu_b^*(Y{=}y)\big| 
= 2 |m_z -  m_b|
.\]
Thus,
\[
    \squa1 + \squa2 + \squa4
    ~\ge~
    -\frac\delta2
     \log \frac{z}{b} \frac{1-b}{1-z}.
\]

All that remains is $\squa0$. To put things in a convenient form,
we apply Pinkser's inequality.  The total variation distance between two Bernouli distributions  (i.e., binary distributions) with respective positive probabilities $p$ and $q$ is just $|p-q|$.
Pinsker's inequality \parencite{pinsker-inequality} in this case says: $\frac12 \kldiv{p}{q} \ge (p-q)^2$.
Thus, $\squa0 > \kldiv zb \ge 2 (z-b)^2$,
and so we have
\begin{align*}
    f(b) - f(z) \ge 2 (z-b)^2 - \left(
        \frac12
         \log  \frac zb \frac{1-b}{1-z} \right) \delta
         \qquad\text{as promised.}\qquad
         \qedhere
\end{align*}
\end{lproof}

\begin{lemma} \label{lem:togetherbound}
Suppose that $b < z < p^*$.
Furthermore, suppose that $| \log \frac{z}{b} \frac{1-b}{1-z}| \le k$.
Not only is it the case that $f(b) > f(z)$, but also
\[
    f(b) - f(z) \ge
        \frac{k^2}{32 \gamma}
        \log^2 \left( 1 +  16 \frac{\gamma}{k^2} (z-b)^2 \right)
        .
\]
\end{lemma}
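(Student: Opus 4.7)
\begin{lproof}[Proof plan]
The plan is to combine two lower bounds on $f(b) - f(z)$: one from \cref{lem:D-lowerbound} that is strong when $\delta := \|\mu_b^* - \mu_z^*\|_1$ is small, and another from (a symmetric use of) \cref{lem:strongly-cvx-ish} that is strong when $\delta$ is large. Since neither lemma lets us control $\delta$, I then minimize over the worst-case $\delta \ge 0$ to obtain an unconditional bound, and finish with a clean elementary inequality.

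First, because the hypothesis $|\log \tfrac{z}{b}\tfrac{1-b}{1-z}| \le k$ controls the coefficient appearing at the very end of the proof of \cref{lem:D-lowerbound}, that lemma gives
\[
f(b) - f(z) \;\ge\; 2(z-b)^2 \;-\; \tfrac{k}{2}\,\delta. \tag{A}
\]
For the other bound, note that the proof of \cref{lem:strongly-cvx-ish} actually establishes the stronger midpoint estimate $f(p_\lambda) \le (1-\lambda) f(b) + \lambda f(z) - \tfrac{\gamma}{2}(1-\lambda)\lambda\,\delta^2$ for every $\lambda\in[0,1]$; I will take the one-sided limit $\lambda\to 1^-$ (instead of $\lambda\to 0^+$ as in the original proof) to obtain the symmetric first-order inequality $f(b) \ge f(z) + f'(z)(b-z) + \tfrac{\gamma}{2}\delta^2$. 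Because $z<p^*$ and $f$ is strictly convex with unique minimizer $p^*$ (by \cref{coro:special-inc-strictly-cvx} together with \cref{lem:inc-inc-eq}, and using \cref{lem:fprime} to identify $p^*$), we have $f'(z)<0$, so $f'(z)(b-z)\ge 0$ and
\[
f(b) - f(z) \;\ge\; \tfrac{\gamma}{2}\,\delta^2. \tag{B}
\]

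Taking the maximum of (A) and (B) and then minimizing over the unknown $\delta\ge 0$,
\[
f(b) - f(z) \;\ge\; \min_{\delta \ge 0}\,\max\!\left\{\, 2(z-b)^2 - \tfrac{k}{2}\delta,\; \tfrac{\gamma}{2}\delta^2 \,\right\}.
\]
The first bound is decreasing in $\delta$ and the second is increasing, so the minimum is attained at their crossing point. Setting $\gamma\delta^2 + k\delta - 4(z-b)^2 = 0$ gives $\delta^* = (\sqrt{k^2 + 16\gamma(z-b)^2}-k)/(2\gamma)$, and writing $t := 16\gamma(z-b)^2/k^2$, the common value equals $\tfrac{\gamma}{2}(\delta^*)^2 = \tfrac{k^2}{8\gamma}(\sqrt{1+t}-1)^2$.

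Finally, the elementary inequality $\log s \le s-1$ for $s>0$, applied with $s = \sqrt{1+t}$, yields $\log(1+t) = 2\log\sqrt{1+t} \le 2(\sqrt{1+t}-1)$, hence
\[
\frac{k^2(\sqrt{1+t}-1)^2}{8\gamma} \;\ge\; \frac{k^2\log^2(1+t)}{32\gamma} \;=\; \frac{k^2}{32\gamma}\log^2\!\left(1 + \frac{16\gamma(z-b)^2}{k^2}\right),
\]
which is exactly the claimed bound. The main obstacle is cosmetic: one must recognize that the proof of \cref{lem:strongly-cvx-ish} yields the ``lower-endpoint'' variant (B) by taking the other one-sided limit; every other step is routine optimization of a one-variable max-function plus the standard $\log s \le s-1$ inequality.
\end{lproof}
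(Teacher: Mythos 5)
Your proof is correct and follows the same route as the paper: combine the two $\delta$-dependent lower bounds, solve for the worst-case $\delta$ via the quadratic formula, and finish with $\sqrt{1+x}-1 \ge \tfrac12\log(1+x)$. The one step you handle more carefully than the paper: \cref{lem:strongly-cvx-ish} as stated is anchored at the smaller endpoint, so applied with $p_1=b$, $p_2=z$ it yields $f(z) \ge f(b) + f'(b)(z-b) + \tfrac{\gamma}{2}\delta^2$, which is an \emph{upper} bound on $f(b)-f(z)$; the paper nonetheless invokes $f(b)-f(z)\ge\tfrac{\gamma}{2}\delta^2$ without comment. Your $\lambda\to 1^-$ derivation of the variant anchored at $z$ (and dropping the then-nonnegative term $f'(z)(b-z)$, using $f'(z)<0$) supplies exactly the missing justification. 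Substituting $B\le k/2$ into bound (A) upfront, rather than computing $\delta_{\mathrm{worst}}$ as a function of $B$ and then observing monotonicity, is a harmless streamlining that reaches the same constant.
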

\begin{lproof}
We now have two bounds that work in different regimes.
If $\delta = \Vert \mu_b^* - \mu_z^* \Vert_2$ is large, then the
    argument of \cref{lem:strongly-cvx-ish} is effective, as it shows that a separation between $f(b)$ and $f(z)$ that scales with $\delta^2$.
On the other hand, if $\delta$ is small, we saw in \cref{lem:D-lowerbound} a very different approach that still gets us a separation of $2(z-b)^2$
    even if $\delta = 0$.
We now combine the two cases to eliminate the (unknown) parameter $\delta$ from our complexity analysis.
(Our algorithm is no different in the two cases; all that differs is the analysis.)

Taken together, we know that we attain the maximum of the two lower bounds, which is weakest when they coincide. We can then solve for the worst-case value of $\delta$, which leads to the smallest possible separation between $f(z)$ and $f(b)$. Setting the two bounds equal to one another:
\begin{align*}
    \frac12 \gamma \delta_{\text{worst}}^2 =
        2 (z-b)^2 - \left(\frac{1}{2}
         \log  \frac zb \frac{1-b}{1-z} \right) \delta\\
    \iff\qquad
    \frac\gamma2  \delta_{\text{worst}}^2
    + \Big(\frac{1}{2} \log  \frac zb \frac{1-b}{1-z} \Big)\delta_{\text{worst}}
    - 2 (z-b)^2 = 0.
\end{align*}
The quadratic equation then tells us that
\begin{align*}
     \delta_{\text{worst}} = \frac{1}{\gamma}
        \left( -B + \sqrt{ B^2 + 4 \gamma (z-b)^2 } \right),
        \qquad\text{where}\qquad
        B := \frac{1}{2} \log  \frac zb \frac{1-b}{1-z} 
        .
\end{align*}
It is easily verified that this expression for $\delta_{\text{worst}}$ is decreasing in $B$.
Therefore, we get a lower bound on it by plugging in our upper bound $\frac{k}2$ for $B$.
Thus $\delta_{\text{worst}} \ge \frac{1}{\gamma}
   \left( -k + \sqrt{ k^2 + 4 \gamma (z-b)^2 } \right)$.

Square roots are not easy to manipulate in general, and this expression in particular has involves a nested subtraction that makes it hard to characterize.
To make things clearer, we now begin to loosen this bound to get a quantity that
    is easier to think about.
The first observation is that, for any numbers $A, B > 0$,
\[
    - B + \sqrt{B^2 + A} = B \left( -1 + \sqrt{1 + \frac{A}{B^2}} \right).
\]
This manipulation puts the square root in the standard form $\sqrt{1 + x}$.
Here is the second observation: for all $x$, $-1 + \sqrt{1+ x} \ge \frac12 \log(1+ x)$
    (verified in \cref{lem:sqrt-log-bound} below).
Although it gives a looser bound, the logarithm is easier to manipulate and no longer involves subtraction.
Applying these two transformations in our case, we find:
\begin{align*}
     \delta_{\text{worst}} &=
     \frac{1}{\gamma}
        \left( - B + \sqrt{ B^2 + 4 \gamma (z-b)^2 } \right)\\
    &\ge \frac{1}{\gamma}
        \left( - \frac k2 + \sqrt{ \frac{k^2}{4} + 4 \gamma (z-b)^2 } \right)\\
    &= \frac{k}{2 \gamma}
       \left( -1 + \sqrt{ 1 + \frac{16}{k^2} \gamma  (z-b)^2 } \right)
       \\
    &\ge \frac{k}{4\gamma}
        \log \left( 1 +  \frac{16}{k^2} \gamma (z-b)^2 \right)
    .
\end{align*}
Finally, since $f(b) - f(z) \ge \frac\gamma2 \delta_{\text{worst}}^2$, to get lower bound for $f(b) - f(z)$, we simply need to square this lower bound for $\delta_{\text{worst}}$ and mutliply by $\gamma/2$.
As a result,
\begin{align*}
    f(b) - f(z) ~&\ge~
    \frac\gamma2 \frac{k^2}{16 \gamma^2}
        \log^2 \left(
            1 +  \frac{16 \gamma}{k^2} (z-b)^2 
        \right)
        \\
    &= \frac{k^2}{32 \gamma}
            \log^2 \left(
                1 +  \frac{16 \gamma}{k^2} (z-b)^2
             \right)
        ,
\end{align*}
where $\log^2(x)$ means $(\log(x))^2$.
\end{lproof}

\begin{lemma}\label{lem:sqrt-log-bound}
    $-1 + \sqrt{1+ x} \ge \frac12 \log(1+ x)$.
\end{lemma}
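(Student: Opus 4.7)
The plan is to reduce this to a well-known elementary inequality by a single substitution. Set $t := \sqrt{1+x}$, so that $t \ge 0$ and $1+x = t^2$. Then the right-hand side becomes $\tfrac12 \log(t^2) = \log t$, and the desired inequality reads
\[
    t - 1 \ge \log t,
\]
which is the standard inequality $\log t \le t - 1$ valid for all $t > 0$ (with equality iff $t = 1$).

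If we wanted to avoid citing that inequality as a black box, the easiest backup is a one-line calculus argument: let $g(t) := t - 1 - \log t$ on $t > 0$; then $g'(t) = 1 - 1/t$, which is negative on $(0,1)$ and positive on $(1,\infty)$, so $g$ attains its minimum at $t = 1$ where $g(1) = 0$. Hence $g \ge 0$ everywhere, giving $t - 1 \ge \log t$, and substituting back yields the claim. The only ``obstacle'' worth noting is the implicit domain assumption $x \ge -1$ (otherwise neither side is real-valued), which is automatic in the way the lemma is used in the proof of \cref{lem:togetherbound}, since the argument of the square root there is $1 + \tfrac{16\gamma}{k^2}(z-b)^2 \ge 1$.
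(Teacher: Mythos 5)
Your proof is correct and is essentially the same as the paper's: the paper applies $e^{y} \ge 1+y$ with $y = -1 + \sqrt{1+x}$, which is just your inequality $t - 1 \ge \log t$ at $t = \sqrt{1+x}$ rewritten in exponential form. Both arguments reduce to the same elementary fact after the same substitution.
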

\begin{lproof}
    Apply the well-known inequality $e^{y} \ge 1 + y$
    with $y = -1 + \sqrt{1 + x}$, to get
    \begin{align*}
        \exp(-1 + \sqrt{1+x}) &\ge \sqrt{1+x}, \\
        \text{which implies}\qquad
        -1 + \sqrt{1+x} &\ge \log \sqrt{1+x}
            = \frac12 \log(1+x).
            \qedhere
    \end{align*}
\end{lproof}

We are now ready to tackle the theorem itself.
\clearpage

\recall{theorem:inf-via-inc-oracle}
\begin{lproof}\label{proof:inf-via-inc-oracle}
\textbf{(a,b).}~~
Suppose that we have access to a procedure
    that can calculate a PDG's degree of inconsistency.
The idea behind the reductions of both (a) and (b) is to
to perform $\zogamma$-inference on a given PDG $\dg M$,
using this procedure as a subroutine.
The complexity of the reduction depends on 
the specification of the inconsistency calculation procedure. 
We will perform two analyses, the second building on the first.
\begin{enumerate}
    \item
        First, we assume the procedure simply
        tells us which of two PDGs
        is more inconsistent.
        With this assumption,
        we get an algorithm that can answer unconditional
        probability queries with the
        optimal complexity of part (a) of the theorem.

    \item 
    We then provide a refinement of that algorithm
        that still works if the inconsistency calculation procedure
        produces only finite-precision binary approximations to inconsistency values%
        ---thus reducing the problem of approximate inference that of approximately calculating PDG inconsistency.
    This considerably more difficult analysis gives us part (b).
    In addition, we extend the algorithm, using \cref{lem:logeps-conditioner},
        so that it can also answer conditional queries. 
\end{enumerate}

We begin by describing our algorithm,
which uses the first variant of the inconsistency procudure
    (the one that tells us which of two PDGs is more inconsistent)
to produce a sequence of approximations
$(p_1, p_2, \ldots)$
that converges exponentially to
\[
    p^* := \bbr{\dg M}_\gamma^*(Y{=}y)
        \overset{\vphantom{\big|}\text{(\cref{lem:inc-inc-eq})}}{=}
        \argmin_{p} \aar[\Big]{\dg M + (\Pr(Y{=}y)=p)}_\gamma
\]
through a variant of binary search.
The state of the algorithm
consists of three points in an interval
$a,b,c \in [0,1]$, where $a \le b \le c$.
Intuitively, $b$ is our current best guess at $p^*$, while $a$ is a lower bound, and $c$ is an upper bound.
Once again (as in \eqref{eqn:f-defn} and in preceding lemmas), let $f$ be
the function
\begin{align*}
    f : [0,1] &\to \Rext\\
    p &\mapsto \aar[\Big]{\dg M + (\Pr(Y{=}y)=p)}_\gamma.
\end{align*}
Both variants of the inconsistency calculation procedure will be employed
for the sole purpose of determining whether or not
    $f(p) > f(p')$, given $p, p' \in [0,1]$.
We start with the simpler variant, 
which can directly determine which of two
PDGs has greater inconsistency. 
In this case, the $\gg$ and $\ll$ in the
algorithm below should be interpreted simply as $>$ as $<$.
This will enable us to approximate the minimizer $p^*$ of $f$
arbitrarily closely, with the following algorithm.

    \medskip

\begin{minipage}{0.49\linewidth}
    \rule{4in}{0.2ex}
    \begin{algorithmic}
        \STATE Initialize $(a,b,c) \gets (0, \frac12, 1)$;
        \WHILE{$|c - a| > \epsilon$}
            \IF{$b-a \ge c-b$}
                \STATE Let $z := \frac{b + a}{2}$;
                \smallskip
                \IF{$f(b) \gg f(z)$}
                    \STATE $(a,b,c) \gets (a,z,b)$;
                \ELSE
                    \STATE $(a,b,c) \gets (z,b,c)$;
                \ENDIF
            \medskip
            \ELSIF{$c-b > b-a$}
                \STATE Let $z := \frac{b + c}{2}$;
                \smallskip
                \IF{$f(z) \ll f(b)$}
                    \STATE $(a,b,c) \gets (b,z,c)$;
                \ELSE
                    \STATE $(a,b,c) \gets (a,b,z)$;
                \ENDIF
            \ENDIF
        \ENDWHILE
        \STATE \textbf{return} $b$;
    \end{algorithmic}
    \rule{4in}{0.2ex}
\end{minipage}
\begin{minipage}{0.49\linewidth}
    \begin{tikzpicture}
        \coordinate (c1) at (-0.5,1);
        \coordinate (c2) at (0.75,0.7);
        \fill[red,opacity=0.3] (-3,-.2) rectangle (-2,0.2);
        \fill[orange,opacity=0.3] (-2,-.2) rectangle (-0.5,0.2);
        \fill[green,opacity=0.3] (2,-.2) rectangle (-0.5,0.2);
        \fill[red,opacity=0.3] (2,-.2) rectangle (3,0.2);
        \draw[very thick] (-3,0) -- (3,0);
        \draw[thick] (-2,0.2) -- (-2,-0.2) node[below]{$a$};
        \draw[thick] (2,0.2) -- (2,-0.2) node[below]{$c$};
        \draw[] (0.75,0.2)  -- (0.75,-0.2) node[below]{$z$};
        \draw[thick] (-0.5,0.2) -- (-0.5,-0.2) node[below]{$b$};

        \draw[cyan,gray] plot [smooth] coordinates { (-2,2) (c1) (c2) (1.5,0.6) (2,0.75)};
        \draw[cyan,gray] plot [smooth] coordinates { (-2,3) (c1) (c2) (2,2)};

        \fill (c1) circle (0.1);
        \fill (c2) circle (0.1);

        \node[align=left] at (1.7,2.7) {$\left.\displaystyle\begin{array}{c}
            f~\text{strictly convex}\\
            z > b \\
            f(b) > f(z)
    \end{array}\right\}\implies~~p^* > b$};
    \end{tikzpicture}
\end{minipage}

    We begin by proving that this algorithm does indeed output
    a point within $\epsilon$ of
    $p^*
    $.
    Because $f$ is convex,
    this algorithm satisfies an important invariant: 
    
    \begin{iclaim} \label{claim:reduction-works}
    Both $b$ and $p^*$
    always lie in the interval $[a,c]$.
    \end{iclaim}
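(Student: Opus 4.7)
My plan is to prove the claim by induction on the number of iterations completed, with the base case being trivial ($a=0$, $b=\nicefrac12$, $c=1$ trivially contains both $\nicefrac12$ and $p^*\in[0,1]$) and the inductive step relying entirely on the strict convexity of $f$ established in \cref{coro:special-inc-strictly-cvx}. The key geometric fact I will repeatedly invoke is this: if $g:[0,1]\to\mathbb R$ is strictly convex with unique minimizer $p^*$, and $x < y$ are two points with $g(x) < g(y)$, then $p^* < y$; symmetrically, if $g(x) > g(y)$ then $p^* > x$. Both follow because, e.g., if $p^* \ge y > x$, then $y$ lies strictly between $x$ and $p^*$, so strict convexity gives $g(y) < \max(g(x),g(p^*)) = g(x)$, contradicting the hypothesis.

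For the inductive step I will consider the four subcases in the algorithm. In the first branch, $z = \tfrac{a+b}{2}$ so $a \le z \le b$. If $f(b) > f(z)$, the fact above (applied with $x=z$, $y=b$) shows $p^* \le b$; combined with the inductive hypothesis $p^* \ge a$, we get $p^* \in [a,b] = [a',c']$ with $b' = z \in [a,b]$. If instead $f(b) \le f(z)$, the symmetric form gives $p^* \ge z$, so $p^* \in [z,c] = [a',c']$ with $b' = b$ still in $[z,c]$. The second branch ($z = \tfrac{b+c}{2}$) is completely analogous, and I will dispatch it by symmetry rather than re-deriving it.

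\begin{lproof}
By induction on iterations. Initially $(a,b,c) = (0,\tfrac12,1)$, so $b\in[a,c]$ and $p^*\in[0,1]=[a,c]$.

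Now suppose $p^*,b\in[a,c]$ at the start of some iteration. Recall (\cref{coro:special-inc-strictly-cvx}) that $f$ is strictly convex on $[0,1]$, so it has a unique minimizer $p^*$, and for any $x<y$ in $[0,1]$: if $f(x)<f(y)$ then $p^* < y$, and if $f(x)>f(y)$ then $p^* > x$. (E.g., if $f(x)<f(y)$ but $p^*\ge y$, then $y$ is a strict convex combination of $x$ and $p^*$, so by strict convexity $f(y) < \max(f(x),f(p^*)) = f(x)$, a contradiction.)

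Consider the branch $b-a\ge c-b$, so $z = \tfrac{a+b}{2}\in[a,b]$. If $f(b)>f(z)$, then by the fact above (with $x=z$, $y=b$) we have $p^* < b$, and combined with $p^*\ge a$ from the inductive hypothesis we get $p^*\in[a,b]$; the new triple $(a,z,b)$ therefore satisfies the invariant, since $z\in[a,b]$ as well. Otherwise $f(z)\ge f(b)$, and the same fact gives $p^*\ge z$, hence $p^*\in[z,c]$; the new triple $(z,b,c)$ satisfies the invariant since $b\in[z,c]$ by the inductive hypothesis and $b \ge z$. The branch $c-b > b-a$ is symmetric, exchanging the roles of $a$ and $c$. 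This establishes the invariant after the iteration and completes the induction.
\end{lproof}

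The main (and only real) obstacle is the careful bookkeeping of which case forces $p^*$ to which side; once the ``one-sided'' consequence of strict convexity is stated cleanly, every subcase reduces to a single application of it, and maintenance of the $b\in[a,c]$ part follows automatically from how $z$ was chosen.
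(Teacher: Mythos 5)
Your proof is correct and takes essentially the same route as the paper's: an induction on iterations in which, in each branch, a one-sided consequence of strict convexity rules out the discarded subinterval. The only difference is cosmetic — you isolate the convexity fact as a reusable lemma and dispatch the second branch by symmetry, where the paper writes all four subcases out explicitly.
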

    \textit{Proof.~}
        We proceed by induction on $i$.
        At the beginning, it is obviously true that $b$ and $p^*$ lie in
        $[a,b] = [0,1]$, which contains the entire domain of $f$.
        Now, suppose inductively that $p^* \in [a,b]$ at some
        iteration of the algorithm $i$.
        \begin{itemize}[leftmargin=4em]
            \item [(case 1)] If $b-a \ge c-b$, then $z \in [a,b]$.
            \begin{itemize}[leftmargin=-1em]
                \item Suppose $f(z) < f(b)$. Then for
                all $y > b$, it must be the case that $f(y) > f(b)$ by convexity of $f$.
                (For if $f(y) < f(b)$, then segment between $(z, f(z))$ and $(y, f(y))$ would lie entirely below $(b, f(b))$, which contradicts convexity).
                Thus, we can rule out all such $y$ as possible minimizers of $f$, so
                we can restrict our attention to $[a, b]$, which contains $p^*$ (and $x$).

                \item On the other hand, if $f(z) > f(b)$, then it must be the case that no $y < z$ can be a minimizer of $f$ by convexity, with the same reasing as above.
                (Namely, if $f(y) < f(z)$ then the segment between $(y,f(y))$ and $(b,f(b))$ lies below $(z,f(z))$, contradicting convexity).
                Thus the true minimizer $p^*$ lies in $[z,c]$, an interval which contains $b$.
            \end{itemize}
            \item [(case 2)] The other case is symmetric; we include it for completeness. Suppose $c-b > b-a$,
                and so $z = \frac{b+c}{2}$.
            \begin{itemize}[leftmargin=-1em]
                \item Suppose $f(z) < f(b)$. Then $f(y) > f(b)$ for  all $y < b$
                (because if $f(y) < f(b)$, then segment between $(y, f(y))$ and $(x, f(x))$ would lie  below $(b, f(b))$).
                So $p^*, z \in [b, c]$.

                \item On the other hand, if $f(z) > f(b)$, then $f(y) > f(z)$ for all $y > z$
                (because, if $f(y) < f(z)$ then the segment between $(y,f(y))$ and $(b,f(b))$ lies below $(z,f(z))$, contradicting convexity).
                So $p^*, b \in [a,z]$. \qedhere
            \end{itemize}
        \end{itemize}
        In every case, $p^*$ is still in what becomes
        the interval $[a,c]$ in the next iteration ($i+1$).
        So, by induction, $p^* \in [a,b]$ at every iteration of the algorithm,
        proving \cref{claim:reduction-works}.
        \qedsymbol

    We have shown that both $b$ and $p^*$ lie within $[a,c]$,
    and we know that, at termination, $|c-a| < \epsilon$.
    Therefore, the final value of $b$ (i.e., the ouput of the algorithm)
        must be within $\epsilon$ of $p^*$.

Next, we analyze the complexity of this algorithm, modulo the
    complexity of comparing the numbers $f(z)$ and $f(b)$, which
    we will later bound precisely.
    Each iteration reduces the size of the interval $[a,c]$
        by a factor of at least 3/4.
    This is because in each case we focus on the larger half of the interval,
        and ultimately discard either half or all of it---so
        we reduce the size of the interval by at least one quarter.
    It follows that, after $n$ iterations,
        the size of the interval is at most $(\nf34)^n$,
    and thus the total number of iterations is at most $\lceil\, \log(\nicefrac{1}{\epsilon}) / (\log \nicefrac43)\,\rceil$.
    Apart from the time needed to compare $f(b)$ and $f(z)$,
    it is easy to see that each iteration of the algorithm takes constant time.
    So overall, it requires
    $\log(\nf1\epsilon)$ (enough to track the numbers $\{a,b,c\}$, plus a reference to the PDG $\dg M$), and time $O(\log \nf1\epsilon)$, which is linear in the number of bits returned.
    This completes the proof of \cref{theorem:inf-via-inc-oracle} (a). 

    Although it is common to assume that numbers can be compared in $O(1)$ time, and this is an assumption well suited to modern computer architecture, it is arguably not appropriate in this context.
    How do we know a \texttt{float64} is has enough precision to do the comparison? 
The obvious approach to implementing the inconsistency calculation subroutine
    would be to repeatedly request more and more precise estimates of inconsistency,
    until one is larger than the other---but this procedure does not terminate if the two PDGs have the same inconsistency. 
    So, a priori, it's not even clear that the decision $f(b) > f(z)$
    is computable.
    It is not hard to show that it is in fact computable.
    Because $f$ is strictly convex,
    it must be the case that $f(b) > f(z)$, $f(z) > f(b)$, or
            $f(b) > f(\frac{b+z}{2})$.  
    In the last case, we can act as if $f(b) > f(z)$, and 
    the algorithitm will be correct, because
    the argument supporting \cref{claim:reduction-works} still applies.
    Thus, by running the subroutine on all three questions until one of them
        returns true, and then aborting the other two calculations, we can see that
        the problem of interest is decidable.
    But how long does it take?
    We now provide a deeper analysis of the comparison between $f(z)$ and $f(b)$
    when the inconsistency calculation procedure can give us only finite approximations to the true value. 

    \textbf{Part (b): reduction to approximate inconsistency calculation.}
    Instead of assuming that we have direct access to the numbers $f(b)$ and $f(z)$ and
    can compare them in one step, we now adopt a weaker assumption, that we only have
    access to finite approximations to them.
    With this model of computation, it is not obvious that we can determine which of $f(b)$ and $f(z)$ is bigger---but fortunately, we do not need to.
    This is because, when $f(z)$ and $f(b)$ are close, $p^*$ lies between $z$ and $b$, and so both branches of the algorithm maintain the invariant $p^* \in [a,c]$.
    To simplify our analysis, we will default to keeping the ``left'' branch (with the smaller numbers), if the queried approximations to $f(z)$ and $f(b)$ are too close to determine whether one is larger than the other.

    More precisely, the test ``$f(z) \ll f(b)$'' is now shorthand
    for the following procedure:
    \begin{itemize}
        \item
        Run the inconsitency calculation procedure to obtain
        approximations to $f(z)$ and $f(b)$ that are correct to within
        \begin{equation}
            \epsilon' :=
            \frac{1}{16 \gamma}
                    \log^2 \Big(
                         1 +  8 \gamma (z-b)^2
                     \Big).
            \qquad\qquad\text{(This number comes from \cref{lem:togetherbound}.)}
            \label{eq:query-precision}
        \end{equation}
        Call these approximations $\tilde f(z)$ and $\tilde f(b)$.
        By definition, they satisfy $|f(z) - \tilde f(z)| \le \epsilon'$ and $|f(b) - \tilde f(b)| \le \epsilon'$.
        If $|\tilde f(z) - \tilde f(b)| > \epsilon'$ (so that we know for sure which of $f(z)$ and $f(b)$ is larger based on these approximations), then return
        TRUE If $\tilde f(z) < \tilde f(b)$, and FALSE otherwise.
        \item
        On the other hand, if $|\tilde f(z) - \tilde f(b)| \le \epsilon'$,
        return TRUE if $z < b$ and FALSE if $b > z$.
    \end{itemize}

    The remainder of the proof of correctness demonstrates that this level of precision is enough to never mistakenly eliminate the branch containing $p^*$.
    
    We begin by proving a series of three of additional invariants about the values $(a,b,z,c)$ in each iteration, which are required for our analysis.
    The first property is
    that $b$ and $z$ are not too close to the boundary or each other.

    \begin{iclaim} \label{claim:b-z-eps-sep}
        At the beginning of each iteration,
        $b \in [\frac\epsilon2, 1-\frac\epsilon2]$,
        $z \in [\frac\epsilon4, 1-\frac\epsilon4]$,
       and $|b-z| \ge \frac{\epsilon}{4}$.
    \end{iclaim}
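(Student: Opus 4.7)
The plan is to establish all three bounds simultaneously by induction on the iteration index $i$, using the key observation that the loop-continuation condition $c-a > \epsilon$ at iteration $i+1$ supplies exactly the slack needed to push the new $b$ away from the boundary. Without loss of generality $\epsilon \le 1$, since otherwise the initial $b = 1/2$ is already within $\epsilon$ of any $p^* \in [0,1]$ and the algorithm may be regarded as returning immediately. The base case ($i=1$) is then immediate: $(a,b,c)=(0,\tfrac12,1)$ and $z = \tfrac14$ satisfy all three bounds trivially.

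For the inductive step I would enumerate the four possible updates performed in iteration $i$ (case 1 vs.\ case 2, then the \textbf{if} or \textbf{else} branch within each). In the two \textbf{else} branches (sub-cases 1b and 2b), $b$ is unchanged, so the bound on $b$ at iteration $i+1$ follows directly from the inductive hypothesis. The real work is in the two branches where $b$ moves. In sub-case 1a, $b^{(i+1)} = (a^{(i)} + b^{(i)})/2$ and the new interval is $[a^{(i)}, b^{(i)}]$; the loop condition for iteration $i+1$ therefore reads $b^{(i)} - a^{(i)} > \epsilon$, forcing $b^{(i+1)} > a^{(i)} + \epsilon/2 \ge \epsilon/2$. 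A symmetric argument applied to sub-case 2a yields $b^{(i+1)} \le 1 - \epsilon/2$. Once the $b$-bound is in place, the $z$-bound follows from a one-line computation: $z$ is either $(a+b)/2$ or $(b+c)/2$, so combining $b \in [\epsilon/2, 1-\epsilon/2]$ with $a \ge 0$ and $c \le 1$ gives $z \in [\epsilon/4, 1-\epsilon/4]$. The final inequality $|b-z| \ge \epsilon/4$ holds because $z$ is defined to be the midpoint of the \emph{larger} of the two sub-intervals $[a,b]$ and $[b,c]$, so
\[ |b-z| \;=\; \tfrac12\max(b-a,\,c-b) \;\ge\; \tfrac14(c-a) \;>\; \tfrac\epsilon4, \]
again directly invoking the loop condition.

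There is no real obstacle beyond keeping the indexing honest: the bound at iteration $i+1$ must be proved from the termination test for iteration $i+1$ (not iteration $i$), because $b$ can roughly halve per step, so the inductive hypothesis alone is not sharp enough to close the loop. Stated differently, the content of the lemma is that whenever the algorithm is about to halve $b$, the interval $[a,c]$ is simultaneously about to shrink to length $\le \epsilon$, so either the new $b$ stays above $\epsilon/2$ or the algorithm terminates before a new iteration starts---and in the latter case the claim is vacuous at $i+1$.
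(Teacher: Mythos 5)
Your proof is correct and uses the same central observation the paper does—namely, that the loop-continuation test at the start of iteration $i+1$ is exactly what keeps the updated $b$ away from $0$ and $1$—but you present it as a direct induction rather than the paper's proof by contradiction (the paper assumes the bound first fails at iteration $t+1$ and derives that the loop would already have exited). The two arguments are logically interchangeable; your direct version is arguably tidier since each of the two nontrivial branches has only one ``dangerous'' direction (lower bound when $b$ halves toward $a$, upper bound when $b$ halves toward $c$) and the opposite direction is monotone-trivial from the inductive hypothesis, which your phrasing makes implicit. The $z$-interval bound and $|b-z|\ge\epsilon/4$ are handled identically in both write-ups, via $z\in\{(a+b)/2,(b+c)/2\}$ and $|b-z|=\tfrac12\max(b-a,c-b)\ge\tfrac14(c-a)>\tfrac\epsilon4$.
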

    We prove this by contradiction.
    Initially, $b = \frac12$
    so it's neither the case that $b < \frac\epsilon2$ nor that $b > 1-\frac\epsilon2$
    for any $\epsilon < 1$. (The procedure terminates immediately
        if  $\epsilon \ge 1$.)
    In search of a contradiction,
        suppose that either $b < \frac\epsilon2$ or $b > 1 - \frac\epsilon2$ later on.
    Specifically, suppose it first occurs in the $(t+1)^{\text{st}}$ iteration,
        and let $(a_{t+1},b_{t+1},c_{t+1})$ to refer to the values of
        $(a,b,c)$ in that iteration.
    Let $(a_t, b_t, z_t, c_t)$ denote the values of the variables
        in the previous iteration.
    We know that $b_{t+1} \notin [\frac\epsilon2, 1-\frac\epsilon2]$
        and $b_t \in [\frac\epsilon2, 1-\frac\epsilon2]$.
    In particular, $b_{t} \ne b_{t+1}$, which means the procedure
        cannot have taken the second or fourth branches in the $t^{\text{th}}$ iteration.
    There are two remaining cases, corresponding to the first
        and third branches.
    \begin{itemize}
    \item \textbf{(branch 1)~~} In this case, $b_t-a_t \ge c_t - b_t$ and $z_t = {(a_t+b_t)}/2$.  Furthermore, as a result of the assignment in this branch, we have $c_{t+1} = b_t$ and
            $b_{t+1} =  z_t  = {(a_t+b_t)}/2$.
    \begin{itemize}
        \item
        If $b_{t+1} < \frac\epsilon2$, this means $a_t + b_t < \epsilon$.
        As $a_t \ge 0$, this implies $b_{t} = c_{t+1} < \epsilon$.
        But then $|c_{t+1} - a_{t+1}| \le c_{t+1} < \epsilon$, so the algorithm must
        have already terminated! This is a contradiction.
        \item
        If $b_{t+1} > 1-\frac\epsilon2$, then
        $1- \frac\epsilon2 < b_{t+1} = z_t = (a_t+b_t)/2 < b_t$,
        contradicting our assumption that $b_t \le 1-\frac\epsilon2$.
    \end{itemize}

     \item \textbf{(branch 3)~~} In this case, $c_t-b_t > b_t - a_t$ and $z_t = ({b_t+c_t})/2$.
     The assignment at the end of this branch ensures that
        $a_{t+1}=b_t$ and $b_{t+1}=z_t$.
    \begin{itemize}
    \item
     If $b_{t+1} < \frac\epsilon2$, then
         $b_t = a_{t+1} < b_{t+1} < \frac\epsilon2$.
         which is a contradiction.
    \item
        If $b_{t+1} = ({b_t+c_t})/2 > 1-\frac\epsilon2$, then,
        since $c_t \le 1$, we know $b_t + 1 > 2 - \epsilon$, so $b_t = a_{t+1} > 1-\epsilon$.
        But now $|c_{t+1} - a_{t+1}| \le 1 - a_{t+1} < 1- (1-\epsilon) = \epsilon$.
        So the algorithm must have already terminated.
    \end{itemize}
    \end{itemize}
    Thus, it cannot be the case that $b < \frac\epsilon2$ or $b > 1-\frac\epsilon2$ in any iteration of the algorithm. The fact that $z \in [\frac\epsilon4,1-\frac\epsilon4]$ follows immediately from the definition of $z$ in either branch.
    Finally, $|z-b| = \frac12 \max\{c-b, b-a\} \ge \frac12(\frac{c-a}{2}) > \frac\epsilon4$.
    This completes the proof of \cref{claim:b-z-eps-sep}. \qedsymbol

    \begin{iclaim}\label{claim:b-middlethird}
        It is always the case that $b \in [ \frac{2a + c}{3}, \frac{a + 2c}{3}]$.
    \end{iclaim}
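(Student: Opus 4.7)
The plan is to prove \cref{claim:b-middlethird} by induction on the iteration count, leveraging the fact that in each branch of the algorithm, the new triple $(a,b,c)$ is obtained by a simple linear update of the old one. The base case is immediate: initially $(a,b,c) = (0, \tfrac12, 1)$, so $b = \tfrac12$ lies in $[\tfrac13, \tfrac23] = [\tfrac{2a+c}{3}, \tfrac{a+2c}{3}]$.

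For the inductive step I would do a case analysis over the four possible updates. In two of the branches the result is immediate: in the ``$(a,b,c) \gets (a,z,b)$'' branch (when $b-a \ge c-b$ and $f(b)\gg f(z)$), the new middle point is $z = \tfrac{a+b}{2}$, which is the exact midpoint of the new interval $[a,b]$, hence trivially in its middle third. The symmetric branch $(a,b,c) \gets (b,z,c)$ is handled identically.

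The two interesting branches are the ones that preserve the value $b$ while shrinking the interval on one side. Consider the ``$(a,b,c) \gets (z,b,c)$'' branch, where $z = \tfrac{a+b}{2}$ and the new interval becomes $[\tfrac{a+b}{2}, c]$. To verify $b \in [\tfrac{2z+c}{3}, \tfrac{z+2c}{3}] = [\tfrac{a+b+c}{3}, \tfrac{a+b+4c}{6}]$, the lower bound $3b \ge a+b+c$ reduces to $b \ge \tfrac{a+c}{2}$, which follows directly from the branch condition $b-a \ge c-b$. The upper bound $5b \le a+4c$ is where I expect to use the inductive hypothesis: from the old invariant $3b \le a+2c$, adding $2b \le 2c$ (which holds since $b \le c$) yields exactly $5b \le a+4c$. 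The symmetric branch $(a,b,c) \gets (a,b,z)$ is handled by the mirror-image computation, using $b \ge a$ and the inductive hypothesis $3b \ge 2a+c$.

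I do not anticipate any substantive obstacle: the entire argument is a routine verification that each of four affine updates preserves an affine containment. The only thing to be careful about is bookkeeping---making sure that in the branches where $b$ is preserved, the branch condition ($b-a \ge c-b$ or its reverse) and the previous-iteration invariant together suffice, while in the branches where $b$ is replaced by $z$, the new $b$ sits at the midpoint of the new interval and nothing needs to be checked beyond arithmetic.
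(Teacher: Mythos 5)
Your proposal is correct and follows essentially the same structure as the paper's proof: induction on the iteration count with a four-case analysis over the algorithm's branches. In the two branches where $b$ is replaced by $z$, you correctly observe that the new $b$ sits at the exact midpoint of the new interval, so the containment is immediate. In the two branches where $b$ is preserved, both you and the paper derive one direction of the containment from the branch condition ($b - a \ge c - b$ or its mirror) and the other from the inductive hypothesis; you finish the latter by adding the auxiliary inequality $2b \le 2c$ (resp.\ $2b \ge 2a$), whereas the paper instead notes that moving the left endpoint rightward (or the right endpoint leftward) only weakens the bound you already have from the hypothesis. These are equivalent pieces of arithmetic, and nothing substantive separates the two write-ups.
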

    We prove this by induction. It is clearly true at initialization;
    suppose it is also true at time $t$, i.e., $\frac{2 a_t + c_t}{3} \le b_t \le \frac{a_t + 2 c_t}{3}$.
    We now show the same is true at time $t+1$ in each of the four cases of the algorithm.
    \begin{itemize}
        \item \textbf{(branch 1)~~}
            At the end, we assign
                $a_{t+1} = a_t$,
                $b_{t+1} = z = \frac{a_t + b_t}{2}$, and
                $c_{t+1} = b_t$.
            So,
            \begin{align*}
                \frac{2 a_{t+1} + c_{t+1}}{3}
                = \frac{2 a_t + b_t}{3} <
                ~~\underbrace{~~\frac{a_t+b_t}{2}~~}_{\textstyle = b_{t+1}}~~
                < \frac{a_t + 2 b_{t}}{3} =
                \frac{a_{t+1} + 2 c_{t+1}}{3}.
            \end{align*}

        \item \textbf{(branch 2)~~}
        In this case, we must make use of the fact that, in the first
        two branches  $b-a \ge c-b$, meaning $a_t + c_t \le 2b_t$.
        As in the first branch, we have $z = \frac{a_t + b_t}{2}$.
        This time, however,
            $a_{t+1} = z = \frac{a_t + b_t}{2}$,
            $b_{t+1} = b_t$, and
            $c_{t+1} = c_t$.
        Thus, we find
        \begin{align*}
            \frac{2 a_{t+1} + c_{t+1}}{3}
            = \frac{a_t + b_t + c_t}{3}
            \le \frac{(2b_t) + b_t}{3}
            = b_t
             = b_{t+1}
            \le \frac{a_t + 2 c_{t}}{3}
            < \frac{\frac{a_t + b_t}2 + 2 c_{t}}{3}
            = \frac{a_{t+1} + 2 c_{t+1}}{3} .
        \end{align*}

        \item \textbf{(branch 3)~~} Symmetric with branch 1.
        Concretely,
            $a_{t+1} = b_t$,
            $b_{t+1} = z = \frac{b_t + c_t}2$, and
            $c_{t+1} = c_t$. Thus,
        \begin{align*}
            \frac{2 a_{t+1} + c_{t+1}}{3}
            = \frac{2 b_t + c_t}{3} <
            ~~\underbrace{~~\frac{b_t+c_t}{2}~~}_{\textstyle = b_{t+1}}~~
            < \frac{b_t + 2 c_{t}}{3} =
            \frac{a_{t+1} + 2 c_{t+1}}{3}.
        \end{align*}

        \item  \textbf{(branch 4)~~} Symmetric with branch 2.
        Concretely,
            $a_{t+1} = a_t$,
            $b_{t+1} = b_t$,
            $c_{t+1} = z = \frac{b_t+ c_t}2$,
        and we know $2 b_t < a_t + c_t$.
        Thus,
        \begin{align*}
            \frac{2 a_{t+1} + c_{t+1}}{3}
            = \frac{2 a_{t} + \frac{c_{t} + b_t}2}{3}
            < \frac{2 a_t + c_t}{3}
            \le b_t
            = b_{t+1}
            = \frac{2 b_t + b_t}{3}
            < \frac{(a_t + c_t) + b_t}3
            = \frac{a_{t+1} + 2 c_{t+1}}3
            .
        \end{align*}
    \end{itemize}

    The final result we need is a bound for \cref{lem:togetherbound}. 
    
    \begin{iclaim}\label{claim:logbz-bound}
        $| \log \frac{z}{b} \frac{1-b}{1-z} | \le \log 4
            ~~(< \sqrt{2})$.
    \end{iclaim}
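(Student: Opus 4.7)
The plan is to prove the bound directly by invoking the two preceding invariants (\cref{claim:b-z-eps-sep,claim:b-middlethird}), doing a little algebra, and exploiting a symmetry between the ``left-split'' branches (1 and 2, where $z=(a{+}b)/2$ and hence $z \le b$) and the ``right-split'' branches (3 and 4, where $z=(b{+}c)/2$ and hence $z \ge b$). It is enough to show
\[
    \frac{z}{b}\cdot\frac{1-b}{1-z} ~\in~ \Big[\tfrac14,\, 4\Big],
\]
from which the conclusion $|\log(\cdot)| \le \log 4$ is immediate.

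First I would handle the left-split case, where $z = (a+b)/2 \le b$, so that the product is automatically at most $1$. For the matching lower bound of $\tfrac14$, I split it into two separate factor-$\tfrac12$ bounds. The estimate $z/b = (a+b)/(2b) \ge \tfrac12$ is immediate from $a \ge 0$. The estimate $(1-b)/(1-z) \ge \tfrac12$ is where \cref{claim:b-middlethird} enters: that invariant gives $b \le (a+2c)/3 \le (a+2)/3$, i.e.\ $a \ge 3b-2$, and a direct calculation shows this is precisely equivalent to $2(1-b) \ge 1-z$. Multiplying the two factor bounds gives $z/b \cdot (1-b)/(1-z) \ge \tfrac14$.

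Next, the right-split case follows by a symmetry argument rather than a fresh computation. Applying the reflection $a\mapsto 1-c$, $c\mapsto 1-a$, $b\mapsto 1-b$, $z\mapsto 1-z$ preserves the invariant of \cref{claim:b-middlethird} and turns the formula $z=(b+c)/2$ into $z'=(a'+b')/2$, i.e.\ it converts branches~3--4 into branches~1--2 in the reflected coordinates. Under this reflection, the ratio $\frac{z}{b}\frac{1-b}{1-z}$ transforms into its reciprocal, so the bound $\ge \tfrac14$ from the previous paragraph becomes the upper bound $\le 4$ in this case; the automatic lower bound of $1$ comes from $z\ge b$.

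I do not anticipate any genuine obstacle here: the whole claim is really one algebraic manipulation shielded by a symmetry. The only step that is not purely mechanical is noticing that the invariant $b \in [(2a+c)/3, (a+2c)/3]$ is exactly what is required to prevent $(1-b)/(1-z)$ from degenerating when $b$ is close to $1$ (respectively, $z/b$ when $b$ is close to $0$ in the reflected case); without that middle-third invariant, the factor could be made arbitrarily small, so it is reassuring that \cref{claim:b-middlethird} yields exactly the threshold $\tfrac12$.
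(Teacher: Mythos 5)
Your proof is correct and follows essentially the same route as the paper. In the left-split case, the paper writes $\frac{b}{z}\frac{1-z}{1-b} = \frac{b}{a+b}\cdot\frac{\bar a+\bar b}{\bar b}$ and bounds the first factor by $1$ and the second by $4$ using the invariant $\bar b \ge \bar a/3$; your decomposition into $b/z \le 2$ and $(1-z)/(1-b) \le 2$ is the same arithmetic, merely redistributing the factor of $4$. The only stylistic difference is that the paper handles the right-split case by a parallel computation, whereas you dispatch it via the reflection $(a,b,c,z)\mapsto(1-c,1-b,1-a,1-z)$ — a clean observation, and correct, since this map preserves the middle-third invariant and sends the ratio to its reciprocal — but it is not a different method so much as a way to avoid writing the symmetric computation twice.
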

    \textit{Proof.}
    Let $\bar a := 1-a$, $\bar b := 1-b$, and $\bar c := 1-c$. 
    \Cref{claim:b-middlethird} tells us that
    $ b \ge \frac{2a + c}3 \ge \frac c3$, and also that $b \le \frac{a+2c}3$,
    which gives us a symmetric fact:
    $
        \bar b = 1-b \ge \frac33 - \frac a3 - \frac{2c}3 = \frac{\bar a + 2 \bar c}{3} \ge \frac{\bar a}3.
    $
    Consider two cases, corresponding to the two definitions of $z$. 
    Either $z = (a+b)/2$ or $z = (b+c)/2$.
    
    \begin{minipage}{0.49\linewidth}
        \textbf{Case 1.} $\frac{a+b}2 = z < b$. Thus,
        \begin{align*}
            \Big| \log \frac zb & \frac{1-b}{1-z} \Big| 
            = \log \frac bz \frac{1-z}{1-b} \\
            &= \log \frac {2b}{a+b} \frac{1-\frac{a+b}2}{1-b} \\
            &= \log \frac {b}{a+b} \frac{2 - a - b}{1-b} \\
            &= \log \frac {b}{a+b} + \log \frac{\bar a + \bar b}{\bar b} \\
            &\le \log \frac{\bar a + \bar b}{\bar b}
                \qquad [\text{as first term is negative}] \\
            &= \log\Big(1 + \nicefrac{\bar a}{\bar b}\Big) \\
            &\le \log \Big(1 + \frac{\bar a}{(\bar a / 3)} \Big)
            = \log 4.
        \end{align*}
    \end{minipage}
    \begin{minipage}{0.49\linewidth}
        \textbf{Case 2.}
        $\frac{b+c}2 = z > b$. Thus,
        \begin{align*}
            \Big| \log \frac zb &\frac{1-b}{1-z} \Big| 
            = \log \frac zb \frac{1-b}{1-z} \\
            &= \log \frac {b+c}{2b} \frac{1-b}{1-\frac{b+c}2} \\
            &= \log \frac {b+c}{b} \frac{1-b}{2 - b - c} \\
            &= \log \frac {b+c}{b} + \log \frac{\bar b}{\bar b + \bar c} \\
            &\le \log \frac{b + c}{b}
                \qquad [\text{as second term is negative}] \\
            &= \log\Big(1 + \nicefrac{c}{b}\Big) \\
            &\le \log \Big(1 + \frac{c}{(c / 3)} \Big)
            = \log 4.
        \end{align*}
    \end{minipage}
    \qedsymbol

    We are now in a position to prove that we never mistakenly eliminate $p^*$
    when comparing truncated representations.
    Without loss of generality, suppose that $z > b$, as the two cases are symmetric.
    Since we choose the left branch in the event of a tie, we have made a mistake
        if we instead needed to have chosen the right branch: $p^* > z$.
    In search of a contradiction, suppose that indeed
        this is the case.
    Under these conditions (i.e., $b < z < p^*$), and in light of \cref{claim:logbz-bound},
    we can apply \cref{lem:togetherbound} with $k = \sqrt{2} > \ln 4$, which tells us that
    \begin{align*}
        f(b) - f(z)
        &>
        \frac{1}{16 \gamma}
        \log^2 \Big(
            1 +  8 \gamma (z-b)^2
        \Big).
    \end{align*}
    The definition of $\epsilon'$ in \eqref{eq:query-precision} is constructed
    precisely to make sure that this is never true.
Therefore, the algorithmn cannot choose the wrong branch.

    \textbf{Complexity Analaysis.}
    We now provide a more careful analysis of the runtime of the algorithm.
    We already have a bound on the number of iterations required;
    what is missing is a bound on how long it takes to compute $\ll$, i.e., to compare the approximations $\tilde f(z)$ and $\tilde f(b)$.
    Assuming these numbers are in binary format, $\tilde f(z)$ is of the form $A.B$, and $\tilde f(b)$ is of the form $A'.B'$, where $\{A, A', B, B'\}$ are binary sequences.

    Without loss of generality, assume that $|A| \le |A'|$. (Otherwise, swap their labels.)  The complexity of comparing the two numbers $\tilde f(z)$ and $\tilde f(b)$ does not depend on $|A'|$, the longer of the two sequences to the left of the radix point.
    This is because once we see the radix point in one number but not the other, we can immediately conclude the former is smaller.
    In the first iteration, $|A|$ is at most
    \begin{align*}
        |A| \le \max \left\{ 0,~ \log_2 \aar[\Big]{\dg M ~+~ \Pr(Y{=}y){=}\frac12}_{\!\gamma} \right\}
        &\le
            \max \left\{ 0,~ \log \Big( \aar{\dg M}_\gamma
        +
            \kldiv{p^*}{.5}
        \Big)
         \right\}\\
         &\le \log_2\Big( \max \{ 0, \aar{\dg M}_\gamma \} + 1 \Big)
         \qquad \big[~\text{since }\kldiv{p}{.5} \le 1~\big]
         \\&\in O( \log \aar{\dg M}_\gamma )
         .
    \end{align*}
    Furthermore, $|A|$ cannot increase as the algorithm progresses, because whichever of $\{z, b\}$ leads to a smaller value of $f$ becomes the new value of $b$ in the following iteration.

    Next, we derive an upper bound on the number of bits of $B$ and $B'$ that we must compare.
    Taking the (base-2) logarithm of \eqref{eq:query-precision}, we find that
    we need to examine at most
    \begin{align*}
        |B| ~&\le~ 4 + \log_2 (\gamma)
            - 2 \log_2 \log \left(1 +  8 \gamma (z-b)^2 \right)
        \\&\le~
            4 + \log_2 ({\gamma})
            - 2 \log_2 \log \left(1 +  \frac1{2} \gamma \epsilon^2\right)
            & \Big[~\text{since $(z-b) \ge \frac\epsilon4$}~\Big]
    \end{align*}
    bits to the right of the radix point in order to eliminate the possibility that $f(b) < f(z)$. 
    This expression is still not very friendly; we now loosen it even further to provide a bound of a simpler, more recognizable form.
    When $x \ge 0$, we know that $\log(1 + x) \ge  1- \frac{1}{1+x} = \frac{x}{x+1}$;
    it follows that $-\log_2 \log(1+x) \le -\log_2(\frac{x}{x+1})
         = \log_2( 1+ \frac1x)$.
    Thus,
    \begin{align*}
        |B| + |A| \ &\le 4 + \log_2(\gamma) + 2 \log_2
        \left(1 + \frac{2}{\gamma\epsilon^2 }\right)
        + \log_2\Big( \aar{\dg M}_\gamma + 1 \Big)\\
        &\in O \left(\log\frac1\epsilon + |\log {\gamma}\, | + \log\aar{\dg M}_\gamma \right).
    \end{align*}

    Recall that the process takes at most $O(\log \frac1\epsilon)$ iterations---but in the process, produces the same number of bits of ouput, since $\log(\nf1\epsilon)$ is the number of bits need to encode the final approximation to $p^*$.
    So, accounting for the time needed to compare $f(z)$ and $f(b)$,
        the algorithm runs in time
    \begin{align*}
        O \left(
        \log \frac1\epsilon
        ~\cdot~
        \Big(\log \frac1{\gamma}
        + \log \frac1\epsilon
        + \log \aar{\dg M}_\gamma
        \Big)
        \right).
    \end{align*}
    
    At this point, we have shown reduced unconditional inference to inconsistency calculation. To extend the reduction to conditional queries, we can apply \cref{lem:logeps-conditioner} with 
    $k=2$, 
    $K_0 = 0$, 
    $K_1 = \log \frac1\gamma + \log(1 + \aar{\dg M}_\gamma)$, 
    $K_2 = 1$, and $\Phi = 1$, 
    to get an algorithm that runs in
    \begin{align*}
        O \left(
        \Big(\log \frac1{\gamma}
        + \log \aar{\dg M}_\gamma
        \Big)
        \cdot
        \log \frac1{\epsilon\, \mu^*\mskip-2mu(X{=}x)}
        +
        \log^2 \frac1{\epsilon\, \mu^*\mskip-2mu(X{=}x)}
        \right)
        ~~\text{time}.
    \end{align*}
    It uses $O(\log \log \frac1{\mu^*(X{=}x)} \cdot \log \frac1\epsilon)$ 
        calls to the inconsistency calculation procedure.

    Finally, we remark that, 
    often we are only interested in doing inference up to the precision that
    is tracked by a typical computer.
    In this case, by selecting
$\epsilon \le 10^{-78}$, the procedure above
runs in constant time, making at most 1555 
    inconsistency procedure calls before outputting
     the 64-bit float that is closest to $p^*$.

    \textbf{The other direction: reducing inconsistency calculation to inference.}
    This reduction is much simpler, shares more techniques with the primary
    thrust of the paper. First find a tree
    decomposition $(\C, \mathcal T)$ of the PDG's structure,
    and then query the marginals of each clique.  Because of the
    work we've already done, we know this information is enough
    information to simply evaluate the scoring function,
    including the joint entropy, by \eqref{eq:cluster-ent-decomp}.
\end{lproof}

\section{The Convex-Concave Procedure, and Implementation Details}
    \label{sec:cccp}

Optimization problems \eqref{prob:joint-small-gamma} and \eqref{prob:cluster-small-gamma}
can be extended to apply slightly more broadly.
There are some cases where there is a unique optimal distribution
but $\gamma$ is large
enough that $\bbeta \not\ge \gamma\balpha$.
In these cases, our convex program will fail to satisfy the dcp requirements, and
    so we cannot compile it to an exponential conic program---but
    it turns out to still be a useful building block.
We now describe how we can still do inference in some of these cases with the
    convex-concave procedure, or CCCP \parencite{yuille2003concave}.
This will give us a local minimum of the
    PDG scoring function $\bbr{\dg M}_\gamma$,
    without requiring us to write this scoring function
    in a way that proves its convexity,
    (as is necessary in order to specify a disciplined convex program).
At this point, if we happen to know that the problem is convex (or even
    just pseudo-convex) for other reasons, then finding this distribution
    suffices for inference.
We now describe how this can be done in more detail.

Suppose
    $\beta_a < \gamma \alpha_a$ some $a \in \Ar$.
In this case $\bbr{\dg M}_\gamma$ may not be convex, in general.%
\footnote{
    Consider the PDG $({\to} X,\, Y {\gets})$ for instance, which
    has arcs to $X$ and $Y$, both with $\alpha = 1$ and $\beta = 0$.
    The minimizers of
     $\bbr{{\to} X,\, Y {\gets}}_\gamma$
        are the distributions that make $X$ and $Y$ is independent.
    It is easily seen that this set
    is not convex: $X$ and $Y$ are independent if either variable is deterministic,
    and every distribution is a convex combination of deterministic distributions.
    }
However, we do know how to decompose $\bbr{\dg M}_\gamma$ into a sum
of a convex function $f(\mu)$ and a concave one $g(\mu)$.
Concretely: each term on the second line of \eqref{eq:altscore} is either convex or concave, depending on the sign of the quantity $\gamma \alpha_a - \beta_a$.
Once we sort the terms into convex terms $f(\mu)$ and strictly concave terms $g(\mu)$,
    the CCCP tells us to repeatedly solve $f$ plus a
        linear approximation to $g$.
In more detail, the algorithm proceeds as follows.
First, choose an initial guess $\mu_0$, and iteratively use the convex solver
    as in the main paper to compute
\begin{align*}
    \mu_{t+1} &:= \argmin_{\mu} f(\mu) + (\mu - \mu_{t})^{\sf T}
        \nabla g(\mu_t).
\end{align*}
This can be slow because each iteration of the solver is expensive.
Still, it is guaranteed to make progress, since
\def\tplus1{{t\mskip-2mu+\mskip-2mu1}}
\begin{align*}
    f(\mu_\tplus1) \!+\! g(\mu_\tplus1) &<  f(\mu_\tplus1) \!+\! (\mu_\tplus1 \!-\! \mu_t)^{\sf T} \nabla g(\mu_t) \!+\! g(\mu_t)
        \\
    &\le  f(\mu_t) + (\mu_t - \mu_{t})^{\sf T}\nabla g(\mu_t)  + g(\mu_t)
    \\
&= f(\mu_t) + g(\mu_t).
\end{align*}
Furthermore, because in our case $g$ is bounded,
    the process eventually converges a local minimum
    of $\bbr{\dg M}_\gamma$.
This alone, however, is not sufficient for inference,
because we may not be able to use this local minimum
    to answer queries in a way that is true of \emph{all} minimizing distributions.
But, if it happens there is a unique local minimum, then the CCCP will
    find it, leading to an inference procedure.

Notice that if $\bbeta \ge \gamma\balpha$, then the concave part $g$ is identically
zero, and CCCP converges after making just one call to the convex solver.
Therefore, in the cases we could already handle, this extension
reduces to the algorithm we described before.
For this reason, all of our code that handles problems
 \eqref{prob:joint-small-gamma} and \eqref{prob:cluster-small-gamma}
is augmented with the CCCP.

Compared to the black-box optimization baselines (Adam and LBFGS),
 which also only find one minimum,
    the CCCP still has some advantages.
One can see in \cref{fig:gamma-v-gap-fine}, for example, that when $\gamma = 2 > 1 = \max_a {(\beta_a / \alpha_a)}$,
CCCP performs better than the baselines.
In fact, the CCCP-augmented
solver
could probably even higher accuracy,
    if were we not limiting it to
    a maximum of only five iterations.

\section{Details on the Empirical Evaluation}\label{sec:expt-setup}
Imagine a very steep $V$-shaped canyon, and inside a small slow-moving stream at a gentle incline. The end of the river may be very far away, and the whole landscape may be smooth and strongly convex, but the gradient will still almost always point perpendicular to it, and rather towards the center of the river.
This intuition may help explain why, even though $\bbr{\dg M}_\gamma$ is infinitely differentiable in $\mu$ and $\gamma$-strongly convex, it can still be challenging to optimize, especially when the $\beta$'s are very different, or when $\gamma$ is small.
For example, a solution to \eqref{prob:cluster-inc} finds a minimizer of $\OInc$, but such minimizers may be very far away from $\bbr{\dg M}_{0^+}^*$, despite sharing an objective value.

We now see how this is true even when working with very small PDGs and joint distributions.

\begin{figure}
    \includegraphics[width=\linewidth]{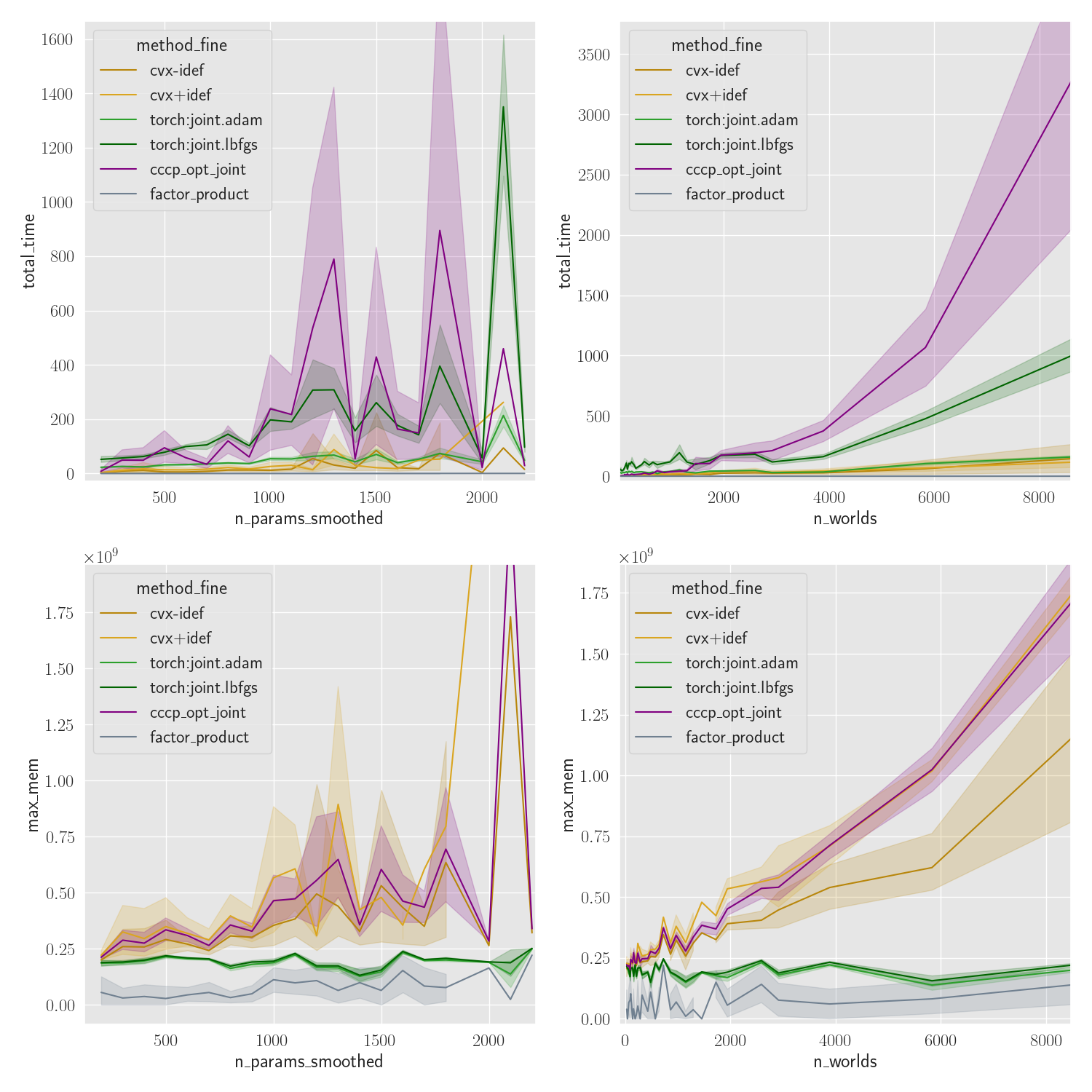}
    \caption{\small
        Resource costs for the joint-distribution optimization setting of \cref{sec:inf-as-cvx-program}.
        We measure computation time (\texttt{total\_time}, top) and maximum memory usage (\texttt{max\_mem}, bottom) for the various optimization methods (by color), as the size of the PDG increases, as measured by the number of parameters in the PDG (\texttt{n\_params}$\,=\V\!\Ar$, left), and the size of a joint distribution over its variables (\texttt{n\_worlds}$\,=\V\!\X$, right).
        Note that the convex solvers for the 0 and $0^+$ semantics are significantly faster than LBFGS, and on par with Adam.
        However, all three convex-solver based approaches require significantly more memory than the black-box optimizers.
     }\label{fig:resources}
\end{figure}

\subsection{Synthetic Experiment: Comparison with Black-Box Optimizers, on Joint Distributions.} \label{sec:joint-expt-details}
\begin{wrapfigure}{R}{4.8cm}
    \centering
    \vspace{-3ex}
    \includegraphics[width=4.5cm]{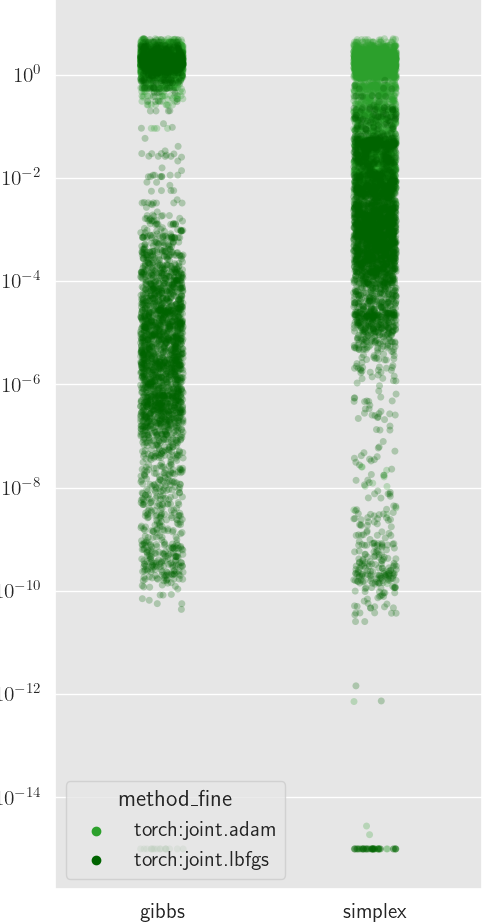}
    \caption{\small differences in performance between the Gibbs and simplex parameterizations of probabilities.}%
    \label{fig:representation}%
\end{wrapfigure}

Here is a more precise description of our first synthetic experiment,
on joint distributions, which contrasts the convex optimization approaches of \cref{sec:inf-as-cvx-program} with black-box optimizers.

\begin{itemize}
    \item generate 300 PDGs, each of which has the following quantities, to each of which we choose the following natural numbers uniformly at random:
    \begin{itemize}
        \item $N \in \{5,\ldots,9\}$ of variables
            (so that $\X := \{1, \ldots, N\}$),
        \item $V_X \in \{2, 3\}$ values per variable
            (so that $|\V X| = V_X$ for each $X \in \X$)
        \item $A \in \{7, \ldots, 14\}$ hyperarcs,
        each $a \in \{1, \ldots, A\} =: \Ar$ of which has
        \item $N^S_a \in \{0, 1, 2, 3\}$ sources, and
        \item $N^T_a \in \{1,2\}$ targets.
    \end{itemize}
    \item For each arc $a \in \Ar$, $N^S_a$ of the $N$ variables are choosen without replacement to be sources $S_a \subseteq N$, and $N^T_a$ of remaining variables are chosen to be targets. Finally, to each value of $S_a$ and $T_a$, a number $p_{a,s,t} \in [0,1]$ is chosen uniformly at random, and the cpd
    \[
     \p_a(\Tgt a {=} t \mid \Src a {=} s) =
        \frac{p_{a,s,t}}{\displaystyle\sum_{t' \in \V(T)} p_{a,s,t'}}
     \qquad\text{ is given by normalizing appropriately.}
    \]
    This defines a PDG $\dg M = (\X, \Ar, \mathbb P, \mat 1, \mat 1)$, that
    has $\balpha = \bbeta = \mat 1$, which will allow us to comapre against
    belief propogation and other graphical models at $\gamma = 1$.
    The complexity of this PDG is summarized by two numbers:
    \begin{itemize}[nosep]
        \item \texttt{n\_params}$\,:= \V\!\Ar$, the total number of parameters in all cpds of $\dg M$, and
        \item \texttt{n\_worlds}$\,:= \V\!\X$, the dimension of joint distributions over $\dg M$'s variables.
    \end{itemize}
\end{itemize}

\begin{itemize}
    \item Run MOSEK on \eqref{prob:joint-inc} to find a distribution that minimizes $\OInc$; we refer to this method as \texttt{cvx-idef}
    \item Use the result to run MOSEK on \eqref{prob:joint+idef} to find the special distribution $\bbr{\dg M}^*_{0^+}$; we refer to this method as \texttt{cvx+idef}. These names are due to the fact that $\SInc$ is called $\IDef{}$ in previous work \parencite{pdg-aaai,one-true-loss};
    thus, this refers to using the convex solver to compute minimizers of $\OInc$ with and without considering $\IDef{}$.

    \item Run the \texttt{pytorch} baselines.
    Let $\theta = [\theta_{\mat x}]_{\mat x \in \V\!\X} \in \mathbb R^{\V\X}$ be a vector of optimization variables, and choose a representation of the joint distribution, either by
    \[
        \Big(\begin{array}{c}\text{renormalized}\\
        \text{simplex}\end{array}\Big)\quad
        \mu_{\theta}(\mat x) = \frac{\max\{\theta_{\mat x}, 0\} }
            {\sum_{\mat y \in \V\!\X}\max\{\theta_{\mat y}, 0\} }
    \qquad\text{or}\qquad
    \mu_{\theta}(\mat x) = \frac{\exp(\theta_{\mat x})}{\sum_{\mat y \in \V\!\X} \exp(\theta_{\mat y})}\quad(\text{Gibbs})
    \qquad\quad\text{(see \cref{fig:representation})}
    \]
    \item
    For each value of the trade-off paramteter 
    $\gamma \in \{0, 10^{-8}, 10^{-4}, 10^{-2}, 1\}$, and each learning rate $\texttt{lr} \in 1E-3, 1E-2, 1E-1, 1E0$, and each optimizer $\mathit{opt} \in \{\texttt{adam}, \texttt{L-BFGS}\}$,
    run $\mathit{opt}$ over the parameters $\theta$ to minimize $\bbr{\dg M}_\gamma(\mu_{\theta})$
     until convergence (or a maximum of 1500 iterations)

     \item We collect the following data about the resulting distribution and the process of computing it:
     \begin{itemize}[nosep]
         \item the total time taken to arrive at $\mu$;
         \item the maximmum memory taken by the process computing $\mu$;
         \item the objective and its component values:
         \vspace{-1ex}
         \[
            \texttt{inc} := \SInc_{\dg M}(\mu),
            \qquad \texttt{idef} := \SInc_{\dg M}(\mu),
            \qquad \texttt{obj} := \OInc_{\dg M}(\mu) + \gamma \SInc_{\dg M}(\mu) = \bbr{\dg M}_\gamma(\mu)
        \]
     \end{itemize}
\end{itemize}

The numbers can then be recreated by running our experimental script as follows:
\begin{verbatim}
python random_expts.py -N 300 -n 5 9 -e 7 14 -v 2 3
    --ozrs lbfgs adam
    --learning-rates 1E0 1E-1 1E-2 1E-3
    --gammas 0 1E-8 1E-4 1E-2 1E0
    --num-cores 20
    --data-dir random-joint-data
\end{verbatim}
which creates a folder called \verb|random-joint-data|,
and fills it with \verb|.mpt| files corresponding to each distribution
and the method / parameters that gave rise to it.

\textbf{Analyzing the Results.}
Look at  \cref{fig:resources}.  Our theoretical analysis, and in particular the proof of \cref{lem:cluster-inc-polytime}, suggest that the magnitudes of $\V\!\X$ and $\V\!\Ar$ play similar roles in the asymptotic complexity of PDG inference.
Our experiments reveal that, at least for random PDGs, the number of worlds is the far more important of the two; observe how much more variation there is on the left side of the figure than the right---and now note that the left side has been smoothed, while the right side has not.
The black-box py-torch based approaches clearly have an edge in that they can handle larger models, as evidenced by the cut-offs on the right-hand side of \cref{fig:gap-resource-fine-old}, when with 5GB memory.

Note that the exponential-cone-based methods for the observational limit (gold) are actually faster than L-BFGS (the black-box optimizer with the lowest gap), and also seem to be growing at a slower rate.
However, they use significantly more memory, and cannot handle large models.
In addition to being faster, our techniques also seem to be more precise; they achieve objective values that are consistently much better than the black-box methods.

Now look at \cref{fig:joint-gap-vs-time-by-gamma},
which contains a break-down of the information in \cref{fig:joint-gap-time}. The bottom half of the figure is just the same information, but with each value of $\gamma$ separated out, so that the special cases of the factor product and $0^+$ inference become clear, while the top half shows why it's more important to look at the gap than the actual objective value for these random PDGs.
\Cref{fig:joint-gap-vs-time-by-gamma} also makes it clearer how larger problems take longer, and especially so for \texttt{cccp} (violet), which solves the most complex version of the problem \eqref{prob:joint-small-gamma}.

\begin{figure}
    \includegraphics[height=0.45\textheight]{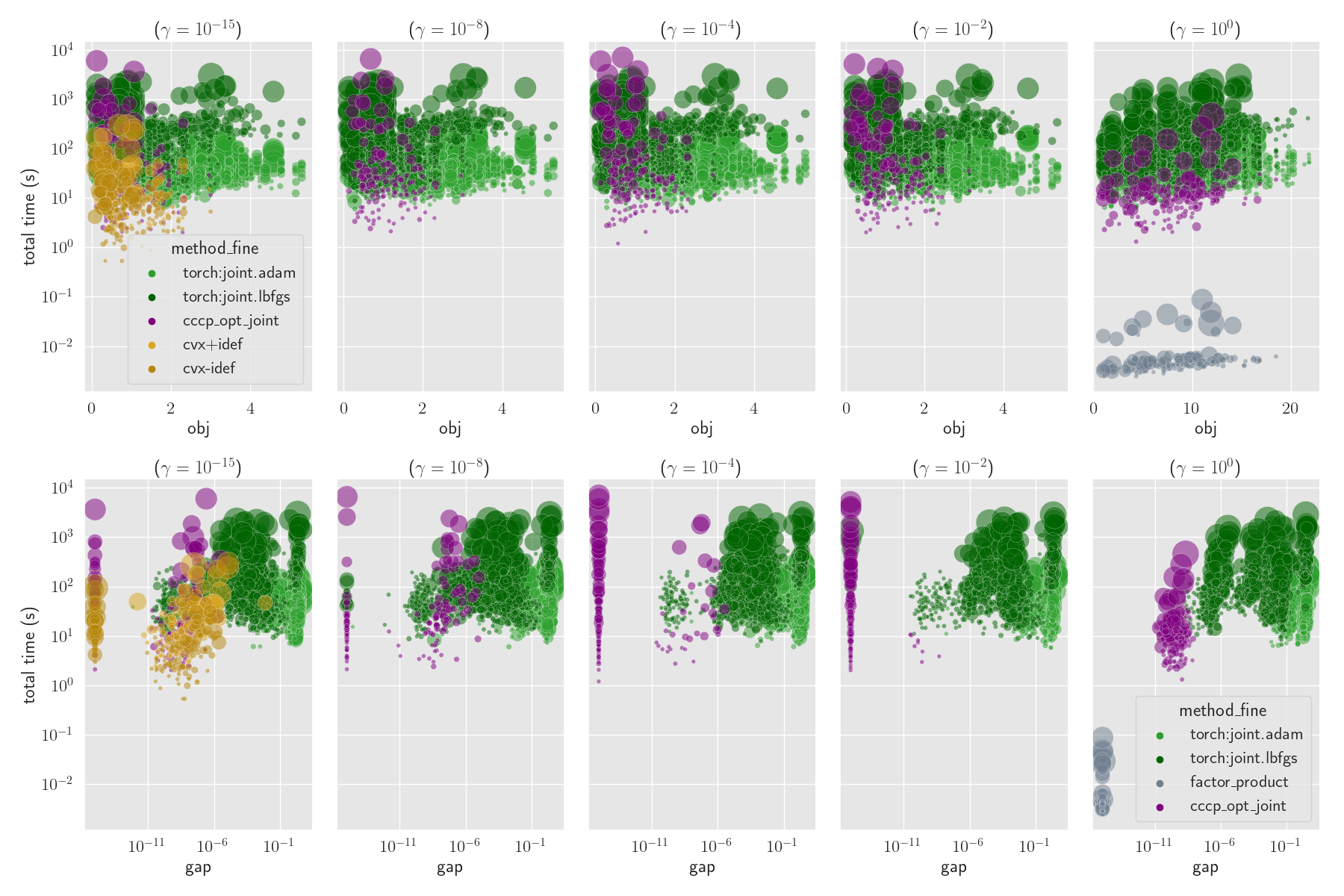}
    \caption{\small
        An un-compressed version of the information in \cref{fig:joint-gap-time}, that groups by the value of $\gamma$, and also gives the absolute values of the objectives (top row) in addition to the relative gaps (bottom row).
    }\label{fig:joint-gap-vs-time-by-gamma}
\end{figure}

\discard{\begin{figure}
    \includegraphics[height=0.45\textheight]{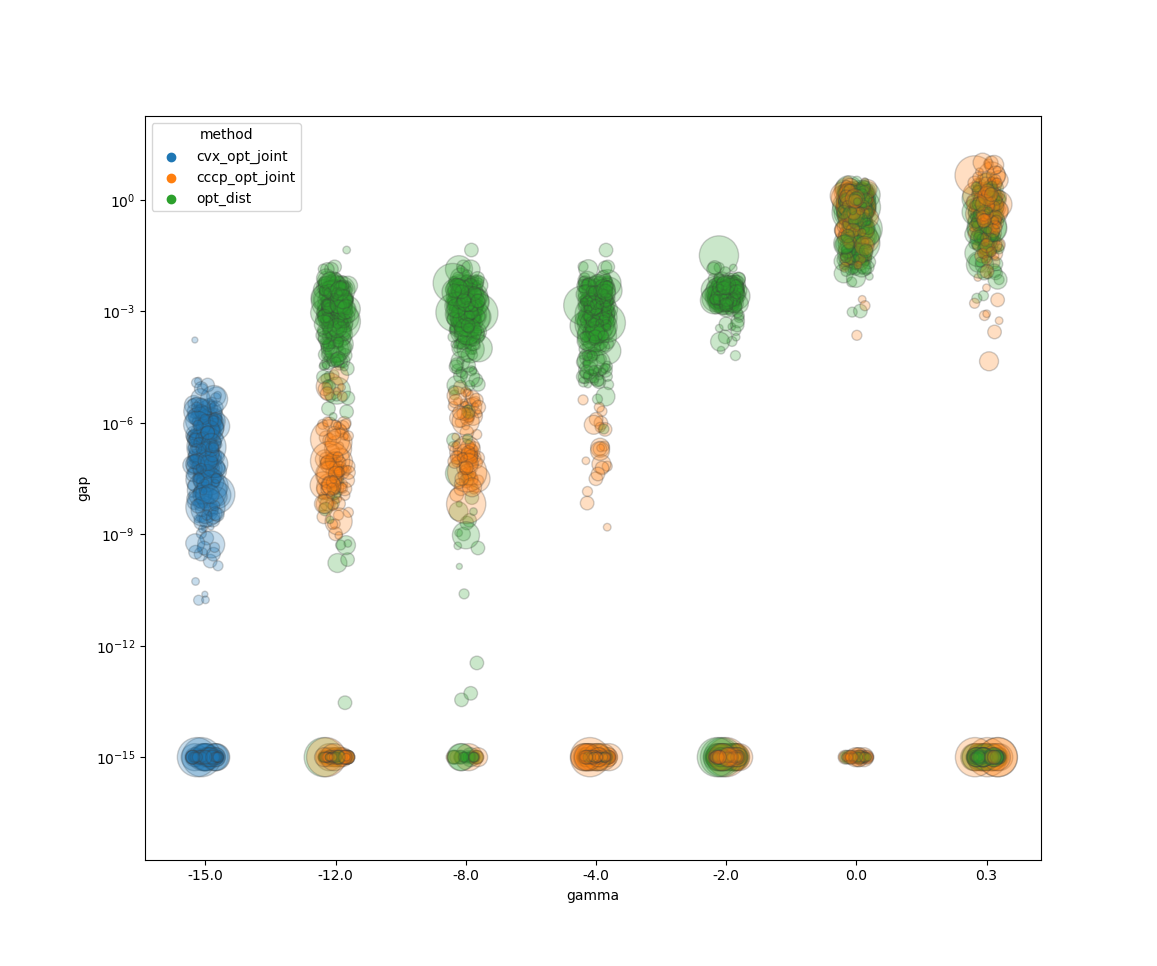}
    \caption{
        A graph of the gap (the difference between the attained objective value, and the best objective value obtained across all methods for that value of $\gamma$),
        as $\gamma$ varies. The x-axis is $\log_{10} ( \gamma + 10^{-15})$.
        As before, colors indicate the optimization method; here blue corresponds to \eqref{prob:joint+idef}, while orange corresonds to \eqref{prob:joint-small-gamma}, and green, as before, correponds to all optimization baselines.
        The size of the circle illustrates the relative number of worlds.
        See \cref{fig:gamma-v-gap-fine} for a more detailed breakdown.
    }\label{fig:gamma-v-gap}
\end{figure}}

\begin{figure}
    \includegraphics[width=\linewidth]{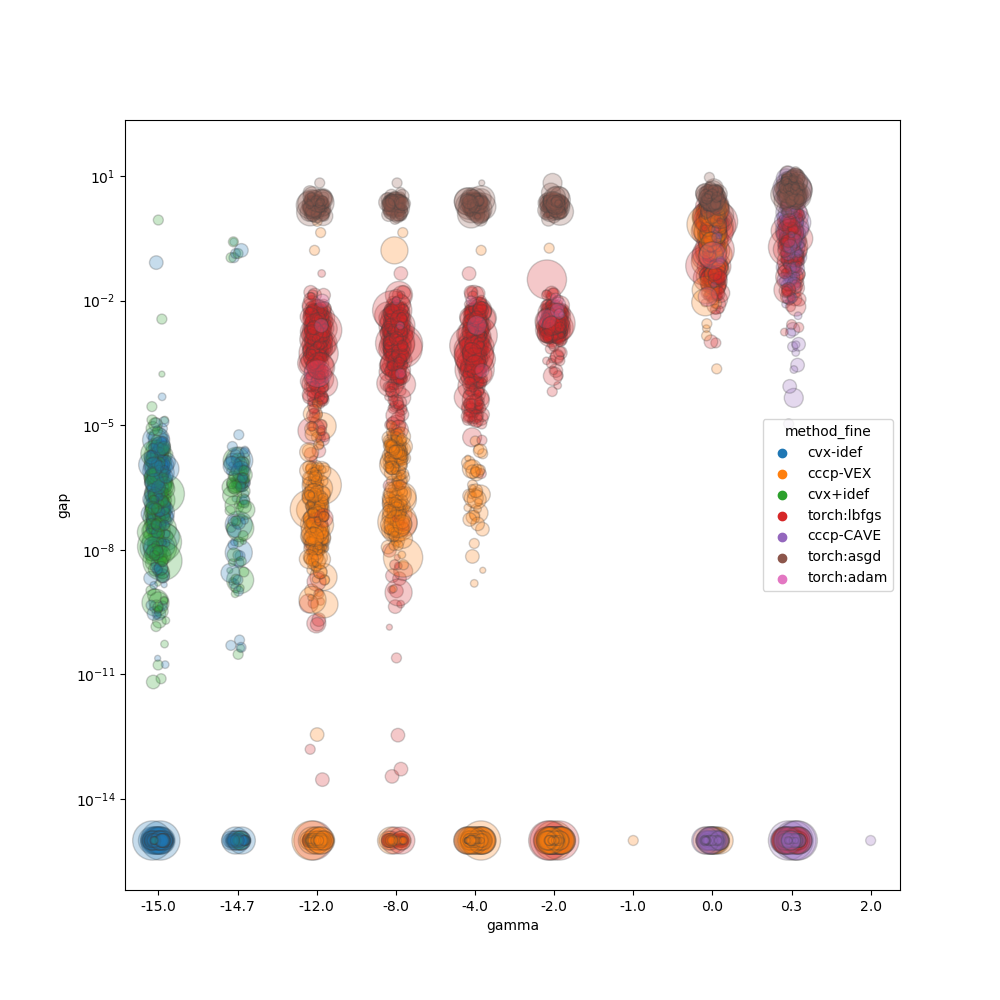}
    \caption{
        A graph of the gap (the difference between the attained objective value, and the best objective value obtained across all methods for that value of $\gamma$),
        as $\gamma$ varies. The x-axis is $\log_{10} ( \gamma + 10^{-15})$.
        As before, colors indicate the optimization method, and
        the size of the circle illustrates the number of optimization variables (i.e., the number of possible worlds).
        \texttt{cvx-idef} corresponds to just solving \eqref{prob:joint-inc}, and \texttt{cvx+idef} corresponds to then solving problem \eqref{prob:joint+idef} afterwards.
        The CCCP runs are split into regimes where the entire problem is convex ($\gamma \le 1$, labeled \texttt{cccp-VEX}), and the entire problem is concave ($\gamma > 1$, labeled \texttt{cccp-CAVE}).
        The optimization approaches \texttt{opt\_dist} are split into three different optimizers: LBFGS, Adam, and also a third one that
        performs relatively poorly: accelerated gradient descent.
        Note that for small $\gamma$, the exponential-cone based methods significantly outperform the gradient-based ones.
    }\label{fig:gamma-v-gap-fine}
\end{figure}

\subsection{Synthetic Experiment: Comparing with Black-Box Optimizers, on \AcTree s} \label{sec:clus-expt-details}

\begin{figure}
    \centering
    \includegraphics[width=0.67\linewidth]{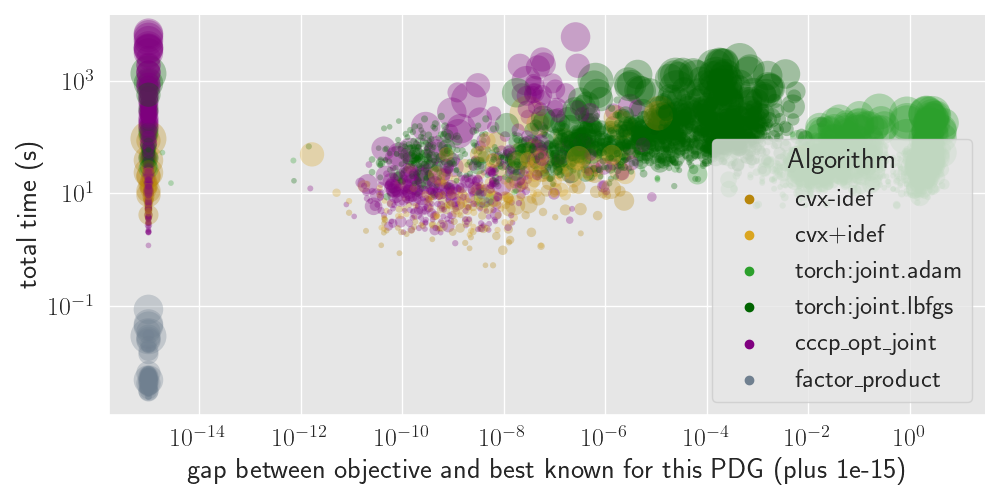}
    \caption{An analogue of \cref{fig:joint-gap-time}, for the cluster setting.
    Note that there is even more separation between the exponential-cone based approaches, and the black-box optimization based ones.
    The new grey points on the bottom correspond to belief propogation, which is both faster and typically the most accurate.}
    \label{fig:clus-gap-vs-time--appendix}
\end{figure}
\begin{figure}
    \centering
    \includegraphics[width=0.67\linewidth]{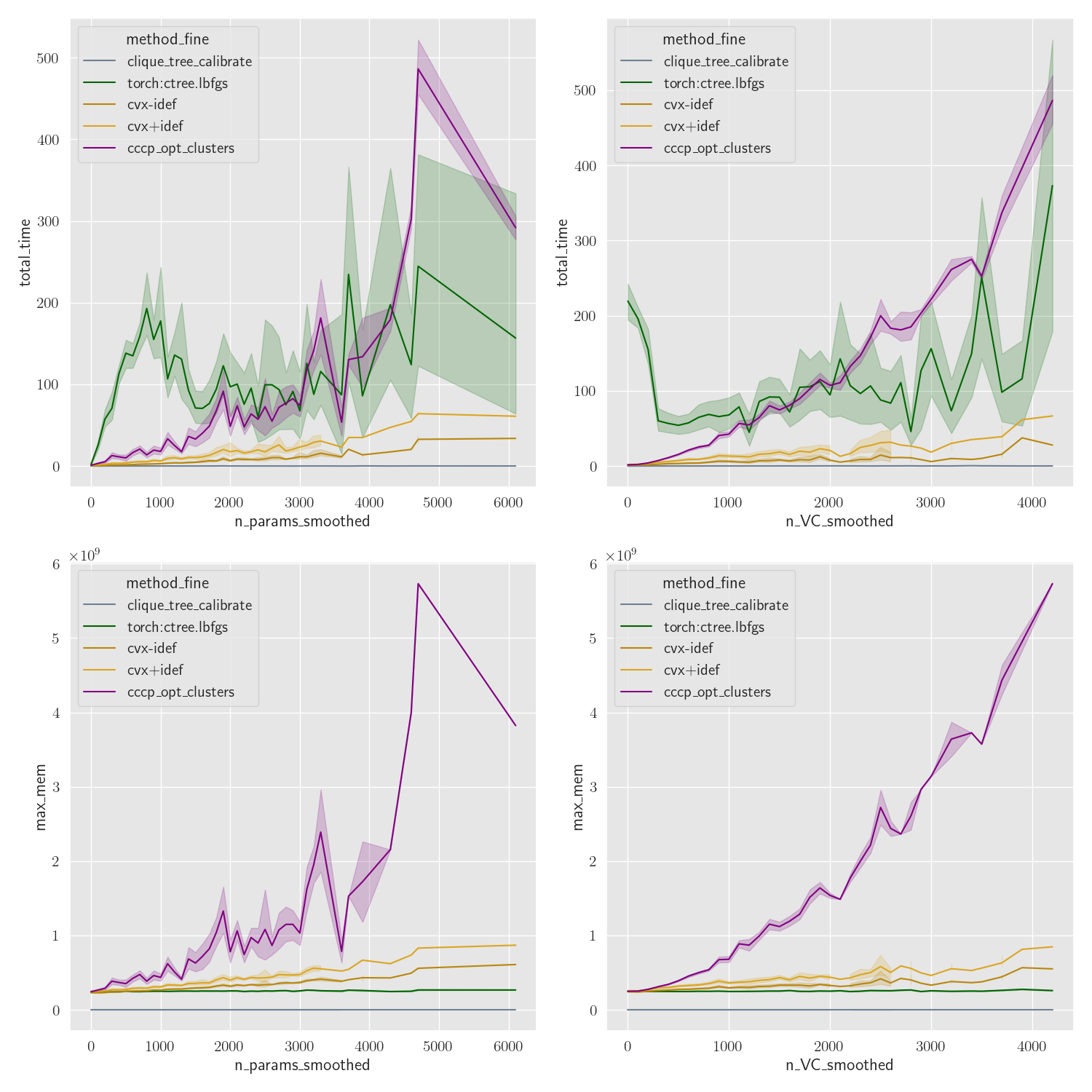}
    \caption{Resource costs for the cluster setting. Once again, the $\OInc$-optimimzing exponential cone methods are in gold, the small-gamma and CCCP is in violet, and the baselines are in green. The bottom line is belief propogation, which is significantly faster and requires very little memory, but also only gives the correct answer under very specific cirucumstances.}
    \label{fig:clus-resource-costs}
\end{figure}

\begin{enumerate}
    \item Choose a number of variables $N \in \{ 8, \ldots, 32 \}$, and a treewidth $k \in \{1, \ldots, 4\}$ uniformly at random.
    Then draw a random $k$-tree and corresponding tree of clusters $(\C, \mathcal T)$, as follows:
    \begin{enumerate}
        \item Initialize $G \gets K_{k+1}$ to a complete graph on $k+1$ vertices, and $\C \gets\{ K_{k+1} \}$ to be set containing a single cluster, and $\mathcal T\gets \emptyset$.
        \item Until there are $N$ vertices: add a new vertex $v$ to $G$, then randomly select a size $k$-clique (fully-connected subgraph) $U \subset G$, and add edges between $v$ and every vertex $u \in U$.
        Add $U \cup \{v\}$ to $\C$, and add edges to every other cluster $C \in \C$ such that $U \subset C$.
    \end{enumerate}
    \item Draw the same parameters $V_X \in \{2,3\}$, $A \in \{8, \cdots, 120\}$, $N_a^S \in \{0,1,2,3\}$, and $N^{T}\in \{1,2\}$
    as in \cref{sec:joint-expt-details} uniformly at random.
    While $N_a^S + N^T_a > k+1$, for any $a$, resample $N_a^S$ and $N_a^T$.

    \item Form a PDG whose structure $\Ar$ can be decomposed by $(\C, \cal T)$, as follows:
    for each edge $a \in \Ar$, sample a cluster $C \in \C$ uniformly at random; then select $N_a^S$ nodes from that cluster without replacement as sources, and $N_a^T$ nodes as targets; this is possible because each cluster has $k+1$ nodes, and $N_a^S + N_a^T \le k+1$ by construction.
    \item Fill in the probabilities by drawing uniform random numbers and re-normalizing, just as before, to form a PDG $\dg M$

    \item The black-box optimization baselines work in much the same way also, although now the optimization variables include not one distribution $\mu$ but a collection $\bmu$ of them;
    this time, we use only the simplex representation of $\bmu_\theta$.
    More importantly, we want these clusters to share appropriate marginals; to encourage this, we add a terms to the loss function, so overall, it is
    \[
        \ell(\theta) := \bbr{\dg M}_{\gamma}(\bmu_{\theta}) + \sum_{C{-}D \in \mathcal T} \exp \left(\sum_{w \in \V(C\cap D)} \Big(\mu_C(C\cap D{=}w) - \mu_D(C \cap D {=}w)\Big)^2 \right) - 1.
    \]
    This is admittedly pretty ad-hoc; the point is just that it is zero and does not contribute to the gradient if $\bmu_\theta$ is calibrated, and otherwise quickly becomes overwhelmingly important.
\end{enumerate}

\textbf{Analyzing the Results.}
Observe in \cref{fig:clus-gap-vs-time} that the separation between the \actree\ convex solver and the black-box algorithms is even more distinct. This is because, in this case, the penalty for violating constraints was too small, and the optimization effort was largely wiped out by the calibration before evalution.

\begin{figure}
    \centering
    \includegraphics[width=0.7\textwidth]{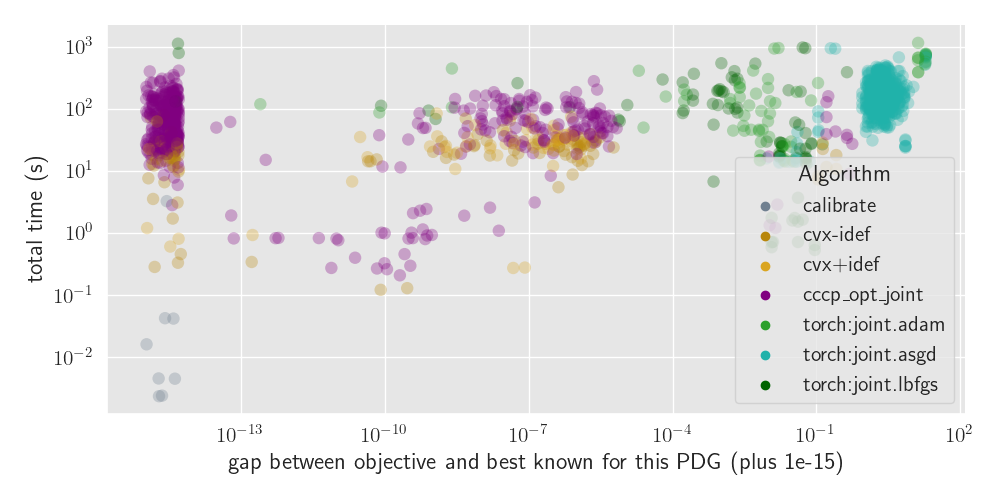}
    \caption{Gap vs inference time for the small PDGs in the \href{https://www.bnlearn.com/bnrepository/}{\texttt{bnlearn}} repository}
    \label{fig:bn-gap-v-time}
\end{figure}

This illustrates another general advantage that the convex solver has over black-box optimizers: it is much less brittle and reliant and exactly tuning parameters correctly. Note that even in this minimal example, there were many hyper-parameters that require tuning:
the regularization strengths that enforce soft constraints (\actree\ calibration, normalization), as well as learning rate, not to mention
various other structural choices: the optimizer, the representation of the distribution, and the maximum number of iterations, none of which are clear-cut choices, but rather require first being tuned to the data.
While the convex solver does have internal parameters (tolerences and such) these do not need to be tuned to the problem under normal circumstances.

\subsection{Comparing to Belief Propagation, on \AcTree s.}
    \label{sec:bn-expt-details}
 Since PDGs generalize other graphical models, one might wonder how our method stacks up against algorithms tailored to the more traditional models. In brief: our algorithm is much slower, and only handle much smaller networks.
 Concretely, our methods can handle all of the ``small'' networks, and some of the ``medium'' ones, from the \href{https://www.bnlearn.com/bnrepository/}{\texttt{bnlearn}} repository.
  In these cases, we have verified that the two methods yield the same results.
  \Cref{fig:bn-gap-v-time} contains the analogue of \cref{fig:joint-gap-time,fig:clus-gap-vs-time}
  for the Bayesian Nets. This graph looks qualitatively quite similar to the other graphs we've seen, suggesting that the results in our synthetic experiments hold more broadly for small real-world models as well.

 \begin{figure}
     \includegraphics[height=0.45\textheight]{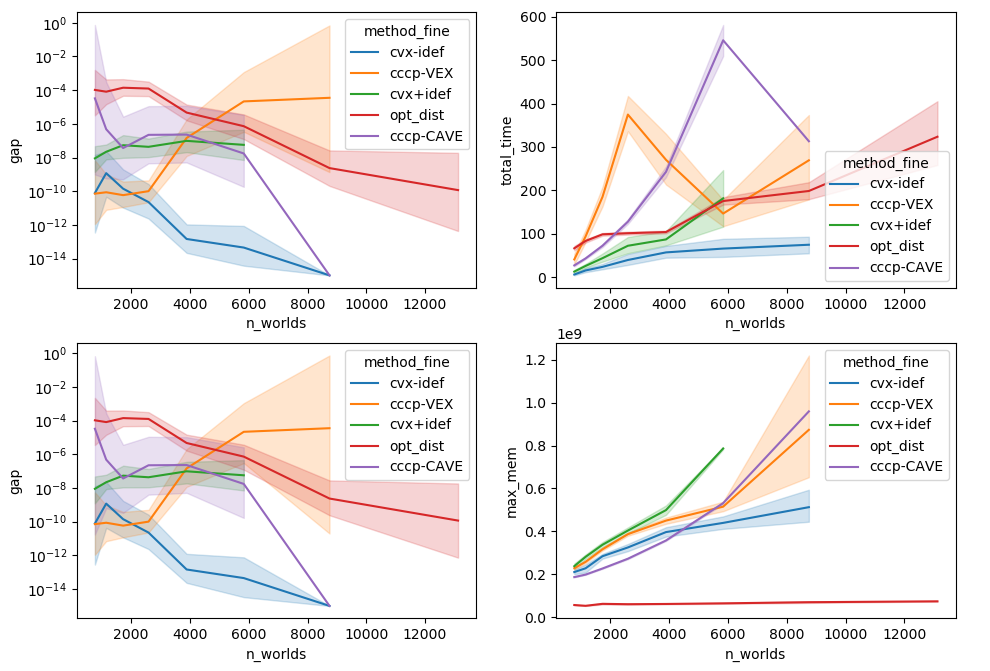}
     \caption{
         A variant of \cref{fig:resources}, with
         with gap (accuracy) information on the left, and slightly different parameter settings.
     }\label{fig:gap-resource-fine-old}
 \end{figure}

\discard{

\section{OTHER APPROACHES TO PDG INFERENCE} \label{sec:other-inference}

\subsection{}
A stronger result than \cref{prop:markov-property} holds as well.
\begin{prop}\label{prop:same-set-dists}
    Let $\Ed$ be a set of (hyper)edges over $\X$.
    For every PDG $\dg M$ over $\X$ with edges $\Ed$, every $\gamma > 0$, and every optimum $\mu^* \in \bbr{\dg M}_\gamma^*$ of $\dg M$'s scoring function at $\gamma$,
    there is a factor graph $\Phi$ with factors along $\Ed$ such that $\Pr_\Phi = \mu^*$.
\end{prop}

In other words: every distribution that a PDG can pick out as optimal (for any choice of $\gamma > 0$ and also in the limit as $\gamma \to 0$), can also be described as a factor graph with the same structure as that PDG.
How do we square this with the \citeauthor{pdg-aaai}'s claim that PDGs are more general than factor graphs?

\TODO[TODO: answer this question.\\
    The short answer: PDGs still compose differently, and in a way that respects the meaning of the probabilities. And just because you can find a factor graph that would have given you the right distribution after the fact, doesn't mean you could have specified the component factors.]

\TODO[Also: don't get lost; figure out how to continue as below:]
\cref{prop:same-set-dists} suggests another approach to avoiding an exponential representation of $\mu$: given a PDG, fit a factor graph that has the same structure to it.

\subsection{Approximate Inference}
\textbf{Relaxing the marginal polytope.}
Just as it is possible to do belief propagation on cluster graphs that are not trees (e.g., loopy belief propagation)
so too is it possible to drop the requirement that the cluster that we use is indeed a tree decomposition.
This program is smaller, and will converge, but it will only be an approximate solution.
Like the original PDG itself, it might be inconsistent.

\subsubsection{Variational Approaches}

}

\subsubsection*{Acknowledgements}
Halpern and Richardson were supported in part by MURI grant
W911NF-19-1-0217 and ARO grant W911NF-17-1-0592.
De Sa was supported by NSF RI-CAREER award 2046760.

\ifbiblatex
    \subsubsection*{References}
    \printbibliography
\else
    \bibliography{inference-refs}
    \fi

\end{document}